\newcommand{\argmin}{\mathop{\rm arg\min}}
\newcommand{\argmax}{\mathop{\rm arg\max}}
\newcommand{\norm}[1]{\left\|#1\right\|}
\newcommand{\abs}[1]{\left|#1\right|}
\newcommand{\sgn}{{\rm sgn}}
\newtheorem{Corollary}{Corollary}
\newtheorem{Proposition}{Proposition}
\newtheorem{Lemma}{Lemma}
\newtheorem{Definition}{Definition}
\newtheorem{Theorem}{Theorem}
\newtheorem{Example}{Example}
\newtheorem{Remark}{Remark}
\newtheorem{Assumption}{Assumption}
\newtheorem{Claim}{Claim}
\newcommand{\R}{{\mathbb{R}}}
\newcommand{\E}{{\mathbb{E}}}
\newcommand{\N}{{\mathbb{N}}}
\newcommand{\Prob}{{\mathbb{P}}}
\newcommand{\1}{{\mathbbm{1}}}
\newcommand{\U}{{\mathcal{U}}}
\newcommand{\A}{{\mathcal{A}}}
\newcommand{\D}{{\mathcal{D}}}
\newcommand{\X}{{\mathcal{X}}}
\newcommand{\HH}{{\mathcal{H}}}
\newcommand{\II}{{\mathcal{I}}}
\newcommand{\KK}{\mathcal{K}}
\newcommand{\EE}{\mathcal{E}}
\newcommand{\app}{{\rm app}}
\newcommand{\Var}{{\rm Var}}
\newcommand{\EOO}{{\rm EOO}}
\newcommand{\risk}{{\mathcal{R}}}
\begin{document}

\title{Finite-Sample and Distribution-Free Fair Classification: Optimal Trade-off Between Excess Risk and Fairness, and the Cost of Group-Blindness}

\author[1]{Xiaotian Hou}
\author[2]{Linjun Zhang}
\affil[1]{Department of Biostatistics, Epidemiology and Informatics, University of Pennsylvania}
\affil[2]{Department of Statistics, Rutgers University}
\date{}
\maketitle

\begin{abstract}
Algorithmic fairness has become a central concern in modern machine learning and AI applications. However, two pressing challenges remain: (1) The fairness guarantees of existing methods often rely on specific data distributional assumptions and large sample sizes, which can lead to fairness violations in practice. (2) Due to legal and societal considerations, using sensitive group attributes during decision-making (referred to as the group-blind setting) may not always be feasible.

In this work, we quantify the impact of enforcing algorithmic fairness and group-blindness/awareness in binary classification under group fairness constraints. Specifically, we propose a unified framework for fair classification that provides distribution-free and finite-sample fairness guarantees with controlled excess risk. This framework is applicable to various group fairness notions in both group-aware and group-blind scenarios. Our approach is based on a post-processing procedure that can be applied to arbitrary black-box models, making it directly compatible with modern machine learning pipelines. Furthermore,  we establish a minimax lower bound showing the minimax rate-optimality of our proposed algorithm up to logarithmic factors.

Through extensive synthetic and real data studies, we further demonstrate the competitive or superior performance of our algorithm compared to existing methods, and provide empirical support for our theoretical findings.
\end{abstract}
\tableofcontents
\section{Introduction}\label{sec:introduction}

Machine learning algorithms are increasingly used in high-stakes domains such as university admissions \citep{waters2014grade}, hiring \citep{pimpalkar2023job}, and criminal justice \citep{berk2012criminal}. However, empirical studies show that these systems can inherit or even amplify existing data biases, disproportionately impacting historically underrepresented or disadvantaged groups \citep{angwin2022machine, tolan2019machine}.

These concerns have motivated extensive research on bias mitigation and algorithmic fairness \citep{dwork2012fairness, hardt2016equality, agarwal2018reductions}. Yet, existing fairness guarantees often rely on large samples and restrictive distributional assumptions, such as sub-Gaussianity, limiting their practicality for complex data or small-sample settings. This highlights the need for algorithms that ensure fairness without any distributional assumptions (which we term ``distribution-free'') and in a finite-sample manner. 

Another practical challenge is the group-blind setting, where sensitive attributes are accessible during the training but not during the test (or decision-making) time due to regulatory or contractual restrictions \citep{lipton2018does}. For instance, the U.S. Supreme Court has banned the use of race in college admissions \citep{rice2023supreme, bather2023unpacking}.

In this paper, we aim to answer the following fundamental questions:
\begin{center}
\it How to achieve finite-sample and distribution-free fairness,  what is the optimal approach, and how does the optimal performance depend on the observability of sensitive attributes?
\end{center}

Motivated by these challenges, we study binary classification under various group fairness notions in both group-aware and group-blind scenarios. We develop a unified framework that (1) derives the Bayes optimal fair classifier, (2) constructs a post-processing procedure that can be applied to arbitrary black-box models and yields classifiers with finite-sample, distribution-free fairness guarantees, an important feature for modern AI systems which are often complex and opaque, and (3) analyzes their excess risk. For binary sensitive attributes, we establish minimax excess risk lower bounds, showing that our framework is minimax optimal up to logarithmic factors. This analysis elucidates key fairness–accuracy trade-offs: the optimal excess risk quantifies the inevitable cost of enforcing finite-sample fairness, while comparisons between group-aware and group-blind settings reveal the cost of group-blindness due to errors in predicting unobserved sensitive attributes. When fairness constraints are overly stringent, the group-blind excess risk may approach a constant, precluding meaningful predictive performance. Finally, since the triangle inequality fails in our setting, standard tools such as Le Cam’s method, Fano’s lemma, and Assouad’s lemma become inapplicable; we thus develop a new proof technique to obtain tight minimax bounds, which may be of independent interest.

In summary, our contributions are threefold:
\begin{enumerate}
    \item For various fairness notions in both group-aware and group-blind scenarios, we propose a unified framework that derives Bayes optimal fair classifiers, constructs classifiers with finite-sample, distribution-free fairness guarantees, and analyzes excess risks with optimality guarantees. A key component of our framework is a unified post-processing algorithm that applies to arbitrary pretrained black-box classifiers. To the best of our knowledge, this is the first framework that achieves finite-sample and distribution-free fairness with minimax-optimal excess risk control. 
    
    \item For binary sensitive attributes, we establish a minimax lower bound on the excess risk using a novel proof technique that remains effective even when the triangle inequality fails due to fairness constraints, establishing the minimax optimal rate for excess risk in fair classification. The new proof technique may be of independent interest.
    
    \item Under our proposed framework, we are able to characterize the exact fundamental trade-off between fairness and excess risk, revealing the unavoidable increase in excess risk due to group-blindness.
\end{enumerate}

\subsection{Related Works}\label{sec:literature}

Algorithms for group fairness generally fall into three categories: pre-processing, in-processing, and post-processing. Pre-processing methods adjust data distributions to mitigate bias against protected groups while retaining information \citep{calmon2017optimized, feldman2015certifying, johndrow2019algorithm, zeng2024bayes}. In-processing approaches incorporate fairness constraints or penalties during training to balance fairness and accuracy \citep{calders2009building, celis2019classification, cho2020fair, donini2018empirical, kamishima2012fairness, narasimhan2018learning, wadsworth2018achieving, zhang2018mitigating, zeng2024bayes}. Post-processing techniques modify outputs of unconstrained models to reduce discrimination across demographic groups \citep{chzhen2019leveraging, schreuder2021classification, xian2023fair, li2022fairee, zeng2024bayes, chen2024posthoc, xian2024unified}. A comprehensive overview is provided in \cite{caton2024fairness}.

Several studies have examined the form of Bayes optimal classifiers under specific fairness constraints \citep{corbett2017algorithmic, celis2019classification, menon2018cost, chzhen2019leveraging, chzhen2022minimax, xian2023fair, zeng2024bayes, chen2024posthoc, xian2024unified}. 
Closely related studies include \cite{chen2024posthoc}, \cite{zeng2024bayes}, and \cite{xian2024unified}. {\cite{chen2024posthoc} characterized group-blind Bayes optimal fair classifiers implicitly and proposed a post-processing algorithm based on empirical risk minimization, with fairness guarantees holding only asymptotically.
In both group-aware and group-blind scenarios, \cite{zeng2024bayes} explicitly characterized Bayes optimal classifiers under linear disparity constraints and developed pre-, in-, and post-processing algorithms. Their analysis mainly addresses binary sensitive attributes, extending to multi-class cases only for demographic parity, and doesn't provide theoretical guarantees.
\cite{xian2024unified} examined multiclass classification in both group-aware and group-blind settings. They characterized Bayes optimal classifiers using pointwise linear programming and proposed a post-processing algorithm that requires solving one such optimization per prediction. Their fairness guarantees depend on the multicalibration error of a pretrained model and therefore do not provide finite-sample, distribution-free validity.}

\cite{zeng2024minimax} studied a related problem, where they only focus on group-aware classification under demographic parity with binary sensitive attributes and introduce a fairness-aware excess risk measure. In comparison, we study the excess risk among fair classifiers with general fairness notions under both group-aware and blind settings. Their fairness-aware excess risk is always smaller and may underestimate the intrinsic difficulty of fair classification. In addition, they derive group-aware minimax rates for their risk and propose an asymptotically fair algorithm, whereas our work establishes both group-aware and group-blind minimax excess risks and achieves finite-sample, distribution-free fairness under general fairness notions. We summarize the comparison of our work with these closely related studies in Table \ref{tab:literature}, see Sections \ref{sec:preliminary} and \ref{sec:unify} for detailed discussions.

\begin{table}
    \scriptsize
    \hspace*{-2cm}
    \begin{tabular}{l|ccccc}
        Properties & \cite{chen2024posthoc} & \cite{zeng2024minimax} & \cite{zeng2024bayes} & \cite{xian2024unified} & Ours \\\hline
        Unified framework for general fairness notion & \checkmark & \ding{55} &\checkmark& \checkmark & \checkmark \\
        Finite-sample, distribution-free fairness & \ding{55} & \ding{55}& \ding{55} &\ding{55} & \checkmark \\
        Optimal group-aware excess risk & \ding{55}  & \checkmark& \ding{55} & \ding{55} & \checkmark \\
        Optimal group-blind excess risk & \ding{55} & \ding{55}& \ding{55} & \ding{55} & \checkmark\\
    \end{tabular}
    \caption{Comparison with existing methods.} 
    \label{tab:literature}
\end{table}

\textbf{Organization. }
The remainder of the paper is organized as follows. Section~\ref{sec:preliminary} formulates the fair classification problem and introduces notation. Section~\ref{sec:unify} presents a unified framework for binary sensitive attributes, providing fairness and excess risk guarantees. Section~\ref{sec:binary_eoo} applies this framework to equality of opportunity and establishes minimax lower bounds for excess risk in both group-aware and group-blind settings. Section~\ref{sec:numerical} evaluates the algorithm’s numerical performance. Applications to other fairness notions, extensions to multi-class sensitive attributes, and all proofs are deferred to the Supplementary Material.

\section{Preliminaries}\label{sec:preliminary}

\subsection{Model Set-up}\label{sec:setup}
We observe $n$ i.i.d. samples $\D=\big\{(X_i,A_i,Y_i):i\in[n]\big\}$ from $P_{X,A,Y}$, where each sample $(X_i,A_i,Y_i)$ consists of non-sensitive covariates $X_i\in\X\subset\R^d$, a categorical sensitive attribute $A_i\in[K]$, and a binary label $Y_i\in\{0,1\}$.

To analyze fair classification, we consider randomized classifiers \citep{li2022fairee,zeng2024bayes}, defined as follows. 

\begin{Definition}[Randomized Classifier]\label{def:randomized_classifier}
    A randomized classifier $f$ is a measurable function $f:\R^d\times[K]\rightarrow[0,1]$, where $f(X,A)=\Prob(Y_f(X,A)=1|X,A)$ and $Y_f(X,A)\in\{0,1\}$ denotes the predicted label induced by $f(X,A)$.
\end{Definition}
The set of classifiers mapping $\X\times[K]$ to $\{0,1\}$ is nonconvex, while the set of randomized classifiers forms its convex hull. When a randomized classifier $f$ takes only values in $\{0,1\}$, it reduces to a standard classifier with $Y_f(X,A)=f(X,A)$.

Given training data $\D$, our goal is to construct a randomized classifier $\hat f$ to predict $Y$ from $(X,A)$ for a new sample $(X,A,Y)\sim P_{X,A,Y}$. While the sensitive attribute $\{A_i:i\in[n]\}$ in the historical data may be used during training, it is not always available at prediction time. We therefore categorize the classification problems into two cases:

\begin{enumerate}
    \item Group-aware: $\hat f^{\rm aware}:\R^d\times[K]\rightarrow [0,1]$ use both the non-sensitive covariates $X$ and the sensitive attribute $A$ as inputs.
    \item Group-blind: $\hat f^{\rm blind}:\R^d\rightarrow [0,1]$ predicts based only on the non-sensitive covariate $X$.
\end{enumerate}
To unify notation, we slightly abuse it as follows. \textbf{For any group-blind function $f^{\rm blind}(X)$ with domain $\R^d$, we write $f^{\rm blind}(X,A)$ and use the superscript to highlight that it depends only on the non-sensitive covariates $X$.} Accordingly, for functions on $\R^d\times[K]$, we use a unified form $f^G(X,A)$ with $G\in\{{\rm aware, blind}\}$ to denote the group-aware and group-blind scenarios, where $f^{\rm blind}$ ignores the sensitive attribute $A$.

To quantify algorithmic fairness in classification, various group fairness notions have been proposed \citep{calders2009building, hardt2016equality, corbett2017algorithmic, berk2021fairness}. The methods developed here apply broadly across these notions. As an illustration, we present equality of opportunity (EOO) with binary sensitive attributes, deferring other common definitions with multiclass sensitive attributes to Section A of the Supplement.

\begin{Definition}[Unfairness Measure in terms of EOO]\label{def:unfairness_eoo_binary}
    For binary sensitive attribute $K=2$ and randomized classifier $f$, the unfairness of $f$ under EOO is
    $\U_{\rm EOO}(f)=\big|\Prob\big(Y_f(X,A)=1|A=1,Y=1\big)-\Prob\big(Y_f(X,A)=1|A=2,Y=1\big)\big|$,
    where probabilities are over the randomness of the test sample $(X, A, Y)$ and of $Y_f(X,A)$ given $f(X,A)$.
\end{Definition}
Throughout the work, we consider the following fairness constraint.
\begin{Definition}[$(\alpha,\delta)$-Fairness Constraint]\label{def:fair}
    For an unfairness measure $\U$ that maps a classifier to $\R_+$, a constructed classifier $\hat f$ satisfies the $(\alpha,\delta)$-fairness constraint if 
    \begin{equation}\label{eq:alpha_delta_fair}
    \Prob(\U(\hat f)\le\alpha)\ge1-\delta, 
    \end{equation}
    where $\U$ measures the unfairness of $\hat f$ on an independent test sample, and the probability $\Prob$  is taken over all randomness of $\hat f$, including that from the training data and (possibly) randomization introduced in the algorithm. 
\end{Definition}
\begin{Remark}[Finite-Sample Fairness]
    If a classifier $\hat f$ trained on $n$ samples achieves $(\alpha,\delta)$-fairness guarantee, we say $\hat f$ achieves finite-sample fairness. In contrast, existing works \citep{chen2024posthoc,zeng2024minimax,xian2024unified} typically ensure $(\alpha+o(1),\delta)$-fairness, achieving $(\alpha,\delta)$-fairness only asymptotically.
\end{Remark}
For $G\in\{{\rm aware, blind}\}$, define the misclassification error $\risk(f^G)=\Prob(Y\ne Y_{f^G}(X,A)),$
where $\Prob$ is over both the independent sample $(X,A,Y)$ and the randomness of $Y_{f^G}(X,A)$ given $f^G(X,A)$. Our goal is to estimate the Bayes optimal $\alpha$-fair classifier $f^{*G}_\alpha$,  
\begin{equation}\label{eq:bayes_obj}
    f^{*G}_\alpha \in\argmin_{f^G\in[0,1]^{\R^d\times[K]}}\risk(f^G),\quad {\rm s.t.}\quad\U(f^G)\le\alpha.
\end{equation}
Estimating $f^{*G}_\alpha$ is challenging because, although Problem~\eqref{eq:bayes_obj} may be convex in $f^G$, it is generally nonconvex in the parameterization of $f^G$ \citep{wu2019convexity,celis2019classification,caton2024fairness}, and solving the empirical version of Problem~\eqref{eq:bayes_obj} does not guarantee $(\alpha,\delta)$-fairness \eqref{eq:alpha_delta_fair}. As shown in Section~\ref{sec:unify} and Section A of the supplement, to address these problems, we propose a post-processing algorithm that modifies any (black-box) classifier trained without fairness constraints, reducing the original nonconvex Problem~\eqref{eq:bayes_obj} over possibly complex function classes to a one-dimensional (resp. $K$-dimensional) nonconvex optimization for binary (resp. $K$-class) sensitive attributes.

\subsection{Notation}\label{sec:notations}
For $n,m\in\mathbb{N}_+$, let $[n]=\{1,\ldots,n\}$ and $m+[n]=\{m+1,\ldots, m+n\}$. For spaces $\X$, $\mathcal{Y}$, denote $\mathcal{Y}^{\X}$ as the set of all functions mapping from $\X$ to $\mathcal{Y}$. Define $\eta^{\rm blind}(X,A)=\Prob(Y=1|X)$ and $\eta^{\rm aware}(X,A)=\Prob(Y=1|X,A)$ as the best predictions of $Y$ given $X$ and $(X,A)$, respectively. For $a\in[K], y\in\{0,1\}$, let $\rho_a(X)=\Prob(A=a|X)$ and $\rho_{a|y}(X)=\Prob(A=a|Y=y,X)$ to be the conditional distributions of $A$ given $X$ and $(X,Y)$, respectively. Denote $p_a=\Prob(A=a)$, $p_{y,a}=\Prob(Y=y,A=a)$, and $p_Y=\Prob(Y=1)$ to be the probability measures of $A$, $(Y,A)$, and $Y$, separately. For any random vector $Z$, $P_Z$ denotes its joint distribution. For any function $f$ of $x$, define the $L_\infty$ norm $\|f\|_\infty$ to be the supremum value of $|f(x)|$ on the support of $P_X$, i.e., $\|f\|_\infty=\sup_{x\in\X}|f(x)|$. 
Let ${\rm Leb}(\cdot)$ denote the Lebesgue measure on $\R^d$ and $B_q(c,r)=\{x\in\R^d:\|x-c\|_q\le r\}$ the $l_q$ ball in $\R^d$ centered at $c$ with radius $r$. For $a,b\in\R$, write $a\wedge b=\min\{a,b\}$, $a\vee b=\max\{a,b\}$ and $(a)_+=a\vee 0$. For $\beta>0$, $\lfloor\beta\rfloor$ denotes the largest integer strictly smaller than $\beta$. For any $k$ times differentiable function $g:\R^d\rightarrow \R$ and $x\in\R^d$, $g_{k,x}:\R^d\rightarrow \R$ is the degree $k$ Taylor polynomial of $g$ at $x$. We use $c$ and $C$ for absolute constants that may vary across appearances. For positive sequences $\{a_n\}$ and $\{b_n\}$, $a_n\lesssim b_n$ means $a_n\le Cb_n$ for all $n$, $a_n\gtrsim b_n$ if $b_n\lesssim a_n$, $a_n\asymp b_n$ if $a_n\lesssim b_n$ and $b_n\lesssim a_n$.

\section{A Unified Framework with Binary Sensitive Attributes}\label{sec:unify}

In this section, we develop a unified post-processing framework for fair classification with binary sensitive attributes ($K=2$), applicable to various fairness notions in both group-aware and group-blind scenarios. 
We first derive Bayes optimal $\alpha$-fair classifiers.
Then in Section~\ref{sec:unify_binary}, we propose a universal post-processing algorithm that guarantees both fairness and excess risk control. The framework is further extended to multi-class sensitive attributes ($K>2$) in Section A.2 of the supplement. 

\subsection{Bayes Optimal $\alpha$-fair Classifier}\label{sec:unify_bayes}

In this section, we study the Bayes optimal $\alpha$-fair classifier. We start with an equivalent characterization of the unfairness measures, which enables both a closed-form expression of the Bayes optimal classifier and accurate finite-sample approximations of the unfairness measures. 
Recall $\eta^{\rm aware}(X,A)=\Prob(Y=1|X,A)$, $\eta^{\rm blind}(X,A)=\Prob(Y=1|X)$ and $\rho_{a|y}(X)=\Prob(A=a|X,Y=y)$. Although $\eta^{\rm blind}$ does not take $A$ as input, we still write $\eta^{\rm blind}(X,A)$ for notational consistency.
We illustrate using EOO (Definition \ref{def:unfairness_eoo_binary}) as an example. 
\begin{Example}\label{exa:phi_eoo}
    If we denote $\phi^{\rm aware}_{{\rm EOO}}(x,a)=\big(\frac{\1(a=1)}{p_{1,1}}-\frac{\1(a=2)}{p_{1,2}}\big)\eta^{\rm aware}(x,a)$, $\phi^{\rm blind}_{{\rm EOO}}(x,a)=\big(\frac{\rho_{1|1}(x)}{p_{1,1}}-\frac{\rho_{2|1}(x)}{p_{1,2}}\big)\eta^{\rm blind}(x,a)$,
then for $G\in\{{\rm aware, blind}\}$ and any classifier $f^G$, 
\begin{equation}\label{eq:eg1}
\U_{\rm EOO}(f^{\rm G})=|(\E_{X|Y=1,A=1}-\E_{X|Y=1,A=2})f^G(X,A)|=|\E\phi^G_{\rm EOO}(X,A)f^G(X,A)|.
\end{equation}
\end{Example}
The derivation of \eqref{eq:eg1} is in Section C of the supplement. In the group-blind scenario, where $A$ is unavailable for prediction, it can be verified that $\big\{x\in\X:\frac{\rho_{1|1}(x)}{p_{1,1}}=\frac{\rho_{2|1}(x)}{p_{1,2}}\big\}$ is the classification boundary of the Bayes optimal classifier $h^*\in\argmin_{h\in\{1,2\}^{\X}}\Prob(h(X)=2|A=1,Y=1)+\Prob(h(X)=1|A=2,Y=1)$, which predicts $A$ from $X$ under the group-wise misclassification error conditioned on $Y=1$.
Then $\phi^{\rm blind}_\EOO(x)> 0$ (resp. $< 0$) if the Bayes optimal classifier predicts $h^*(x)=1$ (resp. $=2$). When predicting $A$ is difficult, i.e., $|\frac{\rho_{1|1}(x)}{p_{1,1}}-\frac{\rho_{2|1}(x)}{p_{1,2}}|$ is small and $x$ is near the classification boundary, $|\phi^{\rm blind}_{\rm EOO}(x)|$ is small. Consequently, $\sgn(\phi^{\rm blind}_\EOO)$ gives the Bayes optimal prediction of $A$, and $|\phi^{\rm blind}_\EOO|$ reflects prediction confidence. Similar interpretations carry over to the group-aware scenario, where $A$ is known. In the group-aware scenario, $\phi^{\rm aware}_\EOO>0$ (resp. $<0$) if $A=1$ (resp. $A=2$), and its magnitude satisfies $|\phi^{\rm aware}_{\rm EOO}|\gtrsim\eta^{\rm aware}$, reflecting greater prediction confidence.

We show in Section A.1 of the supplement that most of the commonly used unfairness measures, including demographic parity, equality of opportunity, overall accuracy equality, and predictive equality, share the following form.
\begin{Definition}[Unfairness Measure]\label{def:unfairness_characterzation}
    For binary sensitive attribute ($K=2$) and any randomized classifier $f^G$, we consider unfairness measure $\U(f^G)\in\R_+$ such that
     \begin{equation}\label{eq_unfairness_characterization}
    \U(f^G)=\bigg|\sum_{j\in[m]}\kappa_j\E_jf^G(X,A)\bigg|=\big|\E\phi^G(X,A)f^G(X,A)\big|, 
    \end{equation}
    where $\{\kappa_j\in\R:j\in[m]\}$ are coefficients, $\E_j$ are expectations conditioned on sensitive attributes, and $\phi^G:\R^d\times[2]\rightarrow\R$ is a bounded function determined by fairness notions. 
\end{Definition}
Equation \eqref{eq_unfairness_characterization} expresses the unfairness measure in two different forms. The first form can be estimated via sample averages and is essential for finite-sample, distribution-free fairness control, while the second reveals the optimal adjustment direction for unconstrained classifiers and is crucial for characterizing and estimating the Bayes optimal fair classifier. {Related works \citep{chen2024posthoc,zeng2024bayes,xian2024unified} also proposed general formulations but emphasize different aspects. \cite{chen2024posthoc} and \cite{xian2024unified} consider only the first form. Therefore, \cite{chen2024posthoc} only provides incomplete characterizations of the Bayes optimal classifier and relies on empirical risk minimization for estimation. \cite{zeng2024bayes} focuses on the second form, limiting the tightness of fairness control, see Section \ref{sec:numerical} for details.}

Since $\risk(f^G)=p_Y+\E\big(1-2\eta^G(X,A)\big)f^G(X,A)$ and $\U(f^G)$ in \eqref{eq_unfairness_characterization} are convex in $f^G$, Problem~\eqref{eq:bayes_obj} is convex in $f^G$.
Proposition \ref{prop:bayes_binary} gives an explicit expression for the Bayes optimal $\alpha$-fair classifier, which is a translation of the unconstrained Bayes optimal classifier. The case with multi-class sensitive attributes is studied in Section A of the supplement.

\begin{Proposition}[Bayes Optimal $\alpha$-fair Classifier]\label{prop:bayes_binary}
    For $K=2, G\in\{{\rm aware,blind}\}$, and unfairness measure $\U$ defined in Definition \ref{def:unfairness_characterzation}, the Bayes optimal $\alpha$-fair classifier $f^{*G}_\alpha\in[0,1]^{\R^d\times [2]}$ in Problem \eqref{eq:bayes_obj} satisfies, $P_{X,A}$-almost surely, 
    \begin{align*}
        f^{*G}_\alpha (X,A)=&\1\big(g^{*G}_\alpha (X,A)>0\big)+b^G(X,A)\1\big(g^{*G}_\alpha (X,A)=0\big), 
    \end{align*}
    for $g^{*G}_\alpha (X,A)=2\eta^G(X,A)-1-\lambda^{*G}_\alpha\phi^G(X,A)$, 
    \begin{equation}\label{eq:lambda_obj_binary}
        \lambda^{*G}_\alpha \in\argmin_{\lambda\in\R}\E\big(2\eta^G(X,A)-1-\lambda\phi^G(X,A)\big)_++\alpha|\lambda|, 
    \end{equation}
    and any function $b^G\in[0,1]^{\R^d\times[2]}$ such that $f^{*G}_\alpha$ satisfies the fairness constraint and  
    \begin{equation}\label{eq:lambda_binary_optimality}
        \lambda^{*G}_\alpha\E\phi^G(X,A)f^{*G}_\alpha(X,A)=|\lambda^{*G}_\alpha|\alpha. 
    \end{equation}
\end{Proposition}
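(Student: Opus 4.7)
The plan is to prove Proposition~\ref{prop:bayes_binary} via Lagrangian duality applied to the convex program \eqref{eq:bayes_obj}. First I would exploit that $\risk(f^G)=p_Y+\E(1-2\eta^G(X,A))f^G(X,A)$ is linear in $f^G$ and, via the representation \eqref{eq_unfairness_characterization}, that $\U(f^G)\le\alpha$ splits into the two linear inequalities $\pm\E\phi^G(X,A)f^G(X,A)\le\alpha$. Assigning non-negative multipliers $\mu_1,\mu_2$ and reparametrizing $\lambda=\mu_1-\mu_2\in\R$, the Lagrangian becomes
\[
L(f^G,\lambda)=p_Y+\E\big[(1-2\eta^G(X,A))+\lambda\phi^G(X,A)\big]f^G(X,A)-(\mu_1+\mu_2)\alpha.
\]
Maximizing over $(\mu_1,\mu_2)\ge 0$ with $\mu_1-\mu_2=\lambda$ fixed forces $\mu_1+\mu_2=|\lambda|$; then minimizing pointwise over $f^G(x,a)\in[0,1]$ yields the dual objective $p_Y-\E(2\eta^G-1-\lambda\phi^G)_+-|\lambda|\alpha$, whose maximizer is precisely the minimizer $\lambda^{*G}_\alpha$ in \eqref{eq:lambda_obj_binary}.

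Second, I would establish strong duality. Slater's condition holds for $\alpha>0$ since $f^G\equiv 0$ satisfies $|\E\phi^G f^G|=0<\alpha$, and the problem is a linear program over a convex feasible set. Consequently, any primal optimizer $f^{*G}_\alpha$ must be a pointwise minimizer of $[(1-2\eta^G(X,A))+\lambda^{*G}_\alpha\phi^G(X,A)]f^G(X,A)$ over $[0,1]$. This forces $f^{*G}_\alpha(X,A)=\1(g^{*G}_\alpha(X,A)>0)$ on $\{g^{*G}_\alpha\ne 0\}$ and leaves $f^{*G}_\alpha(X,A)=b^G(X,A)\in[0,1]$ free on the tie set $\{g^{*G}_\alpha=0\}$, which is exactly the claimed form.

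Finally, with $\mu_1=(\lambda^{*G}_\alpha)_+$ and $\mu_2=(-\lambda^{*G}_\alpha)_+$, the KKT complementary-slackness conditions $\mu_1(\E\phi^G f^{*G}_\alpha-\alpha)=\mu_2(-\E\phi^G f^{*G}_\alpha-\alpha)=0$ collapse exactly to \eqref{eq:lambda_binary_optimality}, while primal feasibility delivers $\U(f^{*G}_\alpha)\le\alpha$. The main obstacle is showing that both conditions can be satisfied \emph{simultaneously} through a suitable choice of $b^G$ on the tie set. I would handle this by analyzing the linear functional $b^G\mapsto\E\phi^G(X,A)b^G(X,A)\1(g^{*G}_\alpha(X,A)=0)$, whose range over $b^G\in[0,1]^{\R^d\times[2]}$ is an interval, and invoking the first-order subdifferential optimality of $\lambda^{*G}_\alpha$ for \eqref{eq:lambda_obj_binary} to certify that this interval contains the required target value $\sgn(\lambda^{*G}_\alpha)\alpha-\E\phi^G(X,A)\1(g^{*G}_\alpha(X,A)>0)$ in the generic case $\lambda^{*G}_\alpha\ne 0$. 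The degenerate case $\lambda^{*G}_\alpha=0$ is immediate, since \eqref{eq:lambda_binary_optimality} holds trivially and any $b^G$ preserving $\U(f^{*G}_\alpha)\le\alpha$ works.
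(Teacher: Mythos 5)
Your proposal is correct and follows essentially the same route as the paper's proof: Lagrangian duality for the linearized risk, strong duality via a Slater-type condition (the paper invokes Luenberger's Lagrange duality theorem together with Sion's minimax theorem to obtain the saddle point), pointwise minimization of the Lagrangian to get the thresholding form, and complementary slackness to get \eqref{eq:lambda_binary_optimality}. Your splitting of $|\E\phi^G f^G|\le\alpha$ into two one-sided constraints with multipliers $\mu_1,\mu_2$ is just the scalar instance of the paper's $\sup_{\|\mu\|_*\le 1}$ reformulation, and your closing argument for the existence of a suitable $b^G$ on the tie set is handled in the paper implicitly by the existence of the primal optimizer, so no substantive gap remains.
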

\begin{Remark}\label{rem:lambda_upper_unify}  The optimal fair classifier can be viewed as a shifted version of the unconstrained Bayes rule, where the shift $\lambda^{*G}_\alpha\phi^G(X,A)$ represents the adjustment required to enforce fairness. The magnitude $|\lambda^{*G}_\alpha|$ quantifies how strongly the fairness constraint perturbs the decision boundary.
   Since the set of minimizers of Problem~\eqref{eq:lambda_obj_binary} is closed, we take $\lambda^{*G}_\alpha$ as the one with the smallest absolute value when multiple solutions exist. It follows from \eqref{eq:lambda_binary_optimality} that 
    \[|\lambda^{*G}_\alpha|\alpha=\E\lambda^{*G}_\alpha\phi^G(X,A)\1\big(2\eta^G(X,A)-1>\lambda^{*G}_\alpha\phi^G(X,A)\big)\le\E\big[\big(2\eta^G(X,A)-1\big)f^{*G}_\alpha(X,A)\big]\le 1,\]
    implying $|\lambda^{*G}_\alpha|\le\alpha^{-1}$. As shown in Section~\ref{sec:binary_eoo}, in the group-aware setting, $|\lambda^{*{\rm aware}}_\alpha|$ is upper bounded by a constant even as $\alpha\to 0$. On the contrary, in the group-blind scenario, for any $\alpha>0$, there exist distributions for which $|\lambda^{*{\rm blind}}_\alpha|\asymp\alpha^{-1}$. 
\end{Remark}

Bayes optimal fair classifiers have been studied for specific scenarios and fairness notions \citep{corbett2017algorithmic, menon2018cost, schreuder2021classification}. More recent works provide unified analyses \citep{chen2024posthoc,zeng2024bayes,xian2024unified}. {Specifically, \cite{chen2024posthoc} analyzed group-blind Bayes optimal classifiers under fairness constraints of the first form in \eqref{eq_unfairness_characterization}, but didn't explicitly characterize $\lambda_\alpha^{*G}$. In both group-aware and group-blind scenarios, \cite{zeng2024bayes} derived explicit characterizations of the Bayes optimal classifier under fairness constraints of the second form in \eqref{eq_unfairness_characterization}, but they mainly focus on binary sensitive attributes, with multiclass cases studied only under demographic parity. Moreover, they didn't provide characterizations like \eqref{eq:lambda_binary_optimality}, leaving the range of $\lambda_\alpha^{*G}$(see Remark \ref{rem:lambda_upper_unify}) unspecified. We will show that $\lambda_\alpha^{*G}$ is crucial to both the excess risk upper and lower bounds. For multiclass classification under fairness constraints of the first form in \eqref{eq_unfairness_characterization}, \cite{xian2024unified} characterized the Bayes optimal fair classifier pointwise rather than globally. Therefore, their results are not directly comparable to ours.}

From Proposition~\ref{prop:bayes_binary}, $g^{*G}_\alpha=0$ defines the classification boundary of the Bayes-optimal fair classifier $f^{*G}_\alpha$. On this boundary, the prediction $Y_{f^{*G}_\alpha}$ induced by $f^{*G}_\alpha$ is randomized. For simplicity, we assume the boundary has zero probability measure, i.e., $\Prob(g^{*G}_\alpha(X,A)=0)=0$. Proposition~\ref{prop:bayes_binary} shows that $g^{*G}_\alpha$ is a translation of the unconstrained Bayes-optimal classification boundary $2\eta^G-1$ by $\lambda^{*G}_\alpha\phi^G$, motivating classifiers of the form $\hat f^G_\alpha=\1(2\hat\eta^G-1>\hat\lambda^G\hat\phi^G)$. As shown in Section~\ref{sec:fairness}, given any $\hat\eta^G$ and $\hat\phi^G$, there always exists a $\hat\lambda^G$ ensuring $(\alpha,\delta)$-fairness when $\alpha$ is not too small. Moreover, as shown in Section~\ref{sec:excess}, if $\hat\eta^G$ and $\hat\phi^G$ accurately estimate $\eta^G$ and $\phi^G$, the resulting classifier $\hat f^G_\alpha$ achieves low prediction error.

\subsection{Post-processing Algorithm}\label{sec:unify_binary}

In this section, we propose a general post-processing algorithm for various fairness notions with guaranteed fairness and excess risk. 

We decompose the tolerance $\delta$ in $(\alpha,\delta)$-fairness as $\delta=\delta_{\rm init}+\delta_{\rm post}$, where $\delta_{\rm init}$ controls the probability of inaccurate initial estimators and $\delta_{\rm post}$ bounds the failure probability of the post-processing algorithm. 
Throughout the section, the initial estimators $\hat\eta^G$ and $\hat\phi^G$ are treated as fixed and independent of the training data $\D=\{(X_i,A_i,Y_i):i\in[n]\}$. Then our goal is to design a post-processing algorithm $\A^G$ that maps from $\D,\hat\eta^G,\hat\phi^G$ to a classifier $\hat f^G_\alpha=\A^G(\D;\hat\eta^G,\hat\phi^G)\in[0,1]^{\R^d\times[2]}$ and satisfies the $(\alpha,\delta_{\rm post})$-fairness constraint: 
$\Prob_{\D}\big(\U\big(\A^G(\D;\hat\eta^G,\hat\phi^G)\big)\le\alpha\big)\ge 1-\delta_{\rm post}$.
Here, $\U$ measures the unfairness of the classifier $\A^G(\D;\hat\eta^G,\hat\phi^G)$ on an independent test sample, and $\Prob_\D$ accounts for randomness in $\D$ and in the algorithm itself.
The role of $\delta_{\rm init}$ will be demonstrated in Section~\ref{sec:binary_eoo}.

As shown in Proposition~\ref{prop:bayes_binary}, the Bayes optimal $\alpha$-fair classifier $f^{*G}_\alpha$ depends on three components: $\eta^G,\phi^G$ and $\lambda^{*G}_\alpha$. Given estimators $\hat\eta^G$ and $\hat\phi^G$, it remains to estimate $\lambda^{*G}_\alpha$ from the data $\D$. This estimation is motivated by the following characterization of $\lambda^{*G}_\alpha$.
\begin{Lemma}[Characterization of $\lambda^{*G}_\alpha$]\label{lem:lambda_binary}
   Under the model set-up described above, suppose $\Prob\big(g^{*G}(X,A)=0\big)=0$. Denote $s^G=\sgn\big(\E\phi^G(X,A)\1\big(2\eta^G(X,A)>1\big)\big)$ with $\sgn(0)\in[-1,1]$, then $\lambda^{*G}_\alpha$ defined in \eqref{eq:lambda_obj_binary} satisfies $\lambda^{*G}_\alpha=s^G|\lambda^{*G}_\alpha|$ with 
    \[|\lambda^{*G}_\alpha|=\argmin_{\lambda_+\ge 0}\lambda_+\quad{\rm s.t.}\quad s^G\E\phi^G(X,A)\1\big(2\eta^G(X,A)-1>s^G\lambda_+\phi^G(X,A)\big)\le\alpha. \]
\end{Lemma}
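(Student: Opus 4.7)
The plan is to treat the optimization in \eqref{eq:lambda_obj_binary} as a one-dimensional convex minimization and apply subdifferential calculus, separating the analysis by the sign of $\lambda$. Write
\[F(\lambda):=\E\bigl(2\eta^G(X,A)-1-\lambda\phi^G(X,A)\bigr)_++\alpha|\lambda|\]
and, for each $\lambda\in\R$, introduce
\[g(\lambda):=\E\phi^G(X,A)\1\bigl(2\eta^G(X,A)-1>\lambda\phi^G(X,A)\bigr),\]
so that $s:=g(0)=\E\phi^G(X,A)\1(2\eta^G(X,A)>1)$ and $s^G=\sgn(s)$. I would first note that $F$ is convex (sum of convex functions of $\lambda$), that $F(\lambda)\to\infty$ as $|\lambda|\to\infty$ (the $\alpha|\lambda|$ term dominates because $\phi^G$ is bounded), and that splitting the integral defining $g$ into $\{\phi^G>0\}$ and $\{\phi^G<0\}$ shows $g$ is non-increasing in $\lambda$.

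Next, I would determine the sign of $\lambda^{*G}_\alpha$. Dominated convergence together with convexity of $(\cdot)_+$ gives the one-sided derivatives of $F$ at zero: $F'(0^+)=-s+\alpha$ and $F'(0^-)=-s-\alpha$. Convexity then yields three cases: if $s>\alpha$ then $F'(0^+)<0$, so $F$ is strictly decreasing just to the right of $0$ and every minimizer is strictly positive, matching $s^G=1$; if $s<-\alpha$ then $F'(0^-)>0$, so every minimizer is strictly negative, matching $s^G=-1$; if $|s|\le\alpha$ then $F'(0^-)\le 0\le F'(0^+)$, so $0$ lies in the closed interval of minimizers and the tie-breaking rule from Remark~\ref{rem:lambda_upper_unify} forces $\lambda^{*G}_\alpha=0$. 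In all three cases the identity $\lambda^{*G}_\alpha=s^G|\lambda^{*G}_\alpha|$ holds, using the convention $\sgn(0)\in[-1,1]$ to cover the degenerate case $s=0$.

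To identify $|\lambda^{*G}_\alpha|$, I would substitute $\lambda=s^G\lambda_+$ with $\lambda_+\ge 0$ and minimize
\[\tilde F(\lambda_+):=\E\bigl(2\eta^G(X,A)-1-s^G\lambda_+\phi^G(X,A)\bigr)_++\alpha\lambda_+\]
over $[0,\infty)$. Chain rule and dominated convergence give right-derivative $\tilde F'(\lambda_+)=-s^G g(s^G\lambda_+)+\alpha$, which is non-decreasing in $\lambda_+$ by the monotonicity of $g$. Consequently the minimizer set of $\tilde F$ is a closed interval whose left endpoint is $\inf\{\lambda_+\ge 0:\,-s^G g(s^G\lambda_+)+\alpha\ge 0\}=\inf\{\lambda_+\ge 0:\,s^G g(s^G\lambda_+)\le\alpha\}$, which coincides with the infimum appearing in the lemma. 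Combined with the previous step, this pins down $|\lambda^{*G}_\alpha|$ as the smallest-absolute-value minimizer selected by Remark~\ref{rem:lambda_upper_unify}.

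The main technical wrinkle is that $g$ can have jump discontinuities at values $\lambda$ for which $\Prob(2\eta^G-1=\lambda\phi^G)>0$, so $F$ need not be differentiable everywhere. At such points the subdifferential is the interval $[-g(\lambda^+)+\alpha\sgn(\lambda),\,-g(\lambda^-)+\alpha\sgn(\lambda)]$, and some care is required to argue that the infimum description above still identifies the smallest minimizer. The hypothesis $\Prob(g^{*G}(X,A)=0)=0$ is exactly what guarantees continuity of $g$ at $\lambda^{*G}_\alpha$ itself, allowing the subgradient inclusion there to collapse to the clean equation $s^G g(s^G|\lambda^{*G}_\alpha|)=\alpha$; this handles the interface between the interior optimality condition and the infimum characterization.
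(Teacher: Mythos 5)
Your proposal is essentially correct but follows a genuinely different route from the paper's. The paper does not analyze $F$ by subdifferential calculus at all: it imports the optimality identity \eqref{eq:lambda_binary_optimality} from Proposition~\ref{prop:bayes_binary}, which under $\Prob(g^{*G}_\alpha(X,A)=0)=0$ reads $s_\lambda\,\E\phi^G\1(2\eta^G-1>s_\lambda|\lambda^{*G}_\alpha|\phi^G)=\alpha$ with $s_\lambda=\sgn(\lambda^{*G}_\alpha)$; monotonicity of $\lambda_+\mapsto s_\lambda\E\phi^G\1(2\eta^G-1>s_\lambda\lambda_+\phi^G)$ then forces $s_\lambda\E\phi^G\1(2\eta^G>1)\ge\alpha>0$, hence $s^G=s_\lambda$, and minimality of $|\lambda^{*G}_\alpha|$ is obtained by contradiction: a smaller feasible $\lambda_+$ would make the constraint tight there as well, an explicit computation then shows the two sets where $\1(g^{*G}_\alpha>0)$ and $\1(2\eta^G-1>s^G\lambda_+\phi^G)$ differ are null, so $s^G\lambda_+$ would be another minimizer of \eqref{eq:lambda_obj_binary} with smaller absolute value, violating the tie-breaking convention of Remark~\ref{rem:lambda_upper_unify}. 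Your argument re-derives the first-order condition from scratch and is more self-contained; the paper's is shorter because it recycles the saddle-point analysis already done for Proposition~\ref{prop:bayes_binary}. Both are legitimate.

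The one place your write-up is not airtight is the treatment of atoms at values of $\lambda$ other than $\lambda^{*G}_\alpha$. The clean formulas $F'(0^{+})=-s+\alpha$ and $\tilde F'(\lambda_+^{+})=\alpha-s^Gg(s^G\lambda_+)$ hold only when $\Prob\big(2\eta^G-1=\lambda\phi^G,\ \phi^G\ne 0\big)=0$ at the relevant $\lambda$; the hypothesis guarantees this only at $\lambda=\lambda^{*G}_\alpha$. In general the right derivative carries an extra term $\E\big[|\phi^G|\1\big(2\eta^G-1=s^G\lambda_+\phi^G,\ s^G\phi^G<0\big)\big]\ge 0$, so your implication ``$s>\alpha\Rightarrow F'(0^+)<0$'' can fail as stated. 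The argument survives because the correction is nonnegative: the implications you actually need run the other way ($\lambda^{*G}_\alpha>0$ forces $F'(0^+)<0$ by convexity, which then yields $s\ge\alpha+(\text{nonneg})>0$; and $\tilde F'(\lambda_+^+)<0$ for $\lambda_+$ left of the minimizer set yields $s^Gg(s^G\lambda_+)>\alpha$, i.e.\ infeasibility), and in the residual case where an atom at $\lambda=0$ makes $0$ a minimizer despite $s>\alpha$, the tie-breaking rule gives $\lambda^{*G}_\alpha=0$ and $g^{*G}_\alpha=2\eta^G-1$, so that atom directly contradicts $\Prob(g^{*G}_\alpha=0)=0$. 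You should make this sign check and the contradiction explicit rather than asserting the uncorrected derivative formulas.
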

Lemma~\ref{lem:lambda_binary} indicates that $\sgn(\lambda^{*G}_\alpha)$ can be identified as $s^G=\sgn\big(\E\phi^G(X,A)\1\big(2\eta^G(X,A)>1\big)\big)$ and $|\lambda^{*G}_\alpha|$ is chosen to exhaust the unfairness budget $\alpha$. Although Lemma~\ref{lem:lambda_binary} involves $\eta^G$ and $\phi^G$, the same intuition holds when replacing them with any estimators $\hat\eta^G$ and $\hat\phi^G$, see Section F of the supplement for details. 

In practice, the population unfairness in Lemma \ref{lem:lambda_binary} can be approximated by empirical unfairness, and the induced gap is quantified by the following lemma. 
Let $\{\hat \E_j:j\in[m]\}$ denote the conditional sample averages corresponding to $\{\E_j:j\in[m]\}$ computed from $\D$, and let $n_{(j)}$ be the sample size used for each $\hat \E_j$. Recalling from \eqref{eq_unfairness_characterization} that $\U(f)=|\sum_{j\in[m]}\kappa_j\E_jf(X,A)|$, it empirical approximation is $|\sum_{j\in[m]}\kappa_j\hat\E_jf(X,A)|$. 
Define  
\[\epsilon_\alpha=\sum_{j\in[m]}|\kappa_j|\bigg\{72\sqrt{\frac{2\log 4e^2}{n_{(j)}}}+\sqrt{\frac{1}{2n_{(j)}}\log\frac{2m}{\delta_{\rm post}}}\bigg\}. \]
Then, to ensure population unfairness below $\alpha$, it suffices to constrain the empirical unfairness below $\alpha-\epsilon_\alpha$. Importantly, $\epsilon_\alpha$ is fully distribution-free, enabling finite-sample and distribution-free fairness guarantees.

\begin{Lemma}\label{lem:unfairness_dev_binary}  
Under the model set-up described above, given any estimators $\hat\eta^G$ and $\hat\phi^G$, with probability at least $1-\delta_{\rm post}$ over the randomness of $\D$, we have
    $\sup_{\lambda\in\R}\big|\sum_{j\in[m]}\kappa_j(\hat\E_j-\E_{j})\1\big(2\hat\eta^G(X,A)-1>\lambda\hat\phi^G(X,A)\big)\big|\le\epsilon_\alpha$.
\end{Lemma}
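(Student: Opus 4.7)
The plan is to view $\sup_{\lambda \in \R}$ as a uniform deviation over a function class, bound it separately for each of the $m$ conditional averages using VC-type empirical process tools, and then combine via a union bound and the weighted triangle inequality. Conditioning on $\hat\eta^G$ and $\hat\phi^G$ (which are independent of $\D$), the relevant deterministic class is
\begin{equation*}
\FF = \Bigl\{ (x,a) \mapsto \1\bigl(2\hat\eta^G(x,a) - 1 > \lambda \hat\phi^G(x,a)\bigr) : \lambda \in \R \Bigr\}.
\end{equation*}
For each fixed $(x,a)$, the set of $\lambda$ for which the indicator equals $1$ is a half-line (or all of $\R$, or the empty set) whose orientation is determined by $\sgn(\hat\phi^G(x,a))$. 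A short combinatorial check, using that one-sided thresholds on $\R$ have VC dimension $1$ and that allowing the inequality direction to flip pointwise at most doubles this, shows that $\FF$ has VC dimension bounded by a small absolute constant.

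Next, fix $j \in [m]$ and condition on the (random) subsample of size $n_{(j)}$ used to form $\hat\E_j$; by the tower property this preserves the i.i.d.\ structure within the conditioning event, so it suffices to prove the bound pointwise in $n_{(j)}$. Because every $f \in \FF$ is $\{0,1\}$-valued, swapping a single observation perturbs $\sup_{f \in \FF} |(\hat\E_j - \E_j) f|$ by at most $1/n_{(j)}$, so McDiarmid's bounded-differences inequality gives
\begin{equation*}
\sup_{f \in \FF} |(\hat\E_j - \E_j) f| \le \E \sup_{f \in \FF} |(\hat\E_j - \E_j) f| + \sqrt{\frac{\log(2m/\delta_{\rm post})}{2 n_{(j)}}}
\end{equation*}
with probability at least $1 - \delta_{\rm post}/m$. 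Standard symmetrization followed by a Massart-type Rademacher-complexity bound for VC classes (with explicit constants carefully tracked) controls the expected supremum by $72 \sqrt{2 \log(4e^2)/n_{(j)}}$, which is the first term in $\epsilon_\alpha$.

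A union bound over $j \in [m]$ assembles the $m$ one-sided bounds into a simultaneous statement with total failure probability at most $\delta_{\rm post}$. On this event, the triangle inequality
\begin{equation*}
\sup_\lambda \Bigl|\sum_{j \in [m]} \kappa_j (\hat\E_j - \E_j) f_\lambda \Bigr| \le \sum_{j \in [m]} |\kappa_j| \sup_\lambda \bigl|(\hat\E_j - \E_j) f_\lambda\bigr|
\end{equation*}
yields exactly $\epsilon_\alpha$ on the right-hand side. The main obstacle I anticipate is bookkeeping: verifying the VC dimension bound cleanly despite the pointwise sign flip in $\hat\phi^G$ (so that the class cannot simply be identified with monotone thresholds), and tracking the numerical constants $72$ and $\log(4e^2)$ through symmetrization and the particular form of the Rademacher complexity bound invoked, since these determine the precise leading coefficient of $\epsilon_\alpha$.
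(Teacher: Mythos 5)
Your proposal is correct and follows essentially the same route as the paper: the paper reduces the claim to a generic VC uniform-deviation bound (its Lemma on empirical processes, quoted from Theorems 12.1 and 13.7 of Boucheron, Lugosi, and Massart, i.e., bounded differences plus a chaining/Rademacher bound for VC classes), notes that the class $\{\1(2\hat\eta^G-1>\lambda\hat\phi^G):\lambda\in\R\}$ has VC dimension at most $2$, and then combines the $m$ conditional averages by a union bound and the weighted triangle inequality, exactly as you do. Your handling of the sign flip in $\hat\phi^G$ and of the random subsample sizes $n_{(j)}$ fills in details the paper leaves implicit, and the resulting constants match the paper's definition of $\epsilon_\alpha$.
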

Lemma~\ref{lem:unfairness_dev_binary} follows from the fact that, given $\hat\eta^G$ and $\hat\phi^G$, the function class $\big\{\1(2\hat\eta^G-1>\lambda\hat\phi^G):\lambda\in\R\big\}$ has VC dimension at most 2. Here we are not trying to find the tightest $\epsilon_\alpha$, the main message is that $\epsilon_\alpha$ is distribution-free and of order $O_P\big(\sqrt{\frac{\log({1}/{\delta_{\rm post}})}{n}}\big)$.

Motivated by Lemmas~\ref{lem:lambda_binary} and \ref{lem:unfairness_dev_binary}, we first estimate the sign $s^G$ by
$\hat s^G=\sgn\big(\sum_{j\in[m]}\kappa_j\hat\E_j\1\big(2\hat\eta^G(X,A)>1\big)\big)$,
and set $\hat\lambda^G=\hat s^G\hat\lambda_+^G$, where $\hat\lambda_+^G\ge 0$ is the smallest $\lambda_+$ satisfying
$\hat s^G\sum_{j\in[m]}\kappa_j\hat\E_j\1\big(2\hat\eta^G(X,A)-1>\hat s^G\lambda_+\hat\phi^G(X,A)\big)\le\alpha-\epsilon_\alpha$.
The resulting classifier is 
$\hat f^G_\alpha(x,a)=\1\big(2\hat\eta^G(x,a)-1>\hat\lambda^G\hat\phi^G(x,a)\big)$. 
We summarize the procedures in Algorithm \ref{alg:binary_unified}. Although derived from the randomized classifier (Definition \ref{def:randomized_classifier}), the constructed $\hat f_\alpha^G$ is not randomized.

\begin{algorithm}
\small
\caption{Post-Processing with Binary Sensitive Attribute}
\label{alg:binary_unified}
\begin{algorithmic}
\State{\bf Input:} Data $\D$, initial estimators $\hat\eta^G,\hat\phi^G$, the unfairness level $\alpha$, the tolerance $\delta_{\rm post}$, and the scenario $G\in\{{\rm aware, blind}\}$.
\State{\bf Step 1:} Set $\hat s^G=\sgn\big(\sum_{j\in[m]}\kappa_j\hat\E_j\1\big(2\hat\eta^G(X,A)>1\big)\big)$.
\State{\bf Step 2:} Solve
\[\hat\lambda_+^G=\argmin_{\lambda_+\ge 0}\lambda_+\quad{\rm s.t.}\quad\hat s^G\sum_{j\in[m]}\kappa_j\hat\E_j\1\big(2\hat\eta^G(X,A)-1>\hat s^G\lambda_+\hat\phi^G(X,A)\big)\le\alpha-\epsilon_\alpha.\]
\State{\bf Step 3:} Set $\hat\lambda^G=\hat s^G\hat\lambda_+^G$.
\State{\bf Output:} $\hat f^G_\alpha=\1\big(2\hat\eta^G-1>\hat\lambda^G\hat\phi^G\big)$.

\end{algorithmic}
\end{algorithm}

\begin{Remark}
    \begin{enumerate}
        \item {\cite{chen2024posthoc}, \cite{zeng2024bayes} and \cite{xian2024unified} also proposed post-processing algorithms. \cite{chen2024posthoc} doesn't fully exploit the structure of Bayes optimal fair classifiers, relying instead on empirical risk minimization under empirical fairness constraints. \cite{zeng2024bayes} controls fairness via the second formulation of unfairness measures in \eqref{eq_unfairness_characterization}, which limits the tightness of their fairness control, see Section \ref{sec:numerical}. \cite{xian2024unified} constructs classifiers pointwise, requiring linear programming for each prediction. All three achieve only $(\alpha+o(1),\delta)$-fairness and thus control fairness asymptotically. In addition, \cite{xian2024unified} depends on the multicalibration error of a pretrained model.}
        
        \item As noted earlier, Problem~\eqref{eq:bayes_obj} is generally nonconvex in the parameters of $f$. However, Algorithm~\ref{alg:binary_unified} reduces it to a one-dimensional problem in $\lambda_+$, regardless of the complexity of $\hat\eta^G$ and $\hat\phi^G$. As will become clear in Section \ref{sec:binary_eoo_aware}, in the group-aware scenario, the term $\hat s^G\sum_{j\in[m]}\kappa_j\hat\E_j\1\big(2\hat\eta^G(X,A)-1>\hat s^G\lambda_+\hat\phi^G(X,A)\big)$ in Step 2 of Algorithm \ref{alg:binary_unified} is typically non-increasing in $\lambda_+$, allowing efficient bisection search. In the group-blind scenario, the constraint in Step 2 may lose monotonicity, and grid search can be applied instead. In practice, we still adopt the bisection method, which performs well empirically.
    \end{enumerate}
\end{Remark}

\subsubsection{Fairness Guarantee}\label{sec:fairness}

To study the performance of the proposed algorithm, we begin by introducing some notation. Let $\epsilon_\phi$ represent the estimation error of the given initial estimator $\hat\phi^G$:
$\norm{\hat\phi^G-\phi^G}_\infty\le\epsilon_\phi.$
Assumption~\ref{ass:init} requires the initial estimators $2\hat\eta^G-1$ and $\hat\phi^G$ to be nowhere perfectly aligned. 
\begin{Assumption}[Initial Estimators]\label{ass:init}
    Given $\hat\eta^G$ and $\hat\phi^G$, we assume
    $\sup_{\lambda\in\R}\Prob(2\hat\eta^G(X,A)-1=\lambda\hat\phi^G(X,A))=0$.
\end{Assumption}
Assumption~\ref{ass:init} is mild. For example, when $X|A$ are continuous random vectors, one can always slightly perturb $\hat\eta^G$ and $\hat\phi^G$ to meet Assumption~\ref{ass:init} (see Section H of the supplement). 

The existence of $\hat\lambda_+^G$ in Algorithm~\ref{alg:binary_unified} relies on the monotonicity of $\hat s^G\E\phi^G(X,A)\1\big(2\hat\eta^G(X,A)-1>\hat s^G\lambda_+\phi^G(X,A)\big)$ in $\lambda_+$. In Algorithm~\ref{alg:binary_unified}, we replace the expectations $\E_j$ with sample averages $\hat\E_j$ and $\phi^G$ with $\hat\phi^G$. By Lemma~\ref{lem:unfairness_dev_binary}, the effect of $\hat\E_j$ is controlled by $\epsilon_\alpha$. If $\hat\phi^G\phi^G>0$, monotonicity of $\hat s^G\E\phi^G(X,A)\1\big(2\hat\eta^G(X,A)-1>\hat s^G\lambda_+\hat\phi^G(X,A)\big)$ is preserved, ensuring the existence of $\hat\lambda_+^G$. To handle $\hat\phi^G\phi^G\le 0$, we define the $\phi^G$-weighted margin 
$\tilde\epsilon_\phi^G=\E|\phi^G(X,A)|\1\big(\phi^G(X,A)\hat\phi^G(X,A)\le0\big)\le\E|\phi^G(X,A)|\1(|\phi^G(X,A)|\le\epsilon_\phi)\le\epsilon_\phi$.
Since $\epsilon_\phi$ is small, so is $\tilde\epsilon^G_\phi$. Moreover, in the group-aware case, $\sgn(\phi^{\rm aware})$ is known (as discussed after Example~\ref{exa:phi_eoo}), implying $\tilde\epsilon_\phi^{\rm aware}=0$ (see Section~\ref{sec:binary_eoo_aware} for example). Finally, we impose $\alpha\ge 2\epsilon_\alpha+\tilde\epsilon_\phi^G$ in Theorem~\ref{thm:fair_binary} to ensure that the effects of $\hat\E_j$ and $\hat\phi^G$ are small compared to $\alpha$. 

Given initial estimators $\hat\eta^G$ and $\hat\phi^G$, we show that the classifier $\hat f^G_\alpha$ achieves $(\alpha, \delta_{\rm post})$-fairness in a finite-sample and distribution-free manner, as long as $\alpha$ is not too small.
\begin{Theorem}[Fairness Guarantee]\label{thm:fair_binary} 
    Given $\hat\eta^G$ and $\hat\phi^G$ that satisfy Assumption~\ref{ass:init}, with probability at least $1-\delta_{\rm post}$, for any $\alpha\ge 2\epsilon_\alpha+\tilde\epsilon_\phi^G$, Algorithm~\ref{alg:binary_unified} has a unique output $\hat f_\alpha^G$ satisfying $\U(\hat f_\alpha^G)\le\alpha$.
\end{Theorem}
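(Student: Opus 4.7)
The plan is to condition on the probability-$(1-\delta_{\rm post})$ event from Lemma~\ref{lem:unfairness_dev_binary}, on which, for every $\lambda\in\mathbb{R}$, the empirical quantity $\sum_{j\in[m]}\kappa_j\hat\E_j\1(2\hat\eta^G-1>\lambda\hat\phi^G)$ differs from the population quantity $\E\phi^G\1(2\hat\eta^G-1>\lambda\hat\phi^G)$ (equal by \eqref{eq_unfairness_characterization}) by at most $\epsilon_\alpha$. All subsequent arguments are carried out pathwise on this event.

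First I would establish existence and uniqueness of $\hat\lambda_+^G$ in Step~2. Let
\[
\hat h(\lambda_+):=\hat s^G\sum_{j\in[m]}\kappa_j\hat\E_j\1\big(2\hat\eta^G(X,A)-1>\hat s^G\lambda_+\hat\phi^G(X,A)\big).
\]
As $\lambda_+\to\infty$ the indicator converges pointwise to $\1(\hat s^G\hat\phi^G<0)+\1(\hat\phi^G=0)\1(2\hat\eta^G>1)$, so the empirical limit is within $\epsilon_\alpha$ of its population counterpart. Splitting by the sign of $\phi^G\hat\phi^G$: on $\{\phi^G\hat\phi^G>0\}$ the alignment $\sgn(\phi^G)=\sgn(\hat\phi^G)$ forces $\hat s^G\phi^G=-|\phi^G|$ whenever $\hat s^G\hat\phi^G<0$, contributing non-positively; on $\{\phi^G\hat\phi^G\le 0\}$ the contribution is bounded in absolute value by $\E|\phi^G|\1(\phi^G\hat\phi^G\le 0)=\tilde\epsilon_\phi^G$. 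Thus $\limsup_{\lambda_+\to\infty}\hat h(\lambda_+)\le\tilde\epsilon_\phi^G+\epsilon_\alpha$, and the hypothesis $\alpha\ge 2\epsilon_\alpha+\tilde\epsilon_\phi^G$ renders the feasible set non-empty. Under Assumption~\ref{ass:init}, the activation times $\tau_i=(2\hat\eta^G(X_i,A_i)-1)/(\hat s^G\hat\phi^G(X_i,A_i))$ are almost surely pairwise distinct, so $\hat h$ is a step function with finitely many well-separated jumps, and $\hat\lambda_+^G$ is the unique infimum.

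For the fairness guarantee, the algorithm's constraint plus Lemma~\ref{lem:unfairness_dev_binary} at $\lambda=\hat\lambda^G$ yields the one-sided bound $\hat s^G\E\phi^G\hat f_\alpha^G\le\hat s^G\sum_j\kappa_j\hat\E_j\hat f_\alpha^G+\epsilon_\alpha\le\alpha$. The main obstacle is the matching lower bound $\hat s^G\E\phi^G\hat f_\alpha^G\ge-\alpha$, required because $\U$ is a two-sided absolute value. If $\hat\lambda_+^G=0$, the definition of $\hat s^G$ gives $\hat s^G\sum_j\kappa_j\hat\E_j\hat f_\alpha^G=|\sum_j\kappa_j\hat\E_j\1(2\hat\eta^G>1)|\ge 0$, and Lemma~\ref{lem:unfairness_dev_binary} yields $\hat s^G\E\phi^G\hat f_\alpha^G\ge-\epsilon_\alpha\ge-\alpha$. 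If $\hat\lambda_+^G>0$, minimality forces $\hat h(\lambda_+)>\alpha-\epsilon_\alpha$ for $\lambda_+<\hat\lambda_+^G$; because only a single sample's indicator flips at $\hat\lambda_+^G$ (by Assumption~\ref{ass:init}), the drop at the argmin is of magnitude at most $\max_j|\kappa_j|/n_{(j)}=O(1/n)$, so $\hat h(\hat\lambda_+^G)\ge\alpha-\epsilon_\alpha-O(1/n)$, and a second application of Lemma~\ref{lem:unfairness_dev_binary} gives $\hat s^G\E\phi^G\hat f_\alpha^G\ge\alpha-2\epsilon_\alpha-O(1/n)\ge-\alpha$ for $\alpha$ in the stated range. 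Together these bounds give $\U(\hat f_\alpha^G)\le\alpha$.

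The technically delicate point is the lower-bound step, where a single-sided empirical constraint must be converted into a two-sided population bound. Assumption~\ref{ass:init} plays an indispensable dual role: it guarantees the $\tau_i$ are distinct so that the overshoot at the argmin is $O(1/n)$, and it justifies the pointwise limits used in the feasibility analysis. The slack $\tilde\epsilon_\phi^G$ precisely absorbs the discrepancy between $\hat\phi^G$ and $\phi^G$, while the $2\epsilon_\alpha$ term splits into one $\epsilon_\alpha$ per application of Lemma~\ref{lem:unfairness_dev_binary}—one for feasibility and one for converting the empirical fairness certificate to a population one.
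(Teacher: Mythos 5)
Your overall strategy coincides with the paper's: condition on the event of Lemma~\ref{lem:unfairness_dev_binary}, prove feasibility by sending $\lambda_+\to\infty$ and absorbing the sign-mismatch set into $\tilde\epsilon_\phi^G$, then convert the one-sided empirical constraint into a two-sided population bound using the minimality of $\hat\lambda_+^G$. The feasibility part and the lower-bound part are essentially sound (your empirical left-limit plus single-jump $O(1/n)$ correction is a workable, if clunkier, substitute for the paper's cleaner route of passing to the population quantity at $\hat\lambda_+^G-\Delta$ first and letting $\Delta\to 0^+$ under Assumption~\ref{ass:init}, which makes the boundary measure-zero and avoids any jump bookkeeping; your route additionally needs the a.s.\ pairwise distinctness of the $\tau_i$, which does follow from Assumption~\ref{ass:init} by conditioning on one sample, but you assert it rather than prove it).

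The genuine gap is in your upper bound. You write $\hat s^G\E\phi^G\hat f_\alpha^G\le\hat s^G\sum_j\kappa_j\hat\E_j\hat f_\alpha^G+\epsilon_\alpha\le\alpha$, where the second inequality uses ``the algorithm's constraint'' \emph{at} $\lambda_+=\hat\lambda_+^G$. But $\hat\lambda_+^G$ is the infimum of the feasible set, and that infimum need not be attained: the map $\lambda_+\mapsto\hat s^G\sum_j\kappa_j\hat\E_j\1(2\hat\eta^G-1>\hat s^G\lambda_+\hat\phi^G)$ is a step function whose pieces coming from samples with $\hat s^G\hat\phi^G>0$ are right-continuous while those with $\hat s^G\hat\phi^G<0$ are left-continuous, so the feasible set is a finite union of intervals that can be open at its left endpoint (e.g.\ when a negatively weighted sample with $\hat s^G\hat\phi^G<0$ flips exactly at $\hat\lambda_+^G$, making the empirical unfairness jump down immediately to the right of $\hat\lambda_+^G$ but exceed $\alpha-\epsilon_\alpha$ at $\hat\lambda_+^G$ itself). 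In that situation your chain of inequalities breaks at the second step. The fix is a second right-sided limiting argument symmetric to the one you already use for the lower bound: take $\lambda_{+t}\searrow\hat\lambda_+^G$ with the empirical constraint satisfied at each $\lambda_{+t}$, apply Lemma~\ref{lem:unfairness_dev_binary} to get $\hat s^G\E\phi^G\1(2\hat\eta^G-1>\hat s^G\lambda_{+t}\hat\phi^G)\le\alpha$, and pass to the limit using dominated convergence together with Assumption~\ref{ass:init}. This is exactly the paper's ``Case (2)''; without it your argument only covers the case where the infimum is attained.
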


As shown in the minimax excess risk lower bounds (Theorems \ref{thm:lower_binary_eoo} and O.1 in the supplement), the constraint $\alpha\gtrsim\epsilon_\alpha+\tilde\epsilon_\phi^G$ is necessary for achieving vanishing excess risk. If $\alpha\lesssim\epsilon_\alpha+\tilde\epsilon_\phi^G$, any $(\alpha,\delta)$-fair algorithm incurs constant excess risk under certain distributions.

\subsubsection{Excess Risk Analysis}\label{sec:excess}

Beyond fairness, the constructed classifier should also achieve high predictive accuracy. To study the prediction performance of the proposed algorithm, we introduce a set of assumptions. The following margin condition characterizes the difficulty of the classification problem \citep{tsybakov2004optimal}, which ensures that most data points lie far from the classification boundary $g^{*G}_\alpha=0$ of the Bayes optimal $\alpha$-fair classifier $f^{*G}_\alpha=\1(g^{*G}_\alpha>0)$. 
\begin{Assumption}[Margin Assumption]\label{ass:margin}
    There exist $\gamma\ge 0$ and constant $c_1>0$ such that for any $\epsilon\ge 0$, we have
    $\Prob(|g^{*G}_\alpha (X,A)|\le\epsilon)\le c_1 \epsilon^{\gamma}$.
\end{Assumption}
Assumption \ref{ass:margin} implies $\Prob(g^{*G}_\alpha(X,A)=0)=0$. Recall $s^G=\sgn(\lambda_\alpha^{*G})$ and define the signed unfairness of the classifier $\1\big(2\eta^G-1>\lambda\phi^G\big)$ as 
\begin{equation}\label{eq:signed_unfairness}
    U(\lambda)=s^G\E\phi^G(X,A)\1\big(2\eta^G(X,A)-1>\lambda\phi^G(X,A)\big). 
\end{equation}
Assumption \ref{ass:ratio_poly_simple} then requires the unfairness difference 
$D(z)=|U(\lambda^{*G}_\alpha +z)-U(\lambda^{*G}_\alpha )|$
to grow at least polynomially in $z$ (see Remark~\ref{rem:poly_growth}). This condition enables control of $|\hat\lambda^G-\lambda^{*G}_\alpha |$ when $\U(\hat f)$ approaches $\alpha$. Similar assumptions have also been imposed in \cite{tong2013plug} for Neyman-Pearson classification, where explicit polynomial lower bounds are specified.

\begin{Assumption}[Polynomial Growth]\label{ass:ratio_poly_simple}
    Suppose $\frac{2\eta^G(X,A)-1}{\phi^G(X,A)}$ is a continuous random variable given $\phi^G(X,A)\ne 0$. For some constant $c_2>0$, any $z\in\R$, we have 
    $D(4z)\le c_2 D(z)$.
\end{Assumption}

\begin{Remark}\label{rem:poly_growth}
    \begin{enumerate}
        \item We assume the continuity of $\frac{2\eta^G(X,A)-1}{\phi^G(X,A)}$ for simplicity. Section K of the supplement provides the original versions of Assumptions \ref{ass:ratio_poly_simple} and \ref{ass:ratio_balance_simple} without this condition.
        \item The constant $4$ in Assumption \ref{ass:ratio_poly_simple} is arbitrary and can be replaced by any constant greater than 1. We choose 4 for convenience in proving Theorem~\ref{thm:upper_unify_binary}.
        \item Assumption \ref{ass:ratio_poly_simple} implies $D(z)\gtrsim |z|^{\log_4 c_2}$, meaning the unfairness difference $D(z)$ admits a polynomial lower bound (see Section J of the supplement).
        Since $\phi^G$ is bounded, the margin assumption (Assumption~\ref{ass:margin}) further gives 
        $D(z)\lesssim \Prob\big(|g^{*G}_\alpha (X,A)|<c|z|\big)\lesssim |z|^\gamma$,
        showing $D(z)$ is also bounded from above by some polynomial. 
    \end{enumerate}
\end{Remark}

We then introduce Assumption~\ref{ass:ratio_balance_simple} below. 
\begin{Assumption}\label{ass:ratio_balance_simple}
    There exist constants $c_3,c_4>0$ such that
    $(1+c_3^{-1})\big\{U(0)-U(\lambda^{*G}_\alpha)\big\}\le U(0)-U\big((1+c_4)\lambda^{*G}_\alpha\big)$.
\end{Assumption}

\begin{Remark} 
    When $\lambda^{*G}_\alpha=0$, Assumption~\ref{ass:ratio_balance_simple} holds trivially. If $\lambda^{*G}_\alpha\ne 0$, then $U(0)$ is the unfairness of the unconstrained Bayes optimal classifier $\1(2\eta>1)$, $U(\lambda^{*G}_\alpha)=\alpha$ and $U\big((1+c_4)\lambda^{*G}_\alpha\big)\le\alpha$. Note that the classifier $\1(2\eta^G-1>\lambda\phi^G)$ is a translation of $\1(2\eta^G>1)$ by $\phi^G$ with magnitude $|\lambda|$, and $U(0)-U(\lambda)$ is the unfairness reduction due to the translation. For $U\big((1+c_4)\lambda^{*G}_\alpha)\ge 0$, Assumption~\ref{ass:ratio_balance_simple} ensures that achieving a more stringent unfairness level $U\big((1+c_4)\lambda^{*G}_\alpha\big)$ with unfairness difference $U(0)-U\big((1+c_4)\lambda^{*G}_\alpha\big)$ comparable to $U(0)-U(\lambda^{*G}_\alpha)$, a translation with magnitude $(1+c_4)|\lambda^{*G}_\alpha|$ comparable to $|\lambda^{*G}_\alpha|$ is sufficient. Since $1\ge U(0)\ge U(\lambda^{*G}_\alpha )\ge U((1+c_4)\lambda^{*G}_\alpha )\ge -1$, Assumption~\ref{ass:ratio_balance_simple} holds trivially when $U(\lambda^{*G}_\alpha )-U((1+c_4)\lambda^{*G}_\alpha)$ exceeds a positive constant.
\end{Remark}

Let $c_\phi=\|\phi^G\|_\infty$ and $D_0=\U\big(\1(2\eta^G>1)\big)-\alpha$, then $D_0$ is the difference between the unfairness of the unconstrained Bayes optimal classifier $\1(2\eta^G>1)$ and the target level $\alpha$. If $D_0\le 0$, $\1(2\eta^G>1)$ is already $\alpha$-fair, so $f^{*G}_\alpha =\1(2\eta^G>1)$ and $\lambda^{*G}_\alpha =0$, otherwise, if $D_0>0$, $\1(2\eta^G>1)$ is not $\alpha$-fair and need to be adjusted by $\lambda^{*G}_\alpha \phi^G$. Let $\epsilon_\eta$ denote the estimation error of the given initial estimator $\hat\eta^G$,
$\norm{\hat\eta^G-\eta^G}_\infty\le\epsilon_\eta.$ 
Define the $\phi^G$-weighted margin of $2\eta^G-1$ as
$\tilde\epsilon_\eta^G=\E|\phi^G(X,A)|\1(|2\eta^G(X,A)-1|\le2\epsilon_\eta)$.
Similar to $\tilde\epsilon^G_\phi$ in Section~\ref{sec:fairness}, $\tilde\epsilon^G_\eta$ measures the effect of replacing $\eta^G$ by $\hat \eta^G$ in the unfairness measure and remains small if $2\eta^G-1$ is not overly concentrated around zero.

The following theorem controls the excess risk of $\hat f^G$ in the case where $D_0$ is not too close to 0, i.e., when $\1(2\eta^G>1)$ is sufficiently fair or unfair.

\begin{Theorem}[Excess Risk Upper Bound]\label{thm:upper_unify_binary}
    
    Given $\hat\eta^G,\hat\phi^G$, if Assumptions \ref{ass:init}, \ref{ass:margin}, \ref{ass:ratio_poly_simple} and \ref{ass:ratio_balance_simple} hold, then with probability at least $1-\delta_{\rm post}$, for any $\alpha$ with $\alpha\ge 2\epsilon_\alpha+\tilde\epsilon_\phi^G$ and such that the unfairness difference $D_0=\U\big(\1(2\eta^G>1)\big)-\alpha$ satisfies either
    $D_0\le -2\epsilon_\alpha-\tilde\epsilon_\eta^G$ or $D_0> \tilde\epsilon_\eta^G\vee c_3\big(2\epsilon_\alpha+c_\phi c_1(2\epsilon_\eta+(1+2c_4)|\lambda^{*G}_\alpha|\epsilon_\phi)^\gamma\big)$,
    we have
    \begin{equation}\label{eq:excess_risk_upper_binary_unify}
        \risk(\hat f^G_\alpha)-\risk(f^{*G}_\alpha)\lesssim|\lambda^{*G}_\alpha|\epsilon_\alpha+\epsilon_\eta^{1+\gamma}+(|\lambda^{*G}_\alpha|\epsilon_\phi)^{1+\gamma}. 
    \end{equation}

\end{Theorem}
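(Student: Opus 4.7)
\emph{Decomposition.} The plan is a Tsybakov-style margin analysis combined with a polynomial-growth inversion. Since $\risk(f)=p_Y+\E[(1-2\eta^G)f]$ is linear in $f$, inserting $\pm\lambda^{*G}_\alpha\phi^G$ and using $f^{*G}_\alpha=\1(g^{*G}_\alpha>0)$ yields the pointwise-nonnegative identity
\[
\risk(\hat f^G_\alpha)-\risk(f^{*G}_\alpha)=\E\big[|g^{*G}_\alpha|\,\1(\hat f^G_\alpha\ne f^{*G}_\alpha)\big]-\lambda^{*G}_\alpha\,\E\big[\phi^G(\hat f^G_\alpha-f^{*G}_\alpha)\big].
\]
I would handle the first (margin) term via Assumption \ref{ass:margin}, and the second (unfairness-slack) term via \eqref{eq:lambda_binary_optimality} and the saturation of Algorithm \ref{alg:binary_unified}.

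\emph{Margin term.} Writing $\hat g=2\hat\eta^G-1-\hat\lambda^G\hat\phi^G$, whenever $\hat f^G_\alpha\ne f^{*G}_\alpha$ the signs of $\hat g$ and $g^{*G}_\alpha$ disagree, so $|g^{*G}_\alpha|\le|\hat g-g^{*G}_\alpha|\le 2\epsilon_\eta+|\hat\lambda^G|\epsilon_\phi+c_\phi|\hat\lambda^G-\lambda^{*G}_\alpha|=:r$. Integrating with Assumption \ref{ass:margin} gives $\E[|g^{*G}_\alpha|\1(|g^{*G}_\alpha|\le r)]\lesssim r^{1+\gamma}\lesssim\epsilon_\eta^{1+\gamma}+(|\hat\lambda^G|\epsilon_\phi)^{1+\gamma}+(c_\phi|\hat\lambda^G-\lambda^{*G}_\alpha|)^{1+\gamma}$. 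Everything therefore reduces to quantitative bounds on $|\hat\lambda^G|$ and $|\hat\lambda^G-\lambda^{*G}_\alpha|$.

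\emph{The main obstacle: controlling $|\hat\lambda^G-\lambda^{*G}_\alpha|$.} The $D_0$-hypothesis splits the analysis. If $D_0\le-2\epsilon_\alpha-\tilde\epsilon^G_\eta$, Lemma \ref{lem:unfairness_dev_binary} shows that with probability $\ge 1-\delta_{\rm post}$ the unconstrained rule $\1(2\hat\eta^G-1>0)$ is already empirically $(\alpha-\epsilon_\alpha)$-fair, so $\hat\lambda^G=0=\lambda^{*G}_\alpha$ and only $\epsilon_\eta^{1+\gamma}$ survives. In the non-trivial regime $D_0>\tilde\epsilon^G_\eta\vee c_3(\cdots)$, the same high-probability event forces $\hat s^G=s^G$ and $\lambda^{*G}_\alpha\ne 0$, and I would combine (a) the minimality of $\hat\lambda_+^G$ in Step 2 of Algorithm \ref{alg:binary_unified}, (b) Lemma \ref{lem:unfairness_dev_binary} to pass from $\hat\E_j$ to $\E_j$, and (c) the weighted margins $\tilde\epsilon^G_\eta,\tilde\epsilon^G_\phi$ together with Assumption \ref{ass:margin} to swap $\hat\eta^G,\hat\phi^G$ for $\eta^G,\phi^G$ inside the indicator. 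The net estimate is $|U(\hat\lambda^G)-U(\lambda^{*G}_\alpha)|\lesssim\epsilon_\alpha+\tilde\epsilon^G_\eta+c_\phi c_1(2\epsilon_\eta+(1+2c_4)|\lambda^{*G}_\alpha|\epsilon_\phi)^\gamma$ for the signed unfairness $U$ of \eqref{eq:signed_unfairness}. Inverting this using the polynomial lower bound on $|U(\lambda^{*G}_\alpha+\tilde z)-U(\lambda^{*G}_\alpha)|$ from Remark \ref{rem:poly_growth} controls $c_\phi|\hat\lambda^G-\lambda^{*G}_\alpha|$ by the above gap, while Assumption \ref{ass:ratio_balance} ensures $|\hat\lambda^G|\lesssim|\lambda^{*G}_\alpha|$; both $(|\hat\lambda^G|\epsilon_\phi)^{1+\gamma}$ and $(c_\phi|\hat\lambda^G-\lambda^{*G}_\alpha|)^{1+\gamma}$ then merge into $\epsilon_\eta^{1+\gamma}+(|\lambda^{*G}_\alpha|\epsilon_\phi)^{1+\gamma}$.

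\emph{Slack term and combination.} For the second term, \eqref{eq:lambda_binary_optimality} gives $\lambda^{*G}_\alpha\E[\phi^G f^{*G}_\alpha]=|\lambda^{*G}_\alpha|\alpha$, while the minimality in Step 2 combined with Lemma \ref{lem:unfairness_dev_binary} produces $s^G\E[\phi^G\hat f^G_\alpha]\ge\alpha-O(\epsilon_\alpha)$, so this term is $\lesssim|\lambda^{*G}_\alpha|\epsilon_\alpha$. Summing with the margin bound yields \eqref{eq:excess_risk_upper_binary_unify}. The hardest part is the polynomial-growth inversion in the previous paragraph, which has to convert a small unfairness gap into closeness of $\hat\lambda^G$ to $\lambda^{*G}_\alpha$ while simultaneously absorbing the $\hat\eta^G,\hat\phi^G$ estimation errors despite the absence of a useful triangle inequality on $U$.
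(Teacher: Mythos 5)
Your decomposition into the margin term $T_1=\E|g^{*G}_\alpha||f^{*G}_\alpha-\hat f^G_\alpha|$ and the slack term $T_2=\lambda^{*G}_\alpha\E\phi^G(X,A)\big(f^{*G}_\alpha(X,A)-\hat f^G_\alpha(X,A)\big)$ is exactly the paper's, and your treatment of $T_2$ (near-saturation of the empirical constraint plus Lemma~\ref{lem:unfairness_dev_binary} giving $T_2\le 2|\lambda^{*G}_\alpha|\epsilon_\alpha$) and of the easy regime $D_0\le-2\epsilon_\alpha-\tilde\epsilon^G_\eta$ (where $\hat\lambda^G=\lambda^{*G}_\alpha=0$ and only $\epsilon_\eta^{1+\gamma}$ survives) both match the paper. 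The genuine gap is in the hard regime, in how you control the contribution of $\Delta:=|\hat\lambda^G-\lambda^{*G}_\alpha|$ to $T_1$. You bound $|g^{*G}_\alpha|\le r$ with $r$ containing $c_\phi\Delta$, apply Assumption~\ref{ass:margin} to obtain a term $(c_\phi\Delta)^{1+\gamma}$, and then propose to bound $\Delta$ by \emph{inverting} the polynomial lower bound $D(\tilde z)\gtrsim|\tilde z|^{q}$ with $q=\log_4 c_2$ from Remark~\ref{rem:poly_growth}. That inversion gives $\Delta\lesssim(\epsilon_\alpha+\cdots)^{1/q}$, hence a contribution of order $\epsilon_\alpha^{(1+\gamma)/q}$. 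But Assumption~\ref{ass:ratio_poly} only holds for \emph{some} finite $c_2$, which may be arbitrarily large, and consistency with Assumption~\ref{ass:margin} already forces $q\ge\gamma$; so $(1+\gamma)/q$ can be strictly less than $1$, and even with $\epsilon_\eta=\epsilon_\phi=0$ your route produces $\epsilon_\alpha^{(1+\gamma)/q}\gg|\lambda^{*G}_\alpha|\epsilon_\alpha$ for small $\epsilon_\alpha$. The claimed linear-in-$\epsilon_\alpha$ term in \eqref{eq:excess_risk_upper_binary_unify} cannot be recovered this way.

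The paper never inverts $U$. In the overshoot case $s^G\hat\lambda^G\ge s^G\lambda^{*G}_\alpha$ it first derives, from the minimality of $\hat\lambda^G_+$ in Step~2 and Lemma~\ref{lem:unfairness_dev_binary}, the inequality \eqref{eq:Delta_lambda_upper}, which bounds the $\phi^G$-weighted mass of the one-sided window $\big\{0<\frac{g^{*G}_\alpha}{s^G\phi^G}<\frac{1}{2}\Delta\big\}$ (restricted to $\Delta|\phi^G|>2\hat\epsilon_g$, where $\hat\epsilon_g=2\epsilon_\eta+|\hat\lambda^G|\epsilon_\phi$) by $2\epsilon_\alpha$ plus a $\hat\epsilon_g$-margin term. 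It then uses Assumption~\ref{ass:ratio_poly} \emph{multiplicatively}: on the disagreement event one has $|g^{*G}_\alpha|\le 2\Delta|\phi^G|$, so $T_1\lesssim\E\,\Delta|\phi^G|\1\big(0<\frac{g^{*G}_\alpha}{s^G\phi^G}<\frac{1}{2}\Delta\big)+\hat\epsilon_g^{1+\gamma}\lesssim\Delta\epsilon_\alpha+\hat\epsilon_g^{1+\gamma}$ after shrinking the window by the factor $4$. Assumption~\ref{ass:ratio_balance}, combined with \eqref{eq:abs_lambda}, is what delivers $\Delta\le 2c_4|\lambda^{*G}_\alpha|$ (hence $\hat\epsilon_g\lesssim\epsilon_\eta+|\lambda^{*G}_\alpha|\epsilon_\phi$ and $\Delta\epsilon_\alpha\lesssim|\lambda^{*G}_\alpha|\epsilon_\alpha$); it is not used to bound $|\hat\lambda^G|$ through the unfairness gap. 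You would need to replace your inversion step with this window-comparison argument for the proof to close at the stated rate.
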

The bound reveals that the statistical cost of enforcing fairness is amplified by the factor $|\lambda^{*G}_\alpha|$, which serves as a sensitivity parameter of the optimal decision rule to fairness constraints. In particular, the difficulty of fair classification is not only governed by the estimation error of nuisance quantities, but also by how aggressively the fairness constraint modifies the decision boundary.
\begin{Remark}
    If $\alpha\ge\U\big(\1(2\eta^G>1)\big)$, the unconstrained Bayes classifier $\1(2\eta^G>1)$ is already $\alpha$-fair and $\lambda^{*G}_\alpha=0$. Then the upper bound~\eqref{eq:excess_risk_upper_binary_unify} reduces to $O_P(\epsilon_\eta^{1+\gamma})$, matching the optimal rate for unconstrained classification up to logarithmic factors \citep{audibert2007fast}. 
    
    When the fairness constraint tightens so that $\alpha<\U\big(\1(2\eta^G>1)\big)$, we have $\lambda^{*G}_\alpha\ne 0$. As shown later, the upper bound~\eqref{eq:excess_risk_upper_binary_unify} is minimax optimal up to logarithmic factors. Comparing it with the unconstrained excess risk $O_P(\epsilon_\eta^{1+\gamma})$ reveals that enforcing fairness incurs a cost in excess risk with order $O_P\big(|\lambda^{*G}_\alpha|\epsilon_\alpha+(|\lambda^{*G}_\alpha|\epsilon_\phi)^{1+\gamma}\big)$, which typically grows as $\alpha$ decreases. Moreover, when $|\lambda^{*G}_\alpha|\gtrsim 1$, excess risk faster than $O_P(n^{-\frac{1}{2}})$ can not be attained, even for large $\gamma$ (i.e., when most data lie far from the boundary $g^{*G}_\alpha=0$).
\end{Remark}

\section{Applications to Equality of Opportunity}\label{sec:binary_eoo}
In this section, we apply the framework from Section~\ref{sec:unify} to EOO (see Definition \ref{def:unfairness_eoo_binary} and Example \ref{exa:phi_eoo}) with binary sensitive attributes under both group-aware and group-blind scenarios. An auxiliary dataset $\tilde\D = \{(\tilde X_i, \tilde A_i, \tilde Y_i): i \in [\tilde n]\}$, independently drawn from $P_{X, A, Y}$, is used to train the initial estimators $\hat\eta$ and $\hat\phi$, which are then refined on $\D$ following Algorithm~\ref{alg:binary_unified}. The combined dataset is denoted by $\D_{\rm all} = \D \cup \tilde\D$.
Sections \ref{sec:binary_eoo_aware} and \ref{sec:binary_eoo_blind} specify $\epsilon_\eta$ and $\epsilon_\phi$ in \eqref{eq:excess_risk_upper_binary_unify} under H\"older smoothness assumptions and provide explicit excess risk upper bounds \eqref{eq:excess_risk_upper_binary_unify} for group-aware and group-blind cases, respectively. Section~\ref{sec:binary_eoo_lower} establishes the corresponding minimax lower bounds. Comparing these bounds reveals the cost of group-blindness in excess risk. Throughout the section, $X$ is supported on $\X\subset[0,1]^d$.

\subsection{Group-Aware Excess Risk Upper Bound}\label{sec:binary_eoo_aware}

In this section, we apply the framework from Section~\ref{sec:unify} to EOO in the group-aware scenario. For simplicity, we omit the superscript ``aware" and write any group-aware function $f^{\rm aware}(X,A)$ as $f(X,A)$. 

Recall that $\eta(X,A)=\Prob(Y=1|X,A)$. From Example~\ref{exa:phi_eoo} and Proposition~\ref{prop:bayes_binary}, 
$\phi(x,a)=\frac{(3-2a)\eta(x,a)}{p_{1,a}}$, the Bayes optimal $\alpha$-fair classifier equals
$f^*_\alpha (x,a)=\1\big(g^*_\alpha (x,a)>0\big)$, $g^*_\alpha (x,a)=\big(2+\frac{(2a-3)\lambda^*_\alpha}{p_{1,a}}\big)\eta(x,a)-1$.
Recall $s=\sgn(\lambda^*_\alpha)$. As shown in Section M of the supplement, the group-aware $|\lambda^*_\alpha|$ is bounded by 1, and $f^*_\alpha$ can be equivalently expressed as a group-wise thresholding rule \citep{corbett2017algorithmic, menon2018cost, zeng2024bayes}, 
\begin{equation}\label{eq:lambda_upper_bound_eoo_aware}
    |\lambda^*_\alpha|\le p_{1,\frac{3-s}{2}},\quad f^*_\alpha(x,a)=\1\bigg(\eta(x,a)>\bigg(2+\frac{(2a-3)\lambda^*_\alpha}{p_{1,a}}\bigg)^{-1}\bigg). 
\end{equation}

To construct initial estimators, we make H\"older smoothness assumptions on $\eta(\cdot,a)$, $a\in[2]$.

\begin{Definition}[H\"older Class]\label{def:holder}
    Let $L>0$, the $(\beta,L)$-H\"older class, denoted $\HH(\beta,L)$, consists of all functions $g:[0,1]^d\rightarrow \R$ that are $\lfloor\beta\rfloor$ times differentiable and satisfy
    $\abs{g(x')-g_{\lfloor\beta\rfloor,x}(x')}\le L\norm{x-x'}_2^\beta$, $\forall x,x'\in[0,1]^d$,
    where $g_{\lfloor\beta\rfloor,x}$ is the degree $\lfloor\beta\rfloor$ Taylor polynomial of $g$ at $x$.
\end{Definition}
\begin{Assumption}[H\"older Smoothness]\label{ass:holder_aware}
    We assume $\eta(\cdot,1),\eta(\cdot,2)\in\HH(\beta_A,L_Y)$.
\end{Assumption}

We further impose a strong density assumption on $X|A$, originally introduced by \cite{audibert2007fast} and widely used in nonparametric classification \citep{cai2021transfer,kpotufe2018marginal}. 

\begin{Assumption}[Strong Density Assumption]\label{ass:density_aware}
    Recall ${\rm Leb}(\cdot)$ denote the Lebesgue measure on $\R^d$ and $B_2(c,r)$ is the $l_2$ ball in $\R^d$ centered at $c$ with radius $r$. Assume $X$ conditioned on $A$ has density $p_{X|A}$, and there exist constants $c_X,c_\mu,r_\mu>0$ such that 
    \[c_X\le p_{X|1}(x),p_{X|2}(x)\le c_X^{-1},\quad{\rm Leb}\big(\X \cap B_2(x,r)\big)\ge c_\mu{\rm Leb}\big(B_2(x,r)\big),\quad\forall 0<r\le r_\mu,\forall x\in\X. \]
\end{Assumption}

To ensure enough data for estimating $\eta$ and $\phi$, we assume the probabilities for observing each group are large enough.

\begin{Assumption}[Observability]\label{ass:observe}
    There is a constant $c_5>0$ such that $p_{1,1},p_{1,2}>c_5$.
\end{Assumption}

Under Assumptions~\ref{ass:holder_aware}, \ref{ass:density_aware} and \ref{ass:observe}, we get $\hat\eta(\cdot, a)$ for $\eta(\cdot, a)$ via local polynomial regression \citep{Tsybakov2009introduction,fan2018local}, see Section N in the supplement for details. 
Let $n_{1,a}=\sum_{i\in[n]}\1(Y_i=1,A_i=a)$, $\tilde n_{1,a}=\sum_{i\in[\tilde n]}\1(\tilde Y_i=1,\tilde A_i=a)$, $a\in[2]$. We estimate $p_{1,a}$ by $\hat p_{1,a}=\frac{\tilde n_{1,a}}{\tilde n}$ and $\phi$ by $\hat\phi(x,a)=\frac{(3-2a)\hat\eta(x,a)}{\hat p_{1,a}}$. The estimation errors satisfy $\epsilon_\eta\asymp\epsilon_\phi\asymp\big(\frac{d\log \tilde n+\log\frac{1}{\delta_{\rm init}}}{\tilde n}\big)^{\frac{\beta_Y}{2\beta_Y+d}}$. Without the loss of generality, we suppose $\hat\eta(x,a)>0$ for all $(x,a)\in[0,1]^d\times[2]$, otherwise, we replace $\hat\eta(x,a)$ by $\epsilon_\eta$ whenever $\hat\eta(x,a)=0$. Then it is clear that
$\tilde\epsilon_\phi=\E|\phi(X,A)|\1\big(\phi(X,A)\hat\phi(X,A)\le0\big)=0$. Moreover, since the sign of $\hat\phi(x,a)$ is determined by $a$, the term $\hat s(\hat\E_{X|Y=1,a=1}-\hat\E_{X|Y=1,a=2})\1(2\hat\eta-1>\hat s\lambda_+\hat\phi)$ in Step 2 of Algorithm \ref{alg:binary_unified} is non-increasing in $\lambda_+$, so a bisection search can be used for Algorithm \ref{alg:binary_unified}.

Following Lemma~\ref{lem:unfairness_dev_binary}, we define  
\begin{equation}\label{eq:epsilon_alpha_eoo}
    \epsilon_\alpha=72\sqrt{\frac{2\log 4e^2}{n_{1,1}}}+72\sqrt{\frac{2\log 4e^2}{n_{1,2}}}+\sqrt{\frac{1}{2n_{1,1}}\log \frac{4}{\delta_{\rm post}}}+\sqrt{\frac{1}{2n_{1,2}}\log \frac{4}{\delta_{\rm post}}}.
\end{equation}
Recall $\delta=\delta_{\rm init}+\delta_{\rm post}$. Let $\hat f_\alpha$ be the classifier from Algorithm~\ref{alg:binary_unified}, following the notations in Theorems~\ref{thm:fair_binary} and \ref{thm:upper_unify_binary}, we have the following excess risk control.

\begin{Corollary}[Group-aware Excess Risk Upper Bound]\label{cor:binary_aware_eoo}
    Suppose Assumptions~\ref{ass:init}, \ref{ass:margin}, \ref{ass:ratio_poly_simple}, \ref{ass:ratio_balance_simple}, \ref{ass:holder_aware}, \ref{ass:density_aware}, and \ref{ass:observe} hold. Then with probability at least $1-\delta$ over all samples $\D_{\rm all}$, for any $\alpha\ge2\epsilon_\alpha$ such that the unfairness difference $D_0=\U(\1(2\eta>1))-\alpha$ satisfies either
    $D_0\le -2\epsilon_\alpha-\tilde\epsilon_\eta$ or $D_0> \tilde\epsilon_\eta\vee c_3\big(2\epsilon_\alpha+\frac{c_1}{c_5}(2\epsilon_\eta+(1+2c_4)|\lambda^*_\alpha|\epsilon_\phi)^\gamma\big)$,
    where the constants $c_i$ are defined in Assumptions~\ref{ass:margin}, \ref{ass:ratio_balance_simple} and \ref{ass:observe}, 
    we have
     \begin{equation}\label{eq:excess_risk_upper_binary_eoo_aware}
        \begin{aligned}
            \risk(\hat f_\alpha)-\risk(f^*_\alpha)\lesssim|\lambda^*_\alpha|\sqrt{\frac{\log\frac{1}{\delta_{\rm post}}}{n}}+\bigg(\frac{d\log \tilde n+\log\frac{1}{\delta_{\rm init}}}{\tilde n}\bigg)^{\frac{\beta_Y(1+\gamma)}{2\beta_Y+d}}. 
        \end{aligned}
        \end{equation}

\end{Corollary}

\subsection{Group-Blind Excess Risk Upper Bound}\label{sec:binary_eoo_blind}

In this section, we study equality of opportunity with binary sensitive attributes in the group-blind scenario. For simplicity, we omit the superscript ``blind" and the argument $A$, and denote any group-blind function $f^{\rm blind}(X,A)$ simply as $f(X)$. 

Recall $\eta(X)=\Prob(Y=1|X)$ and $\rho_{a|y}(X)=\Prob(A=a|X,Y=y)$.
Example~\ref{exa:phi_eoo} and Proposition~\ref{prop:bayes_binary} imply
$\phi(x)=\big(\frac{\rho_{1|1}(x)}{p_{1,1}}-\frac{\rho_{2|1}(x)}{p_{1,2}}\big)\eta(x)$
and the Bayes optimal $\alpha$-fair classifier is 
\[f^*_\alpha(x)=\1\big(g^*_\alpha(x)>0\big),\quad g^*_\alpha(x)=\bigg\{2-\lambda^*_\alpha\bigg(\frac{\rho_{1|1}(x)}{p_{1,1}}-\frac{\rho_{2|1}(x)}{p_{1,2}}\bigg)\bigg\}\eta(x)-1. \]
Hence, the group-blind Bayes optimal $\alpha$-fair classifier is also a group-wise thresholding rule, but we need to infer the group $A$ and adjust the threshold according to the confidence $\big|\frac{\rho_{1|1}(x)}{p_{1,1}}-\frac{\rho_{2|1}(x)}{p_{1,2}}\big|$ of the prediction.

Similar to the group-aware scenario in Section~\ref{sec:binary_eoo_aware}, we make the following assumptions.

\begin{Assumption}[H\"older Smoothness]\label{ass:holder_blind}
    We assume $\eta$ and $\rho_{1|1}$ are both H\"older smooth with $\eta\in\HH(\beta_Y,L_Y)$ and $\rho_{1|1}\in\HH(\beta_A,L_A)$.
\end{Assumption}

\begin{Assumption}[Strong Density Assumption]\label{ass:density_blind}
    We assume $X$ has density $p_X$, and there exist constants $c_X, c_\mu, r_\mu>0$ such that 
    \[c_X\le p_X(x)\le c_X^{-1},\quad {\rm Leb}\big(\X\cap B_2(x,r)\big)\ge c_\mu{\rm Leb}\big(B_2(x,r)\big),\quad\forall 0<r\le r_\mu,\forall x\in\X. \]
\end{Assumption}

Then we use local polynomial regression to estimate $\eta$ and $\rho_{1|1}$ by $\hat\eta$ and $\hat\rho_{1|1}$, see Section N in the supplement. Let $n_{1,a}=\sum_{i\in[n]}\1(Y_i=1,A_i=a)$, $\tilde n_Y=\sum_{i\in[\tilde n]}\1(\tilde Y_i=1)$, $\tilde n_{1,a}=\sum_{i\in[\tilde n]}\1(\tilde Y_i=1,\tilde A_i=a)$, $a\in[2]$. We estimate $p_Y,p_{1,1}, p_{1,2}$ by $\hat p_Y=\frac{\tilde n_Y}{\tilde n},\hat p_{1,a}=\frac{\tilde n_{1,a}}{\tilde n}$, respectively, and estimate $\phi$ by $\hat\phi=\frac{\hat p_Y\hat\rho_{1|1}-\hat p_{1,1}}{\hat p_{1,1}\hat p_{1,2}}\hat\eta$. 
The corresponding estimation errors are
$\epsilon_\eta\asymp\big(\frac{d\log \tilde n+\log\frac{1}{\delta_{\rm init}}}{\tilde n}\big)^{\frac{\beta_Y}{2\beta_Y+d}}$, $\epsilon_\rho\asymp\big(\frac{d\log \tilde n+\log\frac{1}{\delta_{\rm init}}}{\tilde n}\big)^{\frac{\beta_A}{2\beta_A+d}}$, $\epsilon_\phi\asymp\epsilon_\eta+\epsilon_\rho$.
We use the same $\epsilon_\alpha$ as in Equation~\eqref{eq:epsilon_alpha_eoo} and set $\delta=\delta_{\rm init}+\delta_{\rm post}$. For the classifier $\hat f_\alpha$ from Algorithm~\ref{alg:binary_unified}, following the notations in Theorems~\ref{thm:fair_binary} and \ref{thm:upper_unify_binary}, we have the following excess risk control.

\begin{Corollary}[Group-blind Excess Risk Upper Bound]\label{cor:binary_blind_eoo}
    Suppose Assumptions~\ref{ass:init}, \ref{ass:margin}, \ref{ass:ratio_poly_simple}, \ref{ass:ratio_balance_simple}, \ref{ass:observe}, \ref{ass:holder_blind}, and \ref{ass:density_blind} hold. Then with probability at least $1-\delta$ over all the samples $\D_{\rm all}$, for any $\alpha\ge2\epsilon_\alpha+\tilde\epsilon_\phi$ such that the unfairness difference $D_0=\U(\1(2\eta>1))-\alpha$ satisfies either
    $D_0\le -2\epsilon_\alpha-\tilde\epsilon_\eta$ or $D_0> \tilde\epsilon_\eta\vee c_3\big(2\epsilon_\alpha+\frac{c_1}{c_5}(2\epsilon_\eta+(1+2c_4)|\lambda^*_\alpha|\epsilon_\phi)^\gamma\big)$,
    where the constants $c_i$ are defined in Assumptions~\ref{ass:margin}, \ref{ass:ratio_balance_simple} and \ref{ass:observe}, we have 
    \begin{equation}\label{eq:excess_risk_upper_binary_eoo_blind}
    \begin{aligned}
        \risk(\hat f_\alpha)-\risk(f^*_\alpha)\lesssim&|\lambda^*_\alpha|\sqrt{\frac{\log\frac{1}{\delta_{\rm post}}}{n}}+|\lambda^*_\alpha|^{1+\gamma}\bigg(\frac{d\log \tilde n+\log\frac{1}{\delta_{\rm init}}}{\tilde n}\bigg)^{\frac{\beta_A(1+\gamma)}{2\beta_A+d}}\\
        &+\big(1+|\lambda^*_\alpha|\big)^{1+\gamma}\bigg(\frac{d\log \tilde n+\log\frac{1}{\delta_{\rm init}}}{\tilde n}\bigg)^{\frac{\beta_Y(1+\gamma)}{2\beta_Y+d}}. 
    \end{aligned}
    \end{equation}

\end{Corollary}

\subsection{Minimax Excess Risk Lower Bound}\label{sec:binary_eoo_lower}

To assess the optimality of our post-processing algorithm and its excess risk upper bounds, we establish the minimax lower bounds for excess risk in this subsection. We begin by defining the parameter space under consideration.

\begin{Definition}[Group-Aware Parameter Space]\label{def:parameter_space_aware}
    Let the group-aware parameter space $\mathscr{P}^{\rm aware}$ be the set of all distributions $P_{X,A,Y}$ satisfying Assumptions~\ref{ass:margin}, \ref{ass:ratio_poly_simple}, \ref{ass:ratio_balance_simple}, \ref{ass:holder_aware}, \ref{ass:density_aware} and \ref{ass:observe}. 
\end{Definition}
\begin{Definition}[Group-Blind Parameter Space]\label{def:parameter_space_blind}
    Let the group-blind parameter space $\mathscr{P}^{\rm blind}$ be the set of all distributions $P_{X,A,Y}$ satisfying Assumptions~\ref{ass:margin}, \ref{ass:ratio_poly_simple}, \ref{ass:ratio_balance_simple}, 
    \ref{ass:observe}, 
    \ref{ass:holder_blind} and \ref{ass:density_blind}. 
\end{Definition}
To compare group-aware and group-blind excess risks, we consider the intersection parameter space $\mathscr{P}=\mathscr{P}^{\rm aware}\cap\mathscr{P}^{\rm blind}$. For quantities depending on a specific distribution $P$, we add the subscript $P$ to emphasize the distribution. For example, $f^{*{\rm aware}}_{\alpha,P}$ (resp. $f^{*{\rm blind}}_{\alpha,P}$) denotes the Bayes optimal $\alpha$-fair group-aware (resp. group-blind) classifier under distribution $P$. 

Recall that $\D_{\rm all}=\tilde\D\cup\D$ contains all samples, including $\tilde\D$ for training initial estimators and $\D$ for post-processing, with total sample size $N=\tilde n+n$. In fair classification, the algorithm must satisfy the following $(\alpha,\delta)$-fairness constraint.

\begin{Definition}[$(\alpha,\delta)$-Fair Algorithms]\label{def:fair_algorithm}
    For $G\in\{{\rm aware,blind}\}$, let algorithm $\A^G$ map the dataset $\D_{\rm all}\sim P_{X,A,Y}^{\otimes N}$ to $[0,1]^{\X\times[2]}$. The class of algorithms satisfying the $(\alpha,\delta)$-fairness constraint is 
    \[\mathscr{A}^G=\big\{\mathcal{A}^G:\Prob_{\D_{\rm all}\sim P^{\otimes N}}\big(\U_{{\rm EOO},P}\big(\mathcal{A}^G(\D_{\rm all})\big)\le\alpha\big)\ge 1-\delta, ~\forall P\in\mathscr{P}^G\big\}. \]
\end{Definition}
Note that $\mathscr{A}^G$ includes all post-processing algorithms where $\tilde\D$ is used for training initial estimators and $\D$ for calibration. Hence, the minimax lower bound over $\mathscr{A}^G$ also applies to all post-processing algorithms.

Under the set of models and algorithms defined above, the following theorem provides minimax lower bounds for the excess risks. Section O of the supplement provides minimax lower bounds for the expected excess risks.

\begin{Theorem}[Minimax Excess Risk Lower Bound]\label{thm:lower_binary_eoo}
    Suppose $\beta_Y\gamma\le d$, $\beta_A\gamma\le d$ and $\beta_Y\le\beta_A$. Consider the parameter space $\mathscr{P}=\mathscr{P}^{\rm aware}\cap\mathscr{P}^{\rm blind}$ and $(\alpha,\delta)$-fair algorithms, then for some constant $c\in(0,1)$, we have 
    \begin{equation}\label{eq:lower_aware}
        \inf_{\A^{\rm aware}\in\mathscr{A}^{\rm aware}}\sup_{P\in\mathscr{P}}\Prob_{\D_{\rm all}\sim P^{\otimes N}}\bigg(\risk_P\big(\A^{\rm aware}(\D_{\rm all})\big)-\risk_P(f^{*{\rm aware}}_{\alpha,P})\gtrsim N^{-\frac{\beta_Y(1+\gamma)}{2\beta_Y+d}}\bigg)\ge c-\delta,
    \end{equation}
    \begin{equation}\label{eq:lower_blind}
        \begin{aligned}
        &\inf_{\A^{\rm blind}\in\mathscr{A}^{\rm blind}}\sup_{P\in\mathscr{P}}\Prob_{\D_{\rm all}\sim P^{\otimes N}}\bigg(\risk_P\big(\A^{\rm blind}(\D_{\rm all})\big)-\risk_P(f^{*{\rm blind}}_{\alpha,P})\gtrsim\\
        &|\lambda^{*{\rm blind}}_{\alpha,P}|(N^{-\frac{1}{2}}\wedge\alpha)+\bigg(|\lambda^{*{\rm blind}}_{\alpha,P}|N^{-\frac{\beta_A}{2\beta_A+d}}\bigg)^{1+\gamma}+\bigg(\big(1+|\lambda^{*{\rm blind}}_{\alpha,P}|\big)N^{-\frac{\beta_Y}{2\beta_Y+d}}\bigg)^{1+\gamma}\bigg)\ge c-\delta.
    \end{aligned}
    \end{equation}
\end{Theorem}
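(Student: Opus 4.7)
The strategy is to prove the two lower bounds by separate reductions to hypothesis testing, constructing for each summand of \eqref{eq:lower_aware} and \eqref{eq:lower_blind} a tailored family of distributions in $\mathscr{P}=\mathscr{P}^{\rm aware}\cap\mathscr{P}^{\rm blind}$ in which the $(\alpha,\delta)$-fairness requirement is easy to satisfy uniformly, so that the restriction to $\mathscr{A}^G$ does not trivialize the argument.

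For the group-aware bound \eqref{eq:lower_aware}, I would adapt the classical Audibert--Tsybakov nonparametric classification lower bound. Partition a small ball inside $\X$ into $m\asymp N^{d/(2\beta_Y+d)}$ disjoint cells of radius $h\asymp N^{-1/(2\beta_Y+d)}$, choose a $C^\infty$ bump $\psi$, and index a hypercube by $\sigma\in\{0,1\}^m$ in which $\pm h^{\beta_Y}\psi$ is added to $\eta(\cdot,1)$ inside the $i$-th cell with sign $\sigma_i$. Concentrate $p_{X|A}$ so that each cell has mass $\asymp h^{d+\gamma\beta_Y}$, forcing Assumption \ref{ass:margin} to hold with the prescribed exponent, and keep $p_{1,a}$ and $\rho_{a|1}$ invariant across the family so that $\U(\1(2\eta>1))$ and hence $\lambda^{*{\rm aware}}_\alpha$ remain in a compact range independent of $\sigma$. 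An Assouad-style argument on the resulting hypercube then yields \eqref{eq:lower_aware}.

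For the group-blind bound \eqref{eq:lower_blind}, I would assemble three constructions, one for each summand. The $(1+|\lambda^{*{\rm blind}}_\alpha|)^{1+\gamma}N^{-\beta_Y(1+\gamma)/(2\beta_Y+d)}$ term is obtained essentially as above by perturbing $\eta$ at scale $h_Y$. The $|\lambda^{*{\rm blind}}_\alpha|^{1+\gamma}N^{-\beta_A(1+\gamma)/(2\beta_A+d)}$ term requires a hypercube perturbation of $\rho_{1|1}$ at scale $h_A\asymp N^{-1/(2\beta_A+d)}$; since the Bayes-fair classification boundary $g^{*{\rm blind}}_\alpha=2\eta-1-\lambda^{*{\rm blind}}_\alpha\phi$ is shifted by $O(|\lambda^{*{\rm blind}}_\alpha|h_A^{\beta_A})$ under such a perturbation, the $(1+\gamma)$-power re-emerges through Assumption \ref{ass:margin}. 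The $|\lambda^{*{\rm blind}}_\alpha|(N^{-1/2}\wedge\alpha)$ term I would obtain via a two-point Le Cam construction: perturb $p_{1,1}$ (or an equivalent scalar controlling EOO) by $N^{-1/2}\wedge\alpha$, so that the $\chi^2$-distance between $P_0^{\otimes N}$ and $P_1^{\otimes N}$ stays bounded while $\lambda^{*{\rm blind}}_{\alpha,P_0}$ and $\lambda^{*{\rm blind}}_{\alpha,P_1}$ differ by a constant multiple of $|\lambda^{*{\rm blind}}_\alpha|$, producing the claimed gap.

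The central obstacle is the failure of the triangle inequality for excess risk in this setting: because $\risk_P(\hat f)-\risk_P(f^{*G}_{\alpha,P})=\E_P\bigl[|g^{*G}_{\alpha,P}(X,A)|\1(\hat f\ne f^{*G}_{\alpha,P})\bigr]$ uses the distribution-specific weight $|g^{*G}_{\alpha,P}|$, the standard telescoping used in Le Cam/Fano/Assouad to relate $\sum_\sigma L_{P_\sigma}(\hat f)$ to a single semidistance $d(f^{*G}_{\alpha,P_\sigma},f^{*G}_{\alpha,P_{\sigma'}})$ no longer goes through. To circumvent this, I would engineer each family so that on the disagreement region of the neighbouring Bayes-fair classifiers the weights $|g^{*G}_{\alpha,P_\sigma}|$ are simultaneously bounded below by a common quantity of the target order, which reduces the excess risk under any $P_\sigma$ to that common weight times the Lebesgue measure of the disagreement region; the testing half is then closed by the usual KL chain-rule combined with Pinsker's/Birg\'e's inequality. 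The most delicate step will be realizing this for the group-blind construction, where one must track how each perturbation moves $\lambda^{*{\rm blind}}_{\alpha,P}$ and verify that all constructed distributions continue to satisfy Assumptions \ref{ass:ratio_poly} and \ref{ass:ratio_balance}, whose polynomial-growth and balance requirements significantly constrain the perturbation design.
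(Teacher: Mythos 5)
Your decomposition into $T_1$ (the margin-weighted disagreement term) and $T_2$ (the fairness-slack term), and your treatment of the two nonparametric summands via hypercube perturbations of $\eta$ at scale $N^{-1/(2\beta_Y+d)}$ and of $\rho_{1|1}$ at scale $N^{-1/(2\beta_A+d)}$, match the paper's strategy: the paper likewise drops $T_2\ge 0$ (valid for any $\alpha$-fair output, up to probability $\delta$) and runs a Fano argument on $T_1$ over a Varshamov--Gilbert packing, with the $(1+\gamma)$ exponent recovered through the margin condition exactly as you describe. For the group-aware bound the paper takes a shortcut you do not need but could: it makes $A$ independent of $(X,Y)$ so that $\lambda^{*\rm aware}_\alpha=0$ and the problem reduces to unconstrained classification.

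The genuine gap is in the $|\lambda^{*\rm blind}_{\alpha}|(N^{-1/2}\wedge\alpha)$ term. This term must come from $T_2$, not $T_1$: it carries a multiplicative $|\lambda^{*\rm blind}_\alpha|$ that can be as large as $\alpha^{-1}$ and has no $(1+\gamma)$ exponent, so no margin-based disagreement argument can produce it. But $T_2(\hat f)=|\lambda^{*}_{\alpha,P}|\alpha-\lambda^{*}_{\alpha,P}\E_P\phi_P(X)\hat f(X)$ is a constraint-slack functional, not a semidistance between $\hat f$ and the Bayes-fair classifiers, and your proposed two-point Le Cam step --- ``$\lambda^{*}_{\alpha,P_0}$ and $\lambda^{*}_{\alpha,P_1}$ differ by a constant multiple, producing the claimed gap'' --- does not explain how a bounded $\chi^2$-distance plus a gap between the two multipliers forces $\E_{P_i}\phi_{P_i}\hat f$ to be bounded away from $\alpha$ under at least one $P_i$. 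Your circumvention of the triangle-inequality failure (lower-bounding the weights $|g^{*}_{\alpha,P_\sigma}|$ on disagreement regions) only repairs the $T_1$ channel. The paper's resolution is different and is the crux of the theorem: it constructs a pair $P,\bar P$ with $\mathrm{TV}(P^{\otimes N},\bar P^{\otimes N})$ bounded away from $1$ and with $\phi_{\bar P}$ \emph{proportional} to $\phi_P$, so that $\U_{\mathrm{EOO},\bar P}(f)=\U_{\mathrm{EOO},P}(f)\{1-c(\tfrac{1}{\alpha\sqrt N}\wedge 1)\}$ holds \emph{simultaneously for every classifier} $f$. Then the $(\alpha,\delta)$-fairness of the algorithm under $P$ forces, on an event of probability at least $1-\delta-\mathrm{TV}$, the strict slack $\U_{\mathrm{EOO},\bar P}(\hat f)\le\alpha-c(N^{-1/2}\wedge\alpha)$ and hence $T_2\gtrsim|\lambda^{*}_{\alpha,\bar P}|(N^{-1/2}\wedge\alpha)$ under $\bar P$. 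Without an algorithm-uniform relation of this kind between the unfairness functionals of the two hypotheses, your construction does not close.
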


This theorem suggests that the key distinction between the group-aware and group-blind settings lies in the behavior of $\lambda_\alpha^{*G}$ and the prediction error of sensitive attributions. In the group-aware case, $|\lambda_\alpha^{*\rm aware}|$ remains bounded (Equation~\eqref{eq:lambda_upper_bound_eoo_aware}), indicating that fairness constraints induce only a limited perturbation of the decision rule.
In contrast, in the group-blind setting, $|\lambda_\alpha^{*\rm blind}|$ can grow on the order of $\alpha^{-1}$ (Theorem O.1 in the supplement), reflecting a more complex decision boundary. Moreover, the group-blind lower bound \eqref{eq:lower_blind} contains an additional term $O_P(|\lambda^{*{\rm blind}}_\alpha|^{1+\gamma}N^{-\frac{\beta_A(1+\gamma)}{2\beta_A+d}})$ due to the prediction of sensitive attributes.
These divergences explain the fundamentally different excess risk behavior in the two settings and characterize \textbf{the cost of group-blindness}. See Remark O.2 in the supplement for more details.
\begin{Remark}
    The conditions $\beta_Y\gamma\le d$ and $\beta_A\gamma\le d$ are standard in minimax lower bound analyses for nonparametric classification, see, e.g., \citep{audibert2007fast, cai2021transfer}. The condition $\beta_Y\le\beta_A$ is only required when considering the parameter space $\mathscr{P}$. If we study the group-aware and group-blind lower bounds in $\mathscr{P}^{\rm aware}$ and $\mathscr{P}^{\rm blind}$, separately, then the same rates can be proved without assuming $\beta_Y\le\beta_A$.

\end{Remark}
\begin{Remark}\label{rem:minimax}
    The excess risk upper bounds~\eqref{eq:excess_risk_upper_binary_eoo_aware} and \eqref{eq:excess_risk_upper_binary_eoo_blind} contain polynomial factors in $d$. However, in nonparametric statistics, errors typically grow exponentially with the dimension $d$, often assuming $d\lesssim \log N$. Such a condition makes those polynomial factors only logarithmic. 
    When $\alpha\gtrsim N^{-\frac{1}{2}}$, the group-blind excess risk upper bound~\eqref{eq:excess_risk_upper_binary_eoo_blind} matches the minimax lower bound~\eqref{eq:lower_blind} up to logarithmic factors. 
    When $2\beta_Y\gamma\le d$, since $|\lambda^{*{\rm aware}}_\alpha|\le 1$ according to Equation~\eqref{eq:lambda_upper_bound_eoo_aware}, the group-aware upper bound~\eqref{eq:excess_risk_upper_binary_eoo_aware} also matches the minimax lower bound~\eqref{eq:lower_aware} up to logarithmic terms. 
    Therefore, Algorithm~\ref{alg:binary_unified} is minimax optimal up to logarithmic factors. 
\end{Remark}

\begin{Remark}[\textbf{Optimal Trade-off Between Excess Risk and Fairness }]\label{rem:tradeoff_blind}
    The optimal group-blind excess risks decrease with $\alpha$ (see Theorem O.1 in the supplement), revealing a trade-off between fairness and group-blind excess risk. See Section O of the supplement for further discussion.
\end{Remark}

\section{Numerical Studies}\label{sec:numerical}

In this section, we evaluate the proposed algorithm on synthetic and real data under the equality of opportunity constraint, and compare its performance with other state-of-the-art fair classification methods. 

\subsection{Simulation Results}

In the simulation studies, we define $P_{X,A,Y} = P_{X|Y,A}P_{Y,A}$ as follows. For $y\in\{0,1\}$, $a\in[2]$, we generate $(Y,A)$ according to $p_{0,1}=0.3$, $p_{0,2}=0.18$, $p_{1,1}=0.49$, $p_{1,2}=0.12$, then $X-\mu_{Y,A}\in\R^d$ condition on $Y,A$ follows distribution $F$ with $\mu_{0,1},\mu_{0,2},\mu_{1,2}\overset{\rm i.i.d.}{\sim}{\rm Unif}(0,1)^{\otimes d}$ and $\mu_{1,1}\sim{\rm Unif}(b,b+1)^{\otimes d}$, where $d,F,b$ will be specified later. 

We set $\alpha\in\{0.08,0.11,0.14,0.17,0.20\}$ and $\delta=0.05$. 
Then we generate $n$ training samples, $n$ calibration samples, and 5000 test samples from the specified distribution. 
The training samples are used to get initial estimators $\hat\eta^G$ and $\hat\phi^G$, which are then post-processed on the calibration data to produce $\hat f^G=\1(2\hat\eta^G-1>\hat\lambda\hat\phi^G)$. Finally, we evaluate the unfairness of $\hat f^G$ under equality of opportunity and the prediction error using the 5000 test samples. 
The simulation considers three settings for $(d,F,b,n)$ as follows: 
\begin{itemize}
    \item[(M1)] $d=5$, $F=N(0,I_d)$, $b=1$ and $n=1000$.
    \item[(M2)] $d=5$, $F=N(0,I_d)$, $b=0.5$ and $n=500$.
    \item[(M3)] $d=10$, $F=t_3^{\otimes d}$, $b=1$ and $n=1000$. 
\end{itemize}

For the initial estimators, we fit a multinomial logistic regression model to get $\hat P(Y,A|X)$ and plug it into $\eta^{\rm aware}(X,a)=\frac{\Prob(Y=y,A=a|X)}{\Prob(Y=0,A=a|X)+\Prob(Y=1,A=a|X)}$, $\eta^{\rm blind}(X)=\Prob(Y=y,A=1|X)+\Prob(Y=y,A=2|X)$ and $\rho_{a|y}(X)=\frac{\Prob(Y=y,A=a|X)}{\eta^{\rm blind}(X)}$ to construct the initial estimators of $\eta^G$ and $\phi^G$. 
The multinomial logistic model is well-specified for (M1) and (M2) but misspecified for (M3). 
For fairness calibration, we set $\epsilon_\alpha = \sqrt{\frac{\log\frac{1}{\delta}}{n_Y}}$, since the $\epsilon_\alpha$ in Section~\ref{sec:binary_eoo} may be too conservative in practice. 

In the group-blind scenario, we compare our methods with the fair plug-in rule (FPIR) of \cite{zeng2024bayes} and the modification with bias scores (MBS) of \cite{chen2024posthoc}. Since MBS is designed for the group-blind scenario, only FPIR is used for comparison in the group-aware case. All three methods share the same initial estimators as described above. 
Because $\eta^G$ and $\phi^G$ admit closed forms under the considered models, we can calculate the prediction error of the Bayes optimal fair classifier. 

For brevity, we report results only for (M1) and defer those for (M2) and (M3) to Section B of the supplement. Fixing the generated $\mu_{y,a}$'s, we repeat the process 100 times and report the averaged unfairness, the $95\%$ sample quantile of the 100 unfairness measures, and the averaged prediction errors for both group-blind and group-aware scenarios in Tables~\ref{tab:simulation_blind} and \ref{tab:simulation_aware}, respectively. The results show that our algorithm controls the unfairness approximately below $\alpha$ with probability at least 0.95, satisfying the $(\alpha,\delta)$-fairness constraint. In contrast, FPIR and MBS only control the average fairness and may still produce unfair classifiers for individual realizations of the observed data.

\begin{table}[t]
\small
    \centering
    \begin{tabular}{@{}lcccccc@{}}\hline
     & & \multicolumn{5}{c}{$\alpha$}\\\cline{3-7}
    Methods & & 0.08 & 0.11 & 0.14 & 0.17 & 0.20\\\hline
    
    Ours & $\bar\U_{\rm EOO}$ & 0.043(0.026) & 0.052(0.033) & 0.065(0.035) & 0.105(0.046) & 0.131(0.045)\\
     & $\U_{{\rm EOO},95}$ & 0.091 & 0.110 & 0.126 & 0.182 & 0.198\\
     & Error & 0.312(0.023) & 0.294(0.020) & 0.286(0.020) & 0.267(0.019) & 0.253(0.020)\\[6pt]
    FPIR & $\bar\U_{\rm EOO}$ & 0.097(0.058) & 0.129(0.067) & 0.159(0.062) & 0.180(0.061) & 0.213(0.062)\\
     & $\U_{{\rm EOO},95}$ & 0.202 & 0.246 & 0.253 & 0.271 & 0.308\\
     & Error & 0.272(0.028) & 0.256(0.028) & 0.243(0.027) & 0.234(0.026) & 0.221(0.023)\\[6pt]
    MBS & $\bar\U_{\rm EOO}$ & 0.082(0.046) & 0.113(0.043) & 0.132(0.045) & 0.176(0.048) & 0.200(0.045)\\
     & $\U_{{\rm EOO}, 95}$ & 0.157 & 0.175 & 0.196 & 0.253 & 0.274\\
     & Error & 0.278(0.022) & 0.263(0.019) & 0.254(0.020) & 0.235(0.018) & 0.224(0.017)\\[6pt]
     Bayes & Error & 0.263 & 0.249 & 0.235 & 0.223 & 0.217\\\hline
    \end{tabular}
    \caption{Unfairness measures and prediction errors of our method, FPIR, and MBS in the group-blind scenario under (M1). And the prediction errors of Bayes optimal fair classifiers. $\bar\U_{\rm EOO}$ is the average unfairness over 100 repetitions. $\U_{{\rm EOO},95}$ is the $95\%$ sample quantile of the unfairness measures produced by 100 repetitions. Error is the average prediction error. 
    }
    \label{tab:simulation_blind}
\end{table}

\begin{table}[t]
\small
    \centering
    \begin{tabular}{@{}lcccccc@{}}\hline
     & & \multicolumn{5}{c}{$\alpha$}\\\cline{3-7}
    Methods & & 0.08 & 0.11 & 0.14 & 0.17 & 0.20\\\hline
    Ours & $\bar\U_{\rm EOO}$ & 0.035(0.024) & 0.044(0.033) & 0.069(0.040) & 0.103(0.047) & 0.121(0.047)\\
     & $\U_{{\rm EOO},95}$ & 0.078 & 0.105 & 0.136 & 0.179 & 0.192\\
     & Error & 0.183(0.008) & 0.181(0.008) & 0.175(0.008) & 0.171(0.006) & 0.170(0.007)\\[6pt]
    FPIR & $\bar\U_{\rm EOO}$ & 0.112(0.079) & 0.128(0.082) & 0.143(0.087) & 0.168(0.089) & 0.193(0.084)\\
     & $\U_{{\rm EOO},95}$ & 0.252 & 0.264 & 0.297 & 0.295 & 0.329\\
     & Error & 0.179(0.017) & 0.175(0.012) & 0.171(0.009) & 0.169(0.011) & 0.167(0.008)\\[6pt]
     Bayes & Error & 0.161 & 0.160 & 0.157 & 0.156 & 0.155\\\hline
    \end{tabular}
    \caption{The unfairness measures and prediction errors of our method and FPIR, respectively in the group-aware scenario under (M1). The notation is the same as Table~\ref{tab:simulation_blind}.}
    \label{tab:simulation_aware}
\end{table}

Figure~\ref{fig:simulation} summarizes the trade-off between average prediction error and both average and $95\%$ quantile unfairness. 
In both group-aware and group-blind scenarios, the curve of FPIR lies to the right of ours, indicating a worse fairness-accuracy trade-off. This arise because FPIR evaluates empirical unfairness through $|\hat\E\hat\phi^G(X,A)\hat f^G(X,A)|$ rather than $|(\hat\E_{X|A=1,Y=1}-\hat\E_{X|A=2,Y=1})\hat f^G(X,A)|$, making its unfairness estimation error depend on the error of $\hat\phi$, which is larger than $\epsilon_\alpha$ in our method. This is verified by FPIR's larger variance and quantile of unfairness measures in Tables~\ref{tab:simulation_blind} and \ref{tab:simulation_aware}. 
MBS and our method show comparable performance, as both are derived from the Bayes optimal fair classifier. 

\begin{figure}[t]
    \centering
    \begin{subfigure}[b]{\textwidth}
        \includegraphics[width=\linewidth]{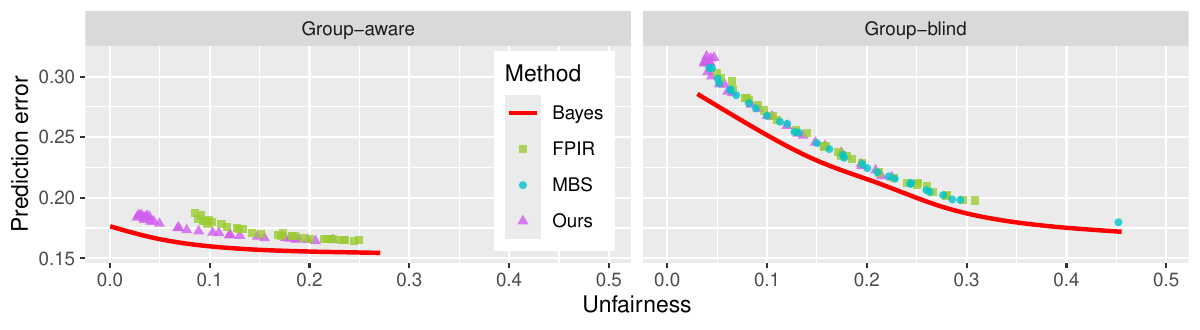}
        \caption{}
        \label{fig:simulation_average}
    \end{subfigure}
    \begin{subfigure}[b]{\textwidth}
        \includegraphics[width=\linewidth]{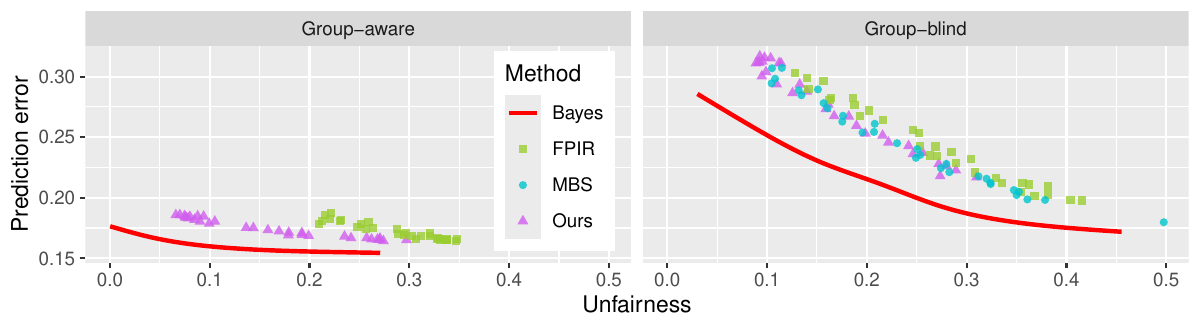}
        \caption{}
        \label{fig:simulation_quantile}
    \end{subfigure}
    \caption{\small{(a) Trade-off between prediction error and unfairness under (M1). The X-axis is the \textbf{average} unfairness $\bar\U_{\rm EOO}$ of classifiers over 100 repetitions, and the Y-axis is their average test prediction errors. The left and right panels correspond to the group-aware and group-blind scenarios, respectively. (b) Same as (a), except the X-axis is the \textbf{$95\%$ sample quantile} $\U_{{\rm EOO},95}$ of unfairness over 100 repetitions.}}
    \label{fig:simulation}
\end{figure}

Although the cost of group-blindness and the tradeoff in Remark~\ref{rem:tradeoff_blind} are minimax results, similar phenomena also appear in our simulations. Figure~\ref{fig:simulation}(b) shows that the group-aware and group-blind excess risks are comparable when the unfairness level is large. Theoretically, they are the excess risk of unconstrained classification in group-aware and group-blind scenarios.
As the unfairness level decreases, the group-blind excess risk grows sharply, indicating the tradeoff between group-blind excess risk and fairness, while the group-aware excess risk is relatively stable. 
For small unfairness levels, the group-blind excess risk becomes much larger, consistent with the theoretical cost of group-blindness arising from the error of predicting the sensitive attribute and the larger scale of $|\lambda^{*{\rm blind}}_\alpha|$ compared with $|\lambda^{*{\rm aware}}_\alpha|$, as verified in Figure~\ref{fig:lambda}.

\begin{figure}[t]
    \centering
    \includegraphics[width=0.6\linewidth]{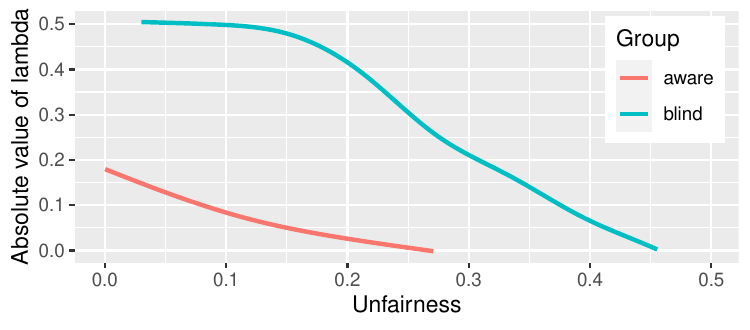}
    \caption{\small{The curve of $|\lambda^{*G}_\alpha|$ on $\alpha$ under (M1). The red line is for the group-aware scenario and the cyan line is for the group-blind scenario.}}
    \label{fig:lambda}
\end{figure}

\subsection{Real Data Analysis}

We apply the proposed algorithm to the Adult Census dataset \citep{asuncion2007uci}, which contains 48842 samples. The response is whether an individual's income exceeds \$50000. There are 14 non-sensitive covariates (e.g., age, marriage status, education), and gender is the sensitive attribute. We randomly split the dataset into 16000 training, 16000 calibration, and 16842 test samples. Initial estimators are trained on the training data using the same strategy as in the simulation study, the fair classifier is constructed utilizing the calibration samples, and the test data evaluates the prediction error and unfairness. We compare our method with FPIR and MBS and repeat the experiment 100 times. 

Given the large sample size, we set $\alpha\in\{0.04,0.06,0.08,0.10\}$ and $\delta=0.05$. Table~\ref{tab:real_blind} and \ref{tab:real_aware} report the prediction errors and unfairness measures for the group-blind and group-aware scenarios, respectively. The proposed method approximately controls the unfairness measures below $\alpha$ with probability $1-\delta$, whereas FPIR and MBS only control unfairness on average, which may lead to unfair decisions in individual realizations.

\begin{table}[t]
\small
    \centering
    \begin{tabular}{@{}lccccc@{}}\hline
     & & \multicolumn{4}{c}{$\alpha$}\\\cline{3-6}
    Methods & & 0.04 & 0.06 & 0.08 & 0.10\\\hline
    
    Ours & $\bar\U_{\rm EOO}$ & 0.028(0.020) & 0.037(0.024) & 0.052(0.033) & 0.069(0.032)\\
     & $\U_{{\rm EOO},95}$ & 0.059 & 0.077 & 0.109 & 0.120\\
     & Error & 0.151(0.002) & 0.150(0.002) & 0.150(0.002) & 0.150(0.002)\\[6pt]
    FPIR & $\bar\U_{\rm EOO}$ & 0.066(0.031) & 0.085(0.029) & 0.089(0.029) & 0.094(0.027)\\
     & $\U_{{\rm EOO},95}$ & 0.116 & 0.126 & 0.132 & 0.133\\
     & Error & 0.150(0.002) & 0.150(0.002) & 0.149(0.002) & 0.150(0.003)\\[6pt]
    MBS & $\bar\U_{\rm EOO}$ & 0.039(0.028) & 0.053(0.027) & 0.067(0.037) & 0.079(0.035)\\
     & $\U_{{\rm EOO}, 95}$ & 0.083 & 0.091 & 0.130 & 0.132\\
     & Error & 0.150(0.002) & 0.157(0.002) & 0.150(0.002) & 0.150(0.003)\\\hline
    \end{tabular}
    \caption{The unfairness measures and prediction errors in the group-blind scenario on the Adult Census dataset. The notation is the same as Table~\ref{tab:simulation_blind}.}
    \label{tab:real_blind}
\end{table}

\begin{table}[!b]
\small
    \centering
    \begin{tabular}{@{}lccccc@{}}\hline
     & & \multicolumn{4}{c}{$\alpha$}\\\cline{3-6}
    Methods & & 0.04 & 0.06 & 0.08 & 0.10\\\hline
    Ours & $\bar\U_{\rm EOO}$ & 0.027(0.020) & 0.038(0.024) & 0.053(0.033) & 0.069(0.032)\\
     & $\U_{{\rm EOO},95}$ & 0.060 & 0.076 & 0.112 & 0.124\\
     & Error & 0.151(0.002) & 0.151(0.002) & 0.150(0.002) & 0.150(0.002)\\[6pt]
    FPIR & $\bar\U_{\rm EOO}$ & 0.058(0.033) & 0.076(0.034) & 0.084(0.035) & 0.093(0.029)\\
     & $\U_{{\rm EOO},95}$ & 0.115 & 0.124 & 0.138 & 0.135\\
     & Error & 0.150(0.002) & 0.150(0.002) & 0.150(0.002) & 0.150(0.002)\\\hline
    \end{tabular}
    \caption{The unfairness measures and prediction errors of our method and FPIR, respectively in the group-aware scenario on the Adult Census dataset. The notation is the same as Table~\ref{tab:simulation_blind}.}
    \label{tab:real_aware}
\end{table}

Figure~\ref{fig:real} summarizes the trade-off between average prediction error and either the average unfairness or its 95\% sample quantile. Across all cases, FPIR fails to achieve small unfairness levels, and MBS tends to attain large unfairness levels. Therefore, our method achieves a better trade-off than FPIR and MBS.

\begin{figure}[t]
    \centering
    \begin{subfigure}[b]{\textwidth}
        \includegraphics[width=\linewidth]{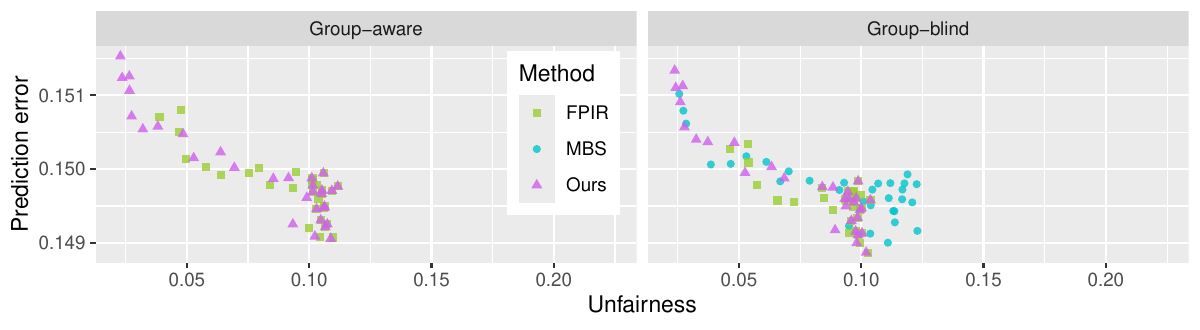}
        \caption{}
        \label{fig:real_average}
    \end{subfigure}
    \begin{subfigure}[b]{\textwidth}
        \includegraphics[width=\linewidth]{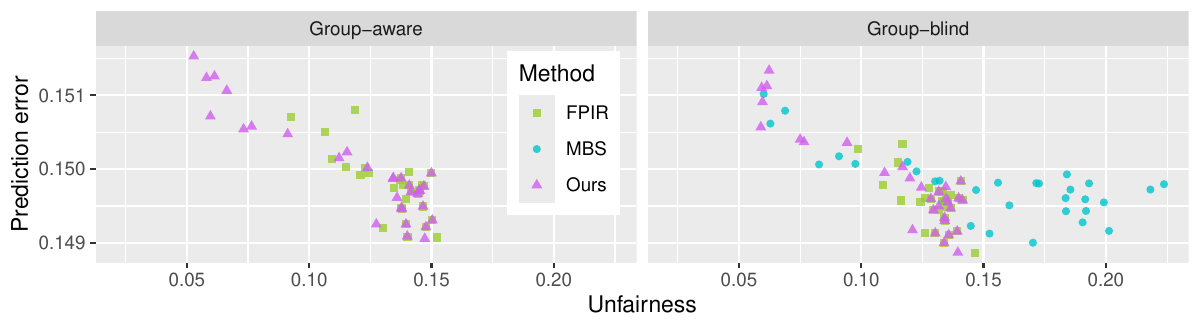}
        \caption{}
        \label{fig:real_quantile}
    \end{subfigure}
    \caption{The trade-off between prediction error and unfairness on the Adult Census dataset. The display is the same as Figure~\ref{fig:simulation}.}
    \label{fig:real}
\end{figure}

\section{Discussion}\label{sec:discussion}

In this work, we propose a comprehensive framework for fair classification with guaranteed fairness and excess risk across various fairness notions in both group-aware and group-blind scenarios. For binary sensitive attributes, we derive minimax lower bounds for excess risk, revealing the trade-off between excess risk and fairness, as well as the cost of group-blindness. In the following, we point out some interesting directions for future work.

For binary sensitive attributes, we study the excess risk when $\1(2\eta^G>1)$ is sufficiently fair or unfair. When $\U(\1(2\eta^G>1))$ approaches $\alpha$, additional assumptions, such as the detection condition \citep{tong2013plug}, are required to quantify the error of $\hat\lambda^G$. It would be interesting to derive a matching minimax lower bound, especially for the group-blind scenario.

For binary sensitive attributes, there is a gap $O_P(|\lambda^{*{\rm aware}}_\alpha|N^{-\frac{1}{2}})$ between the group-aware excess risk upper and lower bounds. We conjecture the upper bound is tight and new techniques may be required to prove the lower bound $O_P(|\lambda^{*{\rm aware}}_\alpha|N^{-\frac{1}{2}})$.

For multi-class sensitive attributes studied in Section A.2 of the supplement, we compare the prediction error of the proposed classifier $\hat f^G_{\hat\lambda_\alpha}$ with that of the Bayes optimal $\tilde\alpha$-fair classifier $f^{*G}_{\tilde\alpha}$ with a smaller unfairness level $\tilde\alpha\le\alpha$. Comparing directly with $f^{*G}_\alpha$ requires more complicated assumptions to control the error of $\hat\lambda^G_\alpha$. It would thus be interesting to define a more natural model space and characterize the minimax rate of $\risk(\hat f^G)-\risk(f^{*G}_\alpha)$.

%
%

%


\bibliography{reference}
\bibliographystyle{plainnat}

\newpage

\begin{center}
    {\Large Supplement to ``Finite-Sample and Distribution-Free Fair Classification: Optimal Trade-off Between Excess Risk and Fairness, and the Cost of Group-Blindness"}
\end{center}
\begin{abstract}
    In Section~\ref{sec:extension_supp}, we apply the results from Section~\ref{sec:unify} to other fairness notions, and propose a unified framework for fair classification with multi-class sensitive attributes. Section~\ref{sec:numerical_supp} contains the simulation results omitted in Section~\ref{sec:numerical}. Sections~\ref{sec:supp_exa_phi_eoo}, \ref{sec:supp_rem_poly_growth} and \ref{sec:supp_eq_lambda_upper_bound_eoo_aware} provide the derivation of Example~\ref{exa:phi_eoo}, Remark~\ref{rem:poly_growth} and Equation~\ref{eq:lambda_upper_bound_eoo_aware}, respectively. Section~\ref{sec:supp_modification} includes the modifications to $\hat\eta^G,\hat\phi^G$ for fulfilling Assumption~\ref{ass:init}. The rest of the supplement provides the omitted proofs of the theoretical results.
\end{abstract}

\appendix

\numberwithin{Lemma}{section}
\numberwithin{equation}{section}
\numberwithin{Definition}{section}
\numberwithin{Assumption}{section}
\numberwithin{Corollary}{section}
\numberwithin{algorithm}{section}
\numberwithin{Remark}{section}
\numberwithin{Theorem}{section}
\numberwithin{Proposition}{section}

\section{Extensions}\label{sec:extension_supp}

In this section, we build upon the previously discussed results by extending them in two directions. In Section~\ref{sec:binary_other}, we apply the framework proposed in Section~\ref{sec:unify} to more commonly used fairness notions. Then in Section~\ref{sec:unify_multi}, we extend the framework to multi-class sensitive attributes.

\subsection{Applications to More Fairness Notions}\label{sec:binary_other}

In this section, we apply the framework proposed in Section~\ref{sec:unify} to other widely used fairness notions defined in the following.

\begin{Definition}[Unfairness]\label{def:unfairness_binary}
    When $K=2$, for any randomized classifier $f$, the unfairness of $f$ in terms of
    \begin{enumerate}[1)]
        \item demographic parity (DP) is
        \[\U_{\rm DP}(f)=|\Prob(Y_f(X,A)=1|A=1)-\Prob(Y_f(X,A)=1|A=2)|,\]
        \item overall accuracy equality (OAE) is
        \begin{align*}
            \U_{\rm OAE}(f)=&|\Prob(Y_f(X,A)=1|A=1,Y=1)+\Prob(Y_f(X,A)=0|A=1,Y=0)\\
            &-\Prob(Y_f(X,A)=1|A=2,Y=1)-\Prob(Y_f(X,A)=0|A=2,Y=0)|,
        \end{align*}
        \item predictive equality (PE) is
        \[\U_{\rm PE}(f)=|\Prob(Y_f(X,A)=1|A=1,Y=0)-\Prob(Y_f(X,A)=1|A=2,Y=0)|.\]
    \end{enumerate}
\end{Definition}

Another commonly used fairness notion is equalized odds as defined in Definition~\ref{def:unfairness}. However, since the unfairness measure corresponding to equalized odds can not be reduced in this way to the absolute value of a linear combination of conditional expectations, we treat it as the multi-class sensitive attribute case and address it in Section~\ref{sec:unify_multi}.

Recall that $\rho_a(X)=\Prob(A=a|X)$, $p_a=\Prob(A=a)$. To apply Algorithm~\ref{alg:binary_unified} to unfairness measures in Definition~\ref{def:unfairness_binary}, we specify the corresponding $\phi^G_F$ for $G\in\{{\rm aware,blind}\}, F\in\{{\rm DP,OAE,PE}\}$ in Table~\ref{tab:phi_binary}. Here $G$ indicates the group-aware or group-blind scenarios and $F$ specifies the fairness notions: demographic parity (DP), overall accuracy equality (OAE), or predictive equality (PE).

\begin{table}
    \centering
    \begin{tabular}{@{}lcc@{}}\hline
        $F$ & $G={\rm aware}$ & $G={\rm blind}$ \\\hline
        \\[-1em]
        DP & $\frac{3-2a}{p_a}$ & $\frac{\rho_1(x)-p_1}{p_1p_2}$ \\
        \\[-1em]
        OAE & $\frac{3-2a}{p_{1,a}p_{0,a}}(p_a\eta^{G}(x,a)-p_{1,a})$ & $\sum_{a\in[2],y\in\{0,1\}}\frac{\rho_{a|y}(x)(\eta^G(x,a)+y-1)}{(3-2a)p_{y,a}}$\\
        \\[-1em]
        PE & $\frac{3-2a}{p_{0,a}}(1-\eta^{G}(x,a))$ & $\frac{(1-p_Y)\rho_{1|0}(x)-p_{0,1}}{p_{0,1}p_{0,2}}(1-\eta^G(x,a))$ \\[-1em]
        \\\hline
    \end{tabular}
    
    \caption{The form of $\phi^G_F(x,a)$ for $G\in\{{\rm aware,blind}\}, F\in\{{\rm DP,OAE,PE}\}$.}
    \label{tab:phi_binary}
\end{table}

We have $\U_F(f^G)=|\E\phi_F^G(X,A)f^G(X,A)|$ for $F\in\{{\rm DP,OAE,PE}\}$. Similar to Section~\ref{sec:binary_eoo_aware} and \ref{sec:binary_eoo_blind}, we impose the H\"older smoothness, observability assumptions, and strong density assumption~\ref{ass:density_aware}.

\begin{Assumption}[H\"older Smoothness]\label{ass:holder_other}
    We assume $\eta^G(\cdot,a)\in\HH(\beta_Y,L_Y)$, $\rho_1,\rho_{1|y}\in\HH(\beta_A,L_A)$ for all $G\in\{{\rm aware,blind}\}, a\in[2], y\in\{0,1\}$.
\end{Assumption}

\begin{Assumption}[Observability]\label{ass:observe_other}
    We assume the existence of constants $c_5>0$ such that $p_{y,a}>c_5$ for all $y\in\{0,1\},a\in[2]$.
\end{Assumption}

Following the same strategy as Section~\ref{sec:binary_eoo} to estimate $\eta^G$ and $\phi^G_F$ on $\tilde\D$, we denote $\epsilon_\eta\asymp\big(\frac{d\log \tilde n+\log\frac{1}{\delta_{\rm init}}}{\tilde n}\big)^{\frac{\beta_Y}{2\beta_Y+d}}$ and choose $\epsilon_{\phi,F}^G$ and $\epsilon_{\alpha,F}$ in Table~\ref{tab:epsilon_phi_binary} and Table~\ref{tab:epsilon_alpha_binary}, respectively. Here the value of $\epsilon_{\alpha,F}$ only depends on the fairness notions and remains the same across both group-aware and group-blind scenarios.

\begin{table}
    \centering
    \begin{tabular}{@{}lcc@{}}\hline
        $F$ & $G={\rm aware}$ & $G={\rm blind}$ \\\hline
        \\[-1em]
        DP & $\sqrt{\frac{\log\frac{1}{\delta_{\rm init}}}{\tilde n}}$ & $\big(\frac{d\log \tilde n+\log\frac{1}{\delta_{\rm init}}}{\tilde n}\big)^{\frac{\beta_A}{2\beta_A+d}}$\\
        \\[-1em]
        OAE, PE & $\big(\frac{d\log \tilde n+\log\frac{1}{\delta_{\rm init}}}{\tilde n}\big)^{\frac{\beta_Y}{2\beta_Y+d}}$ & $\big(\frac{d\log \tilde n+\log\frac{1}{\delta_{\rm init}}}{\tilde n}\big)^{\frac{\beta_Y}{2\beta_Y+d}}+\big(\frac{d\log \tilde n+\log\frac{1}{\delta_{\rm init}}}{\tilde n}\big)^{\frac{\beta_A}{2\beta_A+d}}$ \\[-1em]
        \\\hline
    \end{tabular}
    \caption{The order of $\epsilon_{\phi,F}^G$ for $G\in\{{\rm aware,blind}\}, F\in\{{\rm DP,OAE,PE}\}$.}
    \label{tab:epsilon_phi_binary}
\end{table}

\begin{table}
    \centering
    \begin{tabular}{@{}lcc@{}}\hline
        $F$ & $\epsilon_{\alpha,F}$\\\hline
        \\[-1em]
        DP & $\sum_{a\in[2]}\big(72\sqrt{\frac{2\log 4e^2}{n_a}}+\sqrt{\frac{1}{2n_a}\log \frac{4}{\delta_{\rm post}}}\big)$ \\
        \\[-1em]
        OAE  & $\sum_{a\in[2],y\in\{0,1\}}\big(72\sqrt{\frac{2\log 4e^2}{n_{y,a}}}+\sqrt{\frac{1}{2n_{y,a}}\log \frac{8}{\delta_{\rm post}}}\big)$\\
        \\[-1em]
        PE & $\sum_{a\in[2]}\big(72\sqrt{\frac{2\log 4e^2}{n_{0,a}}}+\sqrt{\frac{1}{2n_{0,a}}\log \frac{4}{\delta_{\rm post}}}\big)$\\[-1em]
        \\\hline
    \end{tabular}
    \caption{The value of $\epsilon_{\alpha,F}$ for $F\in\{{\rm DP,OAE,PE}\}$.}
    \label{tab:epsilon_alpha_binary}
\end{table}

Then we can guarantee the performance of the classifier $\hat f^G_{\alpha,F}$ produced by Algorithm~\ref{alg:binary_unified}.

\begin{Corollary}[Excess Risk Upper Bound]
    For $G\in\{{\rm aware,blind}\}, F\in\{{\rm DP,OAE,PE}\}$, suppose Assumptions~\ref{ass:init}, \ref{ass:margin}, \ref{ass:ratio_poly}, \ref{ass:ratio_balance}, \ref{ass:density_aware}, \ref{ass:holder_other} and \ref{ass:observe_other} hold, then with probability at least $1-\delta$, for any $\alpha\ge 2\epsilon_{\alpha,F}+\tilde\epsilon_{\phi,F}^G$ and such that the unfairness difference $D_0$ satisfies
        \[D_0\le -2\epsilon_{\alpha,F}-\tilde\epsilon_\eta^G\quad{\rm or}\quad D_0> \tilde\epsilon_\eta^G\vee c_3\big(2\epsilon_{\alpha,F}+\frac{c_1}{c_5}(2\epsilon_{\alpha,F}+(1+2c_4)|\lambda^{*G}_{\alpha,F}|\epsilon_{\phi,F}^G)^\gamma\big),\]
        we have
        \begin{equation}\label{eq:excess_risk_upper_binary_other}
        \begin{aligned}
            \risk(\hat f^G_{\alpha,F})-\risk(f^{*G}_{\alpha,F})\lesssim|\lambda^{*G}_{\alpha,F}|\epsilon_{\alpha,F}+\epsilon_\eta^{1+\gamma}+\big(|\lambda^{*G}_{\alpha,F}|\epsilon_{\phi,F}^G\big)^{1+\gamma}.
        \end{aligned}
        \end{equation}

\end{Corollary}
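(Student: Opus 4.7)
The plan is to apply the unified excess-risk bound of Theorem~\ref{thm:upper_unify_binary} to each pair $(G,F)\in\{\mathrm{aware},\mathrm{blind}\}\times\{\mathrm{DP},\mathrm{OAE},\mathrm{PE}\}$, by verifying the three ingredients the theorem needs: (i) a representation $\U_F(f^G)=|\E\phi_F^G(X,A)f^G(X,A)|$ with the $\phi_F^G$ listed in Table~\ref{tab:phi_binary}; (ii) initial estimators $\hat\eta^G,\hat\phi_F^G$ with controlled sup-norm errors matching Table~\ref{tab:epsilon_phi_binary}; and (iii) the finite-sample concentration statement of Lemma~\ref{lem:unfairness_dev_binary} with the $\epsilon_{\alpha,F}$ given in Table~\ref{tab:epsilon_alpha_binary}.

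First I would perform, for each $F\in\{\mathrm{DP},\mathrm{OAE},\mathrm{PE}\}$, the same conditioning identity used in Example~\ref{exa:phi_eoo}. For demographic parity, write $\Prob(Y_f=1\mid A=a)=p_a^{-1}\E\1(A=a)f^G(X,A)$ in the group-aware case and $\Prob(Y_f=1\mid A=a)=p_a^{-1}\E\rho_a(X)f^G(X,A)$ in the group-blind case, and take the difference across $a\in[2]$. For OAE and PE the same reduction applies after further conditioning on $Y$, which introduces the extra $\eta^G$ or $1-\eta^G$ factor appearing in Table~\ref{tab:phi_binary}. This places every $(G,F)$ problem in the scope of the unified framework.

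Second I would construct the initial estimators on $\tilde\D$ exactly as in Sections~\ref{sec:binary_eoo_aware} and \ref{sec:binary_eoo_blind}: local polynomial regression of order $\lfloor\beta_Y\rfloor$ for $\eta^G(\cdot,a)$ and of order $\lfloor\beta_A\rfloor$ for $\rho_1,\rho_{1|y}$ under Assumptions~\ref{ass:holder_other} and \ref{ass:density_aware}, together with empirical averages for the scalar probabilities $p_a,p_{y,a}$. Lemma~\ref{lem:local_polynomial_aware}, its group-blind analogue Lemma~\ref{lem:local_polynomial_blind}, Hoeffding's inequality for $\hat p_{y,a}$, and Assumption~\ref{ass:observe_other} (which keeps the denominators of the various $\phi_F^G$ uniformly bounded away from zero) combine through a product/quotient perturbation argument to give, on an event of probability at least $1-\delta_{\mathrm{init}}/2$, the rates $\epsilon_\eta\asymp(\tilde n^{-1}(d\log\tilde n+\log(1/\delta_{\mathrm{init}})))^{\beta_Y/(2\beta_Y+d)}$ and the $\epsilon_{\phi,F}^G$ listed in Table~\ref{tab:epsilon_phi_binary}. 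In particular, group-aware DP has no regression component in $\phi_F^G$, which is why only a parametric $\tilde n^{-1/2}$ rate survives there, while group-blind OAE and PE inherit both $\beta_Y$- and $\beta_A$-rates through the product of $\hat\eta^G$ and $\hat\rho_{a\mid y}$.

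Third I would invoke Lemma~\ref{lem:unfairness_dev_binary} with the appropriate indexing set $\{\E_j\}$: conditioning on $A=a$ for DP, on $(Y,A)=(y,a)$ for OAE, and on $(Y,A)=(0,a)$ for PE. This yields the $\epsilon_{\alpha,F}$ of Table~\ref{tab:epsilon_alpha_binary} on an event of probability at least $1-\delta_{\mathrm{post}}$. After a mild perturbation of $\hat\eta^G,\hat\phi_F^G$ to enforce Assumption~\ref{ass:init} as in Section~G of the supplement, Theorem~\ref{thm:upper_unify_binary} applied conditionally on $\tilde\D$ delivers both the $(\alpha,\delta_{\mathrm{post}})$-fairness under the stated lower bound on $\alpha$, and the excess-risk bound~\eqref{eq:excess_risk_upper_binary_unify} with these plug-in values. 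Writing $\delta=\delta_{\mathrm{init}}+\delta_{\mathrm{post}}$ and a union bound over the two failure events gives probability at least $1-\delta$; substituting the plug-in values produces~\eqref{eq:excess_risk_upper_binary_other}.

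The main obstacle is the bookkeeping for group-blind OAE and PE, where $\hat\phi_F^G$ is a rational function of several regression estimates: one must uniformly lower-bound the denominator via Assumption~\ref{ass:observe_other}, treat cross-products like $O_P(\epsilon_\eta)\cdot O_P(\epsilon_\rho)$ as lower-order remainders absorbed into $\epsilon_{\phi,F}^G$, and verify that the perturbation enforcing Assumption~\ref{ass:init} still yields the same sup-norm rate. Everything else is a direct specialization of Theorem~\ref{thm:upper_unify_binary}.
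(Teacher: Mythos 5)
Your proposal is correct and follows exactly the route the paper intends: the corollary is a direct specialization of Theorem~\ref{thm:upper_unify_binary} once the representation $\U_F(f^G)=|\E\phi_F^G(X,A)f^G(X,A)|$, the initial-estimator rates of Table~\ref{tab:epsilon_phi_binary} (via local polynomial regression and Assumption~\ref{ass:observe_other} to control denominators), and the concentration level $\epsilon_{\alpha,F}$ of Table~\ref{tab:epsilon_alpha_binary} from Lemma~\ref{lem:unfairness_dev_binary} are plugged in, with a union bound over $\delta_{\rm init}+\delta_{\rm post}$. The paper gives no separate argument beyond this, so your write-up matches its approach.
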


\subsection{A Unified Framework for Multi-Class Sensitive Attribute}\label{sec:unify_multi}

In this section, we provide a unified framework for the Bayes optimal $\alpha$-fair classifier, post-processing algorithm and excess risk analysis. 

\subsubsection{Bayes Optimal $\alpha$-Fair Classifier}\label{sec:bayes_multi}

First, we define the unfairness measures for multi-class sensitive attributes.

\begin{Definition}[Unfairness]\label{def:unfairness}
    For any randomized classifier $f$, the unfairness of $f$ in terms of
    \begin{enumerate}[1)]
        \item demographic parity is
        \[\U_{\rm DP}(f)=\max_{a\in[K]}|\Prob(Y_f(X,A)=1|A=a)-\Prob(Y_f(X,A)=1)|,\]
        \item equalized odds is
        \begin{align*}
            \U_{\rm EO}(f)=\max_{a\in[K]}\big\{&|\Prob(Y_f(X,A)=1|A=a,Y=1)-\Prob(Y_f(X,A)=1|Y=1)|\\
            &\vee|\Prob(Y_f(X,A)=0|A=a,Y=0)-\Prob(Y_f(X,A)=0|Y=0)|\big\},
        \end{align*}
        \item equality of opportunity is
        \[\U_{\rm EOO}(f)=\max_{a\in[K]}|\Prob(Y_f(X,A)=1|A=a,Y=1)-\Prob(Y_f(X,A)=1|Y=1)|,\]
        \item overall accuracy equality is
        \begin{align*}
            \U_{\rm OAE}(f)=\max_{a\in[K]}\big|&\Prob(Y_f(X,A)=1|A=a,Y=1)+\Prob(Y_f(X,A)=0|A=a,Y=0)\\
            &-\Prob(Y_f(X,A)=1|Y=1)-\Prob(Y_f(X,A)=0|Y=0)\big|,
        \end{align*}
        \item predictive equality is
        \[\U_{\rm PE}(f)=\max_{a\in[K]}|\Prob(Y_f(X,A)=1|A=a,Y=0)-\Prob(Y_f(X,A)=1|Y=0)|.\]
    \end{enumerate}
\end{Definition}

It is evident that these commonly used unfairness measures can be rewritten as
\[\U(f^G)=\|\E\Phi^G(X,A)f^G(X,A)\|,\]
for some bounded vector-valued function $\Phi^G=(\phi^G_k)_{k\in[\tilde K]}:\R^d\times[K]\rightarrow\R^{\tilde K}$, $\tilde K\in\mathbb{N}_+$ and norm $\|\cdot\|$ on $\R^{\tilde K}$, $G\in\{{\rm aware, blind}\}$. 
When $G={\rm blind}$, then $\Phi^G$ is only a function of $X$. Then we can characterize the solution of Problem~\eqref{eq:bayes_obj} as follows.

\begin{Proposition}[Bayes Optimal $\alpha$-fair Classifier]\label{prop:bayes}
    For $G\in\{{\rm aware,blind}\}$, the Bayes optimal classifier $f^{*G}_\alpha\in[0,1]^{\R^d\times [K]}$ of Problem \eqref{eq:bayes_obj} has the following form $P_{X,A}$-almost surely, with $P_{X,A}$ to be the joint distribution of $(X,A)$,
    \begin{align*}
        f^{*G}_\alpha (X,A)=&\1\big(g^{*G}_\alpha (X,A)>0\big)+b^G(X,A)\1\big(g^{*G}_\alpha (X,A)=0\big),
    \end{align*}
    for 
    \[g^{*G}_\alpha (X,A)=2\eta^G(X,A)-1-\lambda^{*G\top}_\alpha\Phi^G(X,A),\]
    \[\lambda^{*G}_\alpha \in\argmin_{\lambda\in\R^{\tilde K}}\E\big(2\eta^G(X,A)-1-\lambda^\top\Phi^G(X,A)\big)_++\alpha\|\lambda\|_*,\]
    and any $b^G\in[0,1]^{\R^d\times[K]}$ mapping from $\R^d\times [K]$ to $[0,1]$ such that $f^{*G}_\alpha$ satisfies the fairness constraint and 
    \begin{equation}\label{eq:lambda_optimality}
        \lambda^{*G\top}_\alpha\E\Phi^G(X,A)f^{*G}_\alpha(X,A)=\|\lambda^{*G}_\alpha\|_*\|\E\Phi^G(X,A)f^{*G}_\alpha(X,A)\|=\|\lambda^{*G}_\alpha\|_*\alpha.
    \end{equation}
    Here $\|\cdot\|_*$ is the dual norm of $\|\cdot\|$.
\end{Proposition}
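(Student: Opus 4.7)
The plan is to solve the convex program defining $f^{*G}_\alpha$ via Lagrangian duality, using the dual-norm identity $\|v\| = \sup_{\|\lambda\|_* \le 1} \lambda^\top v$ to linearize the unfairness constraint. After subtracting the constant $p_Y$, the primal becomes $\min \E[(1-2\eta^G)f^G]$ over $f \in [0,1]^{\R^d \times [K]}$ subject to $\|\E\Phi^G f^G\| \le \alpha$. I would introduce the Lagrangian
\[
L(f, \lambda) = \E\big[(1 - 2\eta^G + \lambda^\top \Phi^G) f^G\big] - \alpha\|\lambda\|_*, \qquad \lambda \in \R^{\tilde K},
\]
and use positive homogeneity of the dual norm to verify $\sup_\lambda L(f, \lambda) = \E[(1-2\eta^G)f^G]$ when $\|\E\Phi^G f^G\| \le \alpha$ and $+\infty$ otherwise, so the primal equals $\min_f \sup_\lambda L(f, \lambda)$.

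Next, since $\alpha > 0$ and $f \equiv 0$ is strictly feasible, Slater's condition yields strong duality and allows the swap to $\sup_\lambda \inf_f L(f, \lambda)$. For fixed $\lambda$, the inner minimization decouples pointwise in $(x, a)$: the minimizer equals $1$ where $g(x, a) := 2\eta^G(x,a) - 1 - \lambda^\top \Phi^G(x,a)$ is positive, equals $0$ where it is negative, and is free on $\{g = 0\}$. Substituting back gives $\inf_f L(f, \lambda) = -\E(2\eta^G - 1 - \lambda^\top \Phi^G)_+ - \alpha\|\lambda\|_*$, so an outer maximizer $\lambda^{*G}_\alpha$ is precisely a minimizer of the dual objective stated in the proposition. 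Pairing this with any primal optimum $f^{*G}_\alpha$, which must itself minimize $L(\cdot, \lambda^{*G}_\alpha)$, recovers the announced form $f^{*G}_\alpha = \1(g^{*G}_\alpha > 0) + b^G \1(g^{*G}_\alpha = 0)$.

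To close the argument I would exhibit an admissible $b^G \in [0,1]^{\R^d \times [K]}$ on $\{g^{*G}_\alpha = 0\}$ that simultaneously restores primal feasibility $\|\E\Phi^G f^{*G}_\alpha\| \le \alpha$ and the complementary-slackness identity \eqref{eq:lambda_optimality}. The saddle-point equality $\sup_\lambda L(f^{*G}_\alpha, \lambda) = \inf_f L(f, \lambda^{*G}_\alpha)$ forces $\lambda^{*G\top}_\alpha \E\Phi^G f^{*G}_\alpha = \alpha\|\lambda^{*G}_\alpha\|_*$, after which the H\"older-type bound $\lambda^{*G\top}_\alpha \E\Phi^G f^{*G}_\alpha \le \|\lambda^{*G}_\alpha\|_* \|\E\Phi^G f^{*G}_\alpha\| \le \|\lambda^{*G}_\alpha\|_* \alpha$ must be tight throughout, yielding the remaining pieces of \eqref{eq:lambda_optimality} (the case $\lambda^{*G}_\alpha = 0$ is trivial since the unconstrained Bayes rule is then already $\alpha$-fair). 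The hard part is the constructive existence of $b^G$ when $\Prob(g^{*G}_\alpha = 0) > 0$: the map $b^G \mapsto \E\Phi^G f^{*G}_\alpha$ is affine, its range over $b^G \in [0,1]^{\R^d \times [K]}$ is a convex subset of $\R^{\tilde K}$, and the KKT subgradient condition $0 \in \partial_\lambda \big[\E(2\eta^G - 1 - \lambda^\top \Phi^G)_+ + \alpha\|\lambda\|_*\big]$ at $\lambda^{*G}_\alpha$ places a point of $\alpha\cdot\partial\|\cdot\|_*(\lambda^{*G}_\alpha)$ inside this range, producing the required $b^G$. This step generalizes the binary-case argument underlying Proposition~\ref{prop:bayes_binary} from $\R$ to $\R^{\tilde K}$, with the dual norm replacing $|\cdot|$.
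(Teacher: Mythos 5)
Your argument is correct and follows essentially the same route as the paper's proof: reduce the risk to a linear functional of $f$, form the Lagrangian with the dual-norm linearization of the constraint, invoke strong duality and a minimax swap (the paper cites Luenberger's Theorem 8.6.1 and Sion's theorem where you invoke Slater's condition), minimize pointwise to obtain $\1(g^{*G}_\alpha>0)$ off the boundary $\{g^{*G}_\alpha=0\}$, and read off \eqref{eq:lambda_optimality} from the saddle-point equality together with the H\"older chain. The only departure is your final subgradient step ($0\in\partial_\lambda[\E(2\eta^G-1-\lambda^\top\Phi^G)_+ + \alpha\|\lambda\|_*]$ at $\lambda^{*G}_\alpha$) used to exhibit an admissible $b^G$; the paper instead takes the existence of the minimizer $f^{*G}_\alpha$ as given and deduces that its boundary values already satisfy the required conditions, but your KKT route is a valid and somewhat more explicit way to close that step.
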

\begin{Remark}\label{rem:lambda_upper_multi}
   Similar to binary sensitive attribute setting, $\|\lambda^{*G}_\alpha\|_*$ is always upper bounded. To see this, by Equation~\eqref{eq:lambda_optimality}, we know
    \[\|\lambda^{*G}_\alpha\|_*\alpha=\E\lambda^{*G\top}_\alpha\Phi^G(X,A)\1\big(2\eta^G(X,A)-1>\lambda^{*G\top}_\alpha\Phi^G(X,A)\big)\le\E\big(2\eta^G(X,A)-1\big)f^{*G}_\alpha(X,A)\le 1,\]
    therefore $\|\lambda^{*G}_\alpha\|_*\le\alpha^{-1}$. 
\end{Remark}

Similar to the case with binary sensitive attributes in Section~\ref{sec:unify}, the Bayes optimal $\alpha$-fair classifier in Proposition~\ref{prop:bayes} is a translation of the unconstrained Bayes optimal classifier $\1(2\eta^G>1)$ by $\lambda^{*G\top}_\alpha\Phi^G$. This motivates us to construct a fair classifier by post-processing.

\subsubsection{Post-processing Algorithm}

In this section, we propose a general algorithm for various fairness notions with fairness and excess risk guarantees. For the unfairness measures in Definition~\ref{def:unfairness}, the vector norms $\|\cdot\|$ and $\|\cdot\|_*$ in Proposition~\ref{prop:bayes} equal to the $\ell_\infty$ norm $\|\cdot\|_\infty$ and $\ell_1$ norm $\|\cdot\|_1$, respectively.

It is clear that for the same $\tilde K$ defined above, the unfairness measures in Definition~\ref{def:unfairness} can also be rewritten as
\[\U(f^G)=\max_{k\in[\tilde K]}\bigg|\sum_{j\in[m]}\kappa_j\E_{kj}f^G(X,A)\bigg|,\]
with $\{\kappa_j\in\R:j\in[m]\}$ are known coefficients and $\{\E_{kj}:k\in[\tilde K],j\in[m]\}$ are a set of conditional expectations given the sensitive attributes, depending on the fairness notions.

Similar to the binary sensitive attribute case in Section~\ref{sec:unify_binary}, we assume the initial estimators $\hat\eta^G$ and $\hat\Phi^G$ are given and independent of the dataset $\D$. We select $\hat\lambda$ based on $\D$ to post-process $\hat\eta^G$ and $\hat\Phi^G$. Denote
\[\hat f^G_\lambda(x,a)=\1\big(2\hat\eta^G(x,a)-1>\lambda^\top\hat\Phi^G(x,a)\big),\]
the excess risk of $\hat f^G_\lambda$ can be decomposed as
\begin{align*}
    &\risk(\hat f^G_\lambda)-\risk(f^{*G}_\alpha)\\
    =&\underbrace{\E|g^{*G}_\alpha(X,A)||f^{*G}_\alpha(X,A)-\hat f^G_\lambda(X,A)|}_{T_1}+\underbrace{\E\lambda^{*G\top}_\alpha\Phi^G(X,A)\big(f^{*G}_\alpha(X,A)-\hat f^G_\lambda(X,A)\big)}_{T_2}.
\end{align*}
For the binary sensitive attribute case in Section~\ref{sec:unify_binary}, $\lambda^{*G}_\alpha$ is a real number with two well-separated directions, i.e. positive or negative. Then as long as $\alpha$ is large enough, we are able to construct $\hat\lambda$ as in Algorithm~\ref{alg:binary_unified} such that it roughly maximizes $\sgn(\lambda^{*G}_\alpha) \E \phi^G(X,A) \hat f^G_\lambda(X,A)$, or equivalently minimizes $T_2$, and ensures $\U(\hat f^G_\lambda)=\max_{s\in\{1,-1\}}\E s\phi^G(X,A)\hat f^G_\lambda(X,A)\le\alpha$ simultaneously. However, for multi-class sensitive attributes, $\lambda^{*G}_\alpha$ has dimension $\tilde K>1$ and there are continuum directions $\{\mu\in\R^{\tilde K}:\|\mu\|_1=1\}$. Then it is not clear how to directly control $\frac{\lambda^{*G\top}_\alpha}{\|\lambda^{*G}_\alpha\|_1}\E\Phi^G(X,A)\hat f^G_\lambda(X,A)$ and $\U(\hat f^G_\lambda)=\sup_{\|\mu\|_1=1}\mu^\top\E\Phi^G(X,A)\hat f^G_\lambda(X,A)$ simultaneously. Consequently, the strategy in Algorithms~\ref{alg:binary_unified} can not be applied in this case. In the following, we propose an algorithm to select $\lambda$ using empirical risk minimization.

Denote the empirical unfairness as
\[\hat\U(f^G)=\max_{k\in[\tilde K]}\bigg|\sum_{j\in[m]}\kappa_j\hat\E_{kj}f^G(X,A)\bigg|\]
with $\{\hat\E_{kj}:k\in[\tilde K],j\in[m]\}$ to be the set of conditional sample averages corresponding to $\{\E_{kj}:k\in[\tilde K],j\in[m]\}$ based on data $\D$. We denote $n_{(kj)}$ to be the number of samples in $\D$ used to calculate the conditional sample average $\hat \E_{kj}$, and set
\[\epsilon_\alpha=\max_{k\in[\tilde K]}\sum_{j\in[m]}|\kappa_j|\bigg\{72\sqrt{\frac{(\tilde K+1)\log 4e^2}{n_{(kj)}}}+\sqrt{\frac{1}{2n_{(kj)}}\log\frac{2\tilde Km}{\delta_{\rm post}}}\bigg\}.\]
Then the following lemma ensures the possibility of distribution-free and finite-sample fairness control.
\begin{Lemma}\label{lem:unfairness_dev_multi}
    With probability at least $1-\delta_{\rm post}$ on $\D$,
\[\sup_{\lambda\in\R^{\tilde K}}|\hat \U(\hat f^G_\lambda)-\U(\hat f^G_\lambda)|\le\epsilon_\alpha.\]
\end{Lemma}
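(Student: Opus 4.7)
The plan is to reduce the uniform concentration to standard VC-type bounds applied to each of the $\tilde K m$ components separately, and then combine via a union bound. The proof naturally splits into three steps.

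First, I would use elementary inequalities to move the supremum outside the max. Using $|\max_k a_k - \max_k b_k| \le \max_k |a_k - b_k|$ together with $||a|-|b|| \le |a-b|$, for any classifier $f$ we have
\[
\bigl|\hat\U(f) - \U(f)\bigr| \le \max_{k\in[\tilde K]}\sum_{j\in[m]} |\kappa_j|\,\bigl|(\hat\E_{kj}-\E_{kj}) f(X,A)\bigr|.
\]
So it suffices to show, with probability $\ge 1-\delta_{\rm post}$ and simultaneously for all $(k,j)\in[\tilde K]\times[m]$,
\[
\sup_{\lambda\in\R^{\tilde K}} \bigl|(\hat\E_{kj}-\E_{kj})\hat f^G_\lambda(X,A)\bigr| \le 72\sqrt{\tfrac{(\tilde K+1)\log 4e^2}{n_{(kj)}}} + \sqrt{\tfrac{1}{2n_{(kj)}}\log\tfrac{2\tilde K m}{\delta_{\rm post}}}.
\]
Then taking $|\kappa_j|$-weighted sums and the max over $k$ delivers exactly $\epsilon_\alpha$.

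Second, I would establish that the class $\FF = \{\hat f^G_\lambda:\lambda\in\R^{\tilde K}\}$ has VC dimension at most $\tilde K+1$. Writing $\hat f^G_\lambda(x,a) = \1(2\hat\eta^G(x,a)-1 - \lambda^\top \hat\Phi^G(x,a) > 0)$ and regarding $(\lambda_0,\lambda)\in\R^{1+\tilde K}$ with $\lambda_0$ fixed at $1$, this is a linear threshold in $\lambda$ evaluated at the feature map $(2\hat\eta^G(x,a)-1, -\hat\Phi^G(x,a))$, and therefore inherits the VC dimension $\tilde K+1$ of the class of half-spaces in $\R^{\tilde K}$ with fixed intercept. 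Since $\hat\eta^G$ and $\hat\Phi^G$ are fixed (independent of $\D$) conditional on $\tilde\D$, this bound is deterministic given the initial estimators.

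Third, I would apply a standard uniform deviation bound for VC classes to each conditional empirical process $(\hat\E_{kj}-\E_{kj})f$. By combining the Rademacher complexity bound for a VC class of dimension $V$, namely $\E[\sup_{f\in\FF}|(\hat\E_{kj}-\E_{kj})f|]\le 72\sqrt{V\log(4e^2)/n_{(kj)}}$ (via Dudley's entropy integral applied to VC subgraph classes), with the bounded differences / McDiarmid inequality that adds a deviation term $\sqrt{\log(1/\delta')/(2n_{(kj)})}$, I obtain the required per-pair bound with probability at least $1-\delta'$. Then a union bound over the $2\tilde K m$ pairs (the factor $2$ absorbs taking absolute values) with $\delta' = \delta_{\rm post}/(2\tilde K m)$ yields the claim.

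The only mildly delicate point is verifying the VC bound in the presence of potentially complicated $\hat\eta^G,\hat\Phi^G$; once one observes that $\hat f^G_\lambda$ is linear in $\lambda$ for each fixed $(x,a)$ and that $\hat\eta^G,\hat\Phi^G$ are held fixed (the randomness being only in $\D$ given $\tilde\D$), the VC argument is routine. Everything else is bookkeeping: each conditional sample average $\hat\E_{kj}$ is an empirical mean over $n_{(kj)}$ i.i.d.\ samples (conditioned on the relevant sensitive attributes/labels), so McDiarmid applies with bounded differences $1/n_{(kj)}$, giving exactly the stated deviation term.
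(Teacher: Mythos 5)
Your proposal is correct and follows essentially the same route as the paper: the paper proves this lemma by reducing, exactly as you do, to a per-$(k,j)$ uniform deviation bound over the class $\{\hat f^G_\lambda:\lambda\in\R^{\tilde K}\}$, invoking a VC concentration inequality (stated in the supplement as a combination of the Rademacher-complexity bound for VC classes and the bounded-differences inequality from Boucheron, Lugosi and Massart) with VC dimension $\tilde K+1$, and union bounding over the $\tilde K m$ conditional sample averages. Your halfspace argument for the VC dimension and your bookkeeping of the $\log\frac{2\tilde Km}{\delta_{\rm post}}$ term both match the constants in the stated $\epsilon_\alpha$.
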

Following Lemma~\ref{lem:unfairness_dev_multi}, we estimate $\lambda^{*G}_\alpha$ by
\begin{equation}\label{eq:obj_multi}
    \begin{aligned}
    \hat\lambda^G_\alpha\in\argmin_{\lambda\in\R^{\tilde K}}\sum_{i=1}^n\1\big(Y_i\ne\hat f^G_\lambda(X_i,A_i)\big)\quad{\rm s.t.}\quad \hat\U(\hat f^G_\lambda)\le\alpha-\epsilon_\alpha,
\end{aligned}
\end{equation}
and set the classifier as $\hat f^G_{\hat\lambda_\alpha}$. We summarize the above procedures in Algorithm~\ref{alg:multi_unify}.

\begin{algorithm}
\caption{Post-processing with Multi-Class Sensitive Attribute}\label{alg:multi_unify}
\begin{algorithmic}
\State{\bf Input:} Data $\D$, the initial estimators $\hat\eta^G,\hat\Phi^G$, the unfairness level $\alpha$, the tolerance $\delta$, and the scenario $G\in\{{\rm aware,blind}\}$.
\State{\bf Output:} $\hat f^G_{\hat\lambda}$.
\State{\bf Step 1:} Solve
\[\hat\lambda^G_\alpha\in\argmin_{\lambda\in\R^{\tilde K}}\sum_{i=1}^n\1\big(Y_i\ne\hat f^G_\lambda(X_i,A_i)\big)\quad{\rm s.t.}\quad\hat\U(\hat f^G_\lambda)\le\alpha-\epsilon_\alpha.\]
\State{\bf Step 2:} Output $\hat f^G_{\hat\lambda_\alpha}(x,a)=\1\big(2\hat\eta^G(x,a)-1>\hat\lambda^{G\top}_\alpha\hat\Phi^G(x,a)\big)$.
\end{algorithmic}
\end{algorithm}

\subsubsection{Performance Guarantee}

To study the performance of the proposed algorithm, we denote $\epsilon_\eta$ and $\epsilon_\phi$ to be the estimation errors of $\hat\eta^G$ and $\hat\Phi^G$ respectively, 
\[\|\hat\eta^G-\eta^G\|_\infty\le\epsilon_\eta,\quad \max_{k\in[\tilde K]}\|\hat\phi_k^G-\phi_k^G\|_\infty\le\epsilon_\phi.\]
For $\tilde\epsilon_\alpha$ to be specified later, we denote $\tilde\alpha=\alpha-\tilde\epsilon_\alpha$ and denote the Bayes optimal $\tilde\alpha$-fair classifier as 
\[\lambda^{*G}_{\tilde\alpha}\in\argmin_{\lambda\in\R^{\tilde K}}\E\big(2\eta^G(X,A)-1-\lambda^\top\Phi^G(X,A)\big)_++\tilde\alpha\|\lambda\|_1,\]
\[f^{*G}_{\tilde\alpha}(x,a)=\1\big(g^{*G}_{\tilde\alpha}(x,a)>0),\quad g^{*G}_{\tilde\alpha}(x,a)=2\eta^G(x,a)-1-\lambda^{*G\top}_{\tilde\alpha}\Phi^G(x,a).\]
Then we denote the margins $\tilde\epsilon_\eta^G$ and $\tilde\epsilon_{g,\tilde\alpha}^G$ of $2\eta^G-1$ and $g^{*G}_{\tilde\alpha}$ as
\[\tilde\epsilon_\eta^G=\max_{k\in[\tilde K]}\E|\phi_k^G(X,A)|\1(|2\eta^G(X,A)-1|\le 2\epsilon_\eta),\]
\[\tilde\epsilon_{g,\tilde\alpha}^G=\max_{k\in[\tilde K]}\E|\phi^G_k(X,A)|\1\big(|g^{*G}_{\tilde\alpha}(X,A)|\le2\epsilon_\eta+\|\lambda^{*G}_{\tilde\alpha}\|_1\epsilon_\phi\big).\]
$\tilde\epsilon_\eta^G$ and $\tilde\epsilon_{g,\tilde\alpha}^G$ measure the mass of $2\eta^G(X,A)-1$ and $g^{*G}_{\tilde\alpha}(X,A)$ around 0, respectively. Since $\Phi^G$ is bounded and $\epsilon_\eta$, $\epsilon_\phi$ are typically small, we know $\tilde\epsilon_\eta^G$ and $\tilde\epsilon_{g,\tilde\alpha}^G$ are small as long as $2\eta^G-1$ and $g^{*G}_{\tilde\alpha}$ are not too concentrated around 0.

Similar to the binary sensitive attribute case, we denote $D_0=\U(\1(2\eta^G>1))-\alpha$ to be the difference between the unfairness of the unconstrained Bayes optimal classifier $\1(2\eta^G>1)$ and the specified unfairness level $\alpha$. If $D_0\le 0$, we know $f^{*G}_\alpha=\1(2\eta^G>1)$ and $\lambda^*_\alpha=0$. If $D_0>0$, $\1(2\eta^G>1)$ is not $\alpha$-fair and need to be adjusted by $\lambda^{*G\top}_\alpha\Phi^G$.

Now we specify the choice of $\tilde\epsilon_\alpha$ as follows.
\begin{enumerate}[1)]
    \item If $D_0\le-\tilde\epsilon_\eta^G-2\epsilon_\alpha$, we set $\tilde\epsilon_\alpha=0$.
    \item If $D_0>-\tilde\epsilon_\eta^G-2\epsilon_\alpha$, we choose $\tilde\epsilon_\alpha$ such that
    \begin{equation}\label{eq:tilde_epsilon_alpha}
        \tilde\epsilon_\alpha\ge2\epsilon_\alpha+\tilde\epsilon_{g,\tilde\alpha}^G.
    \end{equation}
\end{enumerate}
Therefore, when $D_0\le-\tilde\epsilon_\eta^G-2\epsilon_\alpha$, we have $\tilde\alpha=\alpha$ and $\lambda^*_{\tilde\alpha}=\lambda^*_\alpha=0$.
\begin{Remark}
    Now we give an example where Equation~\eqref{eq:tilde_epsilon_alpha} is satisfied. Suppose the density of $2\eta^G(X,A)-1-\lambda^\top\Phi^G(X,A)$ is upper bounded for all $\lambda\in\R^{\tilde K}$. Since $\phi^G_k$, $k\in[\tilde K]$ are bounded, we have $\tilde\epsilon_{g,\tilde\alpha}^G\le c\epsilon_\eta+c\|\lambda^{*G}_{\tilde\alpha}\|_1\epsilon_\phi$. Using the naive upper bound in Remark~\ref{rem:lambda_upper_multi}, we get $\|\lambda^*_{\tilde\alpha}\|_1\le(\alpha-\tilde\epsilon_\alpha)^{-1}$. When $\alpha\ge 4\epsilon_\alpha+2c\epsilon_\eta+2\sqrt{2c\epsilon_\phi}$, condition~\eqref{eq:tilde_epsilon_alpha} is satisfied for $\tilde\epsilon_\alpha=2\epsilon_\alpha+c\epsilon_\eta+\sqrt{2c\epsilon_\phi}$.
\end{Remark}

Similar to Section~\ref{sec:unify_binary}, we make the margin assumption \citep{tsybakov2004optimal}.

\begin{Assumption}[Margin Assumption]\label{ass:margin_multi}
    There exist $\tilde\gamma\ge0$ and constant $c_1>0$ such that for any $\epsilon\ge0$, we have
    \[\Prob\big(|g^{*G}_{\tilde\alpha}(X,A)|\le \epsilon\big)\le c_1\epsilon^{\tilde\gamma}.\]
\end{Assumption}

Note that the margin assumption~\ref{ass:margin_multi} is on $g^{*G}_{\tilde\alpha}$, not $g^{*G}_\alpha$. To clarify the rationale behind this choice, we first present the following theorem, which controls the fairness and excess risk.

\begin{Theorem}\label{thm:upper_multi}
    \begin{enumerate}[1)]
        \item With probability at least $1-\delta_{\rm post}$, for any $\alpha\ge\tilde\epsilon_\alpha$, we have Algorithm~\ref{alg:multi_unify} to be feasible and $\U(\hat f^G_{\hat\lambda_\alpha})\le\alpha$.
        \item Under Assumption~\ref{ass:margin_multi}, we have with probability at least $1-2\delta_{\rm post}$, for any $\alpha\ge\tilde\epsilon_\alpha$, 
        \begin{equation}\label{eq:excess_risk_upper_multi_unify}
            \risk(\hat f^G_{\hat\lambda_\alpha})-\risk(f^{*G}_{\tilde\alpha})\lesssim\|\lambda^{*G}_{\tilde\alpha}\|_1\tilde\epsilon_\alpha+\big(\epsilon_\eta+\|\lambda^{*G}_{\tilde\alpha}\|_1\epsilon_\phi\big)^{1+\tilde\gamma}+\bigg(\frac{\tilde K\log n+\log\frac{1}{\delta_{\rm post}}}{n}\bigg)^{\frac{1+\tilde\gamma}{2+\tilde\gamma}}.
        \end{equation}
    \end{enumerate}
\end{Theorem}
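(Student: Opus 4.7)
The plan is to establish the two parts on the high-probability event $\mathcal{E}$ of Lemma~\ref{lem:unfairness_dev_multi}, which has probability at least $1-\delta_{\rm post}$. The two central ingredients are (i) the classifier $\hat f^G_{\lambda^{*G}_{\tilde\alpha}}$, obtained by plugging the initial estimators into the Bayes-optimal shape with the population multiplier, which serves as a ``reference'' inside the parametric class $\mathcal{F}=\{\hat f^G_\lambda:\lambda\in\R^{\tilde K}\}$, and (ii) the disagreement-to-margin inequality
\[
|(2\hat\eta^G-1-\lambda^{*G\top}_{\tilde\alpha}\hat\Phi^G)-g^{*G}_{\tilde\alpha}|\le 2\epsilon_\eta+\|\lambda^{*G}_{\tilde\alpha}\|_1\epsilon_\phi,
\]
which implies that on $\{\hat f^G_{\lambda^{*G}_{\tilde\alpha}}\ne f^{*G}_{\tilde\alpha}\}$ one has $|g^{*G}_{\tilde\alpha}|\le 2\epsilon_\eta+\|\lambda^{*G}_{\tilde\alpha}\|_1\epsilon_\phi$. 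Both of these feed into every later step.

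For part~1, the key is to exhibit a feasible point for the empirical program. I take $\lambda^{*G}_{\tilde\alpha}$ as the candidate. The disagreement-to-margin step gives $\max_k|\E\phi^G_k(\hat f^G_{\lambda^{*G}_{\tilde\alpha}}-f^{*G}_{\tilde\alpha})|\le\tilde\epsilon^G_{g,\tilde\alpha}$, hence $\U(\hat f^G_{\lambda^{*G}_{\tilde\alpha}})\le\U(f^{*G}_{\tilde\alpha})+\tilde\epsilon^G_{g,\tilde\alpha}\le\tilde\alpha+\tilde\epsilon^G_{g,\tilde\alpha}$. On $\mathcal{E}$, Lemma~\ref{lem:unfairness_dev_multi} upgrades this to $\hat\U(\hat f^G_{\lambda^{*G}_{\tilde\alpha}})\le\tilde\alpha+\tilde\epsilon^G_{g,\tilde\alpha}+\epsilon_\alpha\le\alpha-\epsilon_\alpha$, where the last inequality is precisely the choice of $\tilde\epsilon_\alpha$ in~\eqref{eq:tilde_epsilon_alpha}. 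Feasibility is established, and for any minimizer $\hat\lambda^G_\alpha$ the constraint $\hat\U(\hat f^G_{\hat\lambda^G_\alpha})\le\alpha-\epsilon_\alpha$ combined with a second application of Lemma~\ref{lem:unfairness_dev_multi} yields $\U(\hat f^G_{\hat\lambda^G_\alpha})\le\alpha$.

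For part~2, I would use a standard ERM decomposition. Since $\lambda^{*G}_{\tilde\alpha}$ is feasible on $\mathcal{E}$, ERM gives the empirical risk inequality $\widehat{\risk}(\hat f^G_{\hat\lambda_\alpha})\le\widehat{\risk}(\hat f^G_{\lambda^{*G}_{\tilde\alpha}})$, so
\[
\risk(\hat f^G_{\hat\lambda_\alpha})-\risk(f^{*G}_{\tilde\alpha})\le 2\sup_{\lambda\in\R^{\tilde K}}\bigl|\risk(\hat f^G_\lambda)-\widehat{\risk}(\hat f^G_\lambda)\bigr|+\bigl[\risk(\hat f^G_{\lambda^{*G}_{\tilde\alpha}})-\risk(f^{*G}_{\tilde\alpha})\bigr].
\]
The approximation bracket is handled by the identity $2\eta^G-1=g^{*G}_{\tilde\alpha}+\lambda^{*G\top}_{\tilde\alpha}\Phi^G$, which splits it into $\E|g^{*G}_{\tilde\alpha}|\1(\hat f^G_{\lambda^{*G}_{\tilde\alpha}}\ne f^{*G}_{\tilde\alpha})$ plus $\lambda^{*G\top}_{\tilde\alpha}\E\Phi^G(f^{*G}_{\tilde\alpha}-\hat f^G_{\lambda^{*G}_{\tilde\alpha}})$. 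The first piece is at most $(2\epsilon_\eta+\|\lambda^{*G}_{\tilde\alpha}\|_1\epsilon_\phi)^{1+\tilde\gamma}$ by Assumption~\ref{ass:margin_multi} and the disagreement-to-margin step. For the second, the KKT identity $\lambda^{*G\top}_{\tilde\alpha}\E\Phi^G f^{*G}_{\tilde\alpha}=\|\lambda^{*G}_{\tilde\alpha}\|_1\tilde\alpha$ from Proposition~\ref{prop:bayes} combined with the $\|\cdot\|_1/\|\cdot\|_\infty$ duality gives
\[
\lambda^{*G\top}_{\tilde\alpha}\E\Phi^G(f^{*G}_{\tilde\alpha}-\hat f^G_{\lambda^{*G}_{\tilde\alpha}})\le\|\lambda^{*G}_{\tilde\alpha}\|_1\tilde\epsilon^G_{g,\tilde\alpha}\le\|\lambda^{*G}_{\tilde\alpha}\|_1\tilde\epsilon_\alpha,
\]
again via the same disagreement-to-margin conversion.

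The fluctuation bracket is an empirical-process problem on $\mathcal{F}$. Viewing $\hat f^G_\lambda=\1(\lambda^\top\hat\Phi^G-(2\hat\eta^G-1)<0)$ as a homogeneous halfspace in the fixed $(\tilde K+1)$-dimensional feature vector $(\hat\Phi^G,\,2\hat\eta^G-1)$ shows $\mathrm{VC}(\mathcal{F})\le\tilde K+1$. A slow VC bound alone contributes only the $\sqrt{\tilde K/n}$ rate; to obtain the claimed $\bigl((\tilde K\log n+\log(1/\delta_{\rm post}))/n\bigr)^{(1+\tilde\gamma)/(2+\tilde\gamma)}$ I would use a Talagrand-type concentration with a peeling/local-Rademacher argument in the spirit of Massart--N\'ed\'elec and Tsybakov, which buys the fast rate once the variance of the excess $0$--$1$ loss is controlled by the excess risk. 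This second high-probability event accounts for the extra $\delta_{\rm post}$ in the overall $1-2\delta_{\rm post}$ guarantee. I expect this to be the main obstacle: Assumption~\ref{ass:margin_multi} is stated in terms of the population decision function $g^{*G}_{\tilde\alpha}$, whereas the variance-control condition needed for the fast rate refers to the empirical decision function $2\hat\eta^G-1-\lambda^\top\hat\Phi^G$; the bridge is the same disagreement-to-margin inequality used above, combined with $\tilde\epsilon^G_{g,\tilde\alpha}\le\tilde\epsilon_\alpha$ to keep the polynomial dependence on $\tilde K$ explicit and to avoid leakage of $\|\lambda^{*G}_{\tilde\alpha}\|_1$ into the last term of~\eqref{eq:excess_risk_upper_multi_unify}.
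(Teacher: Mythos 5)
Your architecture is the same as the paper's: feasibility is certified by the reference multiplier $\lambda^{*G}_{\tilde\alpha}$ together with Lemma~\ref{lem:unfairness_dev_multi}, fairness follows from the same concentration event, and the excess risk is split into an approximation part ($T_1+T_2$ evaluated at $\lambda^{*G}_{\tilde\alpha}$) plus a localized empirical-process part handled by Talagrand's inequality with peeling. Two steps, however, do not go through as written. First, your feasibility chain $\hat\U(\hat f^G_{\lambda^{*G}_{\tilde\alpha}})\le\tilde\alpha+\tilde\epsilon^G_{g,\tilde\alpha}+\epsilon_\alpha\le\alpha-\epsilon_\alpha$ invokes \eqref{eq:tilde_epsilon_alpha}, but that inequality is only imposed in the regime $D_0>-\tilde\epsilon^G_\eta-2\epsilon_\alpha$; in the complementary regime the paper sets $\tilde\epsilon_\alpha=0$, so $\tilde\alpha=\alpha$ and your last inequality would require $\tilde\epsilon^G_{g,\tilde\alpha}+2\epsilon_\alpha\le 0$, which is false. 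There you must instead use the slack in the case hypothesis itself: $\U(f^{*G}_{\tilde\alpha})=\U(\1(2\eta^G>1))\le\alpha-\tilde\epsilon^G_\eta-2\epsilon_\alpha$ together with $\tilde\epsilon^G_{g,\tilde\alpha}=\tilde\epsilon^G_\eta$ (since $\lambda^{*G}_{\tilde\alpha}=0$), which restores $\hat\U(\hat f^G_0)\le\alpha-\epsilon_\alpha$. This is exactly the case split the paper performs.

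Second, the localization is where the real work lies, and you have misplaced the obstacle. No bridge between the population and empirical decision functions is needed: Proposition~1 of Tsybakov applied to $g^{*G}_{\tilde\alpha}$ under Assumption~\ref{ass:margin_multi} gives $\E|g^{*G}_{\tilde\alpha}||f^{*G}_{\tilde\alpha}-\hat f^G_\lambda|\gtrsim\big(\E|f^{*G}_{\tilde\alpha}-\hat f^G_\lambda|\big)^{\frac{1+\tilde\gamma}{\tilde\gamma}}$ directly, and $\Var\big(L(f^{*G}_{\tilde\alpha},Z)-L(\hat f^G_\lambda,Z)\big)\le\E|f^{*G}_{\tilde\alpha}-\hat f^G_\lambda|$. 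The genuine difficulty is that $\EE(\lambda)=\risk(\hat f^G_\lambda)-\risk(f^{*G}_{\tilde\alpha})$ does not dominate this quantity: it contains the signed term $\lambda^{*G\top}_{\tilde\alpha}\E\Phi^G(X,A)\big(f^{*G}_{\tilde\alpha}(X,A)-\hat f^G_\lambda(X,A)\big)$, which for feasible $\lambda$ is bounded below only by $\|\lambda^{*G}_{\tilde\alpha}\|_1(\tilde\alpha-\alpha)=-\|\lambda^{*G}_{\tilde\alpha}\|_1\tilde\epsilon_\alpha$ (via the KKT identity from Proposition~\ref{prop:bayes} and $\ell_1/\ell_\infty$ duality). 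Hence the Bernstein condition holds only in the offset form $\Var\lesssim\big(\EE(\lambda)+\|\lambda^{*G}_{\tilde\alpha}\|_1\tilde\epsilon_\alpha\big)^{\frac{\tilde\gamma}{1+\tilde\gamma}}$, and the peeling must be run on the process normalized by $\EE(\lambda)+\EE_{\app}+2\|\lambda^{*G}_{\tilde\alpha}\|_1\tilde\epsilon_\alpha+t$; this offset is precisely why $\|\lambda^{*G}_{\tilde\alpha}\|_1\tilde\epsilon_\alpha$ appears additively in \eqref{eq:excess_risk_upper_multi_unify} rather than leaking into the last term. With that modification your sketch closes at $t\asymp\big(\frac{\tilde K\log n+\log(1/\delta_{\rm post})}{n}\big)^{\frac{1+\tilde\gamma}{2+\tilde\gamma}}$, matching the paper.
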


In Theorem~\ref{thm:upper_multi}, we compare the prediction error of $\hat f^G_{\hat\lambda_\alpha}$ with that of $f^{*G}_{\tilde\alpha}$, rather than $f^{*G}_\alpha$. So $\risk(\hat f^G_{\hat\lambda_\alpha})-\risk(f^{*G}_{\tilde\alpha})$ may not always be non-negative. To explain this, note that $\hat f^G_{\lambda^*_\alpha}$ may not satisfy the constraint $\hat \U(\hat f^G_{\lambda^*_\alpha})\le\alpha-\epsilon_\alpha$, therefore we have to select another $\lambda$ that meets the empirical fairness constraint $\hat \U(\hat f^G_\lambda)\le\alpha-\epsilon_\alpha$ and exhibits satisfactory prediction performance. $\lambda^*_{\tilde\alpha}$ turns out to satisfy both conditions. However, when comparing with $f^{*G}_\alpha$, it is imperative to control the distance between $\lambda^*_{\tilde\alpha}$ and $\lambda^*_\alpha$, necessitating additional assumptions such as those similar to the detection assumption \citep{tong2013plug} in the context of Neyman-Pearson classification. Consequently, to maintain the simplicity of our results with the fewest assumptions, we articulate the excess risk comparing with $f^{*G}_{\tilde\alpha}$. Then the margin assumption~\ref{ass:margin_multi} is also on $g^{*G}_{\tilde\alpha}$ instead of $g^{*G}_\alpha$.

Under the $\beta_Y$-H\"older smoothness assumption on $\eta^G$, if we estimate $\eta^G$ on an independent dataset $\tilde\D$ with sample size $\tilde n$, we know $\epsilon_\eta\asymp(\frac{d\log \tilde n}{\tilde n})^{\frac{\beta_Y}{2\beta_Y+d}}$. 
When $n\gtrsim \tilde n$, under the standard condition $\beta_Y\tilde\gamma\le d$ \citep{audibert2007fast}, with $d\gtrsim\tilde K$, we have $(\frac{\tilde K\log n}{n})^{\frac{1+\tilde\gamma}{2+\tilde\gamma}}\lesssim\epsilon_\eta^{1+\tilde\gamma}$. 
Then the excess risk~\eqref{eq:excess_risk_upper_multi_unify} becomes $O_P\big(\|\lambda^*_{\tilde\alpha}\|_1\tilde\epsilon_\alpha+\big(\epsilon_\eta+\|\lambda^*_{\tilde\alpha}\epsilon_\phi\|\big)^{1+\tilde\gamma}\big)$, sharing the same form as the excess risk~\eqref{eq:excess_risk_upper_binary_unify} with binary sensitive attributes.
If $\1(2\eta^G>1)$ is already $\tilde\alpha$-fair, we know $\lambda^*_{\tilde\alpha}=\lambda^*_\alpha=0$. 
Then the excess risk~\eqref{eq:excess_risk_upper_multi_unify} becomes $O_P(\epsilon_\eta^{1+\tilde\gamma})$, which is minimax optimal up to logarithmic factors \citep{audibert2007fast}. 
When $\1(2\eta^G>1)$ is not $\tilde\alpha$-fair, we know $\lambda^*_{\tilde\alpha}\ne 0$. Then we incur an additional cost $O_P\big(\|\lambda^{*G}_{\tilde\alpha}\|_1\tilde\epsilon_\alpha+(\|\lambda^{*G}_{\tilde\alpha}\|_1\epsilon_\phi)^{1+\tilde\gamma}\big)$ due to the fairness constraint.

\section{Supplementary Simulation Results}\label{sec:numerical_supp}

In this section, we present the omitted simulation results for (M2) and (M3). The results and interpretations are similar to those of (M1), therefore we report the results without explanations. 

\begin{table}
    \centering
    \small
    \begin{tabular}{@{}lcccccc@{}}\hline
     & & \multicolumn{5}{c}{$\alpha$}\\\cline{3-7}
    Methods & & 0.08 & 0.11 & 0.14 & 0.17 & 0.20\\\hline
    
    Ours & $\bar\U_{\rm EOO}$ & 0.055(0.042) & 0.055(0.039) & 0.061(0.044) & 0.086(0.053) & 0.100(0.057)\\
     & $\U_{{\rm EOO},95}$ & 0.137 & 0.134 & 0.137 & 0.173 & 0.184\\
     & Error & 0.318(0.027) & 0.319(0.026) & 0.308(0.024) & 0.297(0.018) & 0.295(0.018)\\[6pt]
    FPIR & $\bar\U_{\rm EOO}$ & 0.084(0.051) & 0.120(0.050) & 0.135(0.052) & 0.148(0.049) & 0.160(0.045)\\
     & $\U_{{\rm EOO},95}$ & 0.179 & 0.195 & 0.206 & 0.223 & 0.225\\
     & Error & 0.296(0.017) & 0.289(0.013) & 0.285(0.010) & 0.284(0.010) & 0.283(0.008)\\[6pt]
    MBS & $\bar\U_{\rm EOO}$ & 0.082(0.058) & 0.099(0.056) & 0.116(0.060) & 0.126(0.056) & 0.140(0.061)\\
     & $\U_{{\rm EOO}, 95}$ & 0.178 & 0.183 & 0.222 & 0.229 & 0.239\\
     & Error & 0.297(0.017) & 0.293(0.016) & 0.290(0.014) & 0.286(0.010) & 0.286(0.010)\\[6pt]
     Bayes & Error & 0.283 & 0.276 & 0.272 & 0.272 & 0.272\\\hline
    \end{tabular}
    \caption{The unfairness measures and prediction errors of our method, FPIR, and MBS, respectively in the group-blind scenario under (M2). And the prediction errors of Bayes optimal fair classifiers. $\bar\U_{\rm EOO}$ is the average unfairness over 100 repetitions. $\U_{{\rm EOO},95}$ is the $95\%$ sample quantile of the unfairness measures produced by 100 repetitions. Error is the average prediction error.}
    \label{tab:simulation_blind_m2}
\end{table}

\begin{table}
    \centering
    \small
    \begin{tabular}{@{}lcccccc@{}}\hline
     & & \multicolumn{5}{c}{$\alpha$}\\\cline{3-7}
    Methods & & 0.08 & 0.11 & 0.14 & 0.17 & 0.20\\\hline
    Ours & $\bar\U_{\rm EOO}$ & 0.051(0.041) & 0.047(0.035) & 0.063(0.049) & 0.074(0.050) & 0.088(0.062)\\
     & $\U_{{\rm EOO},95}$ & 0.121 & 0.114 & 0.157 & 0.171 & 0.198\\
     & Error & 0.266(0.010) & 0.267(0.009) & 0.264(0.009) & 0.262(0.009) & 0.262(0.007)\\[6pt]
    FPIR & $\bar\U_{\rm EOO}$ & 0.126(0.077) & 0.130(0.079) & 0.135(0.083) & 0.151(0.081) & 0.172(0.084)\\
     & $\U_{{\rm EOO},95}$ & 0.262 & 0.262 & 0.266 & 0.265 & 0.288\\
     & Error & 0.264(0.015) & 0.262(0.010) & 0.261(0.010) & 0.259(0.010) & 0.260(0.007)\\[6pt]
     Bayes & Error & 0.245 & 0.245 & 0.244 & 0.243 & 0.243\\\hline
    \end{tabular}
    \caption{The unfairness measures and prediction errors of our method and FPIR, respectively in the group-aware scenario under (M2). The notation is the same as Table~\ref{tab:simulation_blind_m2}.}
    \label{tab:simulation_aware_m2}
\end{table}

\begin{figure}
    \centering
    \begin{subfigure}[b]{\textwidth}
        \includegraphics[width=\linewidth]{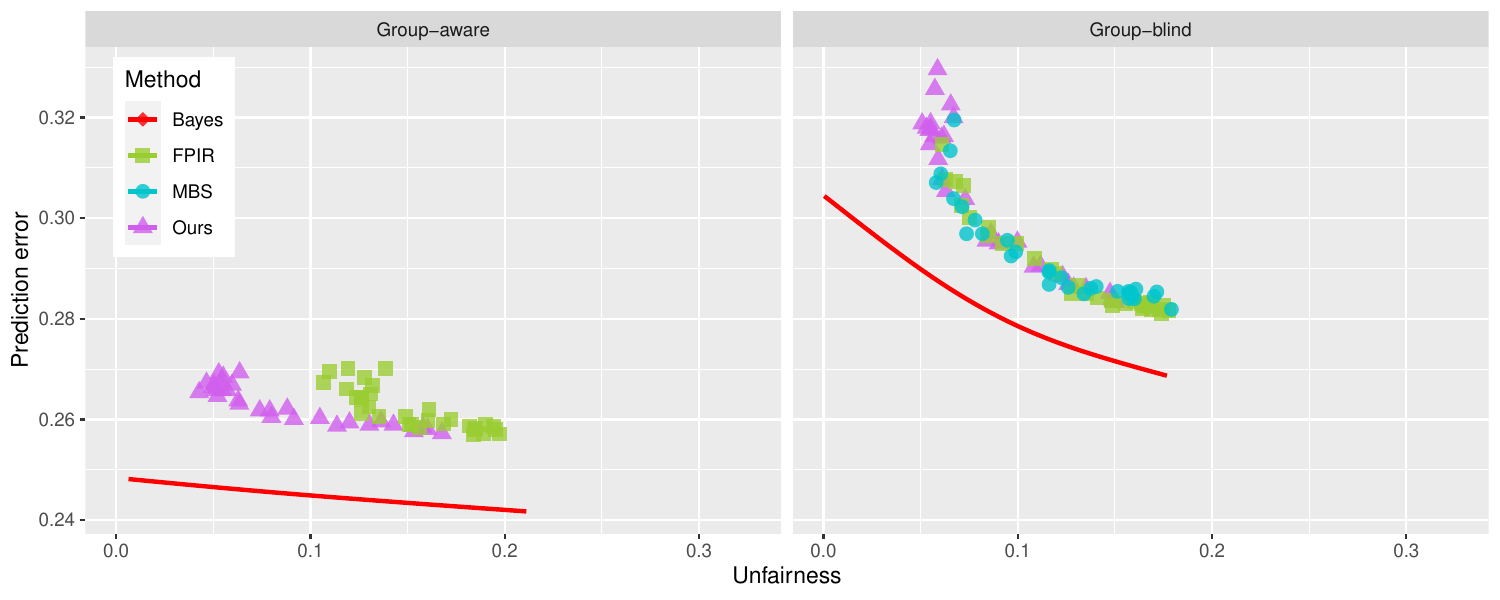}
        \caption{}
        \label{fig:simulation_average_m2}
    \end{subfigure}
    \begin{subfigure}[b]{\textwidth}
        \includegraphics[width=\linewidth]{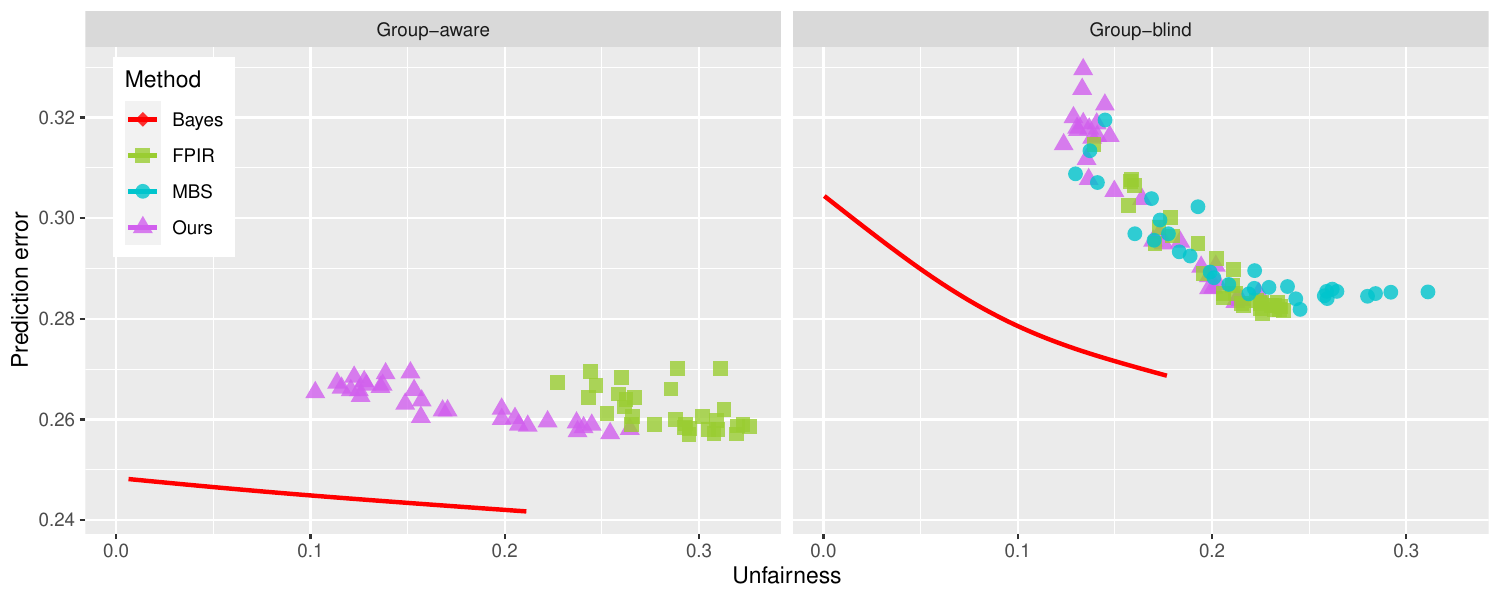}
        \caption{}
        \label{fig:simulation_quantile_m2}
    \end{subfigure}
    \caption{(a) The trade-off between prediction error and unfairness under (M2). The X-axis is the average unfairness measures $\bar\U_{\rm EOO}$ of the trained classifiers over 100 repetitions and the Y-axis is the average test prediction errors of these classifiers. The left and right panels correspond to the group-aware and group-blind scenarios, respectively. (b) As for (a) but the X-axis is the $95\%$ sample quantile $\U_{{\rm EOO},95}$ of the unfairness measures over 100 repetitions.}
    \label{fig:simulation_m2}
\end{figure}

\begin{figure}
    \centering
    \includegraphics[width=0.6\linewidth]{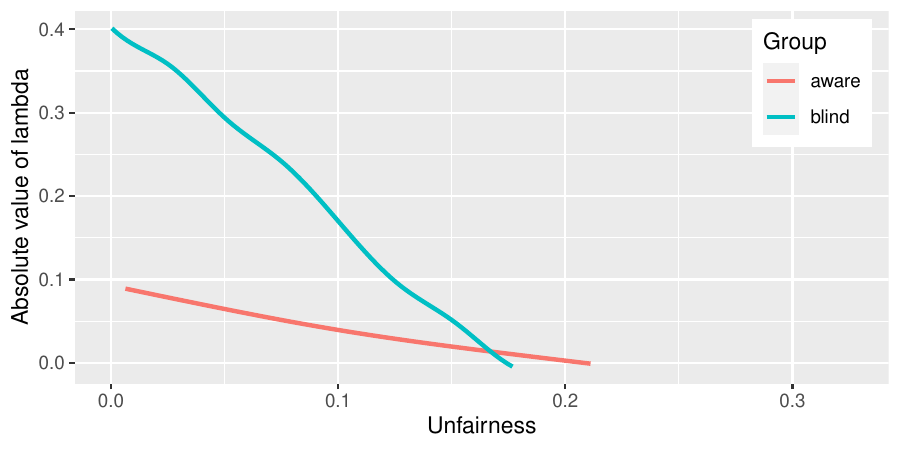}
    \caption{The curve of $|\lambda^{*G}_\alpha|$ on $\alpha$ under (M2). The red line is for the group-aware scenario and the cyan line is for the group-blind scenario.}
    \label{fig:lambda_m2}
\end{figure}

\begin{table}
    \centering
    \small
    \begin{tabular}{@{}lcccccc@{}}\hline
     & & \multicolumn{5}{c}{$\alpha$}\\\cline{3-7}
    Methods & & 0.08 & 0.11 & 0.14 & 0.17 & 0.20\\\hline
    
    Ours & $\bar\U_{\rm EOO}$ & 0.039(0.033) & 0.051(0.038) & 0.074(0.044) & 0.098(0.048) & 0.138(0.048)\\
     & $\U_{{\rm EOO},95}$ & 0.107 & 0.128 & 0.143 & 0.180 & 0.216\\
     & Error & 0.333(0.024) & 0.320(0.024) & 0.306(0.025) & 0.294(0.023) & 0.273(0.023)\\[6pt]
    FPIR & $\bar\U_{\rm EOO}$ & 0.101(0.061) & 0.124(0.054) & 0.157(0.066) & 0.178(0.056) & 0.209(0.069)\\
     & $\U_{{\rm EOO},95}$ & 0.211 & 0.210 & 0.271 & 0.271 & 0.313\\
     & Error & 0.293(0.030) & 0.279(0.027) & 0.266(0.029) & 0.258(0.022) & 0.245(0.026)\\[6pt]
    MBS & $\bar\U_{\rm EOO}$ & 0.088(0.043) & 0.109(0.048) & 0.140(0.053) & 0.172(0.047) & 0.211(0.048)\\
     & $\U_{{\rm EOO}, 95}$ & 0.148 & 0.181 & 0.214 & 0.247 & 0.282\\
     & Error & 0.299(0.023) & 0.285(0.023) & 0.273(0.025) & 0.260(0.021) & 0.243(0.018)\\[6pt]
     Bayes & Error & 0.265 & 0.251 & 0.238 & 0.226 & 0.214\\\hline
    \end{tabular}
    \caption{The unfairness measures and prediction errors of our method, FPIR, and MBS, respectively in the group-blind scenario under (M3). And the prediction errors of Bayes optimal fair classifiers. $\bar\U_{\rm EOO}$ is the average unfairness over 100 repetitions. $\U_{{\rm EOO},95}$ is the $95\%$ sample quantile of the unfairness measures produced by 100 repetitions. Error is the average prediction error.}
    \label{tab:simulation_blind_m3}
\end{table}

\begin{table}
    \centering
    \small
    \begin{tabular}{@{}lcccccc@{}}\hline
     & & \multicolumn{5}{c}{$\alpha$}\\\cline{3-7}
    Methods & & 0.08 & 0.11 & 0.14 & 0.17 & 0.20\\\hline
    Ours & $\bar\U_{\rm EOO}$ & 0.036(0.027) & 0.045(0.033) & 0.067(0.040) & 0.103(0.043) & 0.132(0.046)\\
     & $\U_{{\rm EOO},95}$ & 0.091 & 0.105 & 0.145 & 0.170 & 0.213\\
     & Error & 0.209(0.011) & 0.206(0.009) & 0.202(0.008) & 0.199(0.008) & 0.194(0.008)\\[6pt]
    FPIR & $\bar\U_{\rm EOO}$ & 0.103(0.084) & 0.112(0.075) & 0.136(0.091) & 0.147(0.078) & 0.190(0.102)\\
     & $\U_{{\rm EOO},95}$ & 0.269 & 0.249 & 0.305 & 0.295 & 0.360\\
     & Error & 0.206(0.019) & 0.201(0.016) & 0.197(0.012) & 0.197(0.010) & 0.193(0.010)\\[6pt]
     Bayes & Error & 0.156 & 0.152 & 0.150 & 0.147 & 0.146\\\hline
    \end{tabular}
    \caption{The unfairness measures and prediction errors of our method and FPIR, respectively in the group-aware scenario under (M3). The notation is the same as Table~\ref{tab:simulation_blind_m3}.}
    \label{tab:simulation_aware_m3}
\end{table}

\begin{figure}
    \centering
    \begin{subfigure}[b]{\textwidth}
        \includegraphics[width=\linewidth]{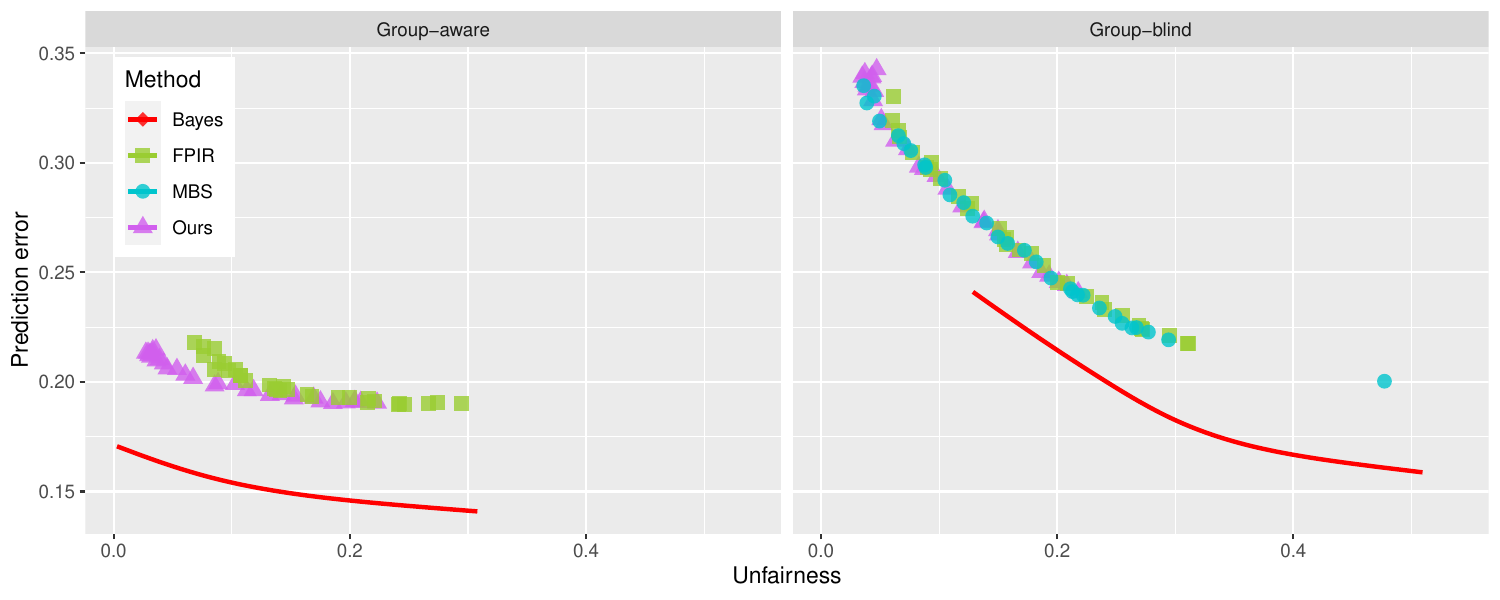}
        \caption{}
        \label{fig:simulation_average_m3}
    \end{subfigure}
    \begin{subfigure}[b]{\textwidth}
        \includegraphics[width=\linewidth]{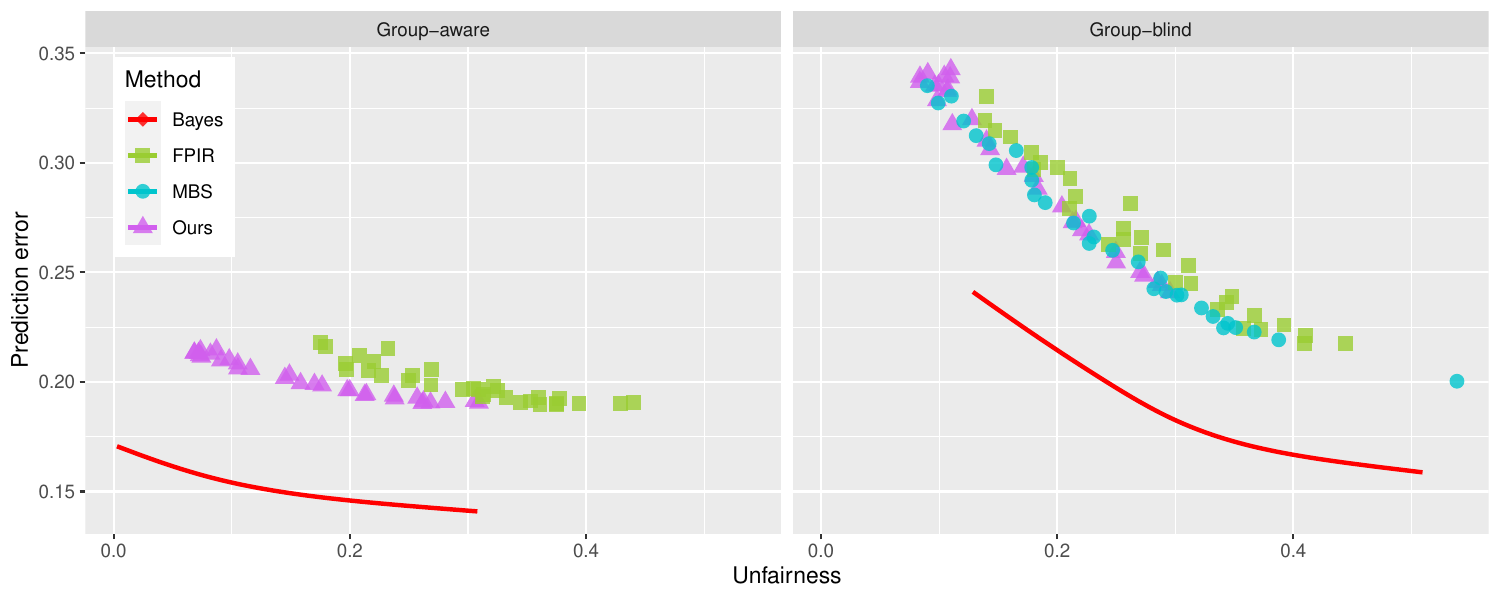}
        \caption{}
        \label{fig:simulation_quantile_m3}
    \end{subfigure}
    \caption{(a) The trade-off between prediction error and unfairness under (M3). The X-axis is the average unfairness measures $\bar\U_{\rm EOO}$ of the trained classifiers over 100 repetitions and the Y-axis is the average test prediction errors of these classifiers. The left and right panels correspond to the group-aware and group-blind scenarios, respectively. (b) As for (a) but the X-axis is the $95\%$ sample quantile $\U_{{\rm EOO},95}$ of the unfairness measures over 100 repetitions.}
    \label{fig:simulation_m3}
\end{figure}

\begin{figure}
    \centering
    \includegraphics[width=0.6\linewidth]{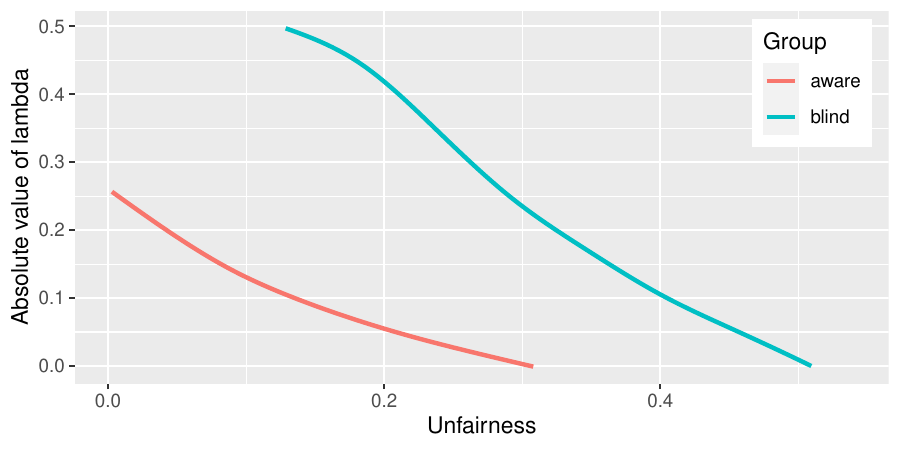}
    \caption{The curve of $|\lambda^{*G}_\alpha|$ on $\alpha$ under (M3). The red line is for the group-aware scenario and the cyan line is for the group-blind scenario.}
    \label{fig:lambda_m3}
\end{figure}

\section{Derivation of Example \ref{exa:phi_eoo}}\label{sec:supp_exa_phi_eoo}

In the group-aware scenario, we have
\begin{align*}
    &\Prob(Y_{f^{\rm aware}}(X,A)=1|Y=1,A=1)-\Prob(Y_{f^{\rm aware}}(X,A)=1|Y=1,A=2)\\
    =&\E(f^{\rm aware}(X,A)|Y=1,A=1)-\E(f^{\rm aware}(X,A)|Y=1,A=2)\\
    =&\frac{\E f^{\rm aware}(X,A)\1(Y=1,A=1)}{p_{1,1}}-\frac{\E f^{\rm aware}(X,A)\1(Y=1,A=2)}{p_{1,2}}\\
    =&\frac{\E\1(A=1)\eta^{\rm aware}(X,A)f^{\rm aware}(X,A)}{p_{1,1}}-\frac{\E\1(A=2)\eta^{\rm aware}(X,A)f^{\rm aware}(X,A)}{p_{1,2}}.
\end{align*}

In the group-blind scenario, we have
\begin{align*}
    &\Prob(Y_{f^{\rm blind}}(X,A)=1|Y=1,A=1)-\Prob(Y_{f^{\rm blind}}(X,A)=1|Y=1,A=2)\\
    =&\E (f^{\rm blind}(X,A)|Y=1,A=1)-\E(f^{\rm blind}(X,A)|Y=1,A=2)\\
    =&\frac{\E f^{\rm blind}(X,A)\1(Y=1,A=1)}{p_{1,1}}-\frac{\E f^{\rm blind}(X,A)\1(Y=1,A=2)}{p_{1,2}}\\
    =&\frac{\E\rho_{1|1}(X)\eta^{\rm blind}(X,A)f^{\rm blind}(X,A)}{p_{1,1}}-\frac{\E\rho_{2|1}(X)\eta^{\rm blind}(X,A)f^{\rm blind}(X,A)}{p_{1,2}}.
\end{align*}

\section{Proofs of Propositions~\ref{prop:bayes_binary} and \ref{prop:bayes}}
Since Proposition~\ref{prop:bayes_binary} is a special case of Proposition~\ref{prop:bayes}, we only state the proof for the latter.
\begin{proof}[Proof of Proposition \ref{prop:bayes}]
Since
    \begin{align*}
        \risk(f^G)=&\E\1(Y=1, Y_{f^G}=0)+\1(Y=0,Y_{f^G}=1)\\
        =&\E(1-f^G(X,A))\eta^G(X,A)+f^G(X,A)(1-\eta^G(X,A))\\
        =&p_Y+\E(1-2\eta^G(X,A))f^G(X,A),
    \end{align*}
    we can rewrite $f^{*G}_\alpha$ as the solution of the problem
    \begin{equation}\label{eq:bayes_new}
        \begin{aligned}
            f^{*G}_\alpha\in&\argmin_{f^G\in[0,1]^{\R^d\times[K]}}\E(1-2\eta^G(X,A))f^G(X,A),\quad{\rm s.t.}\quad \|\E\Phi^G(X,A)f^G(X,A)\|\le\alpha.
        \end{aligned}
    \end{equation}
    Considering the Lagrange function, Theorem 8.6.1 in \cite{luenberger1997optimization} and Corollary 3.3 in \cite{sion1958general} imply
    \begin{equation}\label{eq:bayes_minimax}
        \begin{aligned}
        &\sup_{\lambda\in\R^{\tilde K}}\E\big(1-2\eta^G(X,A)+\lambda^\top\Phi^G(X,A)\big)f^{*G}_\alpha(X,A)-\alpha\|\lambda\|_*\\
        =&\sup_{\nu\ge0}\sup_{\mu\in\R^{\tilde K},\|\mu\|_*\le 1}\E\big(1-2\eta^G(X,A)+\nu\mu^\top\Phi^G(X,A)\big)f^{*G}_\alpha(X,A)-\nu\alpha\\
        =&\sup_{\nu\ge0}\E(1-2\eta^G(X,A))f^{*G}_\alpha(X,A)+\nu(\|\E\Phi^G(X,A)f^{*G}_\alpha(X,A)\|-\alpha)\\
        =&\inf_{f^G\in[0,1]^{\R^d\times[K]}}\sup_{\nu\ge0}\E(1-2\eta^G(X,A))f^G(X,A)+\nu(\|\E\Phi^G(X,A)f^G(X,A)\|-\alpha)\\
        \overset{\text{\cite{luenberger1997optimization}}}{=}&\sup_{\nu\ge 0}\inf_{f^G\in[0,1]^{\R^d\times[K]}}\E(1-2\eta^G(X,A))f^G(X,A)+\nu(\|\E\Phi^G(X,A)f^G(X,A)\|-\alpha)\\
        =&\sup_{\nu\ge 0}\inf_{f^G\in[0,1]^{\R^d\times[K]}}\sup_{\mu\in\R^{\tilde K},\|\mu\|_*\le 1}\E\big(1-2\eta^G(X,A)+\nu\mu^\top\Phi^G(X,A)\big)f^G(X,A)-\nu\alpha\\
        \overset{\text{\cite{sion1958general}}}{=}&\sup_{\nu\ge 0}\sup_{\mu\in\R^{\tilde K},\|\mu\|_*\le 1}\inf_{f^G\in[0,1]^{\R^d\times[K]}}\E\big(1-2\eta^G(X,A)+\nu\mu^\top\Phi^G(X,A)\big)f^G(X,A)-\nu\alpha\\
        =&\sup_{\lambda\in\R^{\tilde K}}\inf_{f^G\in[0,1]^{\R^d\times[K]}}\E\big(1-2\eta^G(X,A)+\lambda^\top\Phi^G(X,A)\big)f^G(X,A)-\alpha\|\lambda\|_*\\
        =&-\inf_{\lambda\in\R^{\tilde K}}\big\{\E\big(2\eta^G(X,A)-1-\lambda^\top\Phi^G(X,A)\big)_++\alpha\|\lambda\|_*\big\}.
    \end{aligned}
    \end{equation}
    Denote
    \[\mu^*_\alpha\in\argmax_{\lambda\in\R^{\tilde K}}\E\big(1-2\eta^G(X,A)+\lambda^\top\Phi^G(X,A)\big)f^{*G}_\alpha(X,A)-\alpha\|\lambda\|_*,\]
    \[\lambda^*_\alpha\in\argmin_{\lambda\in\R^{\tilde K}}\E\big(2\eta^G(X,A)-1-\lambda^\top\Phi^G(X,A)\big)_++\alpha\|\lambda\|_*,\]
    \begin{align*}
        h^{*G}_\alpha(X,A)=&\1\big(2\eta^G(X,A)-1-\lambda^{*\top}_\alpha\Phi^G(X,A)>0\big)\\
        &+\tilde b^G(X,A)\1\big(2\eta^G(X,A)-1-\lambda^{*\top}_\alpha\Phi^G(X,A)=0\big),
    \end{align*}
    for some $\tilde b^G\in[0,1]^{\R^d\times[K]}$ and
    \[g^G(f^G,\lambda)=\E\big(1-2\eta^G(X,A)+\lambda^\top\Phi^G(X,A)\big)f^G(X,A)-\alpha\|\lambda\|_*,\]
    we know
    \[g^G(h_\alpha^{*G},\lambda^*_\alpha)\le g^G(f^{*G}_\alpha,\lambda^*_\alpha)\le g^G(f^{*G}_\alpha,\mu^*_\alpha).\]
    Together with Equation \eqref{eq:bayes_minimax} gives
    \begin{equation}\label{eq:saddle}
        g^G(h^{*G}_\alpha,\lambda^*_\alpha)= g^G(f^{*G}_\alpha,\lambda^*_\alpha)=g^G(f^{*G}_\alpha,\mu^*_\alpha).
    \end{equation}
    The first equality in~\eqref{eq:saddle} implies that $f^{*G}_\alpha$ must have the following form $P_{X,A}$-almost surely, 
    \begin{align*}
        f^{*G}_\alpha(X,A)=&\1\big(2\eta^G(X,A)-1-\lambda^{*\top}_\alpha\Phi^G(X,A)>0\big)\\
        &+b^G(X,A)\1\big(2\eta^G(X,A)-1-\lambda^{*\top}_\alpha\Phi^G(X,A)=0\big).
    \end{align*}
    The second equality in~\eqref{eq:saddle} implies
    \[\lambda^*_\alpha\in\argmax_{\lambda\in\R^{\tilde K}}\E\lambda^\top\Phi^G(X,A)f^{*G}_\alpha(X,A)-\alpha\|\lambda\|_*.\]
    We denote $\nu^*=\|\lambda^*_\alpha\|$, if $\nu^*=0$, then it holds trivially that
    \[\lambda^{*\top}_\alpha\E\Phi^G(X,A)f^{*G}_\alpha(X,A)=\|\lambda^*_\alpha\|_*\|\E\Phi^G(X,A)f^{*G}_\alpha(X,A)\|=\alpha\|\lambda^*_\alpha\|_*.\]
    If $\nu^*\ne 0$, we let $\mu^*=\frac{\lambda^*_\alpha}{\nu^*}$, then it is straightforward that
    \[(\nu^*,\mu^*)\in\argmax_{(\nu,\mu):\nu\ge0,\mu\in\R^{\tilde K},\|\mu\|_*\le 1}Q(\nu,\mu),\quad Q(\nu,\mu)=\E\nu\mu^\top\Phi^G(X,A)f^{*G}(X,A)-\alpha\nu.\]
    Since $\mu^*\in\argmax_{\mu\in\R^{\tilde K},\|\mu\|_*\le 1}Q(\nu^*,\mu)$, we know
    \[\mu^{*\top}_\alpha\E\Phi^G(X,A)f^{*G}_\alpha(X,A)=\|\E\Phi^G(X,A)f^{*G}_\alpha(X,A)\|.\]
    Similarly, since $\nu^*\in\argmax_{\nu\ge 0}Q(\nu,\mu^*)$ and $\|\E\Phi^G(X,A)f^{*G}_\alpha(X,A)\|\le\alpha$, $\nu^*>0$ implies
    \[\|\E\Phi^G(X,A)f^{*G}_\alpha(X,A)\|=\alpha.\]
    
    In conclusion, we have 
    \[\lambda^{*\top}_\alpha\E\Phi^G(X,A)f^{*G}_\alpha(X,A)=\|\lambda^*_\alpha\|_*\|\E\Phi^G(X,A)f^{*G}_\alpha(X,A)\|=\alpha\|\lambda^*_\alpha\|_*.\]

    Moreover, since the first three lines in Equation~\eqref{eq:bayes_minimax} holds for any classifier $f^G$, we know any minimizer of
    \[\argmin_{f^G\in[0,1]^{\R^d\times[K]}}\sup_{\lambda\in\R^{\tilde K}}g^G(f^G,\lambda)\]
    is a Bayes optimal classifier. For any function $b^G\in[0,1]^{\R^d\times[K]}$ such that
    \[\|\E\Phi^G(X,A)h^G(X,A)\|\le\alpha,\quad\lambda^{*\top}_\alpha\E\Phi^G(X,A)h^G(X,A)=\|\lambda^*_\alpha\|_*\|\E\Phi^G(X,A)h^G(X,A)\|=\alpha\|\lambda^*_\alpha\|_*,\]
    with
    \[h^G=\1\big(2\eta^G-1-\lambda^{*\top}_\alpha\Phi^G>0\big)
        +b^G\1\big(2\eta^G-1-\lambda^{*\top}_\alpha\Phi^G=0\big),\]
    we have
    \begin{align*}
        \sup_{\lambda\in\R^{\tilde K}}g^G(h^G,\lambda)=&\E\big(1-2\eta^G(X,A)\big)h^G(X,A)+\sup_{\lambda\in\R^{\tilde K}}\E\lambda^\top\Phi^G(X,A)h^G(X,A)-\alpha\|\lambda\|_*\\
        =&\E\big(1-2\eta^G(X,A)\big)h^G(X,A)\\
        =&\E\big(1-2\eta^G(X,A)\big)h^G(X,A)+\lambda^{*\top}_\alpha\E\Phi^G(X,A)h^G(X,A)-\alpha\|\lambda^*_\alpha\|_*\\
        =&g^G(h^G,\lambda^*_\alpha)\\
        =&g^G(f^{*G}_\alpha,\lambda^*_\alpha)\\
        =&\inf_{f^G\in[0,1]^{\R^d\times[K]}}\sup_{\lambda\in\R^{\tilde K}}g^G(f^G,\lambda),
    \end{align*}
    where the last equality comes from the fact that $(f^{*G}_\alpha,\lambda^*)$ is a saddle point due to Equation~\eqref{eq:saddle}. Therefore $h^G$ is a Bayes optimal classifier.
\end{proof}

\section{Proof of Lemma~\ref{lem:lambda_binary}}

\begin{proof}[Proof of Lemma~\ref{lem:lambda_binary}]
    Denote $s_\lambda=\sgn(\lambda^*_\alpha)$ with $\sgn(0)\in[-1,1]$, we separate the proof into two cases depending on whether $\lambda^*_\alpha=0$.

    \begin{enumerate}[1)]
        \item If $\lambda^*_\alpha=0$, we have $|\E\phi^G(X,A)\1\big(2\eta^G(X,A)>1\big)|\le\alpha$. Then $|\lambda^*_\alpha|=0$ is the smallest non-negative real number $\lambda_+$ such that
        \[s\E\phi^G(X,A)\1\big(2\eta^G(X,A)-1>s\lambda_+\phi^G(X,A)\big)\le\alpha.\]
        \item If $\lambda^*_\alpha\ne 0$, we know
        \[s_\lambda\E\phi^G(X,A)\1\big(2\eta^G(X,A)-1>s_\lambda|\lambda^*_\alpha|\phi^G(X,A)\big)=\alpha.\]
        Due to the non-increasing property of
        \[\lambda_+\rightarrow s_\lambda\E\phi^G(X,A)\1\big(2\eta^G(X,A)-1>s_\lambda \lambda_+\phi^G(X,A)\big),\]
        we get
        \[s_\lambda\E\phi^G(X,A)\1\big(2\eta^G(X,A)>1\big)\ge s_\lambda\E\phi^G(X,A)\1\big(2\eta^G(X,A)-1>s_\lambda |\lambda^*_\alpha|\phi^G(X,A)\big)\ge\alpha,\]
        which implies $s=s_\lambda$.
        
        In the following, we prove that $|\lambda^*_\alpha|$ is the smallest non-negative real number $\lambda_+$ such that
        \[s\E\phi^G(X,A)\1\big(2\eta^G(X,A)-1>s\lambda_+\phi^G(X,A)\big)\le\alpha.\]
        To this end, suppose there exists $0\le\lambda_+<|\lambda^*_\alpha|$ such that the above inequality is satisfied. It follows from the monotonicity that
        \[s\E\phi^G(X,A)\1\big(2\eta^G(X,A)-1>s\lambda_+\phi^G(X,A)\big)=\alpha.\]
        Denote $\Delta=|\lambda^*_\alpha|-\lambda_+$ and $g^{*G}_\alpha=2\eta^G-1-\lambda^*_\alpha\phi^G$, we have
        \begin{align*}
            0=& s\E\phi^G(X,A)\1\big(2\eta^G(X,A)-1>s|\lambda^*_\alpha|\phi^G(X,A)\big)\\
            &-s\E\phi^G(X,A)\1\big(2\eta^G(X,A)-1>s\lambda_+\phi^G(X,A)\big)\\
            =&s\E\phi^G(X,A)\1\big(-s\Delta\phi^G(X,A)\ge g^{*G}_\alpha(X,A)>0\big)\\
            &-s\E\phi^G(X,A)\1\big(-s\Delta\phi^G(X,A)<g^{*G}_\alpha(X,A)\le0\big)\\
            =&-\E|\phi^G(X,A)|\1\big(\Delta|\phi^G(X,A)|\ge g^{*G}_\alpha(X,A)>0, s\phi^G(X,A)<0\big)\\
            &-\E|\phi^G(X,A)|\1\big(-\Delta|\phi^G(X,A)|<g^{*G}_\alpha(X,A)\le 0, s\phi^G(X,A)>0\big),
        \end{align*}
        which implies
        \[\Prob\big(\Delta|\phi^G(X,A)|\ge g^{*G}_\alpha(X,A)>0, s\phi^G(X,A)<0\big)=0,\]
        \[\Prob\big(-\Delta|\phi^G(X,A)|<g^{*G}_\alpha(X,A)\le 0, s\phi^G(X,A)>0\big)=0.\]
        Then we check Problem~\eqref{eq:lambda_obj_binary} at $s\lambda_+$.
    \begin{align*}
        &\bigg(\E\big(2\eta^G(X,A)-1-\lambda^*_\alpha\phi^G(X,A)\big)_++\alpha|\lambda^*|\bigg)-\bigg(\E\big(2\eta^G(X,A)-1-s\lambda_+\phi^G(X,A)\big)_++\alpha\lambda_+\bigg)\\
        =&\E\big(2\eta^G(X,A)-1\big)\1\big(2\eta^G(X,A)-1>\lambda^*_\alpha\phi^G(X,A)\big)\\
        &-\E\big(2\eta^G(X,A)-1\big)\1\big(2\eta^G(X,A)-1>s\lambda_+\phi^G(X,A)\big)\\
        =&\E\big(2\eta^G(X,A)-1\big)\1\big(-s\Delta\phi^G(X,A)\ge g^{*G}_\alpha(X,A)>0\big)\\
        &-\E\big(2\eta^G(X,A)-1\big)\1\big(-s\Delta\phi^G(X,A)< g^{*G}_\alpha(X,A)\le 0\big)\\
        =&\E\big(2\eta^G(X,A)-1\big)\1(\Delta|\phi^G(X,A)|\ge g^{*G}_\alpha(X,A)>0, s\phi^G(X,A)<0\big)\\
        &-\E\big(2\eta^G(X,A)-1\big)\1(\Delta|\phi^G(X,A)|< g^{*G}_\alpha(X,A)\le0, s\phi^G(X,A)>0\big)\\
        =&0.
    \end{align*}
    So $s\lambda_+$ is also a minimizer of Problem~\eqref{eq:lambda_obj_binary} with $\lambda_+<|\lambda^*_\alpha|$ which contradicts the definition of $\lambda^*_\alpha$. Then we conclude the result that $|\lambda^*_\alpha|$ is the smallest non-negative real number $\lambda_+$ such that
        \[s\E\phi^G(X,A)\1\big(2\eta^G(X,A)-1>s\lambda_+\phi^G(X,A)\big)\le\alpha.\]
    \end{enumerate}
    Combining pieces proves the lemma.
\end{proof}

\section{Further Intuition about Algorithm \ref{alg:binary_unified}}
Although Lemma~\ref{lem:lambda_binary} involves $\eta^G$, it can be shown that the intuition remains effective even if we replace $\eta^G$ with any estimator $\hat\eta^G$. Denote $\tilde s^G=\sgn\big(\E\phi^G(X,A)\1\big(2\hat\eta^G(X,A)>1\big)\big)$, then we have the following lemma stating that there always exists $\tilde\lambda_+\ge 0$ such that the unfairness of $\1(2\hat\eta^G(X,A)-1>\tilde s^G\tilde\lambda_+\phi^G)$ is below $\alpha$.
\begin{Lemma}\label{lem:lambda_etahat_binary}
    Under the model set-up described above, suppose $\sup_{\lambda\in\R}\Prob(2\hat\eta^G(X,A)-1=\lambda\phi^G(X,A))=0$. If we define $\tilde\lambda_+$ to be
    \[\tilde\lambda_+=\argmin_{\lambda_+\ge 0}\lambda_+\quad{\rm s.t.}\quad \tilde s^G\E\phi^G(X,A)\1\big(2\hat\eta^G(X,A)-1>\tilde s^G\lambda_+\phi^G(X,A)\big)\le\alpha,\]
    then $\tilde\lambda_+$ is well-defined and $\U(\1(2\hat\eta^G-1>\tilde s^G\tilde\lambda_+\phi^G))\le\alpha$.
\end{Lemma}

Lemma~\ref{lem:lambda_etahat_binary} is due to the monotonicity of $\tilde s^G\E\phi^G(X,A)\1\big(2\hat\eta^G(X,A)-1>\tilde s^G\lambda_+\phi^G(X,A)\big)$ with respect to $\lambda_+$. It implies that for any $\hat\eta^G$, the fairness constraint can always be satisfied by shifting $\1(2\hat\eta^G-1>0)$ to $\1(2\hat\eta^G-1>\lambda\phi^G)$ for some $\lambda\in\R$. 

\begin{proof}[Proof of Lemma~\ref{lem:lambda_etahat_binary}]

    At first we show the existence of $\tilde\lambda_+$. Since 
    \begin{align*}
        &\tilde s^G\E\phi^G(X,A)\1\big(2\hat\eta^G(X,A)-1>\tilde s^G\lambda_+\phi^G(X,A)\big)\\
        =&\E|\phi^G(X,A)|\1\big(2\hat\eta^G(X,A)-1>\lambda_+|\phi^G(X,A)|,\tilde s^G\phi^G(X,A)>0\big)\\
        &-\E|\phi^G(X,A)|\1\big(2\hat\eta^G(X,A)-1>-\lambda_+|\phi^G(X,A)|,\tilde s^G\phi^G(X,A)<0\big),
    \end{align*}
    which is non-positive when $\lambda_+$ increases to infinity. Therefore $\tilde\lambda_+$ is always well defined.

    Then we verify the unfairness control in two cases.

    \textbf{Case (1)}: If $\tilde\lambda_+=0$, it follows from the definition of $\tilde s^G$ and $\tilde\lambda_+$ that
    \[\tilde s^G\E\phi^G(X,A)\1\big(2\hat\eta^G(X,A)>1\big)\in[0,\alpha],\]
    which implies
    \[\U(\1(2\hat\eta^G>1))\le\alpha.\]
    
    \textbf{Case (2)}: If $\tilde\lambda_+>0$, since 
    \[\sup_{\lambda\in\R}\Prob(2\hat\eta^G(X,A)-1=\lambda\phi^G(X,A))=0,\]
    we know
    \[\tilde s^G\E\phi^G(X,A)\1\big(2\hat\eta^G(X,A)-1>\tilde s^G\lambda_+\phi^G(X,A)\big)=\alpha.\]
    Therefore $\U(\1(2\hat\eta^G-1>\tilde s^G\tilde\lambda_+\phi^G))=\alpha$.
\end{proof}

\section{Proofs of Lemmas~\ref{lem:unfairness_dev_binary} and \ref{lem:unfairness_dev_multi}}

Lemmas~\ref{lem:unfairness_dev_binary} and \ref{lem:unfairness_dev_multi} follow directly from the following Lemma~\ref{lem:vc}, which can be obtained from Theorem 12.1 and Theorem 13.7 in \cite{Boucheron2013concentration}.
\begin{Lemma}[Empirical Process]\label{lem:vc}
    Suppose $Z_1,\ldots,Z_n\overset{\rm i.i.d.}{\sim}Z\in\mathcal{Z}$ and $\mathcal{C}$ is a class of subsets of $\mathcal{Z}$ with finite VC dimension $v$, then with probability at least $1-\delta$, we have
    \[\sup_{C\in\mathcal{C}}\abs{\frac{1}{n}\sum_{i=1}^n \1(Z_i\in C)-\Prob(Z\in\mathcal{C})}\le 72\sqrt{\frac{v\log 4e^2}{n}}+\sqrt{\frac{1}{2n}\log \frac{2}{\delta}}.\]
\end{Lemma}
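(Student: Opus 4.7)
The plan is to combine a concentration-of-the-supremum inequality with an expectation bound via symmetrization and the VC growth function, in the style of Boucheron--Lugosi--Massart. Define the random variable
\[W \;=\; \sup_{C\in\mathcal{C}}\bigg|\tfrac{1}{n}\sum_{i=1}^n \1(Z_i\in C)-\Prob(Z\in C)\bigg|.\]
I will show (i) $\E[W]\le 72\sqrt{v\log(4e^2)/n}$, and (ii) $W\le \E[W]+\sqrt{\log(2/\delta)/(2n)}$ with probability at least $1-\delta$. Summing the two bounds yields the claim.

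For step (ii), observe that $W$ is a function of the i.i.d.\ sample $(Z_1,\dots,Z_n)$ that satisfies the bounded differences property: replacing any single $Z_i$ by an arbitrary $Z_i'$ changes each empirical average $\frac{1}{n}\sum_j \1(Z_j\in C)$ by at most $1/n$, hence changes $W$ by at most $1/n$ uniformly in $C\in\mathcal{C}$. McDiarmid's inequality (Theorem 6.2 of BLM) then gives $\Prob(W-\E[W]\ge t)\le \exp(-2nt^2)$; choosing $t=\sqrt{\log(2/\delta)/(2n)}$ produces the deviation term. (The factor $2$ in $\log(2/\delta)$ leaves a slack we will not need to tighten.)

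For step (i), I would apply the standard symmetrization inequality to obtain
\[\E[W]\;\le\; 2\,\E\bigg[\sup_{C\in\mathcal{C}}\bigg|\tfrac{1}{n}\sum_{i=1}^n \varepsilon_i \1(Z_i\in C)\bigg|\bigg],\]
where $\varepsilon_1,\dots,\varepsilon_n$ are i.i.d.\ Rademacher signs independent of $(Z_i)$. Conditioning on $(Z_i)$, the inner expectation is the Rademacher complexity of the projected class $\{(\1(z_1\in C),\dots,\1(z_n\in C)):C\in\mathcal{C}\}\subset\{0,1\}^n$, whose cardinality is the shattering coefficient, bounded via Sauer--Shelah by $(en/v)^v$ when $n\ge v$. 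Combining Massart's finite-class lemma with this growth-function bound (this is essentially the VC inequality packaged in Theorem~13.7 of \cite{Boucheron2013concentration}) yields a bound of the form $C_0\sqrt{v\log(4e^2)/n}$ for a universal constant $C_0$; the specific constant $72$ quoted in the lemma is the one recorded in the BLM text (obtained after the factor of $2$ from symmetrization and the explicit constants in the chaining/Dudley estimate used there).

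The main obstacle is bookkeeping for the numerical constant $72$: carrying it through symmetrization, Sauer--Shelah, and the Dudley-type chaining bound in BLM Theorem~13.7 requires faithfully reproducing their proof, which is why the lemma is presented as a direct citation. In our applications we never exploit the precise value of this constant — only the order $O(\sqrt{v/n})$ — so the argument above (bounded differences $+$ symmetrization $+$ VC growth function) is the conceptually complete plan, with the exact constant simply inherited from \cite{Boucheron2013concentration}.
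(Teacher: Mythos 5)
Your decomposition — bounded differences (McDiarmid) for the deviation of the supremum from its mean, plus symmetrization and the VC bound of Theorem 13.7 in \cite{Boucheron2013concentration} for the expectation — is exactly the route the paper takes, which simply cites Theorems 12.1 and 13.7 of that reference without further argument. The only imprecision is that Massart's finite-class lemma combined with Sauer--Shelah would yield an extra $\log(n/v)$ factor rather than the dimension-free $\sqrt{v\log(4e^2)/n}$; the latter genuinely requires the chaining/Haussler packing argument underlying Theorem 13.7, but since you (like the paper) inherit the constant $72$ from that theorem rather than re-derive it, the proposal is complete.
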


\section{Modification to $\hat\eta^G,\hat\phi^G$ for fulfilling Assumption~\ref{ass:init}}\label{sec:supp_modification}

Without loss of generality, we assume $\Prob_{X,A}(\hat\phi^G(X,A)=0)=0$, otherwise, we can replace $\hat\phi^G$ by $\tilde\phi^G=\hat\phi^G+\epsilon_\phi\1(\hat\phi^G=0)$. Then $\Prob_{X,A}(\tilde\phi^G(X,A)=0)=0$ and $\|\tilde\phi^G-\phi^G\|_\infty\le 2\epsilon_\phi$. Similarly, we assume $\Prob_{X,A}(2\hat\eta^G(X,A)=1)=0$, otherwise, we replace $\hat\eta^G$ by $\tilde\eta^G=\hat\eta^G+\epsilon_\eta\1(2\hat\eta^G=1)$, then $\Prob_{X,A}(2\tilde\eta^G(X,A)=1)=0$ and $\|\tilde\eta^G-\eta^G\|_\infty\le 2\epsilon_\eta$.

If $X|A$ is continuous, we know $\Prob_{X,A}\big(2\hat\eta^G(X,A)-1=\lambda\hat\phi^G(X,A)\big)>0$ if and only if ${\rm Leb}(S_\lambda)>0$ with $S_\lambda=\big\{x:2\hat\eta^G(x,a)-1=\lambda\hat\phi^G(x,a),a\in[2]\big\},\lambda\ne0$. Since the CDF of $\frac{2\hat\eta^G(X,A)-1}{\hat\phi^G(X,A)}$ conditioned on $\hat\eta^G,\hat\phi^G$ has at most countably many discontinuous points, there are only countably many $\lambda$'s such that ${\rm Leb}(S_\lambda)>0$. For any such $\lambda\in\R$, on the set $S_\lambda$, we replace $\hat\eta^G(x,a)$ by $\tilde\eta^G(x,a)=\hat\eta^G(x,a)+(\frac{1}{2}-\hat\eta^G(x,a))\epsilon_\eta|\sin(x_1)|$ with $x_1$ to be the first coordinate of $x$. Then it is straightforward to verify that ${\rm Leb}(S_\lambda\cap \tilde S_{\tilde\lambda})={\rm Leb}\{\epsilon_\eta|\sin(x_1)|=1-\frac{\tilde\lambda}{\lambda}\}=0$ for all $\tilde\lambda\in\R$ where $\tilde S_\lambda=\big\{x:2\tilde\eta^G(x,a)-1=\lambda\hat\phi^G(x,a),a\in[2]\big\}$. Moreover $\sup_{x\in S_\lambda,a\in[2]}|\tilde\eta^G(x,a)-\eta^G(x,a)|\le\frac{3}{2}\epsilon_\eta$. Therefore Assumption~\ref{ass:init} is met after the modification.

\section{Proof of Theorem~\ref{thm:fair_binary}}

\begin{proof}[Proof of Theorem~\ref{thm:fair_binary}]

    Throughout the proof, the expectations are taken with respect to a new test sample $(X,A,Y)$ conditioned on the dataset $\D$.
    
    \textbf{Existence of $\hat\lambda^G$}:

    Denote the event $E$ as
    \[E=\bigg\{\sup_{\lambda\in\R}\bigg|\sum_{j\in[m]}\kappa_j(\hat\E_j-\E_{j})\1\big(2\hat\eta^G(X,A)-1>\lambda\hat\phi^G(X,A)\big)\bigg|\le\epsilon_\alpha\bigg\},\]
    Lemma \ref{lem:unfairness_dev_binary} implies $\Prob(E^c)\le\delta_{\rm post}$. Under event $E$, we have
    \begin{align*}
        &\hat s^G\sum_{j\in[m]}\kappa_j\hat\E_j\1\big(2\hat\eta^G(X,A)-1>\hat s^G\lambda_+\hat\phi^G(X,A)\big)\\
        \le&\hat s^G\sum_{j\in[m]}\kappa_j\E_j\1\big(2\hat\eta^G(X,A)-1>\hat s^G\lambda_+\hat\phi^G(X,A)\big)+\epsilon_\alpha\\
        =&\hat s^G\E\phi^G(X,A)\1\big(2\hat\eta^G(X,A)-1>\hat s^G\lambda_+\hat\phi^G(X,A)\big)+\epsilon_\alpha\\
        \le&\hat s^G\E\phi^G(X,A)\1\big(2\hat\eta^G(X,A)-1>\hat s^G\lambda_+\hat\phi^G(X,A),\phi^G(X,A)\hat\phi^G(X,A)>0\big)\\
        &+\E|\phi^G(X,A)|\1\big(\phi^G(X,A)\hat\phi^G(X,A)\le 0\big)+\epsilon_\alpha.\\
        =&\underbrace{\E|\phi^G(X,A)|\1\big(2\hat\eta^G(X,A)-1>\lambda_+|\hat\phi^G(X,A)|,\phi^G(X,A)\hat\phi^G(X,A)>0,\hat s^G\phi^G(X,A)>0\big)}_{T_1}\\
        &-\underbrace{\E|\phi^G(X,A)|\1\big(2\hat\eta^G(X,A)-1>-\lambda_+|\hat\phi^G(X,A)|,\phi^G(X,A)\hat\phi^G(X,A)>0,\hat s^G\phi^G(X,A)<0\big)}_{T_2}\\
        &+\tilde\epsilon_\phi^G+\epsilon_\alpha
    \end{align*}
    It is not hard to see that $T_1-T_2$ are non-increasing in $\lambda_+$ and $\lim_{\lambda_+\rightarrow+\infty} T_1-T_2\le 0$. Since $\alpha\ge 2\epsilon_\alpha+\tilde\epsilon_\phi^G$, for $\lambda_+$ large enough, we have
    \[\hat s^G\sum_{j\in[m]}\kappa_j\hat\E_j\1\big(2\hat\eta^G(X,A)-1>\hat s^G\lambda_+\hat\phi^G(X,A)\big)\le\tilde\epsilon_\phi^G+\epsilon_\alpha\le\alpha-\epsilon_\alpha,\]
    which implies $\hat\lambda_+^G$ is well defined.

    \textbf{Fairness constraint}:

    We prove this part by considering two cases separately.

    \textbf{Case (1)}: If
    \[\hat s^G\sum_{j\in[m]}\kappa_j\hat\E_j\1\big(2\hat\eta^G(X,A)-1>\hat\lambda^G\hat\phi^G(X,A)\big)\le\alpha-\epsilon_\alpha,\]
    we know that under event $E$,
    \begin{align*}
        &\hat s^G\E\phi^G(X,A)\1\big(2\hat\eta^G(X,A)-1>\hat\lambda^G\hat\phi^G(X,A)\big)\\
        \le&\hat s^G\sum_{j\in[m]}\kappa_j\hat\E_j\1\big(2\hat\eta^G(X,A)-1>\hat\lambda^G\hat\phi^G(X,A)\big)+\epsilon_\alpha\\
        \le&\alpha.
    \end{align*}
    If $\hat\lambda^G=0$, we get
    \begin{align*}
        &-\hat s^G\E\phi^G(X,A)\1\big(2\hat\eta^G(X,A)>1\big)\\
        \le&-\hat s^G\sum_{j\in[m]}\kappa_j\hat\E_j\1\big(2\hat\eta^G(X,A)>1\big)+\epsilon_\alpha\\
        \le&\epsilon_\alpha\\
        \le&\alpha.
    \end{align*}
    If $\hat\lambda^G_+>0$, for any $0<\Delta\le\hat\lambda_+^G$, it follows from the definition of $\hat\lambda_+^G$ that
    \begin{equation}\label{eq:fair_limit}
        \begin{aligned}
        &-\hat s^G\E\phi^G(X,A)\1\big(2\hat\eta^G(X,A)-1>\hat s^G(\hat\lambda_+^G-\Delta)\hat\phi^G(X,A)\big)\\
        \le&-\hat s^G\sum_{j\in[m]}\kappa_j\hat\E_j\1\big(2\hat\eta^G(X,A)-1>\hat s^G(\hat\lambda_+^G-\Delta)\hat\phi^G(X,A)\big)+\epsilon_\alpha\\
        \le&-\alpha+2\epsilon_\alpha\\
        \le&\alpha.
    \end{aligned}
    \end{equation}
    Setting $\Delta\rightarrow 0+$, since $\phi^G$ is bounded, then the continuity in Assumption~\ref{ass:init} implies
    \begin{align*}
        -\hat s^G\E\phi^G(X,A)\1\big(2\hat \eta^G(X,A)-1>\hat\lambda^G\hat\phi^G(X,A)\big)\le\alpha.
    \end{align*}
    Therefore, under the event $E$, we have
    \[\U(\hat f^G_\alpha)=|\E\phi^G(X,A)\1\big(2\hat \eta^G(X,A)-1>\hat\lambda^G\hat\phi^G(X,A)\big)|\le\alpha.\]

    \textbf{Case (2)}: If the empirical unfairness measure in Step 2 of Algorithm~\ref{alg:binary_unified} jumps at $\hat\lambda^G_+$ such that
    \[\hat s^G\sum_{j\in[m]}\kappa_j\hat\E_j\1\big(2\hat\eta^G(X,A)-1>\hat\lambda^G\hat\phi^G(X,A)\big)>\alpha-\epsilon_\alpha,\]
    then there exists a sequence $\{\lambda_{+t}:i\in\N_+\}$ such that $\lambda_{+t}\searrow\hat\lambda^G_+$ and 
    \[\hat s^G\sum_{j\in[m]}\kappa_j\hat\E_j\1\big(2\hat\eta^G(X,A)-1>\hat s^G\lambda_{+t}\hat\phi^G(X,A)\big)\le\alpha-\epsilon_\alpha,\quad\forall t\in\N_+.\]
    Then
    \begin{align*}
        &\hat s^G\E\phi^G(X,A)\1\big(2\hat\eta^G(X,A)-1>\hat s^G\lambda_{+t}\hat\phi^G(X,A)\big)\\
        \le&\hat s^G\sum_{j\in[m]}\kappa_j\hat\E_j\1\big(2\hat\eta^G(X,A)-1>\hat s^G\lambda_{+t}\hat\phi^G(X,A)\big)+\epsilon_\alpha\\
        \le&\alpha.
    \end{align*}
    Setting $t\rightarrow\infty $, since $\phi^G$ is bounded, then the continuity in Assumption~\ref{ass:init} implies that
    \[\hat s^G\E\phi^G(X,A)\1\big(2\hat\eta^G(X,A)-1>\hat\lambda^G\hat\phi^G(X,A)\big)\le\alpha.\]
    Using the same proof with \textbf{Case (1)}, we can also show that under the event $E$,
    \[\U(\hat f^G_\alpha)\le\alpha.\]

\end{proof}

\section{Derivations in Remark~\ref{rem:poly_growth}}\label{sec:supp_rem_poly_growth}

Assumption~\ref{ass:ratio_poly} supposes the unfairness difference $D$ satisfies that for any $z\in\R$,
    \[D(4z)\le c_2 D(z).\]
    It is not hard to see that
    \begin{align*}
        &\forall |z'|\ge |z|>0,~zz'>0,~1\ge\frac{D(z)}{D(z')}\ge\bigg(\frac{z}{z'}\bigg)^{\log_4 c_2}\\
        \Longrightarrow&\forall z\in\R, ~D(4z)\le c_2 D(z)\\
        \Longrightarrow&\forall |z'|\ge |z|>0,~zz'>0,~1\ge\frac{D(z)}{D(z')}\ge\bigg(\frac{z}{4z'}\bigg)^{\log_4 c_2}.
    \end{align*}
    If we fix $z'$ such that $|z'|$ is a large constant, under Assumption~\ref{ass:ratio_poly}, for any $z\in\R$ with $zz'\ge0$, $|z|\le|z'|$, we see
    \[D(z)\gtrsim |z|^{\log_4 c_2}.\]

\section{Original Versions of Assumptions \ref{ass:ratio_poly_simple} and \ref{ass:ratio_balance_simple}}\label{sec:original_assumption}

\begin{Assumption}[Polynomial Growth]\label{ass:ratio_poly}
    For some constant $c_2>0$, any $z>0$ and $j\in\{-1,1\}$, we have 
    \[\E\left[|\phi^G(X,A)|\1\bigg(0<\frac{jg^{*G}_\alpha(X,A)}{s^G\phi^G(X,A)}<4z\bigg)\right]\le c_2\E\left[|\phi^G(X,A)|\1\bigg(0<\frac{jg^{*G}_\alpha(X,A)}{s^G\phi^G(X,A)}<z\bigg)\right].\]
\end{Assumption}

\begin{Assumption}\label{ass:ratio_balance}
    There exist constants $c_3,c_4>0$ such that
    \[\E\left[|\phi^G(X,A)|\1\bigg(0>\frac{g^{*G}_\alpha(X,A)}{s^G\phi^G(X,A)}\ge-|\lambda^{*G}_\alpha|\bigg)\right]\le c_3\E\left[|\phi^G(X,A)|\1\bigg(0<\frac{g^{*G}_\alpha(X,A)}{s^G\phi^G(X,A)}\le c_4|\lambda^{*G}_\alpha|\bigg)\right].
    \]
\end{Assumption}
It is clear that if $\frac{2\eta^G(X,A)-1}{\phi^G(X,A)}$ is a continuous random variable given $\phi^G(X,A)\ne 0$, then for $j\in\{-1,1\}$, 
\[\E\left[|\phi^G(X,A)|\1\bigg(0<\frac{jg^{*G}_\alpha(X,A)}{s^G\phi^G(X,A)}<z\bigg)\right]=|U(\lambda^{*G}_\alpha+js^Gz)-U(\lambda^{*G}_\alpha)|,\]
which can be interpreted as the change of signed unfairness measures around $\lambda^*_\alpha$. Consequently, Assumptions \ref{ass:ratio_poly_simple} and \ref{ass:ratio_balance_simple} are equivalent to Assumptions \ref{ass:ratio_poly} and \ref{ass:ratio_balance}. In the following proof, we consider the original Assumptions \ref{ass:ratio_poly} and \ref{ass:ratio_balance}.

\section{Proof of Theorem \ref{thm:upper_unify_binary}}

\begin{proof}[Proof of Theorem \ref{thm:upper_unify_binary}]
    Throughout the proof, the expectations are taken with respect to a new test sample $(X,A,Y)$ conditioned on the dataset $\D$.
    
    The excess risk of $\hat f^G$ can be expressed as
    \begin{align*}
        &\risk(\hat f^G_\alpha)-\risk(f^{*G}_\alpha)\\
        =&\E(2\eta^G(X,A)-1)(f^{*G}_\alpha(X,A)-\hat f^G_\alpha(X,A))\\
        =&\underbrace{\E\abs{2\eta^G(X,A)-1-\lambda^{*G}_\alpha \phi^G(X,A)}\abs{f^{*G}_\alpha (X,A)-\hat f^G_\alpha(X,A)}}_{T_1}+\underbrace{\E\lambda^{*G}_\alpha \phi^G(X,A)(f^{*G}_\alpha (X,A)-\hat f^G_\alpha(X,A))}_{T_2}.
    \end{align*}
    While $T_1$ corresponds to the error of $\hat f_\alpha^G$ for $f_\alpha^{*G}$, $T_2$ measures the conservativeness of $\hat f_\alpha^G$ in fairness control.
    Then the rest of the proof focuses on the control of $T_1$ and $T_2$. 
    
    \textbf{At first, we summarize a sketch of the proof.} Denote $U(0)=\U(\1(2\eta^G>1))$, we consider two cases (1) $\alpha\ge U(0)+\tilde\epsilon_\eta^G+2\epsilon_\alpha$ and (2) $\alpha<U(0)-\tilde\epsilon_\eta^G\vee c_3\big(2\epsilon_\alpha+c_\phi c_1(2\epsilon_\eta+(1+2c_4)|\lambda^{*G}_\alpha|\epsilon_\phi)^\gamma\big)$. Case (1) happens if the unconstrained Bayes optimal classifier $\1(2\eta^G>1)$ is sufficiently fair and case (2) happens if $\1(2\eta^G>1)$ is sufficiently unfair. 
    
    In case (1), the unconstrained Bayes optimal classifier is already $\alpha$-fair. Therefore, $\lambda_\alpha^{*G}=0$ and thus $T_2=0$. $T_1$ then reduces to the excess risk of standard nonparametric classification.

    In case (2), $\lambda_\alpha^{*G}\ne 0$, then we need to study both $T_1$ and $T_2$. The bound for $T_2$ is mainly due to the deviation between empirical and population unfairness measures, and can be analyzed straightforwardly. For $T_1$, recall $\hat f_\alpha^G$ consists of three components: $\hat\eta^G$, $\hat\phi^G$, and $\hat\lambda^G$. Since the errors of $\hat\eta^G$ and $\hat\phi^G$ have been assumed, we only need to analyze the error of $\hat\lambda^G$. To this end, we further consider two cases (a) $s^G\hat\lambda^G\ge s^G\lambda^{*G}_\alpha $, and (b) $s^G\hat\lambda^G<s^G\lambda^{*G}_\alpha $. Then Assumptions \ref{ass:ratio_poly} and \ref{ass:ratio_balance} allow us to control $|\hat\lambda^G-\lambda_\alpha^{*G}|$.
    
    \textbf{Then we provide the formal proof.}
    Denote
    \[E=\bigg\{\sup_{\lambda\in\R}\bigg|\sum_{j\in[m]}\kappa_j(\hat\E_j-\E_{j})\1\big(2\hat\eta^G(X,A)-1>\lambda\hat\phi^G(X,A)\big)\bigg|\le\epsilon_\alpha\bigg\}.\]
    
    \textbf{Case (1)}: If $\alpha\ge U(0)+\tilde\epsilon_\eta^G+2\epsilon_\alpha$, we know $\lambda^{*G}_\alpha =0$. Since
    \begin{align*}
        &|\E\phi^G(X,A)\1(2\eta^G(X,A)>1)-\E\phi^G(X,A)\1(2\hat\eta^G(X,A)>1)|\\
        \le&\E|\phi^G(X,A)|\1(|2\eta^G(X,A)-1|\le 2\|\hat\eta^G-\eta^G\|_\infty)\\
        \le&\tilde\epsilon_\eta^G,
    \end{align*}
    we have under $E$,
    \begin{align*}
        &\bigg|\sum_{j\in[m]}\kappa_j\hat \E_j\1(2\hat\eta^G(X,A)>1)\bigg|\\
        \le&\bigg|\sum_{j\in[m]}\kappa_j\E_{j}\1(2\hat\eta^G(X,A)>1)\bigg|+\epsilon_\alpha\\
        \le&U(0)+\tilde\epsilon_\eta^G+\epsilon_\alpha\\
        \le& \alpha-\epsilon_\alpha,
    \end{align*}
    which implies $\hat\lambda^G=0$. Then the excess risk can be controlled as
    \begin{align*}
        &\risk(\hat f^G_\alpha)-\risk(f^{*G}_\alpha)\\
        =&\E|2\eta^G(X,A)-1||\1(2\eta^G(X,A)>1)-\1(2\hat\eta^G(X,A)>1)|\\
        =&\E|2\eta^G(X,A)-1|\1\big(0<2\eta^G(X,A)-1\le2(\eta^G(X,A)-\hat\eta^G(X,A))\big)\\
        &+\E|2\eta^G(X,A)-1|\1\big(0\ge2\eta^G(X,A)-1>2(\eta^G(X,A)-\hat\eta^G(X,A))\big)\\
        \le&\E|2\eta^G(X,A)-1|\1\big(|2\eta^G(X,A)-1|\le2\|\hat\eta^G-\eta^G\|_\infty\big)\\
        \lesssim&\epsilon_\eta^{1+\gamma}.
    \end{align*}

    \textbf{Case (2)}: If $\alpha<U(0)-\tilde\epsilon_\eta^G\vee c_3\big(2\epsilon_\alpha+c_\phi c_1(2\epsilon_\eta+(1+2c_4)|\lambda^{*G}_\alpha|\epsilon_\phi)^\gamma\big)$, we have
    \[\E\lambda^{*G}_\alpha \phi^G(X,A)f^{*G}_\alpha (X,A)=|\lambda^{*G}_\alpha|\alpha.\]
    Since $s^G=\sgn\big(\sum_{j\in[m]}\kappa_j\E_j\1(2\eta^G(X,A)>1)\big)$, under $E$, it happens
    \begin{align*}
        &s^G\sum_{j\in[m]}\kappa_j\hat\E_j\1(2\hat\eta^G(X,A)>1)\\
        \ge& s^G\sum_{j\in[m]}\kappa_j\E_{j}\1(2\hat\eta^G(X,A)>1)-\epsilon_\alpha\\
        \ge&s^G\E\phi^G(X,A)\1(2\eta^G(X,A)>1)-\tilde\epsilon_\eta^G-\epsilon_\alpha\\
        =&U(0)-\tilde\epsilon_\eta^G-\epsilon_\alpha\\
        >&0,
    \end{align*}
    therefore $\hat s^G=s^G$. Then it must happen that $\hat\lambda_+^G>0$, otherwise if $\hat\lambda_+^G=0$,
    \[\alpha\ge s^G\E\phi^G(X,A)\1(2\hat\eta^G(X,A)>1)\ge s^G\E\phi^G(X,A)\1(2\eta^G(X,A)>1)-\tilde\epsilon_\eta^G
        >\alpha,\]
    which is impossible. Therefore, by Equation \eqref{eq:fair_limit},
    \begin{align*}
        &\E\lambda^{*G}_\alpha \phi^G(X,A)\hat f^G_\alpha(X,A)\\
        =&\abs{\lambda^{*G}_\alpha }s^G\E\phi^G(X,A)\1\big(2\hat \eta^G(X,A)-1>\hat\lambda^G\hat\phi^G(X,A)\big)\\
        =&\abs{\lambda^{*G}_\alpha }\lim_{\Delta\rightarrow 0+}s^G\E\phi^G(X,A)\1(2\hat\eta^G(X,A)-1>\hat s^G(\hat\lambda_+^G-\Delta)\hat\phi^G(X,A))\\
        \ge&\abs{\lambda^{*G}_\alpha }(\alpha-2\epsilon_\alpha).
    \end{align*}
    Then we can control $T_2$ as
    \[T_2\le 2\abs{\lambda^{*G}_\alpha }\epsilon_\alpha.\]
    
    To analyze $T_1$, we should bound $|\hat\lambda^G-\lambda^{*G}_\alpha |$. Now we consider the following two cases \textbf{(a)} $s^G\hat\lambda^G\ge s^G\lambda^{*G}_\alpha $ and \textbf{(b)} $s^G\hat\lambda^G<s^G\lambda^{*G}_\alpha $. Denote
    \[s^G_\phi(x,a)=\sgn(\phi^G(x,a)),\quad\hat\epsilon_g=2\epsilon_\eta+|\hat\lambda^G|\epsilon_\phi,\quad \epsilon_g=2\epsilon_\eta+|\lambda^{*G}_\alpha |\epsilon_\phi,\]
    and $s^G_\phi(x,a)=1$ when $\phi^G(x,a)=0$.
    
    \textbf{(a)}. If $s^G\hat\lambda^G\ge s^G\lambda^{*G}_\alpha $, we denote $\Delta=s^G\hat\lambda^G-s^G\lambda^{*G}_\alpha $. Then we know
    \begin{align*}
        &s^G\E\phi^G(X,A)\1(g^{*G}_\alpha (X,A)>0)\\
        =&\alpha\\
        \le&s^G\E\phi^G(X,A)\1\big(2\hat\eta^G(X,A)-1>\hat\lambda^G\hat\phi^G(X,A)\big)+2\epsilon_\alpha\\
        =&s^G\E\phi^G(X,A)\1\big(g^{*G}_\alpha (X,A)>2(\eta^G(X,A)-\hat\eta^G(X,A))\\
        &\qquad+(\hat\lambda^G-\lambda^{*G}_\alpha )\phi^G(X,A)+\hat\lambda^G(\hat\phi^G(X,A)-\phi^G(X,A))\big)+2\epsilon_\alpha\\
        \le&s^G\E\phi^G(X,A)\1\big(g^{*G}_\alpha (X,A)>s_\phi^G(X,A)s^G\big(\Delta|\phi^G(X,A)|-\hat\epsilon_g\big)\big)+2\epsilon_\alpha.
    \end{align*}
    Therefore we have
    \begin{equation}\label{eq:Delta_lambda_upper}
    \begin{aligned}
        0\le&2\epsilon_\alpha+s^G\E\phi^G(X,A)\1\big(g^{*G}_\alpha (X,A)>s_\phi^G(X,A)s^G(\Delta|\phi^G(X,A)|-\hat\epsilon_g)\big)\\
        &-s^G\E\phi^G(X,A)\1(g^{*G}_\alpha (X,A)>0)\\
        =&2\epsilon_\alpha+\E|\phi^G(X,A)|\1\big(g^{*G}_\alpha (X,A)>\Delta|\phi^G(X,A)|-\hat\epsilon_g,s^G\phi^G(X,A)>0\big)\\
        &-\E|\phi^G(X,A)|\1\big(g^{*G}_\alpha (X,A)>-\Delta|\phi^G(X,A)|+\hat\epsilon_g,s^G\phi^G(X,A)<0\big)\\
        &-\E|\phi^G(X,A)|\1(g^{*G}_\alpha (X,A)>0,s^G\phi^G(X,A)>0)+\E|\phi^G(X,A)|\1(g^{*G}_\alpha (X,A)>0,s^G\phi^G(X,A)<0)\\
        =&2\epsilon_\alpha+\E|\phi^G(X,A)|\1\big(0\ge g^{*G}_\alpha (X,A)>\Delta|\phi^G(X,A)|-\hat\epsilon_g),s^G\phi^G(X,A)>0\big)\\
        &-\E|\phi^G(X,A)|\1\big(0<g^{*G}_\alpha (X,A)\le\Delta|\phi^G(X,A)|-\hat\epsilon_g,s^G\phi^G(X,A)>0\big)\\
        &+\E|\phi^G(X,A)|\1\big(0<g^{*G}_\alpha (X,A)\le-\Delta|\phi^G(X,A)|+\hat\epsilon_g,s^G\phi^G(X,A)<0\big)\\
        &-\E|\phi^G(X,A)|\1\big(0\ge g^{*G}_\alpha (X,A)>-\Delta|\phi^G(X,A)|+\hat\epsilon_g,s^G\phi^G(X,A)<0\big)\\
        \le&2\epsilon_\alpha+\E|\phi^G(X,A)|\1\big(0>s_\phi^G(X,A)s^Gg^{*G}_\alpha (X,A)\ge\Delta|\phi^G(X,A)|-\hat\epsilon_g\big)\\
        &-\E|\phi^G(X,A)|\1\big(0< s_\phi^G(X,A)s^Gg^{*G}_\alpha (X,A)<\Delta|\phi^G(X,A)|-\hat\epsilon_g\big)\\
        \le&2\epsilon_\alpha+\E|\phi^G(X,A)|\1\big(0> s_\phi^G(X,A)s^Gg^{*G}_\alpha (X,A)\ge-\hat\epsilon_g,\Delta|\phi^G(X,A)|<\hat\epsilon_g\big)\\
        &-\E|\phi^G(X,A)|\1\big(0<s_\phi^G(X,A)s^Gg^{*G}_\alpha (X,A)<\frac{1}{2}\Delta|\phi^G(X,A)|,\Delta|\phi^G(X,A)|>2\hat\epsilon_g\big).
    \end{aligned}
    \end{equation}
    Now we can control $T_1$ as
    \begin{equation}\label{eq:excess_a}
        \begin{aligned}
        T_1=&\E|g^{*G}_\alpha (X,A)|\1\big(0<g^{*G}_\alpha (X,A)\le(\hat\lambda^G-\lambda^{*G}_\alpha )\phi^G(X,A)\\
        &\qquad+\hat\lambda^G(\hat\phi^G(X,A)-\phi^G(X,A))+2(\eta^G(X,A)-\hat\eta^G(X,A))\big)\\
        &+\E|g^{*G}_\alpha (X,A)|\1\big(0\ge g^{*G}_\alpha (X,A)>(\hat\lambda^G-\lambda^{*G}_\alpha )\phi^G(X,A)\\
        &\qquad+\hat\lambda^G(\hat\phi^G(X,A)-\phi^G(X,A))+2(\eta^G(X,A)-\hat\eta^G(X,A))\big)\\
        \le&\E|g^{*G}_\alpha (X,A)|\1\big(0<g^{*G}_\alpha (X,A)\le\Delta|\phi^G(X,A)|+\hat\epsilon_g,s^G\phi^G(X,A)\ge0\big)\\
        &+\E|g^{*G}_\alpha (X,A)|\1\big(0<g^{*G}_\alpha (X,A)\le-\Delta|\phi^G(X,A)|+\hat\epsilon_g,s^G\phi^G(X,A)<0\big)\\
        &+\E|g^{*G}_\alpha (X,A)|\1\big(0\ge g^{*G}_\alpha (X,A)>\Delta|\phi^G(X,A)|-\hat\epsilon_g,s^G\phi^G(X,A)\ge 0\big)\\
        &+\E|g^{*G}_\alpha (X,A)|\1\big(0\ge g^{*G}_\alpha (X,A)>-\Delta|\phi^G(X,A)|-\hat\epsilon_g,s^G\phi^G(X,A)<0\big)\\
        \le&\E|g^{*G}_\alpha (X,A)|\1\big(0< s_\phi^G(X,A)s^Gg^{*G}_\alpha (X,A)\le\Delta|\phi^G(X,A)|+\hat\epsilon_g\big)\\
        &+\E|g^{*G}_\alpha (X,A)|\1\big(0> s_\phi^G(X,A)s^Gg^{*G}_\alpha (X,A)\ge-\hat\epsilon_g\big)\\
        \le&\E|g^{*G}_\alpha (X,A)|\1\big(0<s_\phi^G(X,A)s^Gg^{*G}_\alpha (X,A)<2\Delta|\phi^G(X,A)|\big)\\
        &+\E|g^{*G}_\alpha (X,A)|\1\big(0<s_\phi^G(X,A)s^Gg^{*G}_\alpha (X,A)\le2\hat\epsilon_g\big)+c\hat\epsilon_g^{1+\gamma}\\
        \overset{\text{Assumption~\ref{ass:ratio_poly}}}{\lesssim}&\E\Delta|\phi^G(X,A)|\1\big(0<s_\phi^G(X,A)s^Gg^{*G}_\alpha (X,A)<\frac{1}{2}\Delta|\phi^G(X,A)|\big)+\hat\epsilon_g^{1+\gamma}\\
        =&\E\Delta|\phi^G(X,A)|\1\big(0<s_\phi^G(X,A)s^Gg^{*G}_\alpha (X,A)<\frac{1}{2}\Delta|\phi^G(X,A)|,\Delta|\phi^G(X,A)|\le2\hat\epsilon_g\big)\\
        &+\E\Delta|\phi^G(X,A)|\1\big(0<s_\phi^G(X,A)s^Gg^{*G}_\alpha (X,A)<\frac{1}{2}\Delta|\phi^G(X,A)|,\Delta|\phi^G(X,A)|>2\hat\epsilon_g\big)+\hat\epsilon_g^{1+\gamma}\\
        \le&\E\Delta|\phi^G(X,A)|\1\big(0<s^G_\phi(X,A)s^Gg^{*G}_\alpha (X,A)<\frac{1}{2}\Delta|\phi^G(X,A)|,\Delta|\phi^G(X,A)|>2\hat\epsilon_g\big)+c\hat\epsilon_g^{1+\gamma}\\
        \overset{\text{Equation~\eqref{eq:Delta_lambda_upper}}}{\le}&\E\Delta|\phi^G(X,A)|\1\big(0>s_\phi^G(X,A)s^Gg^{*G}_\alpha (X,A)\ge-\hat\epsilon_g,\Delta|\phi^G(X,A)|<\hat\epsilon_g\big)+2\Delta\epsilon_\alpha+c\hat\epsilon_g^{1+\gamma}\\
        \lesssim&\hat\epsilon_g^{1+\gamma}+\Delta\epsilon_\alpha.
    \end{aligned}
    \end{equation}

    Now we argue that if $\alpha<U(0)-\tilde\epsilon_\eta^G\vee c_3\big(2\epsilon_\alpha+c_\phi c_1(2\epsilon_\eta+(1+2c_4)|\lambda^*_\alpha|\epsilon_\phi)^\gamma\big)$, it must happen $\Delta\le 2c_4|\lambda^*_\alpha |$. To show this, we start from the case $\alpha< U(0)-\tilde\epsilon_\eta^G\vee c_3(2\epsilon_\alpha+c_\phi c_1\hat\epsilon_g^\gamma)$. Since
    \begin{align*}
        &s^G\E\phi^G(X,A)\1(2\eta^G(X,A)>1)\\
        >&\alpha+2c_3\epsilon_\alpha+c_\phi c_1c_3\hat\epsilon_g^\gamma\\
        =&s^G\E\phi^G(X,A)\1(g^{*G}_\alpha (X,A)>0)+2c_3\epsilon_\alpha+c_\phi c_1c_3\hat\epsilon_g^\gamma,
    \end{align*}
    we have
    \begin{equation}\label{eq:abs_lambda}
        \begin{aligned}
        0<&s^G\E\phi^G(X,A)\1(g^{*G}_\alpha (X,A)>-\lambda^{*G}_\alpha \phi^G(X,A))-s^G\E\phi^G(X,A)\1(g^{*G}_\alpha (X,A)>0)-2c_3\epsilon_\alpha-c_\phi c_1c_3\hat\epsilon_g^\gamma\\
        =&s^G\E\phi^G(X,A)\1\big(0\ge g^{*G}_\alpha(X,A)>-\lambda^*_\alpha\phi^G(X,A)\big)\\
        &-s^G\E\phi^G(X,A)\1\big(0<g^{*G}_\alpha(X,A)\le-\lambda^*_\alpha\phi^G(X,A)\big)-2c_3\epsilon_\alpha-c_\phi c_1c_3\hat\epsilon_g^\gamma\\
        =&\E|\phi^G(X,A)|\1\big(0>g^{*G}_\alpha(X,A)>-|\lambda^*_\alpha||\phi^G(X,A)|,s^G\phi^G(X,A)>0\big)\\
        &-\E|\phi^G(X,A)|\1\big(0>g^{*G}_\alpha(X,A)>|\lambda^*_\alpha||\phi^G(X,A)|,s^G\phi^G(X,A)<0\big)\\
        &-\E|\phi^G(X,A)|\1\big(0<g^{*G}_\alpha(X,A)\le-|\lambda^*_\alpha||\phi^G(X,A)|,s^G\phi^G(X,A)>0\big)\\
        &+\E|\phi^G(X,A)|\1\big(0<g^{*G}_\alpha(X,A)\le|\lambda^*_\alpha||\phi^G(X,A)|,s^G\phi^G(X,A)<0\big)-2c_3\epsilon_\alpha-c_\phi c_1c_3\hat\epsilon_g^\gamma\\
        \le&\E|\phi^G(X,A)|\1\big(0>s_\phi^G(X,A)s^Gg^{*G}_\alpha(X,A)\ge-|\lambda^*_\alpha||\phi^G(X,A)|\big)-2c_3\epsilon_\alpha-c_\phi c_1c_3\hat\epsilon_g^\gamma.
    \end{aligned}
    \end{equation}

    It follows from Equation \eqref{eq:Delta_lambda_upper} that
    \begin{align*}
        0\le&2\epsilon_\alpha+\E|\phi^G(X,A)|\1\big(0> s_\phi^G(X,A)s^Gg^{*G}_\alpha (X,A)\ge-\hat\epsilon_g\big)\\
        &-\E|\phi^G(X,A)|\1\big(0<s_\phi^G(X,A)s^Gg^{*G}_\alpha (X,A)<\Delta|\phi^G(X,A)|-\hat\epsilon_g\big)\\
        \le&2\epsilon_\alpha+\E|\phi^G(X,A)|\1\big(0> s_\phi^G(X,A)s^Gg^{*G}_\alpha (X,A)\ge-\hat\epsilon_g\big)\\
        &-\E|\phi^G(X,A)|\1\big(0<s_\phi^G(X,A)s^Gg^{*G}_\alpha (X,A)<\frac{1}{2}\Delta|\phi^G(X,A)|\big)\\
        &+\E|\phi^G(X,A)|\1\big(0<s_\phi^G(X,A)s^Gg^{*G}_\alpha (X,A)<\hat\epsilon_g\big)\\
        \le&2\epsilon_\alpha+\E|\phi^G(X,A)|\1\big(|g^{*G}_\alpha (X,A)|\le\hat\epsilon_g\big)-\E|\phi^G(X,A)|\1\bigg(0<\frac{g^{*G}_\alpha (X,A)}{s^G\phi^G(X,A)}<\frac{\Delta}{2}\bigg).
    \end{align*}
    Note that $c_4|\lambda^*_\alpha |\ge\frac{\Delta}{2}$ if
    \[\E|\phi^G(X,A)|\1\bigg(0<\frac{g^{*G}_\alpha(X,A)}{s^G\phi^G(X,A)}\le c_4|\lambda^*_\alpha|\bigg)>\E|\phi^G(X,A)|\1\bigg(0<\frac{g^{*G}(X,A)}{s^G\phi^G(X,A)}<\frac{\Delta}{2}\bigg),\]
    then it suffices to show
    \[\E|\phi^G(X,A)|\1\bigg(0<\frac{g^{*G}_\alpha (X,A)}{s^G\phi^G(X,A)}\le c_4|\lambda^*_\alpha |\bigg)>2\epsilon_\alpha+\E|\phi^G(X,A)|\1\big(|g^{*G}_\alpha (X,A)|\le\hat\epsilon_g\big).\]
    By Assumption \ref{ass:ratio_balance} and Equation \eqref{eq:abs_lambda}, we get
    \begin{align*}
        &\E|\phi^G(X,A)|\1\bigg(0<\frac{g^{*G}_\alpha (X,A)}{s^G\phi^G(X,A)}\le c_4|\lambda^*_\alpha |\bigg)\\
        \overset{\text{Assumption~\ref{ass:ratio_balance}}}{\ge}&\frac{1}{c_3}\E|\phi^G(X,A)|\1\bigg(0>\frac{g^{*G}_\alpha (X,A)}{s^G\phi^G(X,A)}\ge-|\lambda^*_\alpha |\bigg)\\
        \overset{\text{Equation~\eqref{eq:abs_lambda}}}{>}&2\epsilon_\alpha+c_\phi c_1\hat\epsilon_g^\gamma\\
        \ge&2\epsilon_\alpha+\E|\phi^G(X,A)|\1\big(|g^{*G}_\alpha (X,A)|\le\hat\epsilon_g\big),
    \end{align*}
    which means $c_4|\lambda^*_\alpha |\ge\frac{\Delta}{2}$. Since
    \[\hat\epsilon_g=2\epsilon_\eta+(|\lambda^*_\alpha|+\Delta)\epsilon_\phi\le2\epsilon_\eta+(1+2c_4)|\lambda^*_\alpha|\epsilon_\phi,\]
    and $|\lambda^*_\alpha |$ is non-increasing with $\alpha$, we know $2c_4|\lambda^*_\alpha |\ge\Delta$ also holds if
    \[\alpha\le U(0)-\tilde\epsilon_\eta^G\vee c_3\big(2\epsilon_\alpha+c_\phi c_1(2\epsilon_\eta+(1+2c_4)|\lambda^*_\alpha|\epsilon_\phi)^\gamma\big).\]
    
    Then the excess risk can be upper bounded as
    \[\risk(\hat f^G)-\risk(f^{*G}_\alpha)\lesssim \abs{\lambda^*_\alpha }\epsilon_\alpha+\hat\epsilon_g^{1+\gamma}+\Delta\epsilon_\alpha\lesssim|\lambda^*_\alpha |\epsilon_\alpha+\epsilon_\eta^{1+\gamma}+\big(|\lambda^*_\alpha|\epsilon_\phi\big)^{1+\gamma}.\]

    \textbf{(b)}. If $s^G\hat\lambda^G<s^G\lambda^{*G}_\alpha $, we know $|\hat\lambda^G|<|\lambda^{*G}_\alpha |$ then $\hat\epsilon_g<\epsilon_g$, we denote $\Delta=s^G\lambda^{*G}_\alpha -s^G\hat\lambda^G$. Similarly, we have
    \begin{align*}
        &s^G\E\phi^G(X,A)\1(g^{*G}_\alpha (X,A)>0)\\
        =&\alpha\\
        \ge&s^G\E\phi^G(X,A)\1\big(2\hat\eta^G(X,A)-1>\hat\lambda^G\hat\phi^G(X,A)\big)\\
        \ge&s^G\E\phi^G(X,A)\1\big(g^{*G}_\alpha (X,A)>s_\phi^G(X,A)s^G\big(\hat\epsilon_g-\Delta|\phi^G(X,A)|\big)\big),
    \end{align*}
    then
    \begin{equation}\label{eq:Delta_lambda_lower}
        \begin{aligned}
            0\le&s^G\E\phi^G(X,A)\1(g^{*G}_\alpha (X,A)>0)-s^G\E\phi^G(X,A)\1\big(g^{*G}_\alpha (X,A)>s_\phi^G(X,A)s^G\big(\hat\epsilon_g-\Delta|\phi^G(X,A)|\big)\big)\\
            =&\E|\phi^G(X,A)|\1(g^{*G}_\alpha (X,A)>0,s^G\phi^G(X,A)>0)-\E|\phi^G(X,A)|\1(g^{*G}_\alpha (X,A)>0,s^G\phi^G(X,A)<0)\\
            &-\E|\phi^G(X,A)|\1\big(g^{*G}_\alpha (X,A)>\hat\epsilon_g-\Delta|\phi^G(X,A)|,s^G\phi^G(X,A)>0\big)\\
            &+\E|\phi^G(X,A)|\1\big(g^{*G}_\alpha (X,A)>-\hat\epsilon_g+\Delta|\phi^G(X,A)|,s^G\phi^G(X,A)<0\big)\\
            =&\E|\phi^G(X,A)|\1\big(0<g^{*G}_\alpha (X,A)\le\hat\epsilon_g-\Delta|\phi^G(X,A)|,s^G\phi^G(X,A)>0\big)\\
            &-\E|\phi^G(X,A)|\1\big(0\ge g^{*G}_\alpha (X,A)>\hat\epsilon_g-\Delta|\phi^G(X,A)|,s^G\phi^G(X,A)>0\big)\\
            &+\E|\phi^G(X,A)|\1\big(0\ge g^{*G}_\alpha (X,A)>-\hat\epsilon_g+\Delta|\phi^G(X,A)|,s^G\phi^G(X,A)<0\big)\\
            &-\E|\phi^G(X,A)|\1\big(0<g^{*G}_\alpha (X,A)\le-\hat\epsilon_g+\Delta|\phi^G(X,A)|,s^G\phi^G(X,A)<0\big)\\
            \le&\E|\phi^G(X,A)|\1\big(0<s_\phi^G(X,A)s^Gg^{*G}_\alpha (X,A)\le\hat\epsilon_g-\Delta|\phi^G(X,A)|\big)\\
            &-\E|\phi^G(X,A)|\1\big(0>s_\phi^G(X,A)s^Gg^{*G}_\alpha (X,A)>\hat\epsilon_g-\Delta|\phi^G(X,A)|\big)\\
            \le&\E|\phi^G(X,A)|\1\big(0<s_\phi^G(X,A)s^Gg^{*G}_\alpha (X,A)\le\hat\epsilon_g,\Delta|\phi^G(X,A)|<\hat\epsilon_g\big)\\
            &-\E|\phi^G(X,A)|\1\big(0>s_\phi^G(X,A)s^Gg^{*G}_\alpha (X,A)>-\frac{1}{2}\Delta|\phi^G(X,A)|,\Delta|\phi^G(X,A)|>2\hat\epsilon_g\big).
        \end{aligned}
    \end{equation}
    Now we can control $T_1$ as
    \begin{equation}\label{eq:excess_b}
        \begin{aligned}
        T_1=&\E|g^{*G}_\alpha (X,A)|\1\big(0<g^{*G}_\alpha (X,A)\le(\hat\lambda-\lambda^*_\alpha )\phi^G(X,A)\\
        &\qquad+\hat\lambda(\hat\phi^G(X,A)-\phi^G(X,A))+2(\eta^G(X,A)-\hat\eta^G(X,A))\big)\\
        &+\E|g^{*G}_\alpha (X,A)|\1\big(0\ge g^{*G}_\alpha (X,A)>(\hat\lambda-\lambda^*_\alpha )\phi^G(X,A)\\
        &\qquad+\hat\lambda(\hat\phi^G(X,A)-\phi^G(X,A))+2(\eta^G(X,A)-\hat\eta^G(X,A))\big)\\
        \le&\E|g^{*G}_\alpha (X,A)|\1\big(0<g^{*G}_\alpha (X,A)\le-\Delta|\phi^G(X,A)|+\hat\epsilon_g,s^G\phi^G(X,A)\ge0\big)\\
        &+\E|g^{*G}_\alpha (X,A)|\1\big(0<g^{*G}_\alpha (X,A)\le\Delta|\phi^G(X,A)|+\hat\epsilon_g,s^G\phi^G(X,A)<0\big)\\
        &+\E|g^{*G}_\alpha (X,A)|\1\big(0>g^{*G}_\alpha (X,A)>-\Delta|\phi^G(X,A)|-\hat\epsilon_g,s^G\phi^G(X,A)\ge0\big)\\
        &+\E|g^{*G}_\alpha (X,A)|\1\big(0>g^{*G}_\alpha (X,A)>\Delta|\phi^G(X,A)|-\hat\epsilon_g,s^G\phi^G(X,A)<0\big)\\
        \le&\E|g^{*G}_\alpha (X,A)|\1\big(0>s_\phi^G(X,A)s^Gg^{*G}_\alpha (X,A)\ge-\Delta|\phi^G(X,A)|-\hat\epsilon_g\big)\\
        &+\E|g^{*G}_\alpha (X,A)|\1\big(0<s_\phi^G(X,A)s^Gg^{*G}_\alpha (X,A)\le\hat\epsilon_g\big)\\
        \le&\E|g^{*G}_\alpha (X,A)|\1\big(0>s_\phi^G(X,A)s^Gg^{*G}_\alpha (X,A)>-2\Delta|\phi^G(X,A)|\big)\\
        &+\E|g^{*G}_\alpha (X,A)|\1\big(0>s_\phi^G(X,A)s^Gg^{*G}_\alpha (X,A)\ge-2\hat\epsilon_g\big)+c\hat\epsilon_g^{1+\gamma}\\
        \overset{\text{Assumption~\ref{ass:ratio_poly}}}{\lesssim}&\E\Delta|\phi^G(X,A)|\1\big(0>s_\phi^G(X,A)s^Gg^{*G}_\alpha (X,A)>-\frac{1}{2}\Delta|\phi^G(X,A)|\big)+\hat\epsilon_g^{1+\gamma}\\
        =&\E\Delta|\phi^G(X,A)|\1\big(0>s_\phi^G(X,A)s^Gg^{*G}_\alpha (X,A)>-\frac{1}{2}\Delta|\phi^G(X,A)|,\Delta|\phi^G(X,A)|\le2\hat\epsilon_g\big)\\
        &+\E\Delta|\phi^G(X,A)|\1\big(0>s_\phi^G(X,A)s^Gg^{*G}_\alpha (X,A)>-\frac{1}{2}\Delta|\phi^G(X,A)|,\Delta|\phi^G(X,A)|>2\hat\epsilon_g\big)+\hat\epsilon_g^{1+\gamma}\\
        \le&\E\Delta|\phi^G(X,A)|\1\big(0>s_\phi^G(X,A)s^Gg^{*G}_\alpha (X,A)>-\frac{1}{2}\Delta|\phi^G(X,A)|,\Delta|\phi^G(X,A)|>2\hat\epsilon_g\big)+c\hat\epsilon_g^{1+\gamma}\\
        \overset{\text{Equation~\eqref{eq:Delta_lambda_lower}}}{\le}&\E\Delta|\phi^G(X,A)|\1\big(0<s_\phi^G(X,A)s^Gg^{*G}_\alpha (X,A)\le\hat\epsilon_g,\Delta|\phi^G(X,A)|<\hat\epsilon_g\big)+c\hat\epsilon_g^{1+\gamma}\\
        \lesssim&\hat\epsilon_g^{1+\gamma}\\
        \le&\epsilon_g^{1+\gamma}.
    \end{aligned}
    \end{equation}
    Then the excess risk can be upper bounded as
    \[\risk(\hat f^G_\alpha)-\risk(f^{*G}_\alpha )\lesssim \abs{\lambda^{*G}_\alpha }\epsilon_\alpha+\epsilon_\eta^{1+\gamma}+\big(|\lambda^{*G}_\alpha|\epsilon_\phi\big)^{1+\gamma}.\]

\end{proof}

\section{Derivation of Equation~\eqref{eq:lambda_upper_bound_eoo_aware}}\label{sec:supp_eq_lambda_upper_bound_eoo_aware}
When $\lambda^*_\alpha=0$, we have $g^*_\alpha(x,a)=2\eta(x,a)-1$. If $\lambda^*_\alpha\ne 0$, we denote $s=\sgn(\lambda^*_\alpha)$, then Equation \eqref{eq:lambda_binary_optimality} implies
\begin{align*}
    \alpha=&-\E\frac{(2A-3)s}{p_{1,A}}\eta(X,A)\1\bigg(\bigg(2+\frac{(2A-3)\lambda^*_\alpha}{p_{1,A}}\bigg)\eta(X,A)>1\bigg)\\
    =&\E\frac{1}{p_{1,A}}\eta(X,A)\1\bigg(\bigg(2-\frac{|\lambda^*_\alpha|}{p_{1,A}}\bigg)\eta(X,A)>1,(2A-3)s<0\bigg)\\
    &-\E\frac{1}{p_{1,A}}\eta(X,A)\1\bigg(\bigg(2+\frac{|\lambda^*_\alpha|}{p_{1,A}}\bigg)\eta(X,A)>1,(2A-3)s>0\bigg),
\end{align*}
it follows that
\[2-\frac{|\lambda^*_\alpha|}{p_{1,\frac{3-s}{2}}}\ge 1.\]
So we have
\[|\lambda^*_\alpha|\le p_{1,\frac{3-s}{2}},\quad\min_{a\in[2]}\bigg\{2+\frac{(2a-3)\lambda^*_\alpha}{p_{1,a}}\bigg\}\ge 1,\]
and $f^*_\alpha$ can be equivalently expressed as a group-wise thresholding rule
\[f^*_\alpha(x,a)=\1\bigg(\eta(x,a)>\bigg(2+\frac{(2a-3)\lambda^*_\alpha}{p_{1,a}}\bigg)^{-1}\bigg).\] 

\section{Local Polynomial Regressions in Section \ref{sec:binary_eoo}}

\subsection{Group-Aware Scenario}
Under Assumptions~\ref{ass:holder_aware}, \ref{ass:density_aware} and \ref{ass:observe}, we can apply local polynomial regression \citep{Tsybakov2009introduction,fan2018local} to estimate $\eta(\cdot, a)$. Denote $t=(t_j)_{j\in [d]}\in\N^d$, $|t|=\sum_{j\in[d]}t_j$. For $x=(x_j)_{j\in[d]}\in\R^d$, we denote $x^t=\prod_{j\in[d]}x_j^{t_j}$ and denote $V_Y(\cdot):\R^d\rightarrow\R^{\lfloor\beta_Y\rfloor+d\choose d}$ to be a vector-valued function indexed by $t$ with $|t|\le\lfloor\beta_Y\rfloor$ and satisfies $\big(V_Y(x)\big)_t=x^t$. For $h_Y>0,x\in[0,1]^d$ and a kernel $\KK:\R^d\rightarrow\R_+$, denote $\hat\theta_{Y,a}(x)\in\R^{\lfloor\beta_Y\rfloor+d\choose d},a\in[2]$ to be
\[\hat\theta_{Y,a}(x)=\argmin_{\theta\in\R^{\lfloor\beta_Y\rfloor+d\choose d}}\sum_{i\in[\tilde n]}\bigg(\tilde Y_i-V_Y^\top\bigg(\frac{\tilde X_i-x}{h_Y}\bigg)\theta\bigg)^2\KK\bigg(\frac{\tilde X_i-x}{h_Y}\bigg)\1(\tilde A_i=a),\]
then the local polynomial estimators $\hat\eta(\cdot,a)$ are
\[\hat\eta(\cdot,a)=V_Y^\top(0)\hat\theta_{Y,a}(x).\]
If we choose the kernel $\KK$ such that $\KK\in\HH(1,L_K)$ and there exist constants $k_l,k_u>0$ such that $k_l\1(\|x\|_2\le k_l)\le\KK(x)\le k_u\1(\|x\|_2\le 1)$ for any $x\in\R^d$, then we can control the estimation error of the local polynomial estimators as follows. The proof of Lemma~\ref{lem:local_polynomial_aware} is similar to that of Theorem 3.2 in \cite{audibert2007fast} and Theorem 1.8 in \cite{Tsybakov2009introduction}, so it is omitted.

\begin{Lemma}[Initial Estimators]\label{lem:local_polynomial_aware}
    Choose $h_Y\asymp\big(\frac{d\log \tilde n+\log\frac{1}{\delta_{\rm init}}}{\tilde n}\big)^{\frac{1}{2\beta_Y+d}}$. Under Assumptions~\ref{ass:holder_aware}, \ref{ass:density_aware} and \ref{ass:observe}, with probability at least $1-\frac{\delta_{\rm init}}{2}$, we have
    \[\max_{a\in[2]}\|\hat\eta(\cdot,a)-\eta(\cdot,a)\|_\infty\lesssim\bigg(\frac{d\log \tilde n+\log\frac{1}{\delta_{\rm init}}}{\tilde n}\bigg)^{\frac{\beta_Y}{2\beta_Y+d}}.\]
\end{Lemma}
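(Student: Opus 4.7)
The plan is to follow the classical bias-variance analysis of local polynomial regression (as in Audibert-Tsybakov and Tsybakov's textbook) and adapt it to the group-conditional setting by exploiting Assumption~\ref{ass:observe} so that, for each $a\in[2]$, we effectively have $\Theta(\tilde n)$ samples with $\tilde A_i=a$. Fix $a\in[2]$ and $x\in\X$; write $U_i(x)=V_Y\big((\tilde X_i-x)/h_Y\big)$ and $W_i(x)=\KK\big((\tilde X_i-x)/h_Y\big)\1(\tilde A_i=a)$. The normal equations then give $\hat\theta_{Y,a}(x)=\hat B_x^{-1}\hat c_x$, where $\hat B_x=(\tilde nh_Y^d)^{-1}\sum_i U_i(x)U_i(x)^\top W_i(x)$ and $\hat c_x=(\tilde nh_Y^d)^{-1}\sum_i U_i(x)\tilde Y_iW_i(x)$, so that $\hat\eta(x,a)-\eta(x,a)=V_Y^\top(0)\hat B_x^{-1}\hat c_x-\eta(x,a)$.

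First I would show that the empirical Gram matrix $\hat B_x$ is uniformly well conditioned: its population counterpart $B_x=\E\hat B_x$ has smallest eigenvalue bounded below by a positive constant, uniformly in $x\in\X$, thanks to the lower bound on the densities $p_{X|a}$, the bracketing lower bound $k_l\1(\|u\|_2\le k_l)$ on the kernel, the regularity $\mathrm{Leb}(\X\cap B_2(x,r))\gtrsim r^d$, and the group probability lower bound $p_{1,a}\ge c_5$ from Assumption~\ref{ass:observe}. A matrix Bernstein bound then shows $\|\hat B_x-B_x\|_{\mathrm{op}}=o(1)$ for the stated $h_Y$, so $\hat B_x^{-1}$ is bounded in operator norm with high probability.

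Next I would decompose $\tilde Y_i=\eta(\tilde X_i,\tilde A_i)+\xi_i$ with $|\xi_i|\le 1$ and $\E[\xi_i\mid \tilde X_i,\tilde A_i]=0$, and on the localization event $W_i(x)>0$ replace $\eta(\tilde X_i,a)$ by its degree-$\lfloor\beta_Y\rfloor$ Taylor polynomial $\eta_{\lfloor\beta_Y\rfloor,x}$ of $\eta(\cdot,a)$ at $x$. The Taylor part reconstructs $\eta(x,a)$ exactly through the identity $V_Y^\top(0)\hat B_x^{-1}\hat B_x\theta^\star_x=V_Y^\top(0)\theta^\star_x=\eta(x,a)$, where $\theta^\star_x$ collects the Taylor coefficients scaled by $h_Y$. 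The Hölder remainder contributes at most $L_Y h_Y^{\beta_Y}$ in sup-norm on the kernel support by Assumption~\ref{ass:holder_aware}, giving a deterministic bias of order $h_Y^{\beta_Y}$. The stochastic part $V_Y^\top(0)\hat B_x^{-1}(\tilde n h_Y^d)^{-1}\sum_i U_i(x)W_i(x)\xi_i$ is a bounded sum with conditional variance of order $(\tilde n h_Y^d)^{-1}$, so Hoeffding's inequality yields a pointwise stochastic error of order $\sqrt{\log(1/\delta)/(\tilde n h_Y^d)}$.

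The main obstacle is converting the pointwise bound into the sup-in-$x$ bound without blowing up the logarithmic factor. I would build an $\epsilon$-net of $[0,1]^d$ with $\epsilon$ a polynomial in $1/\tilde n$; the integrand is Lipschitz in $x$ because $\KK\in\HH(1,L_K)$ and the coordinates of $V_Y$ are polynomials, so the fluctuation between neighbouring grid points is controlled deterministically. A union bound over the net has log-cardinality $\asymp d\log\tilde n$, which combines with the $\log(1/\delta_{\rm init})$ tolerance into $\log(\tilde n^d/\delta_{\rm init})\asymp d\log\tilde n+\log(1/\delta_{\rm init})$. Summing bias and uniform variance gives $\sup_x|\hat\eta(x,a)-\eta(x,a)|\lesssim h_Y^{\beta_Y}+\sqrt{(d\log\tilde n+\log(1/\delta_{\rm init}))/(\tilde n h_Y^d)}$; the choice of $h_Y$ balances the two terms, and a final union bound over $a\in\{1,2\}$ (replacing $\delta_{\rm init}/2$ by $\delta_{\rm init}/4$) gives the claimed rate.
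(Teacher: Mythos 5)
Your proposal follows essentially the same route the paper intends: the paper omits this proof and defers to Theorem 3.2 of Audibert--Tsybakov (2007) and Theorem 1.8 of Tsybakov (2009), and your bias--variance analysis of the group-conditional local polynomial estimator (Gram-matrix conditioning via the strong density and observability assumptions, Taylor/H\"older bias of order $h_Y^{\beta_Y}$, stochastic term of order $(\tilde n h_Y^d)^{-1/2}$, and an $\epsilon$-net union bound contributing the $d\log\tilde n$ factor) is exactly that argument. One small correction: for the stochastic term you need Bernstein's inequality rather than Hoeffding's --- the summands are bounded by $C(\tilde n h_Y^d)^{-1}$, so Hoeffding only gives a deviation of order $\sqrt{\log(1/\delta)/(\tilde n h_Y^{2d})}$, whereas Bernstein exploits the variance $\asymp(\tilde n h_Y^d)^{-1}$ that you already computed and yields the claimed $\sqrt{\log(1/\delta)/(\tilde n h_Y^{d})}$ rate (the linear Bernstein term is lower order since $\tilde n h_Y^d\gtrsim d\log\tilde n+\log(1/\delta_{\rm init})$ for the stated bandwidth).
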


\subsection{Group-Blind Scenario}

We use local polynomial regression to estimate $\eta$ and $\rho_{1|1}$. Recall that $V_Y(\cdot):\R^d\rightarrow\R^{\lfloor\beta_Y\rfloor+d\choose d}$ is a vector-valued function indexed by $t$ with $|t|\le\lfloor\beta_Y\rfloor$ and satisfies $\big(V_Y(x)\big)_t=x^t$. 
Similarly, suppose $V_A(\cdot):\R^d\rightarrow\R^{\lfloor\beta_A\rfloor+d\choose d}$ is indexed by $t$ with $|t|\le\lfloor\beta_A\rfloor$ and satisfies $\big(V_A(x)\big)_t=x^t$. 
For $h_Y, h_A>0$, $x\in[0,1]^d$ and the same kernel $\KK$ as in Section~\ref{sec:binary_eoo_aware}, denote $\hat\theta_Y(x)\in\R^{\lfloor\beta_Y\rfloor+d\choose d}$ and $\hat\theta_A(x)\in\R^{\lfloor\beta_A\rfloor+d\choose d}$ to be
\[\hat\theta_Y(x)=\argmin_{\theta\in\R^{\lfloor\beta_Y\rfloor+d\choose d}}\sum_{i\in[\tilde n]}\bigg(\tilde Y_i-V_Y^\top\bigg(\frac{\tilde X_i-x}{h_Y}\bigg)\theta\bigg)^2\KK\bigg(\frac{\tilde X_i-x}{h_Y}\bigg),\]
\[\hat\theta_A(x)=\argmin_{\theta\in\R^{\lfloor\beta_A\rfloor+d\choose d}}\sum_{\tilde Y_i=1,i\in[\tilde n]}\bigg(2-\tilde A_i-V_A^\top\bigg(\frac{\tilde X_i-x}{h_A}\bigg)\theta\bigg)^2\KK\bigg(\frac{\tilde X_i-x}{h_A}\bigg),\]
then the local polynomial estimators are 
\[\hat\eta(x)=V_Y^\top(0)\hat\theta_Y(x), \quad\hat\rho_{1|1}(x)=V_A^\top(0)\hat\theta_A(x).\]
Denote $n_{1,a}=\sum_{i\in[n]}\1(Y_i=1,A_i=a)$, $\tilde n_Y=\sum_{i\in[\tilde n]}\1(\tilde Y_i=1)$, $\tilde n_{1,a}=\sum_{i\in[\tilde n]}\1(\tilde Y_i=1,\tilde A_i=a)$, $a\in[2]$. Then we can control the estimation error of the local polynomial estimators as follows. The proof of Lemma~\ref{lem:local_polynomial_blind} is similar to that of Lemma~\ref{lem:local_polynomial_aware}, so is omitted.

\begin{Lemma}[Initial Estimators]\label{lem:local_polynomial_blind}
    Choose $h_Y\asymp\big(\frac{d\log \tilde n+\log\frac{1}{\delta_{\rm init}}}{\tilde n}\big)^{\frac{1}{2\beta_Y+d}}$, $h_A\asymp\big(\frac{d\log \tilde n+\log\frac{1}{\delta_{\rm init}}}{\tilde n}\big)^{\frac{1}{2\beta_A+d}}$. Under Assumptions~\ref{ass:observe}, \ref{ass:holder_blind} and \ref{ass:density_blind}, with probability at least $1-\frac{\delta_{\rm init}}{2}$, we have
    \[\|\hat\eta-\eta\|_\infty\lesssim\bigg(\frac{d\log \tilde n+\log\frac{1}{\delta_{\rm init}}}{\tilde n}\bigg)^{\frac{\beta_Y}{2\beta_Y+d}},\quad\|\hat\rho_{1|1}-\rho_{1|1}\|_\infty\lesssim\bigg(\frac{d\log \tilde n+\log\frac{1}{\delta_{\rm init}}}{\tilde n}\bigg)^{\frac{\beta_A}{2\beta_A+d}}.\]
\end{Lemma}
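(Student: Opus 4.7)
The plan is to follow the standard uniform-convergence analysis for local polynomial regression, as in Theorem 3.2 of Audibert–Tsybakov (2007) and Theorem 1.8 of Tsybakov (2009), adapted separately to $\hat\eta$ and to $\hat\rho_{1|1}$. For each fixed $x\in[0,1]^d$ I would write $\hat\theta_Y(x)=B_Y(x)^{-1}b_Y(x)$, where $B_Y(x)=\sum_{i\in[\tilde n]}V_Y\!\big(\tfrac{\tilde X_i-x}{h_Y}\big)V_Y^\top\!\big(\tfrac{\tilde X_i-x}{h_Y}\big)\KK\!\big(\tfrac{\tilde X_i-x}{h_Y}\big)$ is the local weighted Gram matrix, and decompose $\hat\eta(x)-\eta(x)$ into a \emph{bias} part (obtained by replacing $\tilde Y_i$ with $\eta(\tilde X_i)$) and a mean-zero \emph{stochastic} part involving the residuals $\tilde Y_i-\eta(\tilde X_i)$.

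For the bias term, the H\"older condition $\eta\in\HH(\beta_Y,L_Y)$ gives $|\eta(\tilde X_i)-\eta_{\lfloor\beta_Y\rfloor,x}(\tilde X_i)|\le L_Y\|\tilde X_i-x\|_2^{\beta_Y}$, and since the kernel $\KK$ is supported in the unit ball, only samples with $\|\tilde X_i-x\|_2\le h_Y$ contribute, yielding a pointwise bias of order $h_Y^{\beta_Y}$. For the stochastic term, the residuals are bounded by $1$, so conditionally on $\{\tilde X_i\}$ a Hoeffding/Bernstein bound, combined with a lower bound on the smallest eigenvalue of $B_Y(x)$ of order $\tilde n h_Y^d$, yields a pointwise deviation of order $\sqrt{(d\log\tilde n+\log(1/\delta_{\rm init}))/(\tilde n h_Y^d)}$; the spectral lower bound here is the key step and is exactly where Assumption~\ref{ass:density_blind} (both the density lower bound $c_X$ and the support regularity $\mathrm{Leb}(\X\cap B_2(x,r))\ge c_\mu\mathrm{Leb}(B_2(x,r))$) is used. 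Balancing bias and variance gives the stated choice $h_Y\asymp\big(\tfrac{d\log\tilde n+\log(1/\delta_{\rm init})}{\tilde n}\big)^{1/(2\beta_Y+d)}$ and the corresponding pointwise rate. To promote this to a uniform bound, I would cover $[0,1]^d$ by an $\epsilon$-net of cardinality $(C/\epsilon)^d$ with $\epsilon$ polynomially small in $\tilde n$, apply the pointwise bound by a union bound (which is what produces the $d\log\tilde n$ inside the rate), and use Lipschitzness of $x\mapsto\hat\eta(x)$ (inherited from $\KK\in\HH(1,L_K)$ and $V_Y$) to interpolate off the net.

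The argument for $\hat\rho_{1|1}$ is parallel, with two modifications. First, the regression uses only the subsample $\{i:\tilde Y_i=1\}$, whose size $\tilde n_Y$ concentrates around $p_Y\tilde n=(p_{1,1}+p_{1,2})\tilde n\ge 2c_5\tilde n$ by Hoeffding under Assumption~\ref{ass:observe}, so the effective sample size is $\asymp \tilde n$ on an event of probability $\ge 1-\delta_{\rm init}/4$. Second, since $\tilde A_i\in\{1,2\}$, $\mathbb{E}[2-\tilde A_i\mid\tilde X_i,\tilde Y_i=1]=\rho_{1|1}(\tilde X_i)$, and the responses $2-\tilde A_i$ are again bounded in $\{0,1\}$, so the same bias–variance decomposition applies with $\rho_{1|1}\in\HH(\beta_A,L_A)$ in place of $\eta$, yielding the rate with exponent $\beta_A/(2\beta_A+d)$. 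A union bound over the two high-probability events (each at level $\delta_{\rm init}/4$ after absorbing the sample-size event) delivers the claim. The main technical obstacle, as in Lemma~\ref{lem:local_polynomial_aware}, is the uniform well-conditioning of $B_Y(x)$ and the analogous Gram matrix for $\rho_{1|1}$ across all $x\in[0,1]^d$; this requires combining the strong density assumption with a concentration argument for empirical measures of the kernel-weighted neighborhoods, and everything else is routine.
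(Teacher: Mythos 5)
Your proposal is correct and follows exactly the route the paper intends: the paper omits this proof, stating only that it is analogous to Lemma~\ref{lem:local_polynomial_aware}, which in turn defers to Theorem 3.2 of Audibert--Tsybakov (2007) and Theorem 1.8 of Tsybakov (2009) --- precisely the bias/stochastic decomposition, Gram-matrix lower bound via the strong density assumption, and $\epsilon$-net union bound you describe, with the subsample-size concentration under Assumption~\ref{ass:observe} handling the restriction to $\{\tilde Y_i=1\}$ for $\hat\rho_{1|1}$.
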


\section{Minimax Expected Excess Risk Lower Bound}

\begin{Theorem}[Minimax Expected Excess Risk Lower Bound]\label{thm:lower_expected_binary_eoo}
    Under the assumptions in Theorem~\ref{thm:lower_binary_eoo}, for any $\alpha>0$, there exist models $P$ such that $|\lambda^{*{\rm blind}}_{\alpha,P}|\asymp \alpha^{-1}$, and we have the minimax lower bounds for the expected excess risks, 
    \begin{equation}\label{eq:lower_aware_expected}
        \inf_{\A^{\rm aware}\in\mathscr{A}^{\rm aware}}\sup_{P\in\mathscr{P}}\bigg\{\E_{\D_{\rm all}\sim P^{\otimes N}}\risk_P\big(\A^{\rm aware}(\D_{\rm all})\big)-\risk_P(f^{*{\rm aware}}_{\alpha,P})\bigg\}\gtrsim N^{-\frac{\beta_Y(1+\gamma)}{2\beta_Y+d}}(c-\delta),
    \end{equation}
    if $\alpha\lesssim N^{-\frac{\beta_Y\gamma}{(2\beta_Y+d)(1+\gamma)}}$, then
    \begin{equation}\label{eq:lower_blind_expected}
        \begin{aligned}
            &\inf_{\A^{\rm blind}\in\mathscr{A}^{\rm blind}}\sup_{P\in\mathscr{P}}\bigg\{\E_{\D_{\rm all}\sim P^{\otimes N}}\risk_P\big(\A^{\rm blind}(\D_{\rm all})\big)-\risk_P(f^{*{\rm blind}}_{\alpha,P})\bigg\}\\
            \gtrsim& \bigg[\bigg\{\alpha^{-1}N^{-\frac{1}{2}}+\bigg(\alpha^{-1}N^{-\frac{\beta_A}{2\beta_A+d}}\bigg)^{1+\gamma}+N^{-\frac{\beta_Y(1+\gamma)}{2\beta_Y+d}}\bigg\}\wedge 1\bigg](c-\delta),
        \end{aligned}
    \end{equation}
    if $\alpha\gtrsim N^{-\frac{\beta_Y\gamma}{(2\beta_Y+d)(1+\gamma)}}$, then
    \begin{equation}\label{eq:lower_blind_expected_all}
        \begin{aligned}
            &\inf_{\A^{\rm blind}\in\mathscr{A}^{\rm blind}}\sup_{P\in\mathscr{P}}\bigg\{\E_{\D_{\rm all}\sim P^{\otimes N}}\risk_P\big(\A^{\rm blind}(\D_{\rm all})\big)-\risk_P(f^{*{\rm blind}}_{\alpha,P})\bigg\}\\
            \gtrsim& \bigg[\bigg\{\alpha^{-1}N^{-\frac{1}{2}}+\bigg(\alpha^{-1}N^{-\frac{\beta_A}{2\beta_A+d}}\bigg)^{1+\gamma}+\big(\alpha^{-1}N^{-\frac{\beta_Y}{2\beta_Y+d}}\big)^{1+\gamma}\bigg\}\wedge 1\bigg](c-\delta).
        \end{aligned}
    \end{equation}
\end{Theorem}
\cite{xian2024unified} also identified an $O(\alpha^{-1})$ dependence in the excess risk analysis for their algorithm. While they showed that the $O(\alpha^{-1})$ dependence is necessary for their proposed algorithm, our result reveals that it is inevitable for any algorithm.

\begin{Remark}
The upper bounds for expected excess risks follow directly from Equations~\eqref{eq:excess_risk_upper_binary_eoo_aware} and \eqref{eq:excess_risk_upper_binary_eoo_blind} by choosing proper $\delta$. Similar to Remark~\ref{rem:minimax}, recall from Remark~\ref{rem:lambda_upper_unify} that $|\lambda^{*{\rm blind}}_\alpha|\le\alpha^{-1}$, then we know the expected group-blind excess risk of Algorithm~\ref{alg:binary_unified} is minimax optimal up to logarithmic factors. Since Equation~\eqref{eq:lambda_upper_bound_eoo_aware} implies $|\lambda^{*{\rm aware}}_\alpha|\le 1$, when $2\beta_Y\gamma\le d$, the expected group-aware excess risk of Algorithm~\ref{alg:binary_unified} is also minimax optimal up to logarithmic factors.
\end{Remark}

\begin{Remark}[\textbf{Cost of Group-blindness}]\label{rem:cost_blind}
    By comparing the group-aware excess risk upper bound~\eqref{eq:excess_risk_upper_binary_eoo_aware} to the group-blind lower bound~\eqref{eq:lower_blind}, we observe two sources of cost of group-blindness: 
    
    On the one hand, the group-blind lower bound~\eqref{eq:lower_blind} contains an extra term $O_P(|\lambda^{*{\rm blind}}_\alpha|^{1+\gamma}N^{-\frac{\beta_A(1+\gamma)}{2\beta_A+d}})$. Recall that $|\lambda^{*{\rm blind}}_\alpha|$ is the magnitude of translation from $\1(2\eta^{\rm blind}>1)$ to $f^{*{\rm blind}}_\alpha$, and $O_P(N^{-\frac{\beta_A}{2\beta_A+d}})$ is the error of estimating the prediction function $\rho_{1|1}$ of $A$ given $X$ and $Y=1$.

    On the other hand, as we have argued in Equation~\eqref{eq:lambda_upper_bound_eoo_aware} and Theorem~\ref{thm:lower_expected_binary_eoo}, the group-aware $|\lambda^{*{\rm aware}}_\alpha|$ is always less than 1 but the group-blind $|\lambda^{*{\rm blind}}_\alpha|$ can be as large as $O(\alpha^{-1})$. The latter happens when $(X,Y)$ contains little information about $A$. Specifically, recall from the discussion of Example~\ref{exa:phi_eoo} that $|\phi^{\rm blind}|$ roughly characterizes the confidence of predicting $A$ given $X$ and $Y=1$, i.e., the amount of information of $A$ contained in $X$ and $Y$. When predicting $A$ is relatively hard such that $|\phi^{\rm blind}|\asymp \alpha$, suppose, for example, $|\E\phi^{\rm blind}(X)\1(2\eta^{\rm blind}(X)>1)|=2\alpha$, then it may require $\lambda^{*{\rm blind}}_\alpha\asymp\alpha^{-1}$ to adjust $\1(2\eta^{\rm blind}>1)$ such that $|\E\phi^{\rm blind}(X)\1(2\eta^{\rm blind}(X)-1>\lambda^{*{\rm blind}}_\alpha\phi^{\rm blind}(X))|=\alpha$. In that case, the group-blind lower bound \eqref{eq:lower_blind_expected} becomes a constant when $\alpha\lesssim N^{-\frac{\beta_A}{2\beta_A+d}}$, making the group-blind excess risk larger than the group-aware one due to the larger $|\lambda^{*{\rm blind}}_\alpha|$. Our rate provides an exact quantification of how the cost of group-blindness depends on the difficulty of predicting the sensitive attribute $A$ using $X$.
\end{Remark}

\begin{Remark}[\textbf{Optimal Trade-off Between Excess Risk and Fairness }]\label{rem:tradeoff_blind}
    The optimal expected group-blind excess risk~\eqref{eq:lower_blind_expected} and \eqref{eq:lower_blind_expected_all} are decreasing in $\alpha$, therefore we reveal the trade-off between algorithmic fairness and group-blind excess risk.

    Intuitively, as $\alpha$ decreases, fewer classifiers remain $\alpha$-fair, one might expect easier identification of the Bayes optimal $\alpha$-fair classifier, which results in a smaller excess risk. 
    However, surprisingly, decreasing $\alpha$ leads to an increase in the optimal group-blind excess risk~\eqref{eq:lower_blind_expected} and \eqref{eq:lower_blind_expected_all}. 
    To explain this counterintuitive phenomenon, we decompose the excess risk as follows,
    \begin{equation}\label{eq:excess_risk_decomposition_blind}
    \begin{aligned}
    &\risk(\hat f_\alpha^{\rm blind})-\risk(f^{*{\rm blind}}_\alpha)\\
    =&\E_X\big(2\eta^{\rm blind}(X)-1\big)\big(f^{*{\rm blind}}_\alpha (X)-\hat f_\alpha^{\rm blind}(X)\big)\\
    =&\underbrace{\E_X|2\eta^{\rm blind}(X)-1-\lambda^{*{\rm blind}}_\alpha\phi^{\rm blind}(X)||f^{*{\rm blind}}_\alpha (X)-\hat f_\alpha^{\rm blind}(X)|}_{T_1}\\
    &+\underbrace{\lambda^{*{\rm blind}}_\alpha \E_X\phi^{\rm blind}(X)\big(f^{*{\rm blind}}_\alpha (X)-\hat f^{\rm blind}_\alpha(X)\big)}_{T_2}.
    \end{aligned}
    \end{equation}
    For $T_2$, due to fairness constraints, we know
    \begin{align*}
    T_2=|\lambda^{*{\rm blind}}_\alpha |\alpha-\lambda^{*{\rm blind}}_\alpha \E_X\phi^{\rm blind}(X)\hat f^{\rm blind}_\alpha(X)
    \ge |\lambda^{*{\rm blind}}_\alpha |\big(\alpha-\U(\hat f^{\rm blind}_\alpha)\big)
    \ge 0.
    \end{align*}
    Note that $|\lambda^{*{\rm blind}}_\alpha|$ can be as large as $O(\alpha^{-1})$, which is decreasing in $\alpha$. 
    On the one hand, since $T_2$ has a multiplicative dependence on $\lambda^{*{\rm blind}}_\alpha$, a decrease in $\alpha$ amplifies $T_2$. As a result, the excess risk itself as a function of $\hat f^{\rm blind}_\alpha$ is potentially decreasing in $\alpha$.
    On the other hand, although the function classes for $\eta^{\rm blind}$ and $\phi^{\rm blind}$ are fixed, the function class for $g^{*{\rm blind}}_\alpha=2\eta^{\rm blind}-1-\lambda^{*{\rm blind}}_\alpha\phi^{\rm blind}$ expands as $\alpha$ decreases. To see this, note that $\phi^{\rm blind}$ is $(\beta_Y,L)$-H\"older smooth for some smoothness coefficient $L$. As $|\lambda^{*{\rm blind}}_\alpha|$ increases, the smoothness coefficient for the function class of $g^{*{\rm blind}}_\alpha$ also increases, leading to a larger minimax lower bound. This occurs through the following mechanism. When constructing minimax lower bounds for $T_1$, we add bumps to $g^{*{\rm blind}}_\alpha$ around the classification boundary $g^{*{\rm blind}}_\alpha=0$. The increasing smoothness coefficient for the function class of $g^{*{\rm blind}}_\alpha$ allows larger bumps of $g^{*{\rm blind}}_\alpha$. Note that the margin assumption constrains the number of bumps around the classification boundary relative to the magnitude of each bump, then larger bumps allow for a greater number of them. This results in a more fluctuant $g^{*{\rm blind}}_\alpha$ around the classification boundary, making $f^{*{\rm blind}}_\alpha$ harder to estimate and consequently leading to an increase in $T_1$.
\end{Remark}

\begin{Remark}[\textbf{Proof Sketch of Theorems~\ref{thm:lower_binary_eoo} and \ref{thm:lower_expected_binary_eoo}}]
    The proof of the lower bound~\eqref{eq:lower_blind}, \eqref{eq:lower_blind_expected} and \eqref{eq:lower_blind_expected_all} is highly nontrivial. The analysis of the excess risk is based on the decomposition~\eqref{eq:excess_risk_decomposition_blind}.
Term $T_1$ can be controlled based on a similar strategy to \cite{audibert2007fast,rigollet2009optimal} using Fano's lemma and the margin assumption~\ref{ass:margin}. However, unlike $T_1$, $T_2$ cannot be bounded by a distance $d(\hat f^{\rm blind}_\alpha,f^{*{\rm blind}}_\alpha )$ from below directly and the triangle inequality fails to hold, so standard tools for proving minimax lower bounds do not apply here. In order to show $T_2$ has minimax lower bound $O_P\big(|\lambda^{*{\rm blind}}_\alpha |(N^{-\frac{1}{2}}\wedge\alpha)\big)$, it suffices to prove that for any algorithm $\A^{\rm blind}\in\mathscr{A}^{\rm blind}$, there exists a distribution $P\in\mathscr{P}$ such that 
\begin{equation}\label{eq:minimax_fhat_unfairness_eoo}
    \Prob_{\D_{\rm all}\sim P^{\otimes N}}\bigg(\lambda^{*{\rm blind}}_{\alpha,P}\E_{X}\phi_P^{\rm blind}(X)\A^{\rm blind}(\D_{\rm all})(X)\le |\lambda^{*{\rm blind}}_{\alpha,P}|\big(\alpha-c(N^{-\frac{1}{2}}\wedge \alpha)\big)\bigg)\ge c-\delta.
\end{equation}
Recall that when triangle inequalities hold, standard methods reduce the lower bound of the algorithm-dependent risk to the testing problem over a set of algorithm-independent distributions that are close in distribution but far away in terms of the risks. However, the triangle inequalities fail to hold in \eqref{eq:minimax_fhat_unfairness_eoo}, so new techniques are required to, either design a set of algorithm-dependent worst-case distributions, or eliminate the impact of specific algorithms. Here we take the second strategy and construct a specific pair of algorithm-independent distributions $P,\bar P\in\mathscr{P}$ that are close in distributions, i.e., ${\rm TV}(P^{\otimes N},\bar P^{\otimes N})\le \tilde c$, but are far away in terms of the unfairness measures simultaneously for all group-blind classifiers $f\in[0,1]^{\X}$, i.e., $\U_{\rm EOO,\bar P}(f)=\U_{\rm EOO,P}(f)\{1-c(\frac{1}{\alpha\sqrt{N}}\wedge 1)\}$. These two properties allow us to eliminate the impact of specific algorithms. Then for any classifier $f$, the fairness constraint under $P$, i.e, $\U_{\rm EOO,P}(f)\le\alpha$, implies $\U_{\rm EOO,\bar P}(f)\le \alpha-c(N^{-\frac{1}{2}}\wedge \alpha)$, hence showing that \eqref{eq:minimax_fhat_unfairness_eoo} is satisfied under $\bar P$.

\end{Remark}

\section{Proofs of Theorem \ref{thm:lower_binary_eoo} and \ref{thm:lower_expected_binary_eoo}}

Since the proofs of Theorem~\ref{thm:lower_binary_eoo} and \ref{thm:lower_expected_binary_eoo} are almost the same, we put them together in the following.

\begin{proof}[Proof of Theorem \ref{thm:lower_binary_eoo} and \ref{thm:lower_expected_binary_eoo}]
    In this proof, we only consider the group-blind scenario. The group-aware lower bound is the same with the lower bound for the unconstrained classification problem. By setting $A$ to be independent of $(X,Y)$, the fairness constrained classification problem reduces to unconstrained classification, then we can conclude the group-aware lower bound similar to the proof of the lower bound in \cite{audibert2007fast} and the group-blind lower bound below.

    Now we prove the group-blind lower bound. For any classifier $\hat f$, by Proposition 1 in \cite{tsybakov2004optimal}, the excess risk of $\hat f$ can be lower bounded as
\begin{align*}
    &\risk(\hat f)-\risk(f^*_\alpha)\\
    =&\E(2\eta(X)-1)(f^*_\alpha (X)-\hat f(X))\\
    =&\E|2\eta(X)-1-\lambda^*_\alpha \phi(X)||f^*_\alpha (X)-\hat f(X)|+\lambda^*_\alpha \E\phi(X)(f^*_\alpha (X)-\hat f(X))\\
    =&\E |g^*_\alpha(X)||f^*_\alpha (X)-\hat f(X)|+\lambda^*_\alpha \E\phi(X)(f^*_\alpha (X)-\hat f(X))\\
    \ge&\underbrace{c\big(\E|f^*_\alpha (X)-\hat f(X)|\1(g^*_\alpha(X)\ne 0)\big)^{\frac{1+\gamma}{\gamma}}}_{T_1(\hat f)}+\underbrace{\lambda^*_\alpha \E\phi(X)(f^*_\alpha (X)-\hat f(X))}_{T_2(\hat f)}.
\end{align*}
While $T_1$ corresponds to the error for estimating the classifier $f^*_\alpha$, $T_2$ is due to the unfairness difference. If $\hat f$ is $\alpha$-fair, we know 
\[T_2(\hat f)=|\lambda^*_\alpha|\alpha-\lambda^*_\alpha\E\phi(X)\hat f(X)\ge|\lambda^*_\alpha|\big(\alpha-|\E\phi(X)\hat f(X)|\big)\ge 0.\]
Therefore, the excess risk is nonnegative for $\alpha$-fair classifiers $\hat f$. However, when $\hat f$ violates the $\alpha$-fair constraint, $T_2(\hat f)$ becomes negative, which may lead to a negative excess risk.

Then, for any $\epsilon$, we have the following reduction for the minimax lower bound of excess risk,
\begin{equation}\label{eq:minimax_reduction_binary_eoo}
    \begin{aligned}
    &\inf_{\A\in\mathscr{A}^{\rm blind}}\sup_{P\in\mathscr{P}}\Prob_{\D_{\rm all}\sim P^{\otimes N}}\big(\risk_P(\A(\D_{\rm all}))-\risk_P(f^*_{\alpha,P})\ge\epsilon\big)\\
    \ge&\inf_{\A\in\mathscr{A}^{\rm blind}}\sup_{P\in\mathscr{P}}\Prob_{\D_{\rm all}\sim P^{\otimes N}}\big(T_1(\A(\D_{\rm all}))+T_2(\A(\D_{\rm all}))\ge\epsilon\big)\\
    \ge&\inf_{\A\in\mathscr{A}^{\rm blind}}\sup_{P\in\mathscr{P}}\Prob_{\D_{\rm all}\sim P^{\otimes N}}\big(T_1(\A(\D_{\rm all}))+T_2(\A(\D_{\rm all}))\ge\epsilon,T_2(\A(\D_{\rm all}))\ge 0\big)\\
    \ge&\inf_{\A\in\mathscr{A}^{\rm blind}}\sup_{P\in\mathscr{P}}\Prob_{\D_{\rm all}\sim P^{\otimes N}}\big(T_1(\A(\D_{\rm all}))\ge\epsilon\big)-\Prob_{\D_{\rm all}\sim P^{\otimes N}}\big(T_2(\A(\D_{\rm all}))<0\big)\\
    \ge&\inf_{\A\in\mathscr{A}^{\rm blind}}\sup_{P\in\mathscr{P}}\Prob_{\D_{\rm all}\sim P^{\otimes N}}\big(T_1(\A(\D_{\rm all}))\ge\epsilon\big)-\Prob_{\D_{\rm all}\sim P^{\otimes N}}\big(\U_{\rm EOO,P}(\A(\D_{\rm all}))>\alpha\big)\\
    \ge&\inf_{\A\in\mathscr{A}^{\rm blind}}\sup_{P\in\mathscr{P}}\Prob_{\D_{\rm all}\sim P^{\otimes N}}\big(T_1(\A(\D_{\rm all}))\ge\epsilon\big)-\delta.
\end{aligned}
\end{equation}
And we also have
\begin{align*}
    \inf_{\A\in\mathscr{A}^{\rm blind}}\sup_{P\in\mathscr{P}}\Prob_{\D_{\rm all}\sim P^{\otimes N}}\big(\risk_P(\A(\D_{\rm all}))-\risk_P(f^*_{\alpha,P})\ge\epsilon\big)
    \ge\inf_{\A\in\mathscr{A}^{\rm blind}}\sup_{P\in\mathscr{P}}\Prob_{\D_{\rm all}\sim P^{\otimes N}}\big(T_2(\A(\D_{\rm all}))\ge\epsilon\big).
\end{align*}

Based on the reduction above, we can make the following claims.
\begin{Claim}\label{clm:rho}
    There exist a subclass of distributions $\tilde{\mathscr{P}}\subset\mathscr{P}$ and a constant $c\in(0,1)$, such that 
    \[|\lambda_{\alpha,P}^*|\asymp\alpha^{-1}\wedge N^{\frac{\beta_A}{2\beta_A+d}},\quad\forall P\in\tilde{\mathscr{P}},\]
    and the minimax excess risk lower bound due to the estimation error of $\rho_{1|1}$ is
    \[\inf_{\A\in\mathscr{A}^{\rm blind}}\sup_{P\in\tilde{\mathscr{P}}}\Prob_{\D_{\rm all}\sim P^{\otimes N}}\bigg(T_1\big(\A(\D_{\rm all})\big)\gtrsim\big(\alpha^{-1}N^{-\frac{\beta_A}{2\beta_A+d}}\big)^{1+\gamma}\wedge 1\bigg)\ge c.\]
\end{Claim}

\begin{Claim}\label{clm:eta}
    \noindent
    \begin{enumerate}[1)]
        \item There exist a subclass of distributions $\tilde{\mathscr{P}}\subset\mathscr{P}$ and a constant $c\in(0,1)$, such that 
    \[|\lambda_{\alpha,P}^*|=0,\quad\forall P\in\tilde{\mathscr{P}},\]
    and the minimax excess risk lower bound due to the estimation error of $\eta$ is 
    \[\inf_{\A\in\mathscr{A}^{\rm blind}}\sup_{P\in\tilde{\mathscr{P}}}\Prob_{\D_{\rm all}\sim P^{\otimes N}}\bigg(T_1\big(\A(\D_{\rm all})\big)\gtrsim N^{-\frac{\beta_Y(1+\gamma)}{2\beta_Y+d}}\bigg)\ge c.\]
    
        \item If $\alpha\gtrsim N^{-\frac{\beta_Y\gamma}{(2\beta_Y+d)(1+\gamma)}}$, there exist a subclass of distributions $\tilde{\mathscr{P}}\subset\mathscr{P}$ and a constant $c\in(0,1)$, such that 
    \[|\lambda_{\alpha,P}^*|\asymp\alpha^{-1},\quad\forall P\in\tilde{\mathscr{P}},\]
    and the minimax excess risk lower bound due to the estimation error of $\eta$ is 
    \[\inf_{\A\in\mathscr{A}^{\rm blind}}\sup_{P\in\tilde{\mathscr{P}}}\Prob_{\D_{\rm all}\sim P^{\otimes N}}\bigg(T_1\big(\A(\D_{\rm all})\big)\gtrsim\big(\alpha^{-1}N^{-\frac{\beta_Y}{2\beta_Y+d}}\big)^{1+\gamma}\bigg)\ge c.\]
    \item If $\alpha\lesssim N^{-\frac{\beta_Y\gamma}{(2\beta_Y+d)(1+\gamma)}}$, there exist a subclass of distributions $\tilde{\mathscr{P}}\subset\mathscr{P}$ and a constant $c\in(0,1)$, such that 
    \[|\lambda_{\alpha,P}^*|\asymp 1,\quad\forall P\in\tilde{\mathscr{P}},\]
    and the minimax excess risk lower bound due to the estimation error of $\eta$ is 
    \[\inf_{\A\in\mathscr{A}^{\rm blind}}\sup_{P\in\tilde{\mathscr{P}}}\Prob_{\D_{\rm all}\sim P^{\otimes N}}\bigg(T_1\big(\A(\D_{\rm all})\big)\gtrsim N^{-\frac{\beta_Y(1+\gamma)}{2\beta_Y+d}}\bigg)\ge c.\]
    \end{enumerate}

\end{Claim}

\begin{Claim}\label{clm:fair}
    There exist a subclass of distributions $\tilde{\mathscr{P}}\subset\mathscr{P}$ and a constant $c\in(0,1)$, such that 
    \[|\lambda_{\alpha,P}^*|\asymp\alpha^{-1},\quad\forall P\in\tilde{\mathscr{P}},\]
    and the minimax excess risk lower bound due to the conservativeness in fairness control is
    \[\inf_{\A\in\mathscr{A}^{\rm blind}}\sup_{P\in\tilde{\mathscr{P}}}\Prob_{\D_{\rm all}\sim P^{\otimes N}}\bigg(T_2\big(\A(\D_{\rm all})\big)\gtrsim\big(\alpha^{-1}N^{-\frac{1}{2}}\wedge 1\big)\bigg)\ge c-\delta.\]
\end{Claim}

Combining the reduction \eqref{eq:minimax_reduction_binary_eoo} and Claims \ref{clm:rho}, \ref{clm:eta}, and \ref{clm:fair}, we conclude the minimax excess risk lower bound,
\begin{align*}
    &\inf_{\A\in\mathscr{A}^{\rm blind}}\sup_{P\in\mathscr{P}}\Prob_{\D_{\rm all}\sim P^{\otimes N}}\bigg(\risk_P(\A(\D_{\rm all}))-\risk_P(f^*_{\alpha,P})\gtrsim\\
    &|\lambda^*_{\alpha,P}|(N^{-\frac{1}{2}}\wedge\alpha)+\bigg(|\lambda^*_{\alpha,P}|N^{-\frac{\beta_A}{2\beta_A+d}}\bigg)^{1+\gamma}+\bigg(\big(1+|\lambda^*_{\alpha,P}|\big)N^{-\frac{\beta_Y}{2\beta_Y+d}}\bigg)^{1+\gamma}\bigg)\ge c-\delta.
\end{align*}
By Markov's inequality, we know
\[\Prob_{\D_{\rm all}\sim P^{\otimes N}}\bigg(\risk_P(\A(\D_{\rm all}))-\risk_P(f^*_{\alpha,P})\ge \epsilon\bigg)\le\frac{\E_{\D_{\rm all}\sim P^{\otimes N}}\risk_P\big(\A(\D_{\rm all})\big)-\risk_P(f^{*}_{\alpha,P})}{\epsilon},\]
then we obtain the following minimax expected excess risk lower bounds.
If $\alpha\lesssim N^{-\frac{\beta_Y\gamma}{(2\beta_Y+d)(1+\gamma)}}$, then
        \begin{align*}
            &\inf_{\A^{\rm blind}\in\mathscr{A}^{\rm blind}}\sup_{P\in\mathscr{P}}\bigg\{\E_{\D_{\rm all}\sim P^{\otimes N}}\risk_P\big(\A^{\rm blind}(\D_{\rm all})\big)-\risk_P(f^{*{\rm blind}}_{\alpha,P})\bigg\}\\
            \gtrsim& \bigg[\bigg\{\alpha^{-1}N^{-\frac{1}{2}}+\bigg(\alpha^{-1}N^{-\frac{\beta_A}{2\beta_A+d}}\bigg)^{1+\gamma}+N^{-\frac{\beta_Y(1+\gamma)}{2\beta_Y+d}}\bigg\}\wedge 1\bigg](c-\delta),
        \end{align*}
    if $\alpha\gtrsim N^{-\frac{\beta_Y\gamma}{(2\beta_Y+d)(1+\gamma)}}$, then
        \begin{align*}
            &\inf_{\A^{\rm blind}\in\mathscr{A}^{\rm blind}}\sup_{P\in\mathscr{P}}\bigg\{\E_{\D_{\rm all}\sim P^{\otimes N}}\risk_P\big(\A^{\rm blind}(\D_{\rm all})\big)-\risk_P(f^{*{\rm blind}}_{\alpha,P})\bigg\}\\
            \gtrsim& \bigg[\bigg\{\alpha^{-1}N^{-\frac{1}{2}}+\bigg(\alpha^{-1}N^{-\frac{\beta_A}{2\beta_A+d}}\bigg)^{1+\gamma}+\big(\alpha^{-1}N^{-\frac{\beta_Y}{2\beta_Y+d}}\big)^{1+\gamma}\bigg\}\wedge 1\bigg](c-\delta).
        \end{align*}

\end{proof}

Note that $P_{X,A,Y}=P_XP_{Y|X}P_{A|X,Y}$. In the following, we prove Claims \ref{clm:rho}, \ref{clm:eta}, and \ref{clm:fair} separately based on some specified family of $P_{X,A,Y}$.

\subsection{Proof of Claim \ref{clm:rho}}

\begin{proof}[Proof of Claim \ref{clm:rho}]
This proof focuses on constructing the subclass $\tilde{\mathscr{P}}$ of distribution $P_{X,A,Y}=P_XP_{Y|X}P_{A|X,Y}$. Recall $p_X$ is the density of $P_X$, $\eta(X)=\Prob(Y=1|X)$, and $\rho_{a|y}(X)=\Prob(A=a|X,Y=y)$. Then it suffices to specify $p_X$, $\eta$, and $\rho_{a|y}$ respectively.

For some integer $M$, we define the index vector $j$ as $j=(j_1,\ldots,j_d)$ and denote the grids on $[0,1]^d$ as
\[G_M=\bigg\{\frac{2j-1}{14M}:j\in[7M]^d\bigg\},\]
with
\[\frac{2j-1}{14M}=\bigg(\frac{2j_1-1}{14M},\ldots,\frac{2j_d-1}{14M}\bigg).\]
For any $x\in[0,1]^d$, we denote $n_M(x)\in G_M$ to be the closest point to $x$ among $G_M$. Then we can construct a partition of $[0,1]^d$ as $\{\X_j:j\in[7M]^d\}$ with
\[\X_j=\bigg\{x:n_M(x)=\frac{2j-1}{14M}\bigg\}.\]
For some integer $m\le 7^{d-1}M^d$, denote $\II$ to be a set of indexes with $|\II|= m$, $\II\subset([M-2]+M+1)\times[7M]^{d-1}$. Then we define $\X_0=[\frac{1}{7},\frac{2}{7}]\times[0,1]^{d-1}\setminus\cup_{j\in\II}\X_j$.

Let
\[h(z)=\frac{\int_{z}^{\frac{1}{2}}h_1(t)dt}{\int_0^{\frac{1}{2}}h_1(t)dt},\quad h_1(z)=\left\{\begin{matrix}
e^{-\frac{1}{z(1-z)}},& {\rm if}~ z\in[0,1],\\
0,& {\rm otherwise},
\end{matrix}\right.\]
\[u(z)=\frac{\int_z^\infty u_1(t)dt}{\int_{\frac{1}{28}}^{\frac{1}{14}}u_1(t)dt},\quad u_1(z)=\left\{\begin{matrix}
    e^{-\frac{1}{(\frac{1}{14}-z)(z-\frac{1}{28})}},& {\rm if}~z\in[\frac{1}{28},\frac{1}{14}],\\
    0,&{\rm otherwise},
\end{matrix}\right.\]
then both $h$ and $u$ are infinitely differentiable, $h$ takes value 1 on $(-\infty,0]$ and -1 on $[1,\infty)$, $u$ takes value 1 on $[0,\frac{1}{28}]$ and 0 on $[\frac{1}{14},\infty)$. Let
\[\psi(x)=C_{\psi}u(\|x\|_2),\]
where $C_{\psi}$ is taken small enough such that $\psi\in\HH(\beta_A,L_A,\R^d)$.

\textbf{Construction of $\eta$:}

Under the assumption $\frac{d}{\gamma}\ge\beta_A$, we define the regression function $\eta$ as
\[\eta(x)=\left\{\begin{matrix}
    C_\eta-\tilde C_\eta(\frac{1}{7}-x_1)^{\frac{d}{\gamma}},&{\rm if}~ x_1\in [0,\frac{1}{7}],\\
    C_\eta,&{\rm if~} x_1\in[\frac{1}{7},\frac{2}{7}],\\
    C_\eta+\tilde C_\eta(x_1-\frac{2}{7})^{\frac{d}{\gamma}},&{\rm if~} x_1\in [\frac{2}{7},\frac{3}{7}],\\
    \tilde h(x),& {\rm if}~x_1\in[\frac{3}{7},\frac{4}{7}],\\
    \frac{1}{2},&{\rm if~}x_1\in[\frac{4}{7},\frac{5}{7}],\\
    \frac{3}{4}-\frac{1}{2}C_\eta-(\frac{1}{4}-\frac{1}{2}C_\eta)h(7x_1-5),&{\rm if~}x_1\in[\frac{5}{7},\frac{6}{7}],\\
    1-C_\eta,&{\rm if}~ x_1\in[\frac{6}{7},1].
\end{matrix}\right.\]
with $C_\eta,\tilde C_\eta>0$ to be small enough and $\tilde h$ to be a polynomial such that $\eta\in\HH(\beta_Y,L_Y,\R^d)$. Without the loss of generality, we assume the existence of $\tilde h$, otherwise, we can always extend the interval $[\frac{3}{7},\frac{4}{7}]$ to fulfill this.

\textbf{Construction of $\rho_{a|y}$:}

For any $\sigma=(\sigma_j)_{j\in\II}\in\{-1,1\}^m$, we define the regression function $\rho_{1|y}^\sigma$ as
\[\rho_{1|1}^\sigma(x)-\frac{1}{2}=\left\{\begin{matrix}
    -C_\rho,&{\rm if~}x_1\in[0,\frac{1}{7}],\\
    -C_\rho-\sigma_jM^{-\beta_A}\psi(M(x-n_M(x))),&{\rm if ~}x\in\X_j,j\in\II,\\
    -C_\rho,&{\rm if~}x\in\X_0,\\
    -C_\rho,&{\rm if~}x_1\in[\frac{2}{7},\frac{3}{7}],\\
    \frac{1}{4}C_\eta C_\rho-(C_\rho+\frac{1}{4}C_\eta C_\rho) h(7x_1-3),&{\rm if}~x_1\in[\frac{3}{7},\frac{4}{7}],\\
    C_\rho+\frac{1}{2}C_\eta C_\rho,&{\rm if}~x_1\in[\frac{4}{7},\frac{5}{7}],\\
    C_\rho+\frac{1}{4}C_\eta C_\rho h(7x_1-5),&{\rm if~}x_1\in[\frac{5}{7},\frac{6}{7}],\\
    C_\rho,&{\rm if~}x_1\in[\frac{6}{7},1],
\end{matrix}\right.\]
where $C_\rho>0$ is small enough such that $\rho^\sigma_{1|1}\in\HH(\beta_A,L_A,\R^d)$. Then Assumption~\ref{ass:holder_blind} is satisfied. We also define $\rho_{1|0}$ as
\[\rho_{1|0}(x)=\left\{\begin{matrix}
    \frac{1}{4}, &{\rm if~}x_1\in[0,\frac{3}{7}],\\
    \frac{1}{2}+C_\rho+\frac{1}{2}C_\eta C_\rho-\tilde C_\eta(\frac{9}{14}-x_1)^{\frac{d}{\gamma}}, &{\rm if~}x_1\in[\frac{4}{7},\frac{9}{14}],\\
    \frac{1}{2}+C_\rho+\frac{1}{2}C_\eta C_\rho+\tilde C_\eta(x_1-\frac{9}{14})^{\frac{d}{\gamma}}, &{\rm if~}x_1\in[\frac{9}{14},\frac{5}{7}],\\
    \frac{3}{4}, &{\rm if~}x_1\in[\frac{6}{7},1].
\end{matrix}\right.\]
And $\rho_{1|0}$ on $([\frac{3}{7},\frac{4}{7}]\cup[\frac{5}{7},\frac{6}{7}])\times[0,1]^{d-1}$ is defined such that $\rho_{1|0}$ is $\beta_Y$-H\"older smooth.

For this part of the proof, we have $C_\eta,\tilde C_\eta$ to be small constants but $C_\rho$ may become small when $\alpha$ varies.

\textbf{Construction of $p_X$:}

Suppose $\sum_{j\in\II}\1(\sigma_j=1)=C_\sigma m$ for some constant $C_\sigma>0$. Denote $\Delta=C_\psi m\omega M^{-\beta_A}$. Denote the $\ell_p$ ball $B_p(c,r)$ in $\R^d$ as $\{x:\|x-c\|_p\le r,x\in\R^d\}$ and the Lebesgue measure to be ${\rm Leb}(\cdot)$. For some $\omega\in(0,\frac{1}{6m})$, we define the density of $X\in[0,1]^d$ as
\[p_X(x)=\left\{\begin{matrix}
    \frac{\frac{1}{6}-m\omega}{{\rm Leb}(B_1(0,\frac{1}{14}))},&{\rm if~}x\in B_1(\frac{e_1}{14},\frac{1}{14}),\\
    \frac{2\omega}{{\rm Leb}(B_2(0,\frac{1}{28M}))},&{\rm if} ~x\in B_2(\frac{2j-1}{14M},\frac{1}{28M}), j\in\II,\\
    \frac{\frac{1}{6}-m\omega}{{\rm Leb}(B_1(0,\frac{1}{14}))},&{\rm if }~x\in B_1(\frac{5}{14}e_1,\frac{1}{14}),\\
    \frac{\mu}{3{\rm Leb}(B_1(0,\frac{1}{28}))},&{\rm if~}x\in B_1(\frac{17}{28}e_1,\frac{1}{28}),\\
    \frac{1-\mu}{3{\rm Leb}(B_1(0,\frac{1}{28}))},&{\rm if~}x\in B_1(\frac{19}{28}e_1,\frac{1}{28}),\\
    \frac{7}{3},&{\rm if~}x_1\in[\frac{6}{7},1],
\end{matrix}\right.\]
where $e_1\in\R^d$ has the first element to be 1 and all other elements to be 0, and
\[\mu=\frac{\frac{\frac{3}{4}-\frac{3}{2}C_\rho-\frac{1}{2}C_\eta+\frac{3}{4}C_\rho C_\eta}{\frac{1}{4}-(\frac{1}{2}-\frac{7}{12}C_\eta)C_\rho-2C_\eta(1-2C_\sigma)\Delta}-\frac{\frac{1}{2}+C_\rho-\frac{1}{2}C_\eta-C_\rho C_\eta}{\frac{1}{4}+(\frac{1}{2}-\frac{7}{12}C_\eta)C_\rho+2C_\eta(1-2C_\sigma)\Delta}}{\frac{\frac{1}{4}+\frac{1}{2}C_\rho+\frac{1}{4}C_\rho C_\eta}{\frac{1}{4}+(\frac{1}{2}-\frac{7}{12}C_\eta)C_\rho+2C_\eta(1-2C_\sigma)\Delta}+\frac{\frac{1}{4}-\frac{1}{2}C_\rho-\frac{1}{4}C_\rho C_\eta}{\frac{1}{4}-(\frac{1}{2}-\frac{7}{12}C_\eta)C_\rho-2C_\eta(1-2C_\sigma)\Delta}}.\]
Since $C_\eta,C_\rho,\Delta$ are small enough, we have $\mu\approx\frac{1}{2}$. So $p_X$ is piecewise uniform.

\textbf{Verification of $\tilde{\mathscr{P}}\subset\mathscr{P}$:}

\textbf{Assumption \ref{ass:observe}:}

Suppose $\sum_{j\in\II}\1(\sigma_j=1)=C_\sigma m$ for some constant $C_\sigma>0$. For the specified distribution, if we denote $\Delta=C_\psi m\omega M^{-\beta_A}$, then
\[p_Y=\E\eta(X)=\frac{1}{2},\quad p_{1,1}^\sigma=\E\rho_{1|1}^\sigma(X)\eta(X)=\frac{1}{4}+\bigg(\frac{1}{2}-\frac{7}{12}C_\eta\bigg)C_\rho+2C_\eta(1-2C_\sigma)\Delta.\]
So Assumption~\ref{ass:observe} is satisfied if $C_\eta, C_\rho$ and $\Delta$ are small enough. 

\textbf{Group-Blind Assumptions:}

1) At first, we verify the group-blind assumptions. 

On the support of $p_X$, $\phi^\sigma$ equals
\begin{align*}
    &p_{1,1}^\sigma(p_Y-p_{1,1}^\sigma)\phi^\sigma(x)\\
    =&(p_Y\rho_{1|1}^\sigma(x)-p_{1,1}^\sigma)\eta(x)\\
    =&\left\{\begin{matrix}
    -\big\{C_\eta-\tilde C_\eta(\frac{1}{7}-x_1)^{\frac{d}{\gamma}}\big\}\big\{(1-\frac{7}{12}C_\eta)C_\rho+2C_\eta(1-2C_\sigma)\Delta\big\},&{\rm if~}x\in B_1(\frac{e_1}{14},\frac{1}{14}),\\
    -C_\eta\big\{(1-\frac{7}{12}C_\eta)C_\rho+2C_\eta(1-2C_\sigma)\Delta+\frac{1}{2}\sigma_jM^{-\beta_A}\psi(M(x-n_M(x)))\big\},&{\rm if~}x\in\X_j,j\in\II,\\
    -\big\{C_\eta+\tilde C_\eta(x_1-\frac{2}{7})^{\frac{d}{\gamma}}\big\}\big\{(1-\frac{7}{12}C_\eta)C_\rho+2C_\eta(1-2C_\sigma)\Delta\big\},&{\rm if~}x\in B_1(\frac{5}{14}e_1,\frac{1}{14}),\\
    C_\eta(\frac{5}{12} C_\rho-(1-2C_\sigma)\Delta),&{\rm if~}x_1\in[\frac{4}{7},\frac{5}{7}],\\
    (1-C_\eta)C_\eta(\frac{7}{12} C_\rho-2(1-2C_\sigma)\Delta),&{\rm if~}x_1\in[\frac{6}{7},1].
\end{matrix}\right.
\end{align*}

Now we identify $\lambda^{*\sigma}_\alpha$. Since
\begin{align*}
    \E\phi^\sigma(X)\1(2\eta(X)>1)
    =\frac{(1-C_\eta)C_\eta\big(\frac{7}{12} C_\rho+2\Delta\big)}{p_{1,1}^\sigma(p_Y-p_{1,1}^\sigma)}
    >0,
\end{align*}
then $\lambda^{*\sigma}_\alpha\ge 0$. If $C_\rho>-\frac{2C_\eta(1-2C_\sigma)\Delta}{1-\frac{7}{12}C_\eta}$, we set
\[\tilde\lambda=\frac{p_{1,1}^\sigma(p_Y-p_{1,1}^\sigma)(1-2C_\eta)}{C_\eta\{(1-\frac{7}{12}C_\eta)C_\rho+2C_\eta(1-2C_\sigma)\Delta\}},\]
and choose $C_\rho$ such that
\[\E\phi^\sigma(X)\1\big(2\eta(X)-1>\tilde\lambda\phi^\sigma(X)\big)=\alpha.\]
By monotonicity, it follows that $\lambda^{*\sigma}_\alpha=\tilde\lambda$. In this case, $g^{*\sigma}_\alpha=2\eta-1-\lambda^{*\sigma}_\alpha\phi^\sigma$ equals
\begin{align*}
    g^{*\sigma}_\alpha(x)
    =\left\{\begin{matrix}
    -\frac{\tilde C_\eta}{C_\eta}(\frac{1}{7}-x_1)^{\frac{d}{\gamma}},&{\rm if~}x\in B_1(\frac{e_1}{14},\frac{1}{14}),\\
    \frac{(\frac{1}{2}-C_\eta)\sigma_jM^{-\beta_A}\psi(M(x-n_M(x)))}{(1-\frac{7}{12}C_\eta)C_\rho+2C_\eta(1-2C_\sigma)\Delta},&{\rm if~}x\in\X_j,j\in\II,\\
    \frac{\tilde C_\eta}{C_\eta}(x_1-\frac{2}{7})^{\frac{d}{\gamma}},&{\rm if~}x\in B_1(\frac{5}{14}e_1,\frac{1}{14}),\\
    -\frac{(1-2C_\eta)(\frac{5}{12}C_\rho-(1-2C_\sigma)\Delta)}{(1-\frac{7}{12}C_\eta)C_\rho+2C_\eta(1-2C_\sigma)\Delta},&{\rm if~}x_1\in[\frac{4}{7},\frac{5}{7}],\\
    \frac{(1-2C_\eta)(\frac{5}{12}C_\rho+2(1-2C_\sigma)\Delta)}{(1-\frac{7}{12}C_\eta)C_\rho+2C_\eta(1-2C_\sigma)\Delta},&{\rm if~}x_1\in[\frac{6}{7},1].
\end{matrix}\right.
\end{align*}
Denote $C_B=\frac{1}{{\rm Leb}(B_1(0,\frac{1}{14}))}\int_{x\in B_1(\frac{e_1}{14},\frac{1}{14})}(\frac{1}{7}-x_1)^{\frac{d}{\gamma}}dx$, we choose $C_\rho$ such that
\begin{align*}
    \alpha=&\frac{1}{p_{1,1}^\sigma(p_Y-p_{1,1}^\sigma)}\bigg\{\bigg[\big(\frac{1}{36}+(1-2C_\sigma)m\omega\big)C_\eta-\frac{7}{12}\big(\frac{1}{6}+(1-2C_\sigma)m\omega\big)C_\eta^2\\
    &-(\frac{1}{6}-m\omega)(1-\frac{7}{12}C_\eta)\tilde C_\eta C_B\bigg]C_\rho-\bigg[\frac{2}{3}-\frac{1}{3}C_\sigma\\
    &-(1-2C_\sigma)\big(\frac{1}{3}+2(1-2C_\sigma)m\omega\big)C_\eta+(1-2C_\sigma)(\frac{1}{3}-2m\omega)\tilde C_\eta C_B\bigg]C_\eta\Delta\bigg\}.
\end{align*}
By choosing $(1+6(1-2C_\sigma)m\omega)C_\eta\ge(1-6m\omega)\tilde C_\eta C_B$, we get
\[\Delta\le\bigg(\frac{1}{12}+3m\omega\bigg)C_\rho,\]
then $g^{*\sigma}_\alpha(x)\le-\frac{(1-2C_\eta)(\frac{1}{3}-3m\omega)}{1+C_\eta(\frac{1}{6}+6m\omega)}$ for $x_1\in[\frac{4}{7},\frac{5}{7}]$, then the $C_\rho$ we choose satisfies
\[\E\phi^\sigma(X)\1\big(2\eta(X)-1>\lambda^{*\sigma}_\alpha\phi^\sigma(X)\big)=\alpha.\]
By setting $m\omega,C_\eta,\tilde C_\eta$ small enough, we get
\[C_\rho\asymp\alpha+m\omega M^{-\beta_A},\quad \lambda^{*\sigma}_\alpha\asymp\frac{1}{C_\rho}.\]

\textbf{Group-Blind Assumption \ref{ass:margin}:}

Firstly, we verify the margin assumption \ref{ass:margin}. For any $\epsilon<\frac{(1-2C_\eta)(\frac{1}{3}-3m\omega)}{1+C_\eta(\frac{1}{6}+6m\omega)}$, fix some $\tilde j\in\II$, we have
\begin{align*}
    &\Prob(|g^{*\sigma}_\alpha(X)|\le\epsilon)\\
    =&m\Prob\bigg(\frac{(\frac{1}{2}-C_\eta)M^{-\beta_A}}{(1-\frac{7}{12}C_\eta)C_\rho+2C_\eta(1-2C_\sigma)\Delta}\psi\bigg(M\bigg(X-\frac{2\tilde j-1}{14M}\bigg)\bigg)\le\epsilon\bigg)\\
    &+\Prob\bigg(0<\frac{\tilde C_\eta}{C_\eta}\bigg(\frac{1}{7}-X_1\bigg)^{\frac{d}{\gamma}}\le\epsilon,X\in B_1\bigg(\frac{e_1}{14},\frac{1}{14}\bigg)\bigg)\\
    &+\Prob\bigg(0<\frac{\tilde C_\eta}{C_\eta}\bigg(X_1-\frac{2}{7}\bigg)^{\frac{d}{\gamma}}\le\epsilon,X\in B_1\bigg(\frac{5}{14}e_1,\frac{1}{14}\bigg)\bigg)\\
    =&m\int_{B_2(0,\frac{1}{28M})}\1\bigg(\frac{(\frac{1}{2}-C_\eta)M^{-\beta_A}C_\psi}{(1-\frac{7}{12}C_\eta)C_\rho+2C_\eta(1-2C_\sigma)\Delta}\le\epsilon\bigg)\frac{2\omega}{{\rm Leb}(B_2(0,\frac{1}{28M}))}dx\\
    &+\Prob\bigg(\frac{1}{7}-\bigg(\frac{C_\eta\epsilon}{\tilde C_\eta}\bigg)^{\frac{\gamma}{d}}\wedge\frac{1}{7}\le X_1<\frac{1}{7},\sum_{j=2}^d|X_j|\le\bigg(\frac{1}{7}-X_1\bigg)\wedge X_1\bigg)\\
    &+\Prob\bigg(\frac{2}{7}< X_1\le\frac{2}{7}+\bigg(\frac{C_\eta\epsilon}{\tilde C_\eta}\bigg)^{\frac{\gamma}{d}}\wedge \frac{1}{7},\sum_{j=2}^d|X_j|\le\bigg(X_1-\frac{2}{7}\bigg)\wedge \bigg(\frac{3}{7}-X_1\bigg)\bigg)\\
    =&2m\omega\1\bigg(\frac{(\frac{1}{2}-C_\eta)M^{-\beta_A}C_\psi}{(1-\frac{7}{12}C_\eta)C_\rho+2C_\eta(1-2C_\sigma)\Delta}\le\epsilon\bigg)+c\epsilon^\gamma.
\end{align*}
If we set
\begin{equation}\label{eq:minimax_momega}
    m\omega\lesssim \big(C_\rho^{-1} M^{-\beta_A}\big)^\gamma,
\end{equation}
it follows that for any $\epsilon<\frac{(1-2C_\eta)(\frac{1}{3}-3m\omega)}{1+C_\eta(\frac{1}{6}+6m\omega)}$,
\[\Prob(|g^{*\sigma}_\alpha(X)|\le\epsilon)\lesssim\epsilon^\gamma,\]
then for any $\epsilon>0$, it still holds that
\[\Prob(|g^{*\sigma}_\alpha(X)|\le\epsilon)\lesssim\epsilon^\gamma.\]

\textbf{Group-Blind Assumption \ref{ass:ratio_poly}:}

Secondly, we check Assumption~\ref{ass:ratio_poly}. Denote $z=p_{1,1}^\sigma(p_Y-p_{1,1}^\sigma)\tilde z$, we have
\begin{align*}
    &g^{*\sigma}_\alpha(x)-z\phi^\sigma(x)\\
    =&\left\{\begin{matrix}
    -\big\{\frac{1}{C_\eta}+[(1-\frac{7}{12}C_\eta)C_\rho+2C_\eta(1-2C_\sigma)\Delta]\tilde z\big\}\tilde C_\eta(\frac{1}{7}-x_1)^{\frac{d}{\gamma}}\\
    +\big\{(1-\frac{7}{12}C_\eta)C_\rho+2C_\eta(1-2C_\sigma)\Delta\big\}C_\eta \tilde z,&{\rm if~}x\in B_1(\frac{e_1}{14},\frac{1}{14}),\\
    \big(\frac{\frac{1}{2}-C_\eta}{(1-\frac{7}{12}C_\eta)C_\rho+2C_\eta(1-2C_\sigma)\Delta}+\frac{1}{2}C_\eta \tilde z\big)\sigma_jM^{-\beta_A}\psi(M(x-n_M(x)))\\
    +\big\{(1-\frac{7}{12}C_\eta)C_\rho+2C_\eta(1-2C_\sigma)\Delta\big\}C_\eta \tilde z,&{\rm if~}x\in\X_j,j\in\II,\\
    \big\{\frac{1}{C_\eta}+[(1-\frac{7}{12}C_\eta)C_\rho+2C_\eta(1-2C_\sigma)\Delta] \tilde z\big\}\tilde C_\eta(x_1-\frac{2}{7})^{\frac{d}{\gamma}}\\
    +\big\{(1-\frac{7}{12}C_\eta)C_\rho+2C_\eta(1-2C_\sigma)\Delta\big\}C_\eta \tilde z,&{\rm if~}x\in B_1(\frac{5}{14}e_1,\frac{1}{14}),\\
    -\frac{(1-2C_\eta)(\frac{5}{12}C_\rho-(1-2C_\sigma)\Delta)}{(1-\frac{7}{12}C_\eta)C_\rho+2C_\eta(1-2C_\sigma)\Delta}-(\frac{5}{12}C_\rho-(1-2C_\sigma)\Delta)C_\eta \tilde z,&{\rm if~}x_1\in[\frac{4}{7},\frac{5}{7}],\\
    \frac{(1-2C_\eta)(\frac{5}{12}C_\rho+2(1-2C_\sigma)\Delta)}{(1-\frac{7}{12}C_\eta)C_\rho+2C_\eta(1-2C_\sigma)\Delta}-(1-C_\eta)(\frac{7}{12}C_\rho-2(1-2C_\sigma)\Delta)C_\eta \tilde z,&{\rm if~}x_1\in[\frac{6}{7},1].
    \end{matrix}\right.
\end{align*}
Note that $s={\rm sgn}(\lambda^{*\sigma}_\alpha)=1$, then for $z>0$, some calculation implies
\begin{align*}
    &\E|\phi^\sigma(X)|\1\bigg(0<\frac{g^{*\sigma}_\alpha(X)}{s\phi^\sigma(X)}< z\bigg)\\
    =&\E|\phi^\sigma(X)|\1\big(s_\phi(X)sg^{*\sigma}_\alpha(X)>0,s_\phi(X)s(g^{*\sigma}_\alpha(X)-sz\phi^\sigma(X))<0\big)\\
    \asymp&C_\rho\bigg(\frac{C_\rho |z|}{1+C_\rho |z|}\bigg)^\gamma.
\end{align*}
Similarly, for $z<0$, Equation \eqref{eq:minimax_momega} and some calculation imply
\begin{align*}
    &\E|\phi^\sigma(X)|\1\bigg(0>\frac{g^{*\sigma}_\alpha(X)}{s\phi^\sigma(X)}> z\bigg)\\
    =&\E|\phi^\sigma(X)|\1\big(s_\phi(X)sg^{*\sigma}_\alpha(X)<0,s_\phi(X)s(g^{*\sigma}_\alpha(X)-sz\phi^\sigma(X))>0\big)\\
    \asymp&C_\rho\bigg(\frac{C_\rho |z|}{1+C_\rho |z|}\bigg)^\gamma.
\end{align*}
So Assumption~\ref{ass:ratio_poly} is satisfied as long as $c_2$ is large enough. 

\textbf{Group-Blind Assumption \ref{ass:ratio_balance}:}

Thirdly, by taking $z=c_4\lambda^{*\sigma}_\alpha$ and $z=-\lambda^{*\sigma}_\alpha$ respectively, similar argument implies Assumption~\ref{ass:ratio_balance} is satisfied as long as $c_3,c_4$ are large enough. 

\textbf{Group-Blind Assumption \ref{ass:holder_blind}:}

It follows from the construction of $\eta$ and $\rho_{a|y}$ that Assumption \ref{ass:holder_blind} is satisfied.

\textbf{Group-Blind Assumption \ref{ass:density_blind}:}

If we choose $m\omega\lesssim 1$ small enough, then $p_X\asymp 1$ on the support of $p_X$ and Assumption \ref{ass:density_blind} is satisfied.

\textbf{Group-Aware Assumptions:}

2) Then we verify the group-aware assumptions. 

\textbf{Group-Aware Assumption \ref{ass:holder_aware}:}

Note that
\[\eta^{\rm aware}(x,a)=\frac{\eta^{\rm blind}(x)\rho_{a|1}(x)}{\eta^{\rm blind}(x)\rho_{a|1}(x)+(1-\eta^{\rm blind}(x))\rho_{a|0}(x)},\]
then it is straightforward to verify that $\eta^{\rm aware}(\cdot, 1),\eta^{\rm aware}(\cdot,2)$ are also $\beta_Y$-H\"older smooth. 

\textbf{Group-Aware Assumption \ref{ass:density_aware}:}

Note that
\[p_{X|a}=\frac{(\rho_{a|1}\eta+\rho_{a|0}(1-\eta))p_X}{\Prob(A=a)},\]
then it is clear that if $C_\eta$, $\tilde C_\eta$, $C_\rho$ do not exceed some small constant and $m\omega\lesssim 1$, we have $p_{X|a}\asymp 1$ and satisfies Assumption \ref{ass:density_aware}.

\textbf{Group-Aware Assumptions \ref{ass:margin}, \ref{ass:ratio_poly}, and \ref{ass:ratio_balance}:}

Moreover, we have
\[\U\big(\1(2\eta^{\rm aware}(X,A)>1)\big)=0.\]
Therefore $g^{*{\rm aware}}_\alpha(x,a)=2\eta^{\rm aware}(x,a)-1$. Similar to the group-blind scenario, direct calculation implies that the group-aware Assumptions~\ref{ass:margin}, \ref{ass:ratio_poly}, \ref{ass:ratio_balance} are also satisfied.

\textbf{Derivation of the Lower Bound:}

Now we derive the minimax lower bound. Note that for $\sigma,\sigma'\in\{-1,1\}^m$, we have the inequality
\begin{align*}
    &\E|f^{*\sigma}_\alpha(X)-f^{*\sigma'}_\alpha(X)|\1\big(g^{*\sigma}_\alpha(X)\ne 0,g^{*\sigma'}_\alpha(X)\ne 0\big)\\
    \le&\E|f^{*\sigma}_\alpha(X)-\hat f(X)|\1(g^{*\sigma}_\alpha(X)\ne 0)+\E|f^{*\sigma'}_\alpha(X)-\hat f(X)|\1(g^{*\sigma'}(X)\ne 0).
\end{align*}
Denote $H(\sigma,\sigma')=\sum_{j\in\II}\1(\sigma_j\ne\sigma'_j)$ to be the Hamming distance between $\sigma$ and $\sigma'$. Suppose $C_\sigma\le\frac{1}{3}$, it follows from Lemma A.1 in \cite{rigollet2009optimal} that there exists a set $\Omega\subset\{-1,1\}^m$ of $\sigma$'s with $\log|\Omega|\ge Cm$ and
\[\Omega=\bigg\{\sigma:\sigma\in\{-1,1\}^m, \sum_{j\in\II}\1(\sigma_j=1)=C_\sigma m\bigg\},\quad H(\sigma,\sigma')\ge \frac{C_\sigma}{2} m,\quad \forall \sigma\ne\sigma'\in\Omega.\]
Then for any $\sigma\ne\sigma'\in\Omega$, we have
\begin{align*}
    \E|f^{*\sigma}_\alpha(X)-f^{*\sigma'}_\alpha(X)|\1\big(g^{*\sigma}_\alpha(X)\ne 0,g^{*\sigma'}_\alpha(X)\ne 0\big)=H(\sigma,\sigma')2\omega\ge C_\sigma m\omega.
\end{align*}

Denote $P_{X,A,Y}^\sigma=P_X P_{Y|X} P_{A|X,Y}^\sigma$, using the inequality
\[(1+a)\log\frac{1+a}{1+b}\le a-b+(a-b)^2,\quad \forall|a|<\frac{1}{2},|b|<\frac{1}{2},\]
we have
\begin{align*}
    &{\rm KL}(P_{X,A,Y}^{\sigma\otimes N},P_{X,A,Y}^{\sigma'\otimes N})\\
    =&N{\rm KL}(P_{X,A,Y}^\sigma,P_{X,A,Y}^{\sigma'})\\
    =&N\int\eta(x)\rho_{1|1}^\sigma(x)\log\frac{\rho_{1|1}^\sigma(x)}{\rho_{1|1}^{\sigma'}(x)}p_X(x)dx+N\int\eta(x)(1-\rho_{1|1}^\sigma(x))\log\frac{1-\rho_{1|1}^\sigma(x)}{1-\rho_{1|1}^{\sigma'}(x)}p_X(x)dx\\
    \le&4N\int\eta(x)\big(\rho_{1|1}^\sigma(x)-\rho_{1|1}^{\sigma'}\big)^2p_X(x)dx\\
    \le&CNM^{-2\beta_A}\omega H(\sigma,\sigma')\\
    \le&CN M^{-2\beta_A}m\omega.
\end{align*}

Since $\beta_A\gamma\le d$, we set
\[M\asymp N^{\frac{1}{2\beta_A+d}},\quad \omega\asymp N^{-\frac{d}{2\beta_A+d}},\quad m\asymp\alpha^{-\gamma} N^{\frac{d-\beta_A\gamma}{2\beta_A+d}}\wedge N^{\frac{d}{2\beta_A+d}},\quad \lambda^*_\alpha\asymp\alpha^{-1}\wedge N^{\frac{\beta_A}{2\beta_A+d}},\]
then we have $p_X\asymp 1$ on the support of $p_X$ and Assumptions~\ref{ass:margin} and \ref{ass:density_blind} are satisfied. Moreover, we have
\[\max_{\sigma,\sigma'\in\Omega}{\rm KL}(P_{X,A,Y}^{\sigma\otimes n},P_{X,A,Y}^{\sigma'\otimes n})\lesssim \log|\Omega|,\]
then if we denote
\[\tilde\epsilon_\rho\asymp \big(|\lambda^*_\alpha|N^{-\frac{\beta_A}{2\beta_A+d}}\big)^{1+\gamma}\asymp \big(\alpha^{-1}N^{-\frac{\beta_A}{2\beta_A+d}}\big)^{1+\gamma}\wedge 1.\]
Fano's Lemma and Equation \eqref{eq:minimax_reduction_binary_eoo} imply
\begin{align*}
    \inf_{\A\in\mathscr{A}^{\rm blind}}\sup_{P\in\mathscr{P}}\Prob_{\D_{\rm all}\sim P^{\otimes N}}\big(T_1(\A(\D_{\rm all}))\gtrsim\tilde\epsilon_\rho\big)
    \ge c.
\end{align*}
\end{proof}

\subsection{Proof of Claim \ref{clm:eta}}

\begin{proof}[Proof of Claim \ref{clm:eta}]
   This proof focuses on constructing the subclass $\tilde{\mathscr{P}}$ of distribution $P_{X,A,Y}=P_XP_{Y|X}P_{A|X,Y}$. Recall $p_X$ is the density of $P_X$, $\eta(X)=\Prob(Y=1|X)$, and $\rho_{a|y}(X)=\Prob(A=a|X,Y=y)$. Then it suffices to specify $p_X$, $\eta$, and $\rho_{a|y}$ respectively.
   
At first, we consider the case where $|\lambda^*_\alpha|$ is large. 

We use the same notations as in the proof of Claim \ref{clm:rho}, but redefine $p_X$, $\eta^\sigma$ and $\rho_{1|1}$ as follows.

\textbf{Construction of $\eta$:}

\[\eta^\sigma(x)=\left\{\begin{matrix}
    C_\eta,&{\rm if~}x_1\in[0,\frac{1}{7}],\\
    C_\eta+\sigma_jM^{-\beta_Y}\psi(M(x-n_M(x))),&{\rm if~}x_1\in[\frac{1}{7},\frac{2}{7}],\\
    C_\eta,&{\rm if~}x_1\in[\frac{2}{7},\frac{3}{7}],\\
    \frac{1}{4}+\frac{1}{2}C_\eta-\frac{1}{2}(\frac{1}{2}-C_\eta)h(7x_1-3),&{\rm if~}x_1\in[\frac{3}{7},\frac{4}{7}],\\
    \frac{1}{2},&{\rm if~}x_1\in[\frac{4}{7},\frac{5}{7}],\\
    \frac{3}{4}-\frac{1}{2}C_\eta-\frac{1}{2}(\frac{1}{2}-C_\eta)h(7x_1-5),&{\rm if~}x_1\in[\frac{5}{7},\frac{6}{7}],\\
    1-C_\eta,&{\rm if~}x_1\in[\frac{6}{7},1],
\end{matrix}\right.\]
where $C_\eta$ is small enough such that $\eta^\sigma\in\HH(\beta_A,L_A,\R^d)$. Here $C_\eta$ may decrease when $\alpha$ varies. 

\textbf{Construction of $\rho_{a|y}$:}

\[\rho_{1|1}(x)-\frac{1}{2}=\left\{\begin{matrix}
    -C_\rho-\tilde C_\rho(\frac{1}{7}-x_1)^{\frac{d}{\gamma}},&{\rm if~}x_1\in[0,\frac{1}{7}],\\
    -C_\rho,&{\rm if~}x_1\in[\frac{1}{7},\frac{2}{7}],\\
    -C_\rho+\tilde C_\rho(x_1-\frac{2}{7})^{\frac{d}{\gamma}},&{\rm if~}x_1\in[\frac{2}{7},\frac{3}{7}],\\
    \tilde h(x),&{\rm if~}x_1\in[\frac{3}{7},\frac{4}{7}],\\
    C_\rho+\frac{1}{2}C_\eta C_\rho,&{\rm if~}x_1\in[\frac{4}{7},\frac{5}{7}],\\
    C_\rho+\frac{1}{4}C_\eta C_\rho+\frac{1}{4}C_\eta C_\rho h(7x_1-5),&{\rm if~}x_1\in[\frac{5}{7},\frac{6}{7}],\\
    C_\rho,&{\rm if~}x_1\in[\frac{6}{7},1],
\end{matrix}\right.\]
$C_\rho,\tilde C_\rho$ are small constants and $\tilde h$ is a polynomial such that $\rho_{1|1}\in\HH(\beta_Y,L_Y,\R^d)$. We assume the existence of $\tilde h$, otherwise, we can always extend the interval $[\frac{3}{7},\frac{4}{7}]$ to fulfill it. So Assumption~\ref{ass:holder_blind} is satisfied. We also define $\rho_{1|0}$ as
\[\rho_{1|0}(x)=\left\{\begin{matrix}
    \frac{1}{4}, &{\rm if~}x_1\in[0,\frac{3}{7}],\\
    \frac{1}{2}+C_\rho+\frac{1}{2}C_\eta C_\rho-\tilde C_\eta(\frac{9}{14}-x_1)^{\frac{d}{\gamma}}, &{\rm if~}x_1\in[\frac{4}{7},\frac{9}{14}],\\
    \frac{1}{2}+C_\rho+\frac{1}{2}C_\eta C_\rho+\tilde C_\eta(x_1-\frac{9}{14})^{\frac{d}{\gamma}}, &{\rm if~}x_1\in[\frac{9}{14},\frac{5}{7}],\\
    \frac{3}{4}, &{\rm if~}x_1\in[\frac{6}{7},1].
\end{matrix}\right.\]
And $\rho_{1|0}$ on $([\frac{3}{7},\frac{4}{7}]\cup[\frac{5}{7},\frac{6}{7}])\times[0,1]^{d-1}$ is defined such that $\rho_{1|0}$ is $\beta_Y$-H\"older smooth. 

\textbf{Construction of $p_X$:}

Denote
\[\Delta=C_\psi m\omega M^{-\beta_Y},\]
\[p_X(x)=\left\{\begin{matrix}
    \frac{\frac{1}{6}-m\omega}{{\rm Leb}(B_1(0,\frac{1}{14}))},&{\rm if~}x\in B_1(\frac{e_1}{14},\frac{1}{14}),\\
    \frac{2\omega}{{\rm Leb}(B_2(0,\frac{C_\omega}{28M}))},&{\rm if} ~x\in B_2(\frac{2j-1}{14M},\frac{C_\omega}{28M}), j\in\II,\\
    \frac{\frac{1}{6}-m\omega}{{\rm Leb}(B_1(0,\frac{1}{14}))},&{\rm if }~x\in B_1(\frac{5}{14}e_1,\frac{1}{14}),\\
    \frac{\mu}{3{\rm Leb}(B_1(0,\frac{1}{28}))},&{\rm if~}x\in B_1(\frac{17}{28}e_1,\frac{1}{28}),\\
    \frac{1-\mu}{3{\rm Leb}(B_1(0,\frac{1}{28}))},&{\rm if~}x\in B_1(\frac{19}{28}e_1,\frac{1}{28}),\\
    \frac{7}{3},&{\rm if~}x_1\in[\frac{6}{7},1],
\end{matrix}\right.\]
with $C_\omega\in(0,1]$ to be specified later and 
\[\mu=\frac{\frac{\frac{3}{4}-\frac{3}{2}C_\rho-\frac{1}{2}C_\eta+\frac{3}{4}C_\rho C_\eta}{\frac{1}{4}-(\frac{1}{2}-\frac{7}{12}C_\eta)C_\rho-(1+2C_\rho)(1-2C_\sigma)\Delta}-\frac{\frac{1}{2}+C_\rho-\frac{1}{2}C_\eta-C_\rho C_\eta}{\frac{1}{4}+(\frac{1}{2}-\frac{7}{12}C_\eta)C_\rho-(1-2C_\rho)(1-2C_\sigma)\Delta}}{\frac{\frac{1}{4}+\frac{1}{2}C_\rho+\frac{1}{4}C_\rho C_\eta}{\frac{1}{4}+(\frac{1}{2}-\frac{7}{12}C_\eta)C_\rho-(1-2C_\rho)(1-2C_\sigma)\Delta}+\frac{\frac{1}{4}-\frac{1}{2}C_\rho-\frac{1}{4}C_\rho C_\eta}{\frac{1}{4}-(\frac{1}{2}-\frac{7}{12}C_\eta)C_\rho-(1+2C_\rho)(1-2C_\sigma)\Delta}}.\]

\textbf{Verification of $\tilde{\mathscr{P}}\subset\mathscr{P}$:}

\textbf{Assumption \ref{ass:observe}:}

We have
\[p_Y^\sigma=\E\eta^\sigma(X)=\frac{1}{2}-2(1-2C_\sigma)\Delta,\]
\[p_{1,1}^\sigma=\E\eta^\sigma(X)\rho_{1|1}(X)=\frac{1}{4}+(\frac{1}{2}-\frac{7}{12}C_\eta)C_\rho-(1-2C_\rho)(1-2C_\sigma)\Delta.\]
So Assumption~\ref{ass:observe} is satisfied if $\Delta$ and $C_\rho$ are small enough.

\textbf{Group-Blind Assumptions:}

1) We start from the group-blind assumptions. Now we have on the support of $p_x$, $\phi^\sigma=\frac{p_Y^\sigma\rho_{1|1}-p_{1,1}^\sigma}{p_{1,1}^\sigma(p_Y^\sigma-p_{1,1}^\sigma)}\eta^\sigma$ equals
\begin{align*}
    &p_{1,1}^\sigma(p_Y^\sigma-p_{1,1}^\sigma)\phi^\sigma(x)\\
    =&\left\{\begin{matrix}
        -\big\{(1-\frac{7}{12}C_\eta)C_\rho+(\frac{1}{2}-2(1-2C_\sigma)\Delta)\tilde C_\rho(\frac{1}{7}-x_1)^{\frac{d}{\gamma}}\big\}C_\eta,&{\rm if~}x\in B_1(\frac{e_1}{14},\frac{1}{14}),\\
        -(1-\frac{7}{12}C_\eta)C_\rho\big(C_\eta+\sigma_jM^{-\beta_Y}\psi(M(x-n_M(x)))\big),&{\rm if~}x\in\X_j,j\in\II,\\
        -\big\{(1-\frac{7}{12}C_\eta)C_\rho-(\frac{1}{2}-2(1-2C_\sigma)\Delta)\tilde C_\rho(x_1-\frac{2}{7})^{\frac{d}{\gamma}}\big\}C_\eta,&{\rm if~}x\in B_1(\frac{5}{14}e_1,\frac{1}{14}),\\
        \frac{5}{12}C_\rho C_\eta-\frac{1}{2}(4+C_\eta)C_\rho(1-2C_\sigma)\Delta,&{\rm if~}x_1\in[\frac{4}{7},\frac{5}{7}],\\
        (1-C_\eta)C_\rho(\frac{7}{12}C_\eta-4(1-2C_\sigma)\Delta),&{\rm if~}x_1\in[\frac{6}{7},1],
    \end{matrix}\right.
\end{align*}
then
\[\E\phi^\sigma(X)\1(2\eta^\sigma(X)>1)>0\]
which implies $\lambda^{*\sigma}_\alpha\ge 0$. Moreover, we set
\[\tilde\lambda=\frac{p_{1,1}^\sigma(p_Y^\sigma-p_{1,1}^\sigma)(1-2C_\eta)}{(1-\frac{7}{12}C_\eta)C_\rho C_\eta},\]
if $C_\eta$ is chosen such that
\[\E\phi^\sigma(X)\1\big(2\eta^\sigma(X)-1>\tilde\lambda\phi^\sigma(X)\big)=\alpha,\]
by monotonicity, it follows that $\lambda^{*\sigma}_\alpha=\tilde\lambda$. Then, on the support of $p_X$, we have $g^{*\sigma}_\alpha=2\eta^\sigma-1-\lambda^{*\sigma}_\alpha\phi^\sigma$ equals
\[g^{*\sigma}_\alpha(x)=\left\{\begin{matrix}
    \frac{(1-2C_\eta)(\frac{1}{2}-2(1-2C_\sigma)\Delta)\tilde C_\rho}{(1-\frac{7}{12}C_\eta)C_\rho}(\frac{1}{7}-x_1)^{\frac{d}{\gamma}},&{\rm if~}x\in B_1(\frac{e_1}{14},\frac{1}{14}),\\
    \frac{1}{C_\eta}\sigma_jM^{-\beta_Y}\psi(M(x-n_M(x))),&{\rm if~}x\in\X_j,j\in\II,\\
    -\frac{(1-2C_\eta)(\frac{1}{2}-2(1-2C_\sigma)\Delta)\tilde C_\rho}{(1-\frac{7}{12}C_\eta)C_\rho}(x_1-\frac{2}{7})^{\frac{d}{\gamma}},&{\rm if~}x\in B_1(\frac{5}{14}e_1,\frac{1}{14}),\\
    -\frac{1-2C_\eta}{(1-\frac{7}{12}C_\eta)C_\eta}\big\{\frac{5}{12}C_\eta-(2+\frac{1}{2}C_\eta)(1-2C_\sigma)\Delta\big\},&{\rm if~}x_1\in[\frac{4}{7},\frac{5}{7}],\\
    \frac{1-2C_\eta}{(1-\frac{7}{12}C_\eta)C_\eta}\big\{\frac{5}{12}C_\eta+4(1-C_\eta)(1-2C_\sigma)\Delta\big\},&{\rm if~} x_1\in[\frac{6}{7},1].
\end{matrix}\right.\]
Now we choose $C_\eta$ satisfies
\begin{equation}
    \begin{aligned}\label{eq:minimax_Ceta}
    \alpha=&\frac{1}{p_{1,1}^\sigma(p_Y^\sigma-p_{1,1}^\sigma)}\bigg\{\bigg[\big(\frac{1}{36}+(1-2C_\sigma)m\omega\big)C_\rho-\frac{1}{2}(\frac{1}{6}-m\omega)\tilde C_\rho C_B\\
    &-\frac{7}{12}\big(\frac{1}{6}+(1-2C_\sigma)m\omega\big)C_\rho C_\eta\bigg]C_\eta-\bigg[(\frac{4}{3}-\frac{2}{3}C_\sigma)C_\rho\\
    &-(\frac{4}{3}-\frac{3}{2}C_\sigma)-(\frac{1}{3}-2m\omega)(1-2C_\sigma)\tilde C_\rho C_BC_\eta\bigg]\Delta\bigg\}.
\end{aligned}
\end{equation}
By choosing $C_\eta$ small enough, we know
\[\Delta\le\bigg(\frac{1}{12}+3m\omega\bigg)C_\eta,\]
and it follows
\[g^{*\sigma}_\alpha(x)<0,\quad \forall x_1\in[\frac{4}{7},\frac{5}{7}].\]
Then we have
\[\E\phi^\sigma(X)\1(g^{*\sigma}_\alpha(X)>0)=\alpha.\]

Note that only small values of $\alpha$ are of interest. Since Equation \eqref{eq:minimax_Ceta} is a quadratic equation of $C_\eta$, it has two solutions. Then we will choose these two solutions for different settings.
\begin{itemize}
    \item[a)]For small $\alpha$ with $\alpha\gtrsim N^{-\frac{\beta_Y\gamma}{(2\beta_Y+d)(1+\gamma)}}$, we set $m\omega, \tilde C_\rho, C_\eta$ to be small enough such that
    \[\big(\frac{1}{36}+(1-2C_\sigma)m\omega\big)C_\rho-\frac{1}{2}(\frac{1}{6}-m\omega)\tilde C_\rho C_B-\frac{7}{12}\big(\frac{1}{6}+(1-2C_\sigma)m\omega\big)C_\rho C_\eta\gtrsim 1,\]
    then we get
    \[C_\eta\asymp\alpha+m\omega M^{-\beta_Y},\quad\lambda^{*\sigma}_\alpha\asymp\frac{1}{C_\eta}.\]
    In this case, $\lambda_\alpha^*$ is of order $\alpha^{-1}$, and we will prove the lower bound 
    \[(|\lambda^*_\alpha|N^{-\frac{\beta_Y}{2\beta_Y+d}})^{1+\gamma}\asymp(\alpha^{-1}N^{-\frac{\beta_Y}{2\beta_Y+d}})^{1+\gamma}\]
    for the excess risk.
    \item[b)]For smaller $\alpha$ with $\alpha\lesssim N^{-\frac{\beta_Y\gamma}{(2\beta_Y+d)(1+\gamma)}}$. We will set $C_\eta\gtrsim 1$ to be a constant, therefore Equation \eqref{eq:minimax_Ceta} implies $C_\eta$ satisfies
    \[\big(\frac{1}{36}+(1-2C_\sigma)m\omega\big)C_\rho-\frac{1}{2}(\frac{1}{6}-m\omega)\tilde C_\rho C_B-\frac{7}{12}\big(\frac{1}{6}+(1-2C_\sigma)m\omega\big)C_\rho C_\eta\asymp\alpha+m\omega M^{-\beta_Y}.\]
    For $m\omega,\tilde C_\rho$ small enough, we get $C_\eta\approx\frac{2}{7}<\frac{1}{2}$, so the construction is valid. In this case, $\lambda^*_\alpha\asymp 1$, and similar argument concludes the lower bound
    \[(|\lambda^*_\alpha|N^{-\frac{-\beta_Y}{2\beta_Y+d}})^{1+\gamma}.\]
\end{itemize}

In the following, we only analyze the more complicated case (a), and case (b) can be derived similarly.

\textbf{Group-Blind Assumption \ref{ass:margin}:}

Firstly, we verify the margin assumption~\ref{ass:margin}. For any $\epsilon<\frac{1-2C_\eta}{4-\frac{7}{3}C_\eta}$, fix some $\tilde j\in\II$, we have
\begin{align*}
    &\Prob(|g^{*\sigma}_\alpha(X)|\le\epsilon)\\
    =&m\Prob\bigg(0<\frac{1}{C_\eta} M^{-\beta_Y}\psi\bigg(M\bigg(X-\frac{2\tilde j-1}{14M}\bigg)\bigg)\le\epsilon\bigg)\\
    &+\Prob\bigg(0<\frac{(1-2C_\eta)(\frac{1}{2}-2(1-2C_\sigma)\Delta)\tilde C_\rho}{(1-\frac{7}{12}C_\eta)C_\rho}\bigg(\frac{1}{7}-X_1\bigg)^{\frac{d}{\gamma}}\le\epsilon,X\in B_1(\frac{e_1}{14},\frac{1}{14})\bigg)\\
    &+\Prob\bigg(0<\frac{(1-2C_\eta)(\frac{1}{2}-2(1-2C_\sigma)\Delta)\tilde C_\rho}{(1-\frac{7}{12}C_\eta)C_\rho}\bigg(X_1-\frac{2}{7}\bigg)^{\frac{d}{\gamma}}\le\epsilon,X\in B_1(\frac{5}{14}e_1,\frac{1}{14})\bigg)\\
    =&2 m\omega\1\bigg(\frac{1}{C_\eta} M^{-\beta_Y}C_\psi\le\epsilon\bigg)+c\epsilon^\gamma.
\end{align*}
If we set
\[m\omega\lesssim \big(C_\eta^{-1} M^{-\beta_Y}\big)^\gamma,\]
then for any $\epsilon<c$,
\[\Prob(|g^{*\sigma}_\alpha(X)|\le\epsilon)\lesssim\epsilon^\gamma,\]
furthermore, we have for any $\epsilon>0$,
\[\Prob(|g^{*\sigma}_\alpha(X)|\le\epsilon)\lesssim\epsilon^\gamma.\]

\textbf{Group-Blind Assumptions \ref{ass:ratio_poly} and \ref{ass:ratio_balance}:}

Then we check the Assumptions~\ref{ass:ratio_poly} and \ref{ass:ratio_balance}. Denote $z=p^\sigma_{1,1}(p^\sigma_Y-p^\sigma_{1,1})\tilde z$, we have
\begin{align*}
    &g^{*\sigma}_\alpha(x)-z\phi^\sigma(x)\\
    =&\left\{\begin{matrix}
        (\frac{1}{2}-2(1-2C_\sigma)\Delta)(\frac{1-2C_\eta}{(1-\frac{7}{12}C_\eta)C_\rho}+C_\eta\tilde z)\tilde C_\rho(\frac{1}{7}-x_1)^{\frac{d}{\gamma}}\\
        +(1-\frac{7}{12}C_\eta)C_\rho C_\eta\tilde z,&{\rm if~}x\in B_1(\frac{e_1}{14},\frac{1}{14}),\\
        \big\{\frac{1}{C_\eta}+(1-\frac{7}{12}C_\eta)C_\rho\tilde z\big\}\sigma_jM^{-\beta_Y}\psi(M(x-n_M(x)))\\
        +(1-\frac{7}{12}C_\eta)C_\rho C_\eta\tilde z,&{\rm if~}x\in\X_j,j\in\II,\\
        -(\frac{1}{2}-2(1-2C_\sigma)\Delta)(\frac{1-2C_\eta}{(1-\frac{7}{12}C_\eta)C_\rho}+C_\eta\tilde z)\tilde C_\rho(x_1-\frac{2}{7})^{\frac{d}{\gamma}}\\
        +(1-\frac{7}{12}C_\eta)C_\rho C_\eta\tilde z,&{\rm if~}x\in B_1(\frac{5}{14}e_1,\frac{1}{14}),\\
        -(\frac{1-2C_\eta}{(1-\frac{7}{12}C_\eta)C_\eta}+C_\rho\tilde z)\big\{\frac{5}{12}C_\eta-(2+\frac{1}{2}C_\eta)(1-2C_\sigma)\Delta\big\},&{\rm if~}x_1\in[\frac{4}{7},\frac{5}{7}],\\
        \frac{1-2C_\eta}{(1-\frac{7}{12}C_\eta)C_\eta}\big\{\frac{5}{12}C_\eta+4(1-C_\eta)(1-2C_\sigma)\Delta\big\}\\
        -(1-C_\eta)C_\rho(\frac{7}{12}C_\eta-4(1-2C_\sigma)\Delta)\tilde z,&{\rm if~}x_1\in[\frac{6}{7},1].
    \end{matrix}\right.
\end{align*}
Similar to the analysis of the error of $\rho_{1,1}$, for $z>0$,
\begin{align*}
    &\E|\phi^\sigma(X)|\1\bigg(0<\frac{g^{*\sigma}_\alpha(X)}{s\phi^\sigma(X)}<z\bigg)\\
    =&\E|\phi^\sigma(X)|\1\big(s_\phi(X)sg^{*\sigma}_\alpha(X)>0,s_\phi(X)s(g^{*\sigma}_\alpha(X)-sz\phi^\sigma(X))<0\big)\\
    \asymp&C_\eta\bigg(\frac{C_\eta|z|}{1+C_\eta|z|}\bigg)^\gamma,
\end{align*}
and for $z<0$,
\begin{align*}
    &\E|\phi^\sigma(X)|\1\bigg(0>\frac{g^{*\sigma}_\alpha(X)}{s\phi^\sigma(X)}>z\bigg)\\
    =&\E|\phi^\sigma(X)|\1\big(s_\phi(X)sg^{*\sigma}_\alpha(X)<0,s_\phi(X)s(g^{*\sigma}_\alpha(X)-sz\phi^\sigma(X))>0\big)\\
    \asymp&C_\eta\bigg(\frac{C_\eta|z|}{1+C_\eta|z|}\bigg)^\gamma,
\end{align*}
so Assumptions~\ref{ass:ratio_poly} and \ref{ass:ratio_balance} are satisfied if $c_2$, $c_3$ and $c_4$ are large enough.

\textbf{Group-Blind Assumption \ref{ass:holder_blind}:}

The H\"older smoothness follows from the definition of $\eta$ and $\rho_{a|y}$.

\textbf{Group-Blind Assumption \ref{ass:density_blind}:}

If we choose $m\omega\lesssim 1$, then $p_X\asymp 1$ on the support of $p_X$, so Assumption \ref{ass:density_blind} is satisfied.

\textbf{Group-Aware Assumptions:}

2) Then we verify the group-aware assumptions. Similar to the proof of Claim \ref{clm:rho}, it is straightforward to verify that $\eta^{\rm aware}(\cdot,a)$ are $\beta_Y$-H\"older smooth, $\U\big(\1(2\eta^{\rm aware}(X,A)>1)\big)=0$, so $g^{*{\rm aware}}_\alpha(x,a)=2\eta^{\rm aware}(x,a)-1$, and the group-aware Assumptions~\ref{ass:margin}, \ref{ass:ratio_poly}, \ref{ass:ratio_balance} are also satisfied.

\textbf{Derivation of the Lower Bound:}

Then we are ready to prove the lower bound. For the same $\Omega$ defined in the proof of Claim \ref{clm:rho}, since $\eta^\sigma\ge C_\eta$ for any $\sigma\in\Omega$, then for all $\sigma\ne\sigma'\in\Omega$, we have
\[\eta^\sigma\log\frac{\eta^\sigma}{\eta^{\sigma'}}\le \eta^\sigma-\eta^{\sigma'}+\frac{1}{2C_\eta}(\eta^\sigma-\eta^{\sigma'})^2.\]
It follows
\begin{align*}
    &{\rm KL}(P_{X,A,Y}^{\sigma\otimes N},P_{X,A,Y}^{\sigma'\otimes N})\\
    =&N{\rm KL}(P_{X,A,Y}^\sigma,P_{X,A,Y}^{\sigma'})\\
    =&N\int\eta^\sigma(x)\log\frac{\eta^\sigma(x)}{\eta^{\sigma'}(x)}p_X(x)dx+N\int(1-\eta^\sigma(x))\log\frac{1-\eta^\sigma(x)}{1-\eta^{\sigma'}(x)}p_X(x)dx\\
    \le&\frac{1}{C_\eta}N\int(\eta^\sigma(x)-\eta^{\sigma'}(x))^2p_X(x)dx\\
    \le&\frac{4C_\sigma C_\psi^2}{C_\eta}Nm\omega M^{-2\beta_Y}.
\end{align*}
Since $\beta_Y\gamma\le d$, $\alpha\gtrsim N^{-\frac{\beta_Y\gamma}{(2\beta_Y+d)(1+\gamma)}}\gtrsim N^{-\frac{\beta_Y}{2\beta_Y+d}}$, we set
\[M\asymp N^{\frac{1}{2\beta_Y+d}},\quad C_\omega\asymp \alpha^{\frac{1}{d}},\quad\omega\asymp \alpha N^{-\frac{d}{2\beta_Y+d}},\quad m\asymp\alpha^{-(\gamma+1)} N^{\frac{d-\beta_Y\gamma}{2\beta_Y+d}},\quad \lambda^*_\alpha\asymp \alpha^{-1},\]
then we have $m\lesssim M^d$, $p_X\asymp 1$ on its support and Assumptions~\ref{ass:margin} and \ref{ass:density_blind} are satisfied. Moreover, we have
\[\max_{\sigma,\sigma'\in\Omega}{\rm KL}(P_{X,A,Y}^{\sigma\otimes N},P_{X,A,Y}^{\sigma'\otimes N})\lesssim\log|\Omega|.\]
If we denote
\[\epsilon'_\eta\asymp  \big(|\lambda^*_\alpha|N^{-\frac{\beta_Y}{2\beta_Y+d}}\big)^{1+\gamma}\asymp \big(\alpha^{-1}N^{-\frac{\beta_Y}{2\beta_Y+d}}\big)^{1+\gamma}.\]
Using the same reasoning in the proof of Claim \ref{clm:rho}, Fano's Lemma and Equation \eqref{eq:minimax_reduction_binary_eoo} imply
\begin{align*}
    \inf_{\A\in\mathscr{A}^{\rm blind}}\sup_{P\in\mathscr{P}}\Prob_{\D_{\rm all}\sim P^{\otimes N}}\big(T_1(\A(\D_{\rm all}))\gtrsim\epsilon'_\eta\big)
    \ge c.
\end{align*}

Then we consider the case where $|\lambda^*_\alpha|$ is small. Specifically, if we set $\rho_{1|1}=\rho_{1|0}=\frac{1}{2}$, then we know $f^*_\alpha=\1(2\eta>1)$, then similar to the proof of Theorem 3.5 in \cite{audibert2007fast}, we can get
\begin{align*}
    \inf_{\A\in\mathscr{A}^{\rm blind}}\sup_{P\in\mathscr{P}}\Prob_{\D_{\rm all}\sim P^{\otimes N}}\big(T_1(\A(\D_{\rm all}))\gtrsim N^{-\frac{\beta_Y(1+\gamma)}{2\beta_Y+d}}\big)
    \ge c.
\end{align*}

\end{proof}

\subsection{Proof of Claim \ref{clm:fair}}

\begin{proof}[Proof of Claim \ref{clm:fair}]

In this proof, we construct a pair of distributions $\tilde{\mathscr{P}}=\{P_{X,A,Y},\bar P_{X,A,Y}\}$ with $P_{X,A,Y}=P_XP_{Y|X}P_{A|X,Y}$ and $\bar P_{X,A,Y}=P_XP_{Y|X}\bar P_{A|X,Y}$. Recall $p_X$ is the density of $P_X$, $\eta(X)=\Prob(Y=1|X)$, $\rho_{a|y}(X)=\Prob_P(A=a|X,Y=y)$, and $\bar\rho_{a|y}(X)=\Prob_{\bar P}(A=a|X,Y=y)$. Then it suffices to specify $p_X$, $\eta$, $\rho_{a|y}$, and $\bar\rho_{a|y}$, respectively.

\textbf{Construction of $\eta$:}

With the same notation as in the proof of Claim \ref{clm:rho}, we redefine $P_{X,A,Y}$ as follows. 
\[\eta(x)=\left\{\begin{matrix}
    C_\eta-\tilde C_\eta(\frac{1}{7}-x_1)^{\frac{d}{\gamma}},&{\rm if~}x_1\in[0,\frac{1}{7}],\\
    C_\eta+\tilde C_\eta(x_1-\frac{1}{7})^{\frac{d}{\gamma}},&{\rm if~}x_1\in[\frac{1}{7},\frac{1}{7}+C_X],\\
    C_\eta+2\tilde C_\eta C_X^{\frac{d}{\gamma}}-\tilde C_\eta(\frac{2}{7}-x_1)^{\frac{d}{\gamma}},&{\rm if~}x_1\in[\frac{2}{7}-C_X,\frac{2}{7}],\\
    C_\eta+2\tilde C_\eta C_X^{\frac{d}{\gamma}}+\tilde C_\eta(x_1-\frac{2}{7})^{\frac{d}{\gamma}},&{\rm if~}x_1\in[\frac{2}{7},\frac{3}{7}],\\
    \frac{1}{2},&{\rm if~}x_1\in[\frac{4}{7},\frac{5}{7}],\\
    1-C_\eta,&{\rm if~}x_1\in[\frac{6}{7},1],
\end{matrix}\right.\]
where similar to the proof of Claim \ref{clm:rho}, $C_\eta,\tilde C_\eta>0$ are small constants, $C_X$ is also small enough whose value will be specified later, and $\eta$ is interpolated elsewhere such that $\eta\in\HH(\beta_Y,L_Y,\R^d)$.

\textbf{Construction of $\rho_{a|y}$:}

\[\rho_{1|1}(x)-\frac{1}{2}=\left\{\begin{matrix}
    -C_\rho, &{\rm if~}x_1\in[0,\frac{3}{7}],\\
    C_\rho+\frac{1}{2}C_\eta C_\rho,&{\rm if~}[\frac{4}{7},\frac{5}{7}],\\
    C_\rho,&{\rm if~}x_1\in[\frac{6}{7},1],
\end{matrix}\right.\]
where $C_\rho>0$ is small enough and $\rho_{1|1}$ is interpolated elsewhere such that $\rho_{1|1}\in\HH(\beta_A,L_A,\R^d)$. 
So Assumption~\ref{ass:holder_blind} is satisfied. We also define $\rho_{1|0}$ as
\[\rho_{1|0}(x)=\left\{\begin{matrix}
    \frac{1}{4}, &{\rm if~}x_1\in[0,\frac{3}{7}],\\
    \frac{1}{2}+C_\rho+\frac{1}{2}C_\eta C_\rho-\tilde C_\eta(\frac{9}{14}-x_1)^{\frac{d}{\gamma}}, &{\rm if~}x_1\in[\frac{4}{7},\frac{9}{14}],\\
    \frac{1}{2}+C_\rho+\frac{1}{2}C_\eta C_\rho+\tilde C_\eta(x_1-\frac{9}{14})^{\frac{d}{\gamma}}, &{\rm if~}x_1\in[\frac{9}{14},\frac{5}{7}],\\
    \frac{3}{4}, &{\rm if~}x_1\in[\frac{6}{7},1].
\end{matrix}\right.\]
And $\rho_{1|0}$ on $([\frac{3}{7},\frac{4}{7}]\cup[\frac{5}{7},\frac{6}{7}])\times[0,1]^{d-1}$ is defined such that $\rho_{1|0}$ is $\beta_Y$-H\"older smooth. Here $C_\eta,\tilde C_\eta$ are small constants but $C_X,C_\rho$ may become small when $\alpha$ varies.

\textbf{Construction of $p_X$:}

Denote $\mathcal{B}=\{x:\|x_{-1}\|_1\le|x_1-\frac{1}{7}|,x_1\in[0,\frac{1}{7}+C_X]\}$, we define the density of $X$ as
\[p_X(x)=\left\{\begin{matrix}
    \frac{1}{6{\rm Leb}(\mathcal{B})},&{\rm if~}\|x_{-1}\|_1\le|x_1-\frac{1}{7}|,x_1\in[0,\frac{1}{7}+C_X],\\
    \frac{1}{6{\rm Leb}(\mathcal{B})},&{\rm if~}\|x_{-1}\|_1\le|x_1-\frac{2}{7}|,x_1\in[\frac{2}{7}-C_X,\frac{3}{7}],\\
    \frac{\mu}{3{\rm Leb}(B_1(0,\frac{1}{28}))},&{\rm if~}x\in B_1(\frac{17}{28}e_1,\frac{1}{28}),\\
    \frac{1-\mu}{3{\rm Leb}(B_1(0,\frac{1}{28}))},&{\rm if~}x\in B_1(\frac{19}{28}e_1,\frac{1}{28}),\\
    \frac{7}{3},&{\rm if~}x_1\in[\frac{6}{7},1],
\end{matrix}\right.\]
with 
\[\mu=\frac{\frac{\frac{3}{4}-\frac{3}{2}C_\rho-\frac{1}{2}C_\eta+\frac{3}{4}C_\rho C_\eta}{\frac{1}{4}-(\frac{1}{2}-\frac{7}{12}C_\eta)C_\rho+(\frac{1}{6}+\frac{1}{3}C_\rho)\tilde C_\eta C_X^{\frac{d}{\gamma}}}-\frac{\frac{1}{2}+C_\rho-\frac{1}{2}C_\eta-C_\rho C_\eta}{\frac{1}{4}+(\frac{1}{2}-\frac{7}{12}C_\eta)C_\rho+(\frac{1}{6}-\frac{1}{3}C_\rho)\tilde C_\eta C_X^{\frac{d}{\gamma}}}}{\frac{\frac{1}{4}+\frac{1}{2}C_\rho+\frac{1}{4}C_\rho C_\eta}{\frac{1}{4}+(\frac{1}{2}-\frac{7}{12}C_\eta)C_\rho+(\frac{1}{6}-\frac{1}{3}C_\rho)\tilde C_\eta C_X^{\frac{d}{\gamma}}}+\frac{\frac{1}{4}-\frac{1}{2}C_\rho-\frac{1}{4}C_\rho C_\eta}{\frac{1}{4}-(\frac{1}{2}-\frac{7}{12}C_\eta)C_\rho+(\frac{1}{6}+\frac{1}{3}C_\rho)\tilde C_\eta C_X^{\frac{d}{\gamma}}}}.\]

\textbf{Verification of $P_{X,A,Y}\in\mathscr{P}$:}

\textbf{Assumptions \ref{ass:density_aware} and \ref{ass:density_blind}:}

It is straightforward to verify that Assumptions~\ref{ass:density_aware} and \ref{ass:density_blind} are satisfied.

\textbf{Assumption \ref{ass:observe}:}

For the specified distribution, we have
\[p_Y=\E\eta(X)=\frac{1}{2}+\frac{1}{3}\tilde C_\eta C_X^{\frac{d}{\gamma}},\quad p_{1,1}=\E\rho_{1|1}(X)\eta(X)=\frac{1}{4}+\bigg(\frac{1}{2}-\frac{7}{12}C_\eta\bigg)C_\rho+\bigg(\frac{1}{6}-\frac{1}{3}C_\rho\bigg)\tilde C_\eta C_X^{\frac{d}{\gamma}}.\]
Then Assumption~\ref{ass:observe} is satisfied if $\tilde C_\eta, C_\rho$ and $C_X$ are small enough.

\textbf{Group-Blind Assumptions:}

1) At first, we verify the group-blind assumptions. On the support of $p_X$, $\phi$ equals
\begin{equation}
    \begin{aligned}
        &p_{1,1}(p_Y-p_{1,1})\phi(x)\\
        =&\left\{\begin{matrix}
            -\{C_\eta-\tilde C_\eta(\frac{1}{7}-x_1)^{\frac{d}{\gamma}}\}(1-\frac{7}{12}C_\eta)C_\rho,&{\rm if~}\|x_{-1}\|_1\le\frac{1}{7}-x_1,x_1\in[0,\frac{1}{7}],\\
            -\{C_\eta+\tilde C_\eta(x_1-\frac{1}{7})^{\frac{d}{\gamma}}\}(1-\frac{7}{12}C_\eta)C_\rho,&{\rm if~}\|x_{-1}\|_1\le x_1-\frac{1}{7},x_1\in[\frac{1}{7},\frac{1}{7}+C_X],\\
            -\{C_\eta+2\tilde C_\eta C_X^{\frac{d}{\gamma}}-\tilde C_\eta(\frac{2}{7}-x_1)^{\frac{d}{\gamma}}\}(1-\frac{7}{12}C_\eta)C_\rho,&{\rm if~}\|x_{-1}\|_1\le\frac{2}{7}-x_1,x_1\in[\frac{2}{7}-C_X,\frac{2}{7}],\\
            -\{C_\eta+2\tilde C_\eta C_X^{\frac{d}{\gamma}}+\tilde C_\eta(x_1-\frac{2}{7})^{\frac{d}{\gamma}}\}(1-\frac{7}{12}C_\eta)C_\rho,&{\rm if~}\|x_{-1}\|_1\le x_1-\frac{2}{7},x_1\in[\frac{2}{7},\frac{3}{7}],\\
            \{\frac{5}{12}C_\eta+(\frac{1}{3}+\frac{1}{12}C_\eta)\tilde C_\eta C_X^{\frac{d}{\gamma}}\} C_\rho,&{\rm if~}x_1\in[\frac{4}{7},\frac{5}{7}],\\
            (1-C_\eta)(\frac{7}{12}C_\eta+\frac{2}{3}\tilde C_\eta C_X^{\frac{d}{\gamma}})C_\rho,&{\rm if~}x_1\in[\frac{6}{7},1].
        \end{matrix}\right.
    \end{aligned}
\end{equation}
Since $\E\phi(X)\1(2\eta(X)>1)>0$, we know $\lambda^*_\alpha\ge 0$. Set
\[\tilde\lambda=\frac{p_{1,1}(p_Y-p_{1,1})(1-2C_\eta)}{(1-\frac{7}{12}C_\eta)C_\eta C_\rho},\]
if $C_\rho$ is chosen such that
\[\E\phi(X)\1(2\eta(X)-1>\tilde\lambda\phi(X))=\alpha,\]
then $\lambda^*_\alpha=\tilde\lambda$. In this case, $g^*_\alpha=2\eta-1-\lambda^*_\alpha\phi$ equals
\[g^*_\alpha(x)=\left\{\begin{matrix}
    -\frac{\tilde C_\eta(\frac{1}{7}-x_1)^{\frac{d}{\gamma}}}{C_\eta},&{\rm if~}\|x_{-1}\|_1\le\frac{1}{7}-x_1,x_1\in[0,\frac{1}{7}],\\
    \frac{\tilde C_\eta(x_1-\frac{1}{7})^{\frac{d}{\gamma}}}{C_\eta},&{\rm if~}\|x_{-1}\|_1\le x_1-\frac{1}{7},x_1\in[\frac{1}{7},\frac{1}{7}+C_X],\\
    \frac{\tilde C_\eta\{2C_X^{\frac{d}{\gamma}}-(\frac{2}{7}-x_1)^{\frac{d}{\gamma}}\}}{C_\eta},&{\rm if~}\|x_{-1}\|_1\le\frac{2}{7}-x_1,x_1\in[\frac{2}{7}-C_X,\frac{2}{7}],\\
    \frac{\tilde C_\eta\{2C_X^{\frac{d}{\gamma}}+(x_1-\frac{2}{7})^{\frac{d}{\gamma}}\}}{C_\eta},&{\rm if~}\|x_{-1}\|_1\le x_1-\frac{2}{7},x_1\in[\frac{2}{7},\frac{3}{7}],\\
    -\frac{(1-2C_\eta)\{\frac{5}{12}C_\eta+(\frac{1}{3}+\frac{1}{12}C_\eta)\tilde C_\eta C_X^{\frac{d}{\gamma}}\}}{(1-\frac{7}{12}C_\eta)C_\eta},&{\rm if~}x_1\in[\frac{4}{7},\frac{5}{7}],\\
    \frac{(1-2C_\eta)\{\frac{5}{12}C_\eta-\frac{2}{3}(1-C_\eta)\tilde C_\eta C_X^{\frac{d}{\gamma}}\}}{(1-\frac{7}{12}C_\eta)C_\eta},&{\rm if~}x_1\in[\frac{6}{7},1].
\end{matrix}\right.\]

Now we set $C_\rho$ such that 
\begin{equation}\label{eq:lower_alpha_c_rho}
    \begin{aligned}
    \alpha=&\E\phi(X)\1(g^*_\alpha(X)>0)\\
    =&\frac{\{\frac{1}{36}C_\eta-\frac{7}{72}C_\eta^2-\frac{\tilde C_\eta C_B(\frac{1}{6}-\frac{7}{72}C_\eta)}{1+(7C_X)^d}-(\frac{1}{9}+\frac{1}{36}C_\eta)\tilde C_\eta C_X^{\frac{d}{\gamma}}-\frac{(\frac{1}{6}-\frac{7}{72}C_\eta)C_\eta(7C_X)^d}{1+(7C_X)^d}\}C_\rho}{(\frac{1}{4}+\frac{1}{6}\tilde C_\eta C_X^{\frac{d}{\gamma}})^2-(\frac{1}{2}-\frac{7}{12}C_\eta-\frac{1}{3}\tilde C_\eta C_X^{\frac{d}{\gamma}})^2C_\rho^2}\\
    \overset{\triangle}{=}&\frac{C_NC_\rho}{(\frac{1}{4}+\frac{1}{6}\tilde C_\eta C_X^{\frac{d}{\gamma}})^2-(\frac{1}{2}-\frac{7}{12}C_\eta-\frac{1}{3}\tilde C_\eta C_X^{\frac{d}{\gamma}})^2C_\rho^2},
\end{aligned}
\end{equation}

it follows
\[C_\rho\asymp\alpha, \quad\lambda^*_\alpha\asymp\frac{1}{\alpha}.\]
Similar to the proof of Claim \ref{clm:rho}, we can verify that Assumptions~\ref{ass:margin}, \ref{ass:ratio_poly} and \ref{ass:ratio_balance} are satisfied.

\textbf{Construction of $\bar\rho_{a|y}$:}

Define another distribution $\bar P_{X,A,Y}=P_XP_{Y|X}\bar P_{A|X,Y}$ with $\bar\rho_{1|1},\bar\rho_{1|0}$ are defined by replacing $C_\rho$ in $\rho_{1|1},\rho_{1|0}$ by $\bar C_\rho$, where
\[\bar C_\rho=\bigg\{1-c\bigg(\frac{1}{\alpha\sqrt{N}}\wedge1\bigg)\bigg\}C_\rho.\]
Similarly, we define $\bar p_{1,1}$ and $\bar\phi$ accordingly. Then we choose $C_X$ such that
\[\bar\lambda_\alpha^*=\frac{\bar p_{1,1}(p_Y-\bar p_{1,1})(1-2C_\eta-4\tilde C_\eta C_X^{\frac{d}{\gamma}})}{(C_\eta+2\tilde C_\eta C_X^{\frac{d}{\gamma}})(1-\frac{7}{12}C_\eta)\bar C_\rho}.\]
In this case, we have $\bar g_\alpha^*=2\eta-1-\bar\lambda_\alpha^*\bar\phi$ equals
\[\bar g^*_\alpha(x)=\left\{\begin{matrix}
    -\frac{\tilde C_\eta\{2C_X^{\frac{d}{\gamma}}+(\frac{1}{7}-x_1)^{\frac{d}{\gamma}}\}}{C_\eta+2\tilde C_\eta C_X^{\frac{d}{\gamma}}},&{\rm if~}\|x_{-1}\|_1\le\frac{1}{7}-x_1,x_1\in[0,\frac{1}{7}],\\
    -\frac{\tilde C_\eta\{2C_X^{\frac{d}{\gamma}}-(x_1-\frac{1}{7})^{\frac{d}{\gamma}}\}}{C_\eta+2\tilde C_\eta C_X^{\frac{d}{\gamma}}},&{\rm if~}\|x_{-1}\|_1\le x_1-\frac{1}{7},x_1\in[\frac{1}{7},\frac{1}{7}+C_X],\\
    -\frac{\tilde C_\eta(\frac{2}{7}-x_1)^{\frac{d}{\gamma}}}{C_\eta+2\tilde C_\eta C_X^{\frac{d}{\gamma}}},&{\rm if~}\|x_{-1}\|_1\le\frac{2}{7}-x_1,x_1\in[\frac{2}{7}-C_X,\frac{2}{7}],\\
    \frac{\tilde C_\eta(x_1-\frac{2}{7})^{\frac{d}{\gamma}}}{C_\eta+2\tilde C_\eta C_X^{\frac{d}{\gamma}}},&{\rm if~}\|x_{-1}\|_1\le x_1-\frac{2}{7},x_1\in[\frac{2}{7},\frac{3}{7}],\\
    -\frac{(1-2C_\eta-4\tilde C_\eta C_X^{\frac{d}{\gamma}})\{\frac{5}{12}C_\eta+(\frac{1}{3}+\frac{1}{12}C_\eta)\tilde C_\eta C_X^{\frac{d}{\gamma}}\}}{(1-\frac{7}{12}C_\eta)(C_\eta+2\tilde C_\eta C_X^{\frac{d}{\gamma}})},&{\rm if~}x_1\in[\frac{4}{7},\frac{5}{7}],\\
    \frac{\frac{5}{12}C_\eta-\frac{5}{6}C_\eta^2+[\frac{4}{3}-\frac{5}{6}C_\eta-\frac{4}{3}C_\eta^2+\frac{8}{3}(1-C_\eta)\tilde C_\eta C_X^{\frac{d}{\gamma}}]\tilde C_\eta C_X^{\frac{d}{\gamma}}}{(1-\frac{7}{12}C_\eta)(C_\eta+2\tilde C_\eta C_X^{\frac{d}{\gamma}})},&{\rm if~}x_1\in[\frac{6}{7},1].
\end{matrix}\right.\]
Then $C_X$ should satisfies
\begin{align*}
    \alpha=\E\bar\phi(X)\1(\bar g(X)>0)
    =\frac{\{C_N+\frac{(\frac{1}{3}-\frac{7}{36}C_\eta)(C_\eta+\tilde C_\eta C_X^{\frac{d}{\gamma}})(7C_X)^d}{1+(7C_X)^d}\}\bar C_\rho}{(\frac{1}{4}+\frac{1}{6}\tilde C_\eta C_X^{\frac{d}{\gamma}})^2-(\frac{1}{2}-\frac{7}{12}C_\eta-\frac{1}{3}\tilde C_\eta C_X^{\frac{d}{\gamma}})^2\bar C_\rho^2}.
\end{align*}
Comparing with Equation \eqref{eq:lower_alpha_c_rho}, we get
\[C_X^d\asymp\frac{1}{\alpha\sqrt{N}}\wedge1.\]
Therefore we have
\[\bar\lambda_\alpha^*\asymp\frac{1}{\alpha}.\]

\textbf{Verification of $\bar P_{X,A,Y}\in\mathscr{P}$:}

Similarly, we can verify that $\bar P_{X,A,Y}$ satisfies Assumptions~\ref{ass:margin}, \ref{ass:ratio_poly}, \ref{ass:ratio_balance}.

\textbf{Group-Aware Assumptions:}

2) Then we verify the group-aware assumptions. Similar to the analysis of the errors of $\rho_{1|1}$ and $\eta$, we can show $\eta_P^{\rm aware}(\cdot,a),\eta_{\bar P}^{\rm aware}(\cdot,a)$ are both $\beta_Y$-H\"older smooth, where $\eta_P^{\rm aware}(X,A)=\Prob_P(Y=1|X,A)$, and $\U_P\big(\1(2\eta_P^{\rm aware}(X,A)>1)\big)=0$,
\begin{align*}
    \U_{\bar P}\big(\1(2\eta_{\bar P}^{\rm aware}(X,A)>1)\big)=&\frac{(\frac{1}{2}+\bar C_\rho)(1-C_\eta)}{\bar p_{1,1}}-\frac{\frac{3}{4}-\frac{3}{2}\bar C_\rho-\frac{1}{2}C_\eta+\frac{3}{4}\bar C_\rho C_\eta}{\bar p_{1,2}}\\
    &+\mu\bigg(\frac{\frac{1}{2}+\bar C_\rho+\frac{1}{2}\bar C_\rho C_\eta}{2\bar p_{1,1}}+\frac{\frac{1}{2}-\bar C_\rho-\frac{1}{2}\bar C_\rho C_\eta}{2\bar p_{1,2}}\bigg)\\
    \lesssim&c\bigg(\frac{1}{\sqrt{N}}\wedge\alpha\bigg).
\end{align*}
As long as we set the constant $c$ in $\bar C_\rho$ to be small enough, we have $\U_{\bar P}\big(\1(2\eta^{\rm aware}(X,A)>1)\big)<\alpha$. Therefore $\bar g^{*{\rm aware}}_\alpha=2\eta^{\rm aware}_{\bar P}-1$. So $P_{X,A,Y},\bar P_{X,A,Y}$ satisfy the group-aware Assumptions~\ref{ass:margin}, \ref{ass:ratio_poly}, \ref{ass:ratio_balance}.

\textbf{Derivation of the Lower Bound:}

Now we derive the minimax lower bound. For any $\A\in\mathscr{A}$, $\hat f=\A(\D_{\rm all})$, we have
\[\Prob_{\D_{\rm all}\sim{P_{X,A,Y}^{\otimes N}}}\big(\U_{\rm EOO,P}(\hat f)\le\alpha\big)\ge 1-\delta.\]
Note that
\[\bar\phi=\frac{p_{1,1}(p_Y-p_{1,1})}{\bar p_{1,1}(p_Y-\bar p_{1,1})}\bigg\{1-c\bigg(\frac{1}{\alpha\sqrt{N}}\wedge1\bigg)\bigg\}\phi.\]
Under the event $\U_{\rm EOO,P}(\hat f)\le\alpha$, if $\E\phi(X)\hat f(X)\le 0$, then $\E\bar\phi(X)\hat f(X)\le 0$. Otherwise, if $\E\phi(X)\hat f(X)>0$, it follows from $0<\E\phi(X)\hat f(X)\le\alpha$ that
\[\E\bar\phi(X)\hat f(X)\le\alpha-c\bigg(\frac{1}{\sqrt{N}}\wedge \alpha\bigg).\]
Denote the Hellinger distance ${\rm HL}$ between any two distributions $P,Q$ as
\[{\rm HL}(P,Q)=\bigg(\int\big(\sqrt{dP}-\sqrt{dQ}\big)^2\bigg)^{\frac{1}{2}},\]
by Lemma 15.3 and Equation 15.12b in \cite{wainwright2019high}, we can control ${\rm TV}(P_{X,A,Y}^{\otimes N},\bar P_{X,A,Y}^{\otimes N})$ as
\begin{align*}
    {\rm TV}(P_{X,A,Y}^{\otimes N},\bar P_{X,A,Y}^{\otimes N})
    \le&{\rm HL}(P_{X,A,Y}^{\otimes N},\bar P_{X,A,Y}^{\otimes N})\\
    \le&\sqrt{N}{\rm HL}(P_{X,A,Y},\bar P_{X,A,Y})\\
    =&\sqrt{N}\bigg(\int\bigg(\sqrt{\rho_{1|1}(x)}-\sqrt{\bar\rho_{1|1}(x)}\bigg)^2\eta(x)p_X(x)dx\\
    &+\int\bigg(\sqrt{1-\rho_{1|1}(x)}-\sqrt{1-\bar\rho_{1|1}(x)}\bigg)^2\eta(x)p_X(x)dx\\
    &+\int\bigg(\sqrt{\rho_{1|0}(x)}-\sqrt{\bar\rho_{1|0}(x)}\bigg)^2(1-\eta(x))p_X(x)dx\\
    &+\int\bigg(\sqrt{1-\rho_{1|0}(x)}-\sqrt{1-\bar\rho_{1|0}(x)}\bigg)^2(1-\eta(x))p_X(x)dx\bigg)^{\frac{1}{2}}\\
    \lesssim&\sqrt{N}\bigg(\frac{1}{\sqrt{N}}\wedge\alpha\bigg)\\
    \lesssim& 1.
\end{align*}
Then we have
\begin{align*}
    &\Prob_{\D_{\rm all}\sim\bar P^{\otimes N}_{X,A,Y}}\bigg(T_2(\hat f)\ge c|\bar\lambda^*_\alpha|\bigg(\frac{1}{\sqrt{N}}\wedge\alpha\bigg)\bigg)\\
    =&\Prob_{\D_{\rm all}\sim\bar P^{\otimes N}_{X,A,Y}}\bigg(\E\bar\phi(X)\hat f(X)\le \alpha- c\bigg(\frac{1}{\sqrt{N}}\wedge\alpha\bigg)\bigg)\\
    \ge&\Prob_{\D_{\rm all}\sim P^{\otimes N}_{X,A,Y}}\bigg(\E\bar\phi(X)\hat f(X)\le \alpha- c\bigg(\frac{1}{\sqrt{N}}\wedge\alpha\bigg)\bigg)-{\rm TV}(P^{\otimes N}_{X,A,Y},\bar P^{\otimes N}_{X,A,Y})\\
    \ge&\Prob_{\D_{\rm all}\sim P^{\otimes N}_{X,A,Y}}\big(\E\phi(X)\hat f(X)\le \alpha\big)-{\rm TV}(P^{\otimes N}_{X,A,Y},\bar P^{\otimes N}_{X,A,Y})\\
    \ge&1-\delta-{\rm TV}(P^{\otimes N}_{X,A,Y},\bar P^{\otimes N}_{X,A,Y})\\
    \ge&c-\delta.
\end{align*}

\end{proof}

\section{Proof of Theorem~\ref{thm:upper_multi}}

\begin{proof}[Proof of Theorem~\ref{thm:upper_multi}]
    \textbf{Existence of $\hat\lambda^G_\alpha$}:

    At first, we show $\hat f_{\hat\lambda_\alpha}^G$ is well-defined and $\alpha$-fair. Denote the event $E$ as
    \[E=\{\sup_{\lambda\in\R^{\tilde K}}|\hat\U(\hat f^G_\lambda)-\U(\hat f^G_\lambda)|\le\epsilon_\alpha\},\]
    then we know $\Prob(E^c)\le\delta_{\rm post}$. Recall that $U(0)=\U(\1(2\eta^G>1))$. Now we separate the proof into two cases depending on $D_0=U(0)-\alpha$. If $D_0\le-\tilde\epsilon_\eta^G-2\epsilon_\alpha$, it is guaranteed that $\hat\lambda^G_\alpha=0$ leads to a feasible and $\alpha$-fair classifier. If $D_0>-\tilde\epsilon_\eta^G-2\epsilon_\alpha$, we have to carefully choose $\tilde\alpha<\alpha$ such that $\hat\lambda^G_\alpha=\lambda^{*G}_{\tilde\alpha}$ is feasible and thus $\alpha$-fair.

    \textbf{Case (1)}: If $U(0)-\alpha\le-\tilde\epsilon_\eta^G-2\epsilon_\alpha$, we know $\lambda^{*G}_\alpha=0$, $f^{*G}_\alpha=\1(2\eta^G>1)$. Under the event $E$, we have $\hat f^G_0=\1(2\hat\eta^G>1)$ satisfies
    \begin{align*}
        &|\U(\hat f^G_0)-\U(f^{*G}_\alpha)|\\
        =&|\|\E\Phi^G(X,A)\1\big(2\hat\eta^G(X,A)>1\big)\|_\infty-\|\E\Phi^G(X,A)\1\big(2\eta^G(X,A)>1\big)\|_\infty|\\
        \le&\|\E\Phi^G(X,A)\{\1(2\hat\eta^G(X,A)>1)-\1(2\eta^G(X,A)>1)\}\|_\infty\\
        \le&\max_{k\in[\tilde K]}\E|\phi^G_k(X,A)|\1\big(|2\eta^G(X,A)-1|\le2\epsilon_\eta\big)\\
        =&\tilde\epsilon_\eta^G.
    \end{align*}
    Then we have
    \begin{align*}
        \hat\U(\hat f^G_0)\le\U(\hat f^G_0)+\epsilon_\alpha\le\U(f^{*G}_\alpha)+\tilde\epsilon_\eta^G+\epsilon_\alpha\le\alpha-\epsilon_\alpha,
    \end{align*}
    so $\hat f^G_0$ is feasible. 

    \textbf{Case (2)}: If $U(0)-\alpha>-\tilde\epsilon_\eta^G-2\epsilon_\alpha$, under event $E$, for our choice of $\tilde\epsilon_\alpha$ in Equation~\eqref{eq:tilde_epsilon_alpha}, it follows
    \begin{align*}
        \hat\U(\hat f^G_{\lambda^*_{\tilde\alpha}})\le&\U(\hat f^G_{\lambda^*_{\tilde\alpha}})+\epsilon_\alpha\\
        \le&\U(f^{*G}_{\tilde\alpha})+\|\E\Phi^G(X,A)\big\{\hat f^G_{\lambda^*_{\tilde\alpha}}(X,A)-f^{*G}_{\tilde\alpha}(X,A)\big\}\|_\infty+\epsilon_\alpha\\
        \le&\alpha-\tilde\epsilon_\alpha+\epsilon_\alpha+\|\E\Phi^G(X,A)\1\big(0\ge g^{*G}_{\tilde\alpha}(X,A)>2\big(\eta^G(X,A)-\hat\eta^G(X,A)\big)\\
        &\qquad-\lambda^{*G\top}_{\tilde\alpha}\big(\Phi^G(X,A)-\hat\Phi^G(X,A)\big)\big)\|_\infty+\|\E\Phi^G(X,A)\1\big(0<g^{*G}_{\tilde\alpha}(X,A)\le\\
        &\qquad2\big(\eta^G(X,A)-\hat\eta^G(X,A)\big)-\lambda^{*G\top}_{\tilde\alpha}\big(\Phi^G(X,A)-\hat\Phi^G(X,A)\big)\big)\|_\infty\\
        \le&\alpha-\tilde\epsilon_\alpha+\epsilon_\alpha+\max_{k\in[\tilde K]}\E|\phi^G_k(X,A)|\1\big(|g^{*G}_{\tilde\alpha}(X,A)|\le 2\epsilon_\eta+\|\lambda^{*G}_{\tilde\alpha}\|_1\epsilon_\phi\big)\\
        =&\alpha-\tilde\epsilon_\alpha+\epsilon_\alpha+\tilde\epsilon_{g,\tilde\alpha}^G\\
        \overset{\text{Equation~\eqref{eq:tilde_epsilon_alpha}}}{\le}&\alpha-\epsilon_\alpha.
    \end{align*}
    Therefore $\hat f^G_{\lambda^*_{\tilde\alpha}}$ is feasible.

    \textbf{Fairness constraint}:

    For any $\hat\lambda$ such that $\hat\U(\hat f^G_{\hat\lambda})\le\alpha-\epsilon_\alpha$, under $E$, we have
    \[\U(\hat f^G_{\hat\lambda})\le\hat\U(\hat f^G_{\hat\lambda})+\sup_{\lambda\in\R^{\tilde K}}|\hat\U(\hat f^G_\lambda)-\U(\hat f^G_\lambda)|\le\alpha.\]

    \textbf{Excess risk}:

    The analysis of excess risk follows from the theory of empirical risk minimization \citep{massart2006risk}. Denote $Z=(X,A,Y)\sim P_{X,A,Y}$ and $L:\{0,1\}^{\R^d\times[K]}\times\R^d\times[K]\times\{0,1\}$ to be the 0-1 loss function
    \[L(f^G,Z)=\1(Y\ne Y_{f^G}),\quad \Prob(Y_{f^G}=1|X,A)=f^G(X,A),\]
    \[\EE (\lambda)=\risk(\hat f^G_\lambda)-\risk(f^{*G}_{\tilde\alpha}),\quad\EE _{\app}=\risk(\hat f^G_{\lambda^*_{\tilde\alpha}})-\risk(f^{*G}_{\tilde\alpha}).\]
    We start with the following inequality based on Proposition 1 in \cite{tsybakov2004optimal}. For any $\lambda$ that is feasible for Algorithm~\ref{alg:multi_unify}, under the event $E$, we have
    \begin{align*}
        \EE (\lambda)=&\E\big(2\eta^G(X,A)-1\big)\big(f^{*G}_{\tilde\alpha}(X,A)-\hat f^G_\lambda(X,A)\big)\\
        =&\E\big|2\eta^G(X,A)-1-\lambda^{*G\top}_{\tilde\alpha}\Phi^G(X,A)\big|\big|f^{*G}_{\tilde\alpha}(X,A)-\hat f^G_\lambda(X,A)\big|\\
        &+\lambda^{*G\top}_{\tilde\alpha}\E\Phi^G(X,A)\big(f^{*G}_{\tilde\alpha}(X,A)-\hat f^G_\lambda(X,A)\big)\\
        \ge&c\big\{\E\big|f^{*G}_{\tilde\alpha}(X,A)-\hat f^G_\lambda(X,A)\big|\big\}^{\frac{1+\tilde\gamma}{\tilde\gamma}}+\|\lambda^{*G}_{\tilde\alpha}\|_1\tilde\alpha-\|\lambda^{*G}_{\tilde\alpha}\|_1\alpha\\
        \ge& c\big\{{\Var}\big(L(f^{*G}_{\tilde\alpha},Z)-L(\hat f^G_\lambda,Z)\big)\big\}^{\frac{1+\tilde\gamma}{\tilde\gamma}}-\|\lambda^{*G}_{\tilde\alpha}\|_1\tilde\epsilon_\alpha,
    \end{align*}
    so we get
    \[{\Var}\big(L(f^{*G}_{\tilde\alpha},Z)-L(\hat f^G_\lambda,Z)\big)\lesssim\big\{\EE (\lambda)+\|\lambda^{*G}_{\tilde\alpha}\|_1\tilde\epsilon_\alpha\big\}^{\frac{\tilde\gamma}{1+\tilde\gamma}},\]
    it follows
    \[\EE(\lambda)+\|\lambda^{*G}_{\tilde\alpha}\|_1\tilde\epsilon_\alpha\ge 0,\quad \EE_{\app}+\|\lambda^{*G}_{\tilde\alpha}\|_1\tilde\epsilon_\alpha\ge 0.\]
    Denote $\hat \E$ to be the sample average based on data $\D$. For any $t>0$, we denote
    \[V_t=\sup_{\lambda\in\R^{\tilde K}}\frac{(\E-\hat \E)\big(L(\hat f^G_\lambda,Z)-L(\hat f^G_{\lambda^*_{\tilde\alpha}},Z)\big)}{\EE (\lambda)+\EE _{\app}+2\|\lambda^{*G}_{\tilde\alpha}\|_1\tilde\epsilon_\alpha+t},\]
    then we can control the excess risk as
    \begin{align*}
        \EE (\hat\lambda)=&\risk(\hat f^G_{\hat\lambda})-\risk(\hat f^G_{\lambda^*_{\tilde\alpha}})+\risk(\hat f^G_{\lambda^*_{\tilde\alpha}})-\risk(f^{*G}_{\tilde\alpha})\\
        =&\hat\E\big(L(\hat f^G_{\hat\lambda},Z)-L(\hat f^G_{\lambda^*_{\tilde\alpha}},Z)\big)+\big(\E-\hat\E\big)\big(L(\hat f^G_{\hat\lambda},Z)-L(\hat f^G_{\lambda^*_{\tilde\alpha}},Z)\big)+\EE _{\app}\\
        \le& V_t\big\{\EE (\hat\lambda)+\EE _{\app}+2\|\lambda^{*G}_{\tilde\alpha}\|_1\tilde\epsilon_\alpha+t\big\}+\EE _{\app}.
    \end{align*}
    Under the event $\{V_t\le\frac{1}{2}\}$, we get
    \[\EE(\hat\lambda)\le3\EE_{\app}+2\|\lambda^{*G}_{\tilde\alpha}\|_1\tilde\epsilon_\alpha+t.\]
    Then it suffices to control $V_t$ and $\EE_{\app}$. We start with controlling $V_t$ using Talagrand's concentration inequality \citep{Boucheron2013concentration}. 
    Note that
    \begin{align*}
        &\sup_{\lambda\in\R^{\tilde K}}\Var\bigg(\frac{L(\hat f^G_\lambda,Z)-L(\hat f^G_{\lambda^{*}_{\tilde\alpha}},Z)}{\EE(\lambda)+\EE_{\app}+2\|\lambda^{*G}_{\tilde\alpha}\|_1\tilde\epsilon_\alpha+t}\bigg)\\
        \lesssim&\sup_{\lambda\in\R^{\tilde K}}\frac{\big\{\EE(\lambda)+\EE_{\app}+2\|\lambda^{*G}_{\tilde\alpha}\|_1\tilde\epsilon_\alpha\big\}^{\frac{\tilde\gamma}{1+\tilde\gamma}}}{\big(\EE(\lambda)+\EE_{\app}+2\|\lambda^{*G}_{\tilde\alpha}\|_1\tilde\epsilon_\alpha+t\big)^2}\\
        \le&\sup_{\xi\ge 0}\frac{\xi^{\frac{\tilde\gamma}{1+\tilde\gamma}}}{(\xi+t)^2}\\
        \lesssim& t^{-\frac{2+\tilde\gamma}{1+\tilde\gamma}},
    \end{align*}
    \[\sup_{\lambda\in\R^{\tilde K}}\bigg|\frac{L(\hat f^G_\lambda,Z)-L(\hat f^G_{\lambda^*_{\tilde\alpha}},Z)}{\EE(\lambda)+\EE_{\app}+2\|\lambda^{*G}_{\tilde\alpha}\|_1\tilde\epsilon_\alpha+t}\bigg|\le\frac{1}{t},\]
    then Talagrand's concentration inequality \citep{Boucheron2013concentration} implies that with probability at least $1-\delta_{\rm post}$, we have
    \[V_t-\E V_t\lesssim\sqrt{\frac{t^{-\frac{2+\tilde\gamma}{1+\tilde\gamma}}+\frac{1}{t}\E V_t}{n}\log\frac{1}{\delta_{\rm post}}}+\frac{\log\frac{1}{\delta_{\rm post}}}{nt}.\]
    Then it remains to control $\E V_t$. We proceed using the peeling techniques. Denote $\Lambda_j=\{\lambda\in\R^{\tilde K}:\EE(\lambda)+\EE_{\app}+2\|\lambda^{*G}_{\tilde\alpha}\|_1\tilde\epsilon_\alpha\in[2^{j-1}t,~ 2^jt)]\}$ for $j\in\N_+$ and $\Lambda_0=\{\lambda\in\R^{\tilde K}:\EE(\lambda)+\EE_{\app}+2\|\lambda^{*G}_{\tilde\alpha}\|_1\tilde\epsilon_\alpha<t\}$, we know
    \[\sup_{\lambda\in\Lambda_j}\Var\big(L(\hat f^G_\lambda,Z)-L(\hat f^G_{\lambda^*_{\tilde\alpha}},Z)\big)\lesssim \big(2^jt\big)^{\frac{\tilde\gamma}{1+\tilde\gamma}}\wedge 1,\]
    then Theorem 13.7 in \cite{Boucheron2013concentration} implies
    \begin{align*}
        \E V_t\le&\sum_{j\in\N}\E\sup_{\lambda\in\Lambda_j}\frac{(\E-\hat \E)\big(L(\hat f^G_\lambda,Z)-L(\hat f^G_{\lambda^*_{\tilde\alpha}},Z)\big)}{\EE (\lambda)+\EE _{\app}+2\|\lambda^{*G}_{\tilde\alpha}\|_1\tilde\epsilon_\alpha+t}\\
        \lesssim&\frac{t^{\frac{\tilde\gamma}{2+2\tilde\gamma}}}{t}\sqrt{\frac{\tilde K}{n}\log \frac{e}{t^{\frac{\tilde\gamma}{2+2\tilde\gamma}}\wedge 1}}+\sum_{j\in\N_+}\frac{(2^jt)^{\frac{\tilde\gamma}{2+2\tilde\gamma}}}{2^{j-1}t+t}\sqrt{\frac{\tilde K}{n}\log\frac{e}{(2^jt)^{\frac{\tilde\gamma}{2+2\tilde\gamma}}\wedge 1}}\\
        \lesssim&t^{-\frac{2+\tilde\gamma}{2+2\tilde\gamma}}\sqrt{\frac{\tilde K}{n}\log \frac{e}{t\wedge 1}}.
    \end{align*}
    Taking $t\asymp\big(\frac{\tilde K\log n+\log\frac{1}{\delta_{\rm post}}}{n}\big)^{\frac{1+\tilde\gamma}{2+\tilde\gamma}}$, we get $V_t\le\frac{1}{2}$, and thus
    \[\EE(\hat\lambda)\lesssim \EE_{\app}+2\|\lambda^{*G}_{\tilde\alpha}\|_1\tilde\epsilon_\alpha+\bigg(\frac{\tilde K\log n+\log\frac{1}{\delta_{\rm post}}}{n}\bigg)^{\frac{1+\tilde\gamma}{2+\tilde\gamma}}.\]
    Then for $\EE_{\app}$, we have
    \begin{align*}
        &\EE_{\app}\\
        =&\E(2\eta^G(X,A)-1)\big(f^{*G}_{\tilde\alpha}(X,A)-\hat f^G_{\lambda^*_{\tilde\alpha}}(X,A)\big)\\
        =&\underbrace{\E|g^{*G}_{\tilde\alpha}(X,A)||f^{*G}_{\tilde\alpha}(X,A)-\hat f^G_{\lambda^*_{\tilde\alpha}}(X,A)|}_{T_1}+\underbrace{\lambda^{*G\top}_{\tilde\alpha}\E\Phi^G(X,A)\big(f^{*G}_{\tilde\alpha}(X,A)-\hat f^G_{\lambda^*_{\tilde\alpha}}(X,A)\big)}_{T_2}.
    \end{align*}
    We can control $T_1$ as
    \begin{align*}
        &T_1\\
        =&\E|g^{*G}_{\tilde\alpha}(X,A)|\1\big(0<g^{*G}_{\tilde\alpha}(X,A)\le 2\big(\eta^G(X,A)-\hat\eta^G(X,A)\big)-\lambda^{*G\top}_{\tilde\alpha}\big(\Phi^G(X,A)-\hat\Phi^G(X,A)\big)\big)\\
        &+\E|g^{*G}_{\tilde\alpha}(X,A)|\1\big(0\ge g^{*G}_{\tilde\alpha}(X,A)> 2\big(\eta^G(X,A)-\hat\eta^G(X,A)\big)-\lambda^{*G\top}_{\tilde\alpha}\big(\Phi^G(X,A)-\hat\Phi^G(X,A)\big)\big)\\
        \le&\E|g^{*G}_{\tilde\alpha}(X,A)|\1\big(|g^{*G}_{\tilde\alpha}(X,A)|\le 2\epsilon_\eta+\|\lambda^{*G}_{\tilde\alpha}\|_1\epsilon_\phi\big)\\
        \lesssim&\big(\epsilon_\eta+\|\lambda^{*G}_{\tilde\alpha}\|_1\epsilon_\phi\big)^{1+\tilde\gamma}.
    \end{align*}
    For term $T_2$,
    \begin{align*}
        &T_2\\
        =&\lambda^{*G\top}_{\tilde\alpha}\E\Phi^G(X,A)\1\big(0<g^{*G}_{\tilde\alpha}(X,A)\le2\big(\eta^G(X,A)-\hat\eta^G(X,A)\big)-\lambda^{*G\top}_{\tilde\alpha}\big(\Phi^G(X,A)-\hat\Phi^G(X,A)\big)\big)\\
        &-\lambda^{*G\top}_{\tilde\alpha}\E\Phi^G(X,A)\1\big(0\ge g^{*G}_{\tilde\alpha}(X,A)>2\big(\eta^G(X,A)-\hat\eta^G(X,A)\big)-\lambda^{*G\top}_{\tilde\alpha}\big(\Phi^G(X,A)-\hat\Phi^G(X,A)\big)\big)\\
        \le&\E|\lambda^{*G\top}_{\tilde\alpha}\Phi^G(X,A)|\1\big(|g^{*G}_{\tilde\alpha}(X,A)|\le2\epsilon_\eta+\|\lambda^{*G}_{\tilde\alpha}\|_1\epsilon_\phi\big)\\
        =&\sum_{k\in[\tilde K]}|\lambda^{*G}_{\tilde\alpha,k}|\E|\phi_k^G(X,A)|\1\big(|g^{*G}_{\tilde\alpha}(X,A)|\le2\epsilon_\eta+\|\lambda^{*G}_{\tilde\alpha}\|_1\epsilon_\phi\big)\\
        \le&\|\lambda^{*G}_{\tilde\alpha}\|_1\tilde\epsilon_{g,\tilde\alpha}^G.
    \end{align*}

    Combining pieces concludes that with probability at least $1-2\delta_{\rm post}$, we have
    \[\EE(\hat\lambda)\lesssim \big(\epsilon_\eta+\|\lambda^{*G}_{\tilde\alpha}\|_1\epsilon_\phi\big)^{1+\tilde\gamma}+\|\lambda^{*G}_{\tilde\alpha}\|_1\tilde\epsilon_\alpha+\bigg(\frac{\tilde K\log n+\log\frac{1}{\delta_{\rm post}}}{n}\bigg)^{\frac{1+\tilde\gamma}{2+\tilde\gamma}}.\]

\end{proof}

\end{document}